\documentclass[11pt,letterpaper]{article}
%\documentclass[prodmode,acmtecs]{acmsmall} % Aptara syntax

% Metadata Information
%\acmVolume{X}
%\acmNumber{X}
%\acmArticle{X}
%\acmYear{XXXX}
%\acmMonth{0}

\usepackage{slivkins-setup,slivkins-theorems,my-acm} % Alex's setup
\usepackage{xspace,hyperref}
\usepackage{algorithm}
\usepackage[noend]{algorithmic-hacked}

\usepackage[round]{natbib}
%\bibpunct{(}{)}{;}{t}{}{,}
%Use the command \bibpunct with one optional and 6 mandatory arguments:
%    the opening bracket symbol, default = (
%    the closing bracket symbol, default = )
%    the punctuation between multiple citations, default = ;
%    the letter `n' for numerical style, or `s' for numerical superscript style, any other letter for author-year, default = author-year;
%    the punctuation that comes between the author names and the year
%    the punctuation that comes between years or numbers when common author lists are suppressed (default = ,);
%The optional argument is the character preceding a post-note, default is a comma plus space. In redefining this character, one must include a space if one is wanted.

% fullpage
	\topmargin 0pt
	\advance \topmargin by -\headheight
	\advance \topmargin by -\headsep

	\textheight 9.0in %increased slightly -- A.S.

	\oddsidemargin 0pt
	\evensidemargin \oddsidemargin
	\marginparwidth 0.5in

	\textwidth 6.5in
%%%%%%%%%%%%%%

% macro for edits and comments, from David Kempe
%\usepackage{color-edits}
%\usepackage[suppress]{color-edits}
%\addauthor{bk}{magenta} % bk for Bobby
%\addauthor{as}{blue}     % as for Alex
% e.g. for Alex, provides \asedit{}, \ascomment{} and \asdelete{}.

\renewcommand{\eqref}[1]{(\ref{#1})}
\renewcommand{\refeq}[1]{(\ref{#1})}

%%%%%%%

%\newcommand{\BobbyNote}[1]{}
%\newcommand{\EliNote}[1]{}

% shorthand

\newcommand{\A}{\ensuremath{\mA}}
\newcommand{\F}{\ensuremath{\mF}}

\newcommand{\mS}{\ensuremath{\mathcal{S}}}

\newcommand{\myball}{extensive-form ball\xspace}
\newcommand{\myballs}{extensive-form balls\xspace}

% algorithms
\newcommand{\NaiveAlg}{\ensuremath{\mathtt{UniformMesh}}\xspace}
\newcommand{\NaiveExp}{\ensuremath{\mathtt{UniformMeshExp}}\xspace}
\newcommand{\UCB}{\ensuremath{\mathtt{UCB1}}\xspace}
\newcommand{\EXP}{\ensuremath{\mathtt{EXP3}}\xspace}

% problems
\newcommand{\problem}{Lipschitz MAB problem\xspace}
\newcommand{\FFproblem}{Lipschitz experts problem\xspace}
\newcommand{\ULproblem}{uniformly Lipschitz experts problem\xspace}

\newcommand{\quasimetric}{quasi-distance}

%%%%%%%%%%%%%%%%%%%%%%% DIMENSIONALITY
\newcommand{\Cdbl}{\ensuremath{c_{\textsc{dbl}}}} % doubling constant

 	% doubling dimension
\newcommand{\COV}{\mathtt{COV}}		% covering dimension
%\newcommand{\LimDBL}{\mathtt{LimDBL}}	% liminf-doubling dimension
%\newcommand{\LimCOV}{\mathtt{LimCOV}}	% liminf-covering dimension
	% min-doubling dimension
\newcommand{\MinCOV}{\mathtt{MinCOV}}	% min-covering dimension
\newcommand{\RegretDim}{\mathtt{DIM}}	% regret dimension
\newcommand{\MaxMinCOV}{\mathtt{MaxMinCOV}}	% min-covering dimension

% FROM THE SODA'10 PAPER

\newcommand{\Bobby}[1]{\sidenote{Bobby}{#1}}

\newcommand{\payoff}{{\pi}}

\newcommand{\prob}{{\mathbb{P}}}

\renewcommand{\Bobby}[1]{}

 % CovDim
\newcommand{\LCD}{\ensuremath{\mathtt{LCD}}} % LogCovDim
\newcommand{\MaxMinLCD}{\ensuremath{\mathtt{MaxMinLCD}}} % MaxMinLogCovDim

%%%%%%%%%%%%%%%%%%%
\begin{document}

%\markboth{R. Kleinberg et al.}{Bandits and experts in metric spaces}

\title{{\bf Bandits and experts in metric spaces}%
\footnote{
This manuscript is a merged and definitive version of \citet*{LipschitzMAB-stoc08} and \citet*{DichotomyMAB-soda10}, which supersedes their full versions (published as technical reports on {\tt arxiv.org} in September 2008 and November 2009, respectively).
 Compared to the conference publications, the manuscript contains full proofs and a significantly revised presentation. In particular, it develops new terminology and modifies the proof outlines to unify the technical exposition of the two papers. The manuscript also features an updated discussion of the follow-up work and open questions. All results on the zooming algorithm and the max-min-covering dimension are from \citet{LipschitzMAB-stoc08}; all results on regret dichotomies and on Lipschitz experts are from \citet{DichotomyMAB-soda10}.}}

%\author{
%Robert Kleinberg \affil{Cornell University}
%Aleksandrs Slivkins \affil{Microsoft Research}
%Eli Upfal \affil{Brown University}
%}

\author{
{\Large Robert Kleinberg}\thanks{
Computer Science Department, Cornell University (Ithaca NY, USA). Email:~{\tt rdk at cs.cornell.edu}. Supported in part by NSF awards CCF-0643934 and  CCF-0729102.
Some of the work was done while the author was visiting Microsoft Research.
}\and
{\Large Aleksandrs Slivkins}\thanks{
Aleksandrs Slivkins, Microsoft Research (New York NY, USA).
Email:~{\tt slivkins at microsoft.com}.
Some of the work was done while the author was a postdoctoral research associate at Brown University.
}\and
{\Large Eli Upfal}\thanks{
Eli Upfal, Computer Science Department, Brown University (Providence RI, USA). Email:~{\tt eli at cs.brown.edu}.
Supported in part by NSF awards CCR-0121154 and DMI-0600384, and ONR Award N000140610607.
}}

\date{First version: December 2013\\Last substantial revision: April 2018}

%\begin{titlepage}
\maketitle

\thispagestyle{empty}

\begin{abstract}
In a multi-armed bandit problem, an online algorithm
chooses from a set of strategies in a sequence of
trials so as to maximize the total payoff of the
chosen strategies.
% These problems are the principal theoretical
% tool for modeling the exploration/exploitation tradeoffs
% inherent in  sequential
% decision-making under uncertainty.
While the performance of bandit algorithms with
a small finite strategy set is quite well understood,
bandit problems with large strategy sets are still a
topic of very active
investigation, motivated by practical applications such as
online auctions and web advertisement.
The goal of such research is to identify broad and natural
classes of strategy sets and payoff functions which enable
the design of efficient solutions.

In this work we study a very general setting for the
multi-armed bandit problem in which the strategies form
a metric space, and the payoff function satisfies a Lipschitz
condition with respect to the metric. We refer to this
problem as the {\it Lipschitz MAB problem}. We present a
solution for the multi-armed bandit problem in this setting.
That is, for every metric space
we define an isometry invariant which
bounds from below the performance of Lipschitz
MAB algorithms for this metric space, and we present an algorithm which
comes arbitrarily close to meeting this bound. Furthermore, our
technique gives even better results for benign payoff functions.
We also address the full-feedback (``best expert'') version of the problem, where after every round the payoffs from all arms are revealed.
\end{abstract}

{\bf ACM Categories:}
\category{F.2.2}{Analysis of Algorithms and Problem Complexity}{Nonnumerical Algorithms and Problems}
\category{F.1.2}{Computation by Abstract Devices}{Modes of Computation}[Online computation]
%\terms{theory, algorithms}.

\vspace{2mm}
{\bf Keywords:} multi-armed bandits, regret, online learning, metric spaces, covering dimension, Lipschitz-continuity

%\acmformat{Robert Kleinberg, Aleksandrs Slivkins and Eli Upfal, 2013. Bandits and experts in metric spaces.}

%\begin{bottomstuff}
%This manuscript is a merged and definitive version of~\cite{LipschitzMAB-stoc08,DichotomyMAB-soda10}, with a significantly revised presentation. The full versions of the two papers have appeared as technical reports on {\tt arxiv.org} in September 2008 and November 2009, respectively. All results on the zooming algorithm and the max-min-covering dimension are from~\cite{LipschitzMAB-stoc08}; all results on regret dichotomies and on Lipschitz experts are from~\cite{DichotomyMAB-soda10}.
%
%\vspace{2mm}
%Robert Kleinberg, Computer Science Department, Cornell University, Ithaca, NY 14853. Email:~{\tt rdk at cs.cornell.edu}. Supported in part by NSF awards CCF-0643934 and  CCF-0729102. Some of the work was done while the author was visiting Microsoft Research.
%
%\vspace{2mm}
%Aleksandrs Slivkins, Microsoft Research, New York, NY 10011.
%Email:~{\tt slivkins at microsoft.com}.
%Some of the work was done while the author was a postdoctoral research associate at Brown University.
%
%\vspace{2mm}
%Eli Upfal, Computer Science Department, Brown University, Providence, RI 02912. Email:~{\tt eli at cs.brown.edu}.
%Supported in part by NSF awards CCR-0121154 and DMI-0600384, and ONR Award N000140610607.
%\vspace{2mm}
%\end{bottomstuff}

%\end{titlepage}

\newpage \thispagestyle{empty}
\begin{small}
\setcounter{tocdepth}{2}
\tableofcontents
\end{small}
\newpage

\section{Introduction}
\label{sec:intro}

\newcommand{\willcite}[1][Cite]{{[\sc #1]}}

In a multi-armed bandit problem, an online algorithm
must iteratively choose from a set of possible strategies (also called ``arms'') in a sequence of
$n$ trials so as to maximize the total payoff of the
chosen strategies.
These problems are the principal theoretical
tool for modeling the exploration/exploitation tradeoffs
inherent in  sequential
decision-making under uncertainty.
Studied
intensively for % the last three
decades \citep{Thompson-1933,Robbins1952,Berry-book,CesaBL-book,Gittins-book11,Bubeck-survey12},
bandit problems are having an increasingly visible
impact on computer science because of their diverse
applications including online auctions, adaptive
routing, and the theory of learning in games.
The performance of a multi-armed bandit algorithm is often evaluated in terms of its {\it regret}, defined as the gap between the expected payoff of the algorithm and that of an optimal strategy.
While the performance of bandit algorithms with
a small finite strategy set is quite well understood,
bandit problems with exponentially or infinitely
large strategy sets are still a topic of very active
investigation (see \citet{Bubeck-survey12} for a survey).

% \cite{Agrawal-bandits-95,bandits-exp3,AuerOS/07,Bobby-stoc04-journal-version,sundaram-bandits-92,Cope/06,DaniHK-nips07,Hayes-soda06,FlaxmanKM-soda05,KakadeKL/07,Bobby-nips04,Bobby-thesis,McMahan-colt04}.

Absent any assumptions about the strategies and their payoffs, bandit problems with large strategy sets allow for no non-trivial solutions --- any multi-armed bandit algorithm performs as badly, on some inputs, as random guessing. But in most applications it is natural to assume a structured class of payoff functions, which often enables the design of efficient learning algorithms \cite{Bobby-thesis}.  In this paper, we consider a broad and natural class of problems in which the structure is induced by a metric on the space of strategies.  While bandit problems have been studied in a few specific metric spaces (such as a one-dimensional interval) \citep{Agrawal-bandits-95,AuerOS/07,Cope/06,Bobby-nips04,yahoo-bandits07}, the case of general metric spaces has not been treated before, despite being an extremely natural setting for bandit problems.

As a motivating example, consider the problem faced by a website choosing from a database of thousands of banner ads to display to users, with the aim of maximizing the click-through rate of the ads displayed by matching ads to users' characterizations and the web content that they are currently watching. Independently experimenting with each advertisement is infeasible, or at least highly inefficient, since the number of ads is too large. Instead, the advertisements are usually organized into a taxonomy based on metadata (such as the category of product being advertised) which allows a similarity measure to be defined.  The website can then attempt to optimize its learning algorithm by generalizing from experiments with one ad to make inferences about the performance of similar ads \citep{yahoo-bandits07,yahoo-bandits-icml07}.

Another motivating example is revenue-management problems (e.g., see \citep{KleinbergL03,BZ09}). Consider a monopolistic seller with unlimited inventory of many digital products, such as songs, movies, or software. Customers arrive over time, and the seller can give customized offers to each arriving customer so as to maximize the revenue. The space of possible offers is very large, both in terms of possible product bundles and in terms of the possible prices, so experimenting with each and every offer is inefficient. Instead, the seller may be able to use experiments with one offer to make inferences about similar offers.

Abstractly, we have a bandit
problem of the following form: there is a strategy
set $X$, with an unknown payoff function
$\mu \,:\, X \rightarrow [0,1]$ satisfying a set
of predefined constraints of the form
$|\mu(x) - \mu(y)| \leq \delta(x,y)$ for some
$x,y \in X$ and $\delta(x,y) > 0$.  In each
period the algorithm chooses a point
$x \in X$ and receives payoff --  a number in the $[0,1]$ interval --
sampled independently from some distribution $\prob_x$ whose expectation is  $\mu(x)$.

A moment's thought reveals that this abstract problem
can be regarded as a bandit problem in a metric space.
Specifically, define $\mD(x,y)$ to be the
infimum of the quantity $\sum_i \delta(x_i,x_{i+1})$ over all finite paths
$(x = x_0,\, x_1, \,\ldots,\, x_k = y)$ in $X$.
Then $\mD$ is a metric
%\footnote{More precisely, it is a pseudometric  because
%some pairs of distinct points $x,y \in X$ may satisfy
%$\mD(x,y)=0$.}
and the constraints $|\mu(x)-\mu(y)|
< \delta(x,y)$ may be summarized (equivalently reformulated) as follows:
\begin{align}\label{eq:Lipschitz-condition}
	|\mu(x)-\mu(y)| \leq \mD(x,y)	\quad
		\text{for all $x,y\in X$}.
\end{align}
In words, $\mu$ is a Lipschitz function (of Lipschitz constant
$1$) on the metric space $(X,\mD)$.

We assume that an algorithm is given the metric space $(X,\mD)$ as an input, with a promise that the payoff function $\mu$ satisfies~\eqref{eq:Lipschitz-condition}. We refer to this problem as the \emph{\problem} on $(X,\mD)$, and
we refer to the ordered triple $(X,\mD,\mu)$ as
an \emph{instance} of the \problem.\footnote{
Formally, the problem instance also includes the parameterized family of reward distributions $\prob_x$. To simplify exposition, we assume this family is fixed throughout, and therefore can be suppressed from the notation. When the metric space $(X,\mD)$ is understood from
context, we may also refer to $\mu$ as an instance.}

\subsection{Prior work}
\label{sec:intro.priorwork}

While our work is the first to treat the Lipschitz MAB
problem in general metric spaces, special cases of the
problem are implicit in prior work on the continuum-armed
bandit problem \citep{Agrawal-bandits-95,AuerOS/07,Cope/06,Bobby-nips04}
--- which corresponds to the space $[0,1]$ under the metric
$\ell_1^{1/d}$, $d \geq 1$
--- and the experimental work
on ``bandits for taxonomies'' \citep{yahoo-bandits07},
which corresponds to the
case in which $(X,\mD)$ is a tree metric.%
\footnote{Throughout the paper, $\ell_p$, $p\geq 1$ denotes a metric on a finite-dimensional real space given by
$\ell_p(x,y) = \|x-y\|_p$. }
Also, \citet{Hazan-colt07} considered a contextual bandit setting with a metric space on contexts rather than arms.

Before describing our results in greater detail, it is
helpful to put them
in context by recounting the nearly optimal bounds for the
one-dimensional continuum-armed bandit problem, a problem
first formulated in \citet{Agrawal-bandits-95}
and  solved (up to logarithmic factors)
by various authors \citep{AuerOS/07,Cope/06,Bobby-nips04}.
In the following theorem
and throughout this paper, the \emph{regret} of a multi-armed
bandit algorithm $\A$ running on an instance $(X,\mD,\mu)$ is
defined to be the function $R_{\A}(t)$ which measures
the difference between its expected payoff at time $t$
and the quantity $t \cdot \sup_{x\in X} \mu(x)$.
The latter quantity is the expected payoff of always playing
an arm $x \in \argmax \mu(x)$
if such arm exists. In regret-minimization, the main issue is typically how regret scales with $t$.

\begin{theorem}[\cite{AuerOS/07,Cope/06,Bobby-nips04}%
\footnote{\citet{AuerOS/07} and \citet{Cope/06} also achieve $\tilde{O}(\sqrt{T})$ regret under additional assumptions on the shape of the function near its optimum. }
] \label{thm:intro.1d}
For any $d \geq \N$, consider the Lipschitz MAB
problem on $([0,1],\ell_1^{1/d})$, $d\geq 1$. There is an algorithm whose regret on any instance $\mu$ satisfies
	$ R(t) = \tilde{O}(t^\gamma)$
for every $t$, where $\gamma = \frac{d+1}{d+2}$.
No such algorithm exists for any $\gamma < \frac{d+1}{d+2}$.
\end{theorem}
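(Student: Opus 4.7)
My plan is to reduce the problem to a finite-armed bandit problem by uniform discretization. Fix a parameter $K$ (to be optimized) and let $M_K \subset [0,1]$ be a mesh of $K$ equally spaced points. Any $x\in[0,1]$ lies at $\ell_1$-distance at most $1/(2K)$ from some mesh point, hence at $\ell_1^{1/d}$-distance at most $(2K)^{-1/d}$. By the Lipschitz condition \eqref{eq:Lipschitz-condition}, this means the best mesh point has expected payoff within $(2K)^{-1/d}$ of $\sup_x \mu(x)$, so the ``discretization regret'' over $t$ rounds is at most $O(t\,K^{-1/d})$. I would then run $\UCB$ (or any comparable stochastic MAB algorithm) restricted to $M_K$; its regret against the best arm in $M_K$ is $\tilde{O}(\sqrt{Kt})$. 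Balancing the two terms gives
\[
R(t) \;=\; \tilde{O}\bigl(\sqrt{Kt} + t\,K^{-1/d}\bigr),
\]
optimized by $K \asymp t^{d/(d+2)}$, yielding $R(t)=\tilde{O}(t^{(d+1)/(d+2)})$. To make this bound hold simultaneously for all $t$ (anytime guarantee) rather than for a single fixed horizon, I would either use a standard doubling trick in phases of geometrically growing length with $K$ retuned each phase, or apply an anytime variant of $\UCB$.

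\paragraph{Lower bound.} For the impossibility part my plan is a standard ``needle in a haystack'' information-theoretic argument. Fix $\epsilon>0$ and choose $N = \lfloor \tfrac{1}{2}\epsilon^{-d}\rfloor$ pairwise disjoint $\ell_1^{1/d}$-balls $B_1,\ldots,B_N$ of radius $\epsilon$ in $[0,1]$; such a packing exists because under $\ell_1^{1/d}$ a ball of radius $\epsilon$ has $\ell_1$-diameter $2\epsilon^d$. Define a ``null'' instance $\mu_0 \equiv 1/2$ and, for each $i$, a bumpy instance $\mu_i$ that equals $1/2$ outside $B_i$ and rises linearly (in $\ell_1^{1/d}$) to $1/2+\epsilon$ at the center of $B_i$; each $\mu_i$ satisfies the Lipschitz condition. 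Payoffs are Bernoulli with the corresponding mean. Any algorithm $\A$ that achieves regret $o(t^{(d+1)/(d+2)})$ on every instance must, under $\mu_i$, spend almost all of its $t$ pulls inside $B_i$. By a KL-divergence computation, the laws of the observed histories under $\mu_0$ and $\mu_i$ differ in total variation by at most $O(\epsilon\sqrt{T_i})$, where $T_i$ is the expected number of pulls inside $B_i$ under $\mu_0$. Averaging over $i$ and using $\sum_i T_i \le t$ forces some $i$ with $T_i \le t/N$, so the algorithm cannot distinguish $\mu_i$ from $\mu_0$ unless $t/N \gtrsim \epsilon^{-2}$, i.e.\ $t \gtrsim \epsilon^{-(d+2)}$. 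Choosing $\epsilon \asymp t^{-1/(d+2)}$ forces regret $\Omega(\epsilon\cdot t) = \Omega(t^{(d+1)/(d+2)})$ on some $\mu_i$, which rules out $\gamma<(d+1)/(d+2)$.

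\paragraph{Main obstacle.} The upper bound is largely mechanical once the two sources of regret are separated, so I expect the lower bound to be the technically delicate step. The subtlety is that the lower bound must rule out \emph{every} algorithm, including adaptive ones that concentrate pulls nonuniformly across the $B_i$, and it must yield a bound that holds for all sufficiently large $t$ (not just at a single $\epsilon$-dependent horizon). Handling this properly requires either an averaging argument over the family $\{\mu_i\}$ combined with a careful KL bound on Bernoulli observations, or a Fano-type inequality, together with a sequence of $\epsilon_k\to 0$ calibrated to the claimed exponent $\gamma$ so that the contradiction persists as $t\to\infty$.
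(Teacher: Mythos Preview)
Your upper bound is essentially what the paper describes (and proves in generalized form as Theorem~\ref{thm:naiveAlg}): uniform mesh, \UCB\ on the mesh, phases of geometrically increasing length for the anytime guarantee.

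The lower bound has a genuine gap. Your single-scale needle-in-haystack shows that for each fixed $t$ there is \emph{some} instance on which regret is $\Omega(t^{(d+1)/(d+2)})$. That rules out a \emph{uniform} $\tilde{O}(t^\gamma)$ bound with absolute constants, but it does not rule out an algorithm satisfying $R(t)=\tilde{O}_\mu(t^\gamma)$ with an instance-dependent constant---which is exactly the notion at stake here (cf.\ the paper's discussion of $f(t)$-tractability and Theorem~\ref{thm:intro.pmo}). Your proposed fix of ``a sequence of $\epsilon_k\to 0$'' does not close the gap: it still produces a \emph{different} hard instance at each scale, and an adversary cannot switch instances as $t$ grows.

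What is actually needed---and what the paper explicitly says the cited lower bound does---is a \emph{multi-scale} construction: a single payoff function containing nested needle-in-haystack structure at infinitely many scales simultaneously, so that at every scale the algorithm must locate the correct sub-ball among many indistinguishable siblings. The paper's own machinery in Section~\ref{sec:PMO-LB} formalizes this via a ball-tree and proves (Lemma~\ref{lm:pmo-LB-balltree}) that the resulting \emph{single} random instance forces $R(t)\ge C\,t^{1-1/(d+2)}$ for infinitely many $t$, which is what defeats any instance-dependent constant. Your KL/averaging step is the right ingredient at one level of the tree (it becomes Theorem~\ref{thm:LB-technique-MAB}), but you are missing the recursive nesting and the Borel--Cantelli-type argument that stitches the scales together into one instance.
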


In fact, if the time horizon $t$ is known in advance,
the upper bound in the theorem can be achieved
by an extremely na\"{i}ve algorithm which
uses an optimal $k$-armed bandit algorithm, such as the
\UCB algorithm \citep{bandits-ucb1}, to choose
arms from the set
	$S = \{0, \tfrac1k,\, \tfrac2k,\,\ldots,1\}$,
for a suitable choice of the parameter $k$.
Here the arms in $S$ partition the strategy set in a uniform (and non-adaptive) way; hence, we call this algorithm \NaiveAlg.

\subsection{Initial result}

We make an initial observation that the analysis of algorithm \NaiveAlg in Theorem~\ref{thm:intro.1d} only relies on the covering properties of the metric space, rather than on its real-valued structure, and (with minor modifications) can be extended to any metric space of constant covering dimension $d$.

\begin{theorem}\label{thm:intro-covDim}
Consider the \problem on a metric space of covering dimension $d\geq 0$.
There is an algorithm whose regret on any instance $\mu$ satisfies
	$ R(t) = \tilde{O}(t^\gamma)$
for every $t$, where $\gamma = \frac{d+1}{d+2}$.
\end{theorem}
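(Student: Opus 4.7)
The plan is to extend the naive discretization approach from Theorem~\ref{thm:intro.1d} essentially verbatim, by observing that the one-dimensional mesh $S = \{0, 1/k, \ldots, 1\}$ enters its analysis only through two properties: it is a $k^{-1/d}$-net for the metric $\ell_1^{1/d}$, and it has cardinality $k+1$. For a general metric space of covering dimension $d$, I can replace this mesh by any finite $\epsilon$-net of size $O(\epsilon^{-d})$, whose existence is guaranteed by the definition of covering dimension. The resulting algorithm is essentially the metric-space analogue of \NaiveAlg.

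More concretely, fix a target time $t$ and a radius $\epsilon > 0$ to be tuned later. By the definition of covering dimension, there exists a finite set $S \subseteq X$ with $|S| \leq c\, \epsilon^{-d}$ such that every $x \in X$ has some $s \in S$ with $\mD(x,s) \leq \epsilon$. I would then run \UCB on the arm set $S$. Writing $\mu^* = \sup_{x \in X}\mu(x)$ and $\mu^*_S = \max_{s \in S}\mu(s)$, the regret of this composite algorithm against the best arm in $X$ decomposes as
\[
    R(t) \;\leq\; t \cdot (\mu^* - \mu^*_S) \;+\; R_S(t),
\]
where $R_S(t)$ denotes the regret of \UCB run on the arm set $S$ against the best arm in $S$. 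The first term is at most $\epsilon\, t$ by the Lipschitz condition~\eqref{eq:Lipschitz-condition}, since any point in $X$ approaching the supremum lies within $\epsilon$ of some $s \in S$. The second term is the standard $K$-armed bandit guarantee, bounded by $\tilde{O}(\sqrt{|S|\cdot t}) = \tilde{O}(\sqrt{c\, \epsilon^{-d}\, t})$. Balancing by choosing $\epsilon \asymp t^{-1/(d+2)}$ yields $R(t) = \tilde{O}(t^{(d+1)/(d+2)})$, the desired bound.

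The remaining issue is that the above choice of $\epsilon$ depends on the horizon $t$, so the algorithm described so far is not anytime. I would handle this with the standard doubling trick: run the procedure in phases of geometrically increasing lengths $2^i$, reinitializing inside each phase with a fresh $\epsilon_i$-net tuned to that phase's length. Since the per-phase regret is polynomial in the phase length with exponent $\gamma < 1$, summing the geometric series of per-phase regrets preserves the bound $\tilde{O}(t^{(d+1)/(d+2)})$ at every $t$.

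I do not anticipate a major obstacle here: the argument is essentially the one-dimensional proof transcribed into the covering-number language, and no geometric structure of $(X,\mD)$ beyond its covering numbers is used. The only point requiring a little care is the precise definition of covering dimension adopted in the paper; if it is an infimum over exponents $d'$ satisfying $N(\epsilon) \leq c_{d'}\epsilon^{-d'}$, one must instantiate the argument with $d + \delta$ for arbitrarily small $\delta > 0$ and absorb the resulting slack into the $\tilde{O}$ notation.
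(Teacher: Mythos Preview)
Your proposal is correct and follows essentially the same approach as the paper: take a $\delta$-net of size $O(\delta^{-d})$, run \UCB on it, decompose regret into the $\delta t$ approximation term plus the $\tilde{O}(\sqrt{\delta^{-d}t})$ \UCB term, balance by setting $\delta \asymp t^{-1/(d+2)}$, and use the doubling trick (phases of length $2^i$) to remove the horizon dependence. Your treatment of the approximation term via points approaching the supremum is in fact slightly more careful than the paper's, which tacitly invokes compactness to assert that an optimal arm exists.
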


The covering dimension is a standard notion which summarizes covering properties of a metric space. It is defined as the smallest (infimum) number $d\geq 0$ such that $X$ can be covered by $O(\delta^{-d})$ sets of diameter $\delta$, for each $\delta>0$. We denote it $\COV(X,\mD)$, or $\COV(X)$ when the metric $\mD$ is clear from the context. The covering dimension generalizes the Euclidean dimension, in the sense that the covering dimension of $([0,1]^d,\ell_p)$, $p\geq 1$ is $d$. Unlike the Euclidean dimension, the covering dimension can take fractional values.
Theorem~\ref{thm:intro-covDim} generalizes the upper bound in Theorem~\ref{thm:intro.1d} because the covering dimension of $([0,1],\ell_1^{1/d})$ is $d$, for any $d\geq 1$.

\subsection{Present scope}
\label{subsec:intro-scope}

This paper is a comprehensive study of the \problem in arbitrary metric spaces.

While the regret bound in Theorem~\ref{thm:intro.1d} is
essentially optimal when the metric space is $([0,1],\,\ell_1^{1/d})$, it is strikingly odd that it is achieved by such a simple algorithm as \NaiveAlg.  In particular,
the algorithm approximates the strategy set by a
fixed mesh $S$ and does not refine
this mesh as it gains information about the location
of the optimal arm.  Moreover, the metric
contains seemingly useful proximity information,
but the algorithm ignores this information after
choosing its initial mesh.  Is this really the best
algorithm?

A closer examination of the lower bound proof
raises further reasons for suspicion: it is
based on a contrived, highly singular payoff function
$\mu$ that alternates between being constant on some
distance scales and being very steep on other (much smaller)
distance scales, to create a multi-scale ``needle in haystack''
phenomenon which nearly obliterates the usefulness
of the proximity information contained in the
metric $\ell_1^{1/d}$.  Can we expect algorithms to do better
when the payoff function is more benign?%
\footnote{Here and elsewhere we use ``benign'' as a non-technical term.}
For the
\problem on $([0,1],\ell_1)$, the question
was answered affirmatively in \citet{Cope/06,AuerOS/07}
for some classes of instances,
with algorithms that are tuned to the specific classes.
%It should be noted, however, that the algorithms in \cite{Cope/06, AuerOS/07}
%are tuned to these classes of instances and do not (really) work
%or the general case.

We are concerned with the following two directions motivated by the discussion above:
\begin{description}
\item[(Q1)] \emph{Per-metric optimality.}
What is the best possible bound on regret for a given metric space?
 (Implicitly, such regret bound is worst-case over all payoff functions consistent with this metric space.) Is \NaiveAlg, as naive as it is, really an optimal algorithm? Is covering dimension an appropriate structure to characterize such worst-case regret bounds?

\item[(Q2)] \emph{Benign problem instances.}
 Is it possible to take advantage of benign payoff functions? What structures would be useful to characterize benign payoff functions and the corresponding better-than-worst-case regret bounds? What algorithmic techniques would help?

\end{description}

Theorem~\ref{thm:intro-covDim} calibrates our intuition: for relatively ``rich'' metric spaces such as $([0,1],\ell_1^{1/d})$ we expect regret bounds of the form $\tilde{O}(t^\gamma)$, for some constant $\gamma\in (0,1)$ which depends on the metric space, and perhaps also on the problem instance. Henceforth, we will call this  \emph{polynomial regret}. Apart from metric spaces that admit polynomial regret, we are interested in the extremes: metric spaces for which the Lipschitz MAB problem becomes very easy or very difficult.

It is known that one can achieve logarithmic regret as long as the number of arms is finite \citep{Lai-Robbins-85,bandits-ucb1}.%
\footnote{The constant in front of the $\log(t)$ increases with the number of arms and also depends on instance-specific parameters.
$O(\log t)$ regret is optimal even for two arms \citep{Lai-Robbins-85}.}
On the other hand, all prior results for infinite metric spaces had regret $O(t^\gamma)$, $\gamma\geq \tfrac12$.
We view problem instances with $O(\log t)$ regret as ``very tractable'', and those with regret $t^{\gamma}$, $\gamma\geq \tfrac12$, as ``somewhat tractable''. It is natural to ask what is the transition between the two.

\begin{description}
\item[(Q3)] Is $\tilde{O}(\sqrt{t})$ regret the best possible for an infinite metric space? Alternatively, are there infinite metric spaces for which one can achieve regret $O(\log t)$? Is there any metric space for which the best possible regret is \emph{between} $O(\log t)$ and $\tilde{O}(\sqrt{t})$?
\end{description}

On the opposite end of the ``tractability spectrum'' of the \problem, there are metric spaces of infinite covering dimension, for which no algorithm can have regret of the form $O(t^\gamma)$, $\gamma<1$. Intuitively, such metric spaces are intractable. Formally, will define ``intractable'' metric spaces is those that do not admit sub-linear regret.

\begin{description}
\item[(Q4)] Which metric spaces are tractable, i.e. admit $o(t)$ regret?
\end{description}

We are also interested in the full-feedback version of the \problem, where after each round the payoff for each arm can be queried by the algorithm. Such settings have been extensively studied in the online learning literature under the name \emph{best experts problems} \citep{experts-jacm97,CesaBL-book,vovk98}. Accordingly, we call our setting the \emph{\FFproblem}. To the best of our knowledge, prior work for this setting includes the following two results: constant regret for a finite set of arms \citep{sleeping-colt08}, and $\tilde{O}(\sqrt{t})$ regret for metric spaces of bounded covering dimension \citep{Anupam-experts07}; the latter result uses a version of the \NaiveAlg. We are interested in per-metric optimality (Q1), including the extreme versions (Q3) and (Q4). For polynomial regret the goal is to handle metric spaces of infinite covering dimension.

%The necessary background and results are gathered in Section~\ref{sec:experts-overview}.

\subsection{Our contributions: \problem}
\label{subsec:intro-poly}

We give a complete solution to (Q1), by describing for every metric space $(X,\mD)$ a family of algorithms which come arbitrarily close to achieving the best possible regret bound for this metric space. In particular, we resolve (Q3) and (Q4). We also give a satisfactory answer to (Q2); our solution is arbitrarily close to optimal in terms of the zooming dimension defined below.
%it is essentially optimal in the sense made precise in the next paragraph.

\OMIT{ %%%%%%%%%%%%%%%%%%%%
Our main technical contribution is a new algorithm,
the \emph{zooming algorithm},
that combines the upper confidence bound
technique used in earlier bandit algorithms
such as \UCB \citep{bandits-ucb1} with a novel \emph{adaptive
refinement} step that uses past history to zoom
in on regions near the apparent maxima of $\mu$
and to explore a denser mesh of strategies in
these regions.  This algorithm is a key ingredient
in our design of an optimal bandit algorithm for
every metric space $(X,\mD)$.  Moreover,
we show that the zooming algorithm can perform significantly
better on benign problem instances.  That is, for every instance
$(X,\mD,\mu)$ we define a parameter called the \emph{zooming
dimension} which is often significantly smaller than
$\MaxMinCOV(X)$, and we bound the algorithm's performance
in terms of the zooming dimension of the problem instance.
Since the zooming algorithm is self-tuning, it achieves
this bound without requiring prior knowledge of the
zooming dimension.
} %%%%%%%%%%%%%%%%%%%%%%%

Underpinning these contributions is a new algorithm, called the \emph{zooming algorithm}. It maintains a mesh of ``active arms'', but (unlike $\NaiveAlg$) adapts this mesh to the observed payoffs. It combines the upper confidence bound technique used in earlier bandit algorithms such as \UCB \citep{bandits-ucb1} with a novel \emph{adaptive refinement} step that uses past history to refine the mesh (``zoom in'') in regions with high observed payoffs. We show that the zooming algorithm can perform significantly better on benign problem instances. Moreover, it is a key ingredient in our design of a per-metric optimal bandit algorithm.

%%%%%%%%%%%%%%%%%%%%%%%%%%%%%%%%%%%%
%%%%%%%%%%%%%%%%%%%%%%%%%%%%%%%%%%%%

\xhdr{Benign problem instances.}
For every problem instance $(X,\mD,\mu)$ we define a parameter called the \emph{zooming dimension}, and use it to bound the performance of the zooming algorithm in a way that is often significantly stronger than the corresponding per-metric bound. Note that the zooming algorithm is \emph{self-tuning}, i.e. it achieves this bound without requiring prior knowledge of the zooming dimension. Somewhat surprisingly, our regret bound for the zooming algorithm result has exactly the same ``shape'' as Theorem~\ref{thm:intro-covDim}.

\begin{theorem} \label{thm:intro.zooming}
If $d$ is the zooming dimension of a Lipschitz MAB
instance then at any time $t$ the zooming
algorithm suffers regret
	$\tilde{O}(t^{\gamma})$, where $\gamma = \tfrac{d+1}{d+2}$.
%Moreover, this is the best possible exponent $\gamma$ as a function of $d$.
\end{theorem}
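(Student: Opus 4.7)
The plan is to analyze the zooming algorithm scale by scale for the upper bound, and for the matching exponent to exhibit a needle-in-haystack family of instances whose zooming dimension is exactly $d$.

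\emph{Upper bound.} Let $\mu^{*} = \sup_{x\in X}\mu(x)$ and $\Delta(x) = \mu^{*}-\mu(x)$. For each integer $i\geq 1$ partition the strategy space into scales
\[
Y_{i} = \{x\in X : 2^{-i} < \Delta(x) \leq 2^{-(i-1)}\},
\]
and write the regret at time $t$ as $R(t) = \sum_{i\geq 1}\sum_{x\in Y_{i}}\Delta(x)\, n_{t}(x)$, where $n_{t}(x)$ is the number of plays of $x$ up to time $t$. For an active arm $x\in Y_{i}$, a standard UCB-style argument --- comparing $x$'s upper confidence bound to that of a near-optimal arm and using the Lipschitz property~\eqref{eq:Lipschitz-condition} to transfer the comparison across the metric --- shows $n_{t}(x) = O(\log t / \Delta(x)^{2}) = O(2^{2i}\log t)$.

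\emph{Packing and zooming dimension.} The adaptive refinement rule of the zooming algorithm activates a new arm only when no currently active arm lies within its confidence radius; this forces the active arms in $Y_{i}$ to be mutually $\Omega(2^{-i})$-separated, i.e.\ to form a packing at scale $\Theta(2^{-i})$ of a set of arms that are all $O(2^{-i})$-near-optimal. By the definition of the zooming dimension $d$, the cardinality of any such packing is $O(2^{di})$. Combining the two estimates, the regret contributed by scale $i$ is at most
\[
O(2^{-i})\cdot O(2^{di})\cdot O(2^{2i}\log t)
\;=\; O\!\left(2^{(d+1)i}\log t\right).
\]
Summing geometrically up to a cutoff $i^{*}$ and bounding the tail contribution from arms with $\Delta(x)\leq 2^{-i^{*}}$ by $t\cdot 2^{-i^{*}}$, the optimal choice $2^{i^{*}} = \Theta(t^{1/(d+2)})$ balances the two terms and yields $R(t) = \tilde{O}(t^{(d+1)/(d+2)})$. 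Since $d$ itself is never used by the algorithm, the bound is achieved in a self-tuning manner.

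\emph{Lower bound.} For optimality of the exponent as a function of $d$, I would exhibit for each $d\geq 0$ a metric space together with a family of hard instances of zooming dimension at most $d$ on which every algorithm must incur $\tilde{\Omega}(t^{(d+1)/(d+2)})$ regret. The canonical choice is $([0,1]^{d},\ell_{\infty})$ (or equivalently $([0,1],\ell_{1}^{1/d})$) equipped with a payoff function consisting of a single sharp bump of appropriately tuned width and height centered at an unknown location; this is the standard needle-in-haystack construction behind the lower bound in Theorem~\ref{thm:intro.1d} of~\citep{AuerOS/07,Cope/06,Bobby-nips04}. One then verifies that such instances have zooming dimension at most $d$, matching the upper bound exponent.

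\emph{Main obstacle.} The delicate part is calibrating the activation rule and the confidence radii so that the active arms in $Y_{i}$ simultaneously (i) form a genuine $\Theta(2^{-i})$-packing, (ii) all lie in an $O(2^{-i})$-near-optimal set so that the zooming-dimension packing bound applies, and (iii) cover the near-optimal strategies densely enough that no significant regret is lost to un-activated arms. Threading these three properties together through a single choice of thresholds, while keeping the per-arm UCB argument valid in the presence of adaptive activations, is where the real work lies; the scale-by-scale summation above is then routine.
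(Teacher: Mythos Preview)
Your proposal is correct and follows essentially the same route as the paper: the upper bound via the scale decomposition, the per-arm UCB bound $n_t(x)=O(\log t/\Delta(x)^2)$, the separation property of active arms from the activation rule, and the zooming-dimension covering bound summed with a cutoff is exactly the analysis in Section~\ref{subsec:zooming-analysis}; the lower bound via the needle-in-haystack family on $([0,1],\ell_1^{1/d})$ is likewise how the paper derives optimality (the worst case where zooming dimension equals covering dimension).

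One point to tighten in the lower bound: you wrote that one verifies the hard instances have zooming dimension \emph{at most} $d$, but that direction is automatic (zooming dimension never exceeds covering dimension). What must be checked is that they have zooming dimension \emph{equal to} $d$; otherwise your own upper bound, applied with the smaller zooming dimension, would contradict the $\tilde\Omega(t^{(d+1)/(d+2)})$ lower bound you are invoking. For the flat-background-plus-bump instances this does hold: at scale $r$ comparable to the bump height $\eps$, the annulus $X_{\mu,r}$ contains essentially the entire flat region and therefore requires $\Theta(r^{-d})$ sets of diameter $<r/8$ to cover.
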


The exponent $\gamma$ in the theorem is the best possible, as a function of $d$, in light of
Theorem~\ref{thm:intro.1d}.

While covering dimension is about covering the entire metric space, zooming dimension focuses on covering near-optimal arms. The lower bounds in Theorem~\ref{thm:intro.1d} and Theorem~\ref{thm:intro.pmo} are based on contrived examples with a high-dimensional set of near-optimal arms which leads to the ``needle-in-the-haystack'' phenomenon. We sidestep these examples if the set of near-optimal arms is low-dimensional, in the sense that we make formal below. We define the \emph{zooming dimension} of an instance $(X,\mD,\mu)$ as the smallest $d$ such that the following covering property holds:
for every $\delta>0$ we require only $O(\delta^{-d})$
sets of diameter $\delta/8$ to cover
the set of arms whose expected payoff falls short of
the optimum by an amount between $\delta$ and $2 \delta$.

Zooming dimension is our way to quantify the benignness of a problem instance. It is trivially no larger than the covering dimension, and can be significantly smaller. Below let us give some examples:

\begin{itemize}

\item Suppose a low-dimensional region $S\subset X$ contains all arms with optimal or near-optimal payoffs. The zooming dimension of such problem instance is bounded from above by the covering dimension of $S$. For example, $S$ can be a ``thin'' subtree of an infinitely deep tree.%
    \footnote{Consider an infinitely deep rooted tree and let arms correspond to ends of the tree (i.e., infinite paths away from the root). The distance between two ends decreases exponentially in the height of their least common ancestor (i.e., the deepest vertex belonging to both paths). Suppose there is a subtree in which the branching factor is smaller than elsewhere in the tree. Then we can take $S$ to be the set of ends of this subtree.}

\OMIT{ %%%%%
\item Consider a metric space consisting of a high-dimensional region and a low-dimensional region. For concreteness, consider a rooted tree $T$ with two top-level branches $T'$ and $T''$ which are complete infinite $k$-ary trees for $k=2$ and $k=10$, respectively. Assign edge weights in $T$ that are exponentially decreasing in the distance (number of edges) from the root, and let $\mD$ be the resulting shortest-path metric on the leaf set $X$. (Here a \emph{leaf} is as an  infinite path away from the root.) Then $T'$ is the low-dimensional region and $T''$ is a high-dimensional region. If there is a unique optimal arm that lies in $T'$ then the zooming dimension is bounded from above by the covering dimension of $T'$, whereas the covering dimension of the entire tree $T$ is that of $T''$.
} %%%%

\item Suppose the metric space is $([0,1], \ell_1^{1/d})$, $d\in \N$, and the expected payoff of each arm $x$ is determined by its distance from the best arm:
	$\mu(x) = \max(0,\,\mu^*-\mD(x,x^*))$
    for some number $\mu^*\in (0,1]$ and some arm $x^*$. Then the zooming dimension is $0$, whereas the covering dimension is $d$.

\item Suppose the metric space is $([0,1]^d, \ell_2)$, $d\in \N$, and payoff function $\mu$ is $C^2$-smooth. Assume $\mu$ has a unique maximum $x^*$ and is strongly concave in a neighborhood of $x^*$. Then the zooming dimension is $d/2$, whereas the covering dimension is $d$.

\end{itemize}

It turns out that the analysis of the zooming algorithm does not require the similarity function $\mD$ to satisfy the triangle inequality, and needs only a relaxed version of the Lipschitz condition~\refeq{eq:Lipschitz-condition}.

\begin{theorem}[Informal] \label{thm:intro.zooming-extended}
The upper bound in Theorem~\ref{thm:intro.zooming} holds in a more general  setting where the similarity function $\mD$ does not satisfy the triangle inequality, and Lipschitz condition~\refeq{eq:Lipschitz-condition} is relaxed to hold only if one of the two arms is optimal.
\end{theorem}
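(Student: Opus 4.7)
The plan is to inspect the proof of Theorem~\ref{thm:intro.zooming} and verify that each step either avoids $\mD$-arithmetic entirely or uses the Lipschitz property only in the isolated place where the relaxed hypothesis suffices. Crucially, the zooming algorithm itself is phrased purely in terms of $\mD$ as a similarity function (confidence balls, coverage invariant, activation rule) and never invokes triangle inequality, so no change to the algorithm is needed; only the analysis must be re-checked.

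The analysis splits into three pieces. Piece (i) is the clean-execution event, on which $|\hat{\mu}_t(x) - \mu(x)| \le r_t(x)$ holds for every active arm and every round; this is a Hoeffding-type bound with no reference to $\mD$ and transfers verbatim. Piece (ii) is the per-arm gap bound $\Delta_{x_t} \le O(r_t(x_t))$ for each selected arm $x_t$. Here I would use the coverage invariant to pick the active arm $x^{\dagger}$ whose confidence ball covers an optimal arm $x^*$, so that $\mD(x^{\dagger}, x^*) \le r_t(x^{\dagger})$; the relaxed Lipschitz condition, applied to the pair $(x^{\dagger}, x^*)$, yields $\mu^* - \mu(x^{\dagger}) \le r_t(x^{\dagger})$. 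Clean execution then makes the UCB index of $x^{\dagger}$ at least $\mu(x^{\dagger}) + r_t(x^{\dagger}) \ge \mu^*$, and the maximality of $x_t$'s index combined with clean execution yields $\Delta_{x_t} \le 3\, r_t(x_t)$. This is the only invocation of Lipschitz in the entire analysis, and it uses the hypothesis exclusively on pairs containing $x^*$. Since confidence radii are monotone non-increasing in time, the bound also implies $r_t(x) \ge \Omega(\Delta_x)$ at every round when $x$ is active.

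Piece (iii) is the covering count. Any two active arms $x \ne y$ satisfy $\mD(x,y) > \min(r(x), r(y))$, since otherwise the later-activated of the two would already have been covered at its own activation time; combined with the previous paragraph, this shows that the active arms at scale $\Delta$ (those with $\Delta_x \in [\Delta, 2\Delta]$) are pairwise $\mD$-separated by $\Omega(\Delta)$. By the definition of zooming dimension $d$, the scale-$\Delta$ level set is covered by $O(\Delta^{-d})$ sets of $\mD$-diameter $\Delta/8$, and each such set can hold at most one scale-$\Delta$ active arm, so there are at most $O(\Delta^{-d})$ of them. Summing the standard ``$\tilde{O}(1/\Delta)$ regret per scale-$\Delta$ arm'' over a geometric sequence of scales, with a cutoff optimized in $t$, reproduces the bound $\tilde{O}(t^{(d+1)/(d+2)})$.

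The step I expect to be most delicate is the separation claim in piece (iii), because the usual proof of inter-active-arm separation often threads a triangle inequality via an auxiliary point. Here I must argue separation purely from the activation-rule inequality $\mD(x,y) > \min(r(x),r(y))$, the monotonicity of confidence radii, and the scale equivalence $r(x) \asymp \Delta_x$ from piece (ii); no comparison of the form $\mD(x,y) \le \mD(x,z) + \mD(z,y)$ and no Lipschitz inequality for non-optimal pairs $(x,y)$ may be used. As sketched, all three of these ingredients are available under the weakened hypotheses, so the argument should go through cleanly and yield the stated upper bound.
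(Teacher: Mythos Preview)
Your proposal is correct and matches the paper's approach essentially step for step: the paper's analysis in Section~\ref{subsec:zooming-analysis} is organized into exactly your three pieces (Claim~\ref{cl:conf-rad}, Lemma~\ref{lm:bound-active}, and Corollary~\ref{cor:sparsity}), and the paper explicitly notes that the only invocation of the Lipschitz condition is in Lemma~\ref{lm:bound-active} on the pair (covering arm, near-optimal arm), while Corollary~\ref{cor:sparsity} uses only the activation rule $\mD(x,y) > r_s(x)$ with no triangle inequality. The one minor refinement in the paper is that it handles the case where $\sup_x \mu(x)$ is not attained via the $\eps$-quantified formulation~\eqref{eq:relaxedLipschitz}, but for the informal statement as given your assumption that an optimal $x^*$ exists is fine.
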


In addition to the two theorems above, we apply the zooming algorithm to the following special cases, deriving improved or otherwise non-trivial regret bounds:
(i) the maximal payoff is near $1$,
(ii)    $\mu(x) = 1-f(\mD(x,S))$,
    where $S$ is a ``target set'' that is not revealed to the algorithm,
(iii) the reward from playing each arm $x$ is $\mu(x)$ plus an independent, benignly distributed noise.

In particular, we obtain an improved regret rate if the rewards are deterministic. This corollary is related to the literature on global Lipschitz optimization (e.g., see \citet{Floudas-book99}), and extends this literature by relaxing the Lipschitz assumption as in Theorem~\ref{thm:intro.zooming-extended}.

While our definitions and results so far have been tailored to infinite strategy sets, they can be extended to the finite case as well. We use a more precise, \emph{non-asymptotic} version of the zooming dimension, so that all results on the zooming algorithm are meaningful for both finite and infinite strategy sets.

%Extending the notions in Theorem~\ref{thm:intro.pmo} to the finite case is an open question.

%%%%%%%%%%%%%%%%%%%%%%%%%%%%
%%%%%%%%%%%%%%%%%%%%%%%%%%%%
\xhdr{Per-metric optimality: full characterization.}
%\label{subsec:intro-pmo}
We are interested in \emph{per-metric optimal} regret bounds: best possible regret bounds for a given metric space. We prove several theorems, which jointly provide a full characterization of per-metric optimal regret bounds for any given metric space $(X,\mD)$. To state polynomial regret bounds in this characterization, we define a parameter of the metric space called \emph{max-min-covering dimension} ($\MaxMinCOV$). Our characterization is summarized in the table below.

\begin{table}[h]
\begin{center}
%\tbl{Per-metric optimal regret bounds for Lipschitz MAB\label{tab:PMO-results}}{%
\begin{tabular}{l|l|l}
If the metric completion of $(X,\mD)$ is ...  & then regret can be ...    & but not ... \\ \hline \\
finite
                          & $O(\log t)$           & $o(\log t)$ \\
compact and countable    & $\omega(\log t)$      & $O(\log t)$ \\
compact and uncountable  & & \\
~~~~~~~~~~~~
$\MaxMinCOV=0$
    & $\tilde{O}\left( t^\gamma \right)$, $\gamma >\tfrac12$
    & $o(\sqrt{t})$  \\
~~~~~~~~~~~~
$\MaxMinCOV=d\in (0,\infty)$
    & $\tilde{O}\left( t^\gamma \right)$, $\gamma >\tfrac{d+1}{d+2}$
    & $o\left( t^\gamma \right)$, $\gamma <\tfrac{d+1}{d+2}$ \\
~~~~~~~~~~~~
$\MaxMinCOV=\infty$   & $o(t)$  & $O\left( t^\gamma \right)$, $\gamma<1$ \\
non-compact           & $O(t)$  & $o(t)$
\end{tabular}
\end{center}
\caption{Per-metric optimal regret bounds for Lipschitz MAB}
\label{tab:PMO-results}
\end{table}

Table~\ref{tab:PMO-results} should be interpreted as follows. We consider regret bounds with an instance-dependent constant, i.e. those of the form
    $R(t) \leq C_\mI\, f(t)$,
for some function $f:\N\to\R$ and a constant $C_\mI$ that can depend on the problem instance $\mI$; we denote this as $R(t) = O_\mI(f(t))$. Let us say that the \problem on a given metric space is $f(t)$-tractable if there exists an algorithm whose regret satisfies $R(t) = O_\mI(f(t))$. Then \cite{Lai-Robbins-85,bandits-ucb1} show that the problem is $\log(t)$-tractable if the metric space has finitely many points (here the instance-dependent constant $C_\mI$ is essential), and not $f(t)$-tractable for any $f(t)=o(\log t)$. Thus, the first row of Table~\ref{tab:PMO-results} reads $O(\log t)$ and $o(\log t)$, respectively; other rows should be interpreted similarly.

In what follows, we discuss the individual results which comprise the characterization in Table~\ref{tab:PMO-results}.

\xhdr{Per-metric optimality: polynomial regret bounds.}
The definition of the max-min-covering dimension arises naturally as one
tries to extend the lower bound from \cite{Bobby-nips04} to general
metric spaces. The min-covering dimension of a subset $Y\subset X$ is the smallest covering dimension of any non-empty subset $U\subset Y$ which is open in the metric topology of $(Y,\mD)$. Further, the \emph{max-min-covering dimension} of $X$, denoted $\MaxMinCOV(X)$, is the largest min-covering dimension of any subset $Y\subset X$. In a formula:
\begin{align*}
\MaxMinCOV(X) = \sup_{Y\subset X} \left(
\inf_{\text{non-empty $U\subset Y$:\, $U$ is open in $(Y,\mD)$}} \;\COV(U) \right).
\end{align*}

\noindent We find that $\MaxMinCOV$ is precisely the right notion to characterize per-metric optimal regret.

\begin{theorem} \label{thm:intro.pmo}
Consider the \problem on a compact metric space with
    $d = \MaxMinCOV(X)$.
If $\gamma > \tfrac{d+1}{d+2}$ then there exists a bandit algorithm $\A$ such that for every problem instance $\mI$ its regret satisfies
	$R(t) = O_{\mI}(t^\gamma)$ for all $t$.
No such algorithm exists if $d>0$ and $\gamma < \tfrac{d+1}{d+2}$.
\end{theorem}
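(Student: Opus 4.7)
The theorem splits into two directions, each requiring a distinct technique.

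For the lower bound, fix any $d' < d = \MaxMinCOV(X)$. By definition of $\MaxMinCOV$, there exists a subset $Y \subseteq X$ such that every non-empty $U \subseteq Y$ open in the subspace topology of $Y$ satisfies $\COV(U) \geq d'$. Restrict attention to $Y$. For a target time horizon $t$ and a scale $\delta = \delta(t)$, the uniform lower bound on local covering dimension inside $Y$ lets me locate a small region containing $\Omega(\delta^{-d'})$ pairwise $\delta$-separated points. Build a family of ``needle-in-haystack'' instances indexed by these points: a baseline payoff $\mu \equiv \tfrac12$ perturbed by a Lipschitz bump of height and radius $\Theta(\delta)$ around the selected peak. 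A standard KL-divergence argument (as in \citet{Bobby-nips04}) shows that identifying the correct peak requires $\Omega(\delta^{-d'-2})$ samples, which after optimizing the choice of $\delta$ as a function of $t$ forces regret $\Omega(t^{(d'+1)/(d'+2)})$ on at least one instance. Letting $d' \uparrow d$ yields the claimed lower bound for every $\gamma < (d+1)/(d+2)$.

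For the upper bound, fix $\gamma > (d+1)/(d+2)$ and pick $d' > d$ with $(d'+1)/(d'+2) < \gamma$. The key structural step is a decomposition lemma: $X$ admits a transfinite partition $X = \bigsqcup_{\alpha < \beta} V_\alpha$, indexed by a countable ordinal $\beta$, where each $V_\alpha$ is relatively open in its ``residual'' closed set and has $\COV(V_\alpha) \leq d'$. I construct this by transfinite recursion in the spirit of Cantor--Bendixson: set $F_0 = X$; at successor stages use $\MaxMinCOV(F_\alpha) \leq d$ to peel off a non-empty, relatively open $V_\alpha \subseteq F_\alpha$ with $\COV(V_\alpha) \leq d'$, and put $F_{\alpha+1} = F_\alpha \setminus V_\alpha$; at limit stages take intersections. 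Compactness of $X$ together with separability forces termination at a countable ordinal.

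Given the decomposition, I deploy a guess-and-double scheme over an enumeration $V_1, V_2, \ldots$ of the pieces. In epoch $j$, run the optimal low-dimensional algorithm from Theorem~\ref{thm:intro-covDim} (or the zooming algorithm of Theorem~\ref{thm:intro.zooming}) on $W_j := V_1 \cup \cdots \cup V_{k(j)}$ with time budget $t_j = 2^j$, for a slowly growing $k(j)$. For any fixed instance $\mI$, the quantity $\sup_X \mu$ is $\epsilon$-approximated in some piece $V_{\alpha^*}$; once $j$ is large enough that $V_{\alpha^*} \subseteq W_j$, a zooming-style analysis restricted to $W_j$ delivers per-epoch regret $\tilde O(t_j^{(d'+1)/(d'+2)}) = O_\mI(t_j^\gamma)$. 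Summing over epochs and absorbing the pre-threshold cost into the instance-dependent constant $C_\mI$ gives the desired $R(t) = O_\mI(t^\gamma)$.

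The principal obstacle is controlling the zooming dimension on the union $W_j$: a priori, near-optimal arms could accumulate across several pieces whose local maxima are all close to $\mu^* = \sup_X \mu$, inflating the effective dimension above $d'$. I would resolve this either by (i) running a preliminary uniform-mesh phase on each piece to discard those whose empirical maximum is noticeably below $\mu^*$ before invoking the zooming algorithm, or (ii) refining the zooming analysis to show that the cover at scale $\delta$ is dominated, up to an additive instance-dependent term, by arms inside the single piece $V_{\alpha^*}$ that achieves the supremum. Making this interaction with the transfinite recursion and doubling schedule rigorous, while keeping all dependence on $\mI$ inside the constant $C_\mI$, is where the real technical work lies.
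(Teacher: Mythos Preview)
Your lower bound has a genuine gap. A single-scale needle-in-haystack construction, as you describe, produces for each fixed time $t$ an instance on which regret is $\Omega(t^{(d'+1)/(d'+2)})$---but the instance depends on $t$. This does \emph{not} rule out $O_{\mI}(t^{\gamma})$-tractability, because the instance-dependent constant $C_{\mI}$ can absorb the bad behaviour at any finite collection of times. What you must exhibit is a \emph{single} instance $\mI$ on which $R_{\A,\mI}(t) \geq C\, t^{(d'+1)/(d'+2)}$ for \emph{infinitely many} $t$ (the paper stresses this point explicitly). The paper achieves this via a ball-tree: an infinite tree of nested balls, with each node of radius $r$ having $\geq r^{-d'}$ children, so that the ``needle'' is hidden at every scale simultaneously. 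A random end of the tree yields a single payoff function $\mu$; an $(\eps,k)$-ensemble argument at each depth $i$ shows that with probability $\geq \tfrac12$ the algorithm is fooled at the corresponding time scale $t_i$, and Borel--Cantelli-style reasoning then gives the ``infinitely often'' conclusion almost surely. Your single-scale KL argument is the right local building block, but the recursive multi-scale packaging is essential and missing.

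Your upper-bound outline is in the right spirit---a transfinite decomposition into low-dimensional pieces, followed by an adaptive algorithm---but differs from the paper's execution, and the obstacle you flag is real. The paper works with a nested sequence of \emph{closed} sets $X = S_0 \supset S_1 \supset \cdots$ (the transfinite fatness decomposition) rather than a partition into relatively-open pieces; the key property is that $\COV(S_{\lambda} \setminus U) \leq d'$ for any open $U \supseteq S_{\lambda+1}$. Instead of enumerating pieces and running zooming on growing unions, the paper runs a single zooming algorithm with a \emph{quota} on active arms in $S_{\lambda^*}$ and an adaptively updated \emph{target ordinal} $\lambda^*$ (recomputed from the empirical payoffs at the end of each phase). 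The quota directly enforces a $\rho^{-d'}$ bound on active arms, sidestepping your concern about zooming-dimension inflation on unions; the adaptive $\lambda^*$ is shown to converge to the maximal ordinal whose $S_{\lambda}$ contains an optimal arm, after which the phase is ``well-covered'' and the standard zooming analysis applies. Your enumeration-and-doubling scheme might be made to work, but you would still need something like the paper's quota/target-ordinal mechanism to control the interaction you identify.
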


The fact that the above result allows an instance-dependent constant makes the corresponding lower bound more challenging: one needs to show that for any algorithm there exists a problem instance whose regret is at least $t^\gamma$ \emph{infinitely often}, whereas without an instance-dependent constant it suffices to show this for any one time $t$. The former requires a problem instance with infinitely many arms, whereas the latter can be accomplished via a simple problem instance with finitely many arms, and in fact is already done in Theorem~\ref{thm:intro.1d}.

\OMIT{ %%%%%%%%%%%%%%%
For metric spaces which are highly homogeneous (in the sense
that any two \eps-balls are isometric to one another) the theorem follows easily from a refinement of the techniques introduced in \cite{Bobby-nips04};
in particular, the upper bound can be achieved using a
generalization of the na\"{i}ve algorithm described earlier.
} %%%%%%%%%%%%%%%

In general $\MaxMinCOV(X)$ is bounded from above by the covering dimension of $X$. For metric spaces which are highly homogeneous, in the sense
that any two \eps-balls are isometric to one another, the two dimensions are equal, and the upper bound in the theorem can be achieved using a  generalization of the $\NaiveAlg$ % na\"{i}ve
algorithm described earlier.
The difficulty in Theorem~\ref{thm:intro.pmo} lies in dealing with inhomogeneities in the metric space.
\OMIT{\footnote{To appreciate this issue, it is very instructive to consider a concrete example of a metric space $(X,\mD)$ where
	$\MaxMinCOV(X)$
is strictly less than the covering dimension, and for this specific example design a bandit algorithm whose regret bounds are better than those suggested by the covering dimension. This is further discussed in Section~\ref{sec:pmo}.}}
It is important to treat the problem
at this level of generality, because some of the
most natural applications of the \problem, e.g.
the web advertising problem described earlier, are
based on highly inhomogeneous metric spaces.%
\footnote{For example, in web taxonomies, it is unreasonable to expect different categories at the same level of a topic hierarchy to have roughly the same number of descendants. Thus, in a natural interpretation of taxonomy as a metric space in Lipschitz MAB -- where each subtree is a ball whose radius equals or upper-bounds the maximal difference between expected rewards in the said subtree -- balls may be very different from one another.}

The simplest scenario in which we improve over Theorem~\ref{thm:intro-covDim} involves a point $x\in X$ and a number $\eps>0$ such that cutting out any open neighborhood of $x$ reduces the covering dimension by at least $\eps$. We think of such $x$ as a ``fat point'' in the metric space. This example can be extended to a ``fat region'' $S\subset X$ such that $\COV(S)<\COV(X)$
and cutting out any open superset of $S$ reduces the covering dimension by at least $\eps$. One can show that
$\MaxMinCOV(X)\leq \max\{\COV(S),\COV(X)-\eps\}$.

A ``fat region'' $S$ becomes an obstacle for the zooming algorithm if it contains an optimal arm, in which case the algorithm needs to instantiate too many active arms in the vicinity of $S$. To deal with this, we impose a \emph{quota} on the number of active arms outside $S$. The downside is that the set $X\setminus S$ is insufficiently covered by active arms. However, this downside does not have much impact on performance if an optimal arm lies in $S$. And if $S$ does not contain an optimal arm then the zooming algorithm learns this fact eventually, in the sense that it stops refining the mesh of active arms on some open neighborhood $U$ of $S$. From then on, the algorithm essentially limits itself to $X \setminus U$, which is a comparatively low-dimensional set.

The general algorithm in Theorem~\ref{thm:intro.pmo} combines the above
% ``quoted zooming''
``quota-limited zooming'' idea with a delicate decomposition of the metric space which gradually ``peels off'' regions with abnormally high covering dimension. In the above example with a ``fat region'' $S$, the decomposition consists of two sets, $X$ and $S$. In general, the decomposition is a decreasing sequence of subsets $X=S_0 \supset S_1 \supset \ldots$ where each $S_i$ is a ``fat region'' with respect to $S_{i-1}$. If the sequence is finite then the algorithm has a separate quota for each $S_i$.

Further, to handle arbitrary metric spaces we allow this sequence to be infinite, and moreover \emph{transfinitely} infinite, i.e. parameterized by ordinal numbers. The algorithm proceeds in phases.
Each phase $i$ begins by ``guessing'' an ordinal $\lambda = \lambda_i$ that represents the algorithm's estimate of the largest index of a set in the transfinite sequence that intersects the set of optimal arms. During a phase, the
% Each phase is initialized with some ordinal $\lambda = \lambda_i$. The
algorithm focuses on the set $S_\lambda$ in the sequence, and has a quota on active arms not in $S_\lambda$. In the end of the phase it uses the observed payoffs to compute the next ordinal $\lambda_{i+1}$.
% Transfinite ordinal numbers feature prominently in the analysis of this algorithm, too.
The analysis of the algorithm shows that almost surely, the sequence
of guesses $\lambda_1,\lambda_2,\ldots$ is eventually constant, and that the
eventual value of this sequence is almost surely equal to the largest index
of a set in the transfinite sequence that intersects the set of optimal arms.
The regret bound then follows easily from our analysis of the zooming
algorithm.

\OMIT{We find it intellectually intriguing that transfinite ordinal numbers are essential for this algorithm. We are not aware of any other usage of ordinal numbers in the design and analysis of algorithms.}

For the lower bound, we craft a new dimensionality notion ($\MaxMinCOV$),
which captures the inhomogeneity of a metric space, and connect this notion
with the maximal possible ``strength'' of the transfinite decomposition. Further, we connect $\MaxMinCOV$ with the existence of a certain structure in the metric space
(a \emph{ball-tree}) which supports our lower-bounding example. This relation between the two structures --- $d$-dimensional transfinite decompositions and $d$-dimensional ball-trees --- is a new result on metric topology, and as such it may be of independent interest.

While the lower bound is proved using the notion of Kullback-Leibler divergence (\emph{KL-divergence}), our usage of the KL-divergence technique is encapsulated as a generic theorem statement (Theorem~\ref{thm:LB-technique-MAB}). A similar encapsulation (Theorem~\ref{thm:LB-technique}) is stated and proved for the full-feedback version. These theorems and the corresponding setup may be of independent interest. In particular, Theorem~\ref{thm:LB-technique-MAB} has been used in \cite{contextualMAB-colt11} to encapsulate a version of the KL-divergence argument that underlies a lower bound on regret in a contextual bandit setting.

%%%%%%%%%%%%%%
%\subsection{Our contributions: beyond polynomial regret}
%\label{subsec:intro-beyond}

\OMIT{
\begin{definition}\label{def:tractability}
Consider the \problem on a fixed metric space. A bandit algorithm is \emph{$f(t)$-tractable} if for any problem instance $\mI$ the algorithm's regret is
	$R(t) = O_{\mI}(f(t))$.
The problem is \emph{$f(t)$-tractable} if such an algorithm exists.
\end{definition}

Further, Theorem~\ref{thm:intro.pmo} can be restated:

\begin{theorem*} %\label{thm:intro.pmo-simpler}
Consider the \problem on a compact metric space with
    $d = \MaxMinCOV$.
Then the problem is $t^\gamma$-tractable if $\gamma > \tfrac{d+1}{d+2}$, and not $t^\gamma$-tractable if $d>0$ and $\gamma < \tfrac{d+1}{d+2}$.
\end{theorem*}
}

\OMIT{ %%%%%%
Consider the \problem on a fixed metric space. Let us say that the problem is \emph{$f(t)$-tractable}, for some function $f$, if there exists an algorithm whose regret satisfies
	$R(t) \leq C_\mu \, f(t)$ for all $t$,
for some constant $C_\mu$ that can depend on the payoff function $\mu$.
Then \cite{Lai-Robbins-85,bandits-ucb1} show that the problem is not $f(t)$-tractable for any $f(t)=o(\log t)$, and it is $\log(t)$-tractable if the metric space is finite (here the instance-dependent constant $C_\mu$ is essential).
} %%%%%%%

\xhdr{Per-metric optimality: beyond polynomial regret.}
To resolve question (Q3), we show that the apparent gap between logarithmic and polynomial regret is inherent to the \problem.

\begin{theorem}\label{thm:intro-dichotomy-MAB} %formerly {thm:main-MAB}
For Lipschitz MAB on any fixed metric space $(X,\mD)$, the following dichotomy holds: either it is $f(t)$-tractable for every $f\in \omega(\log t)$, or it is not $g(t)$-tractable for any $g\in o(\sqrt{t})$.  In fact, the former occurs if and only if the metric completion of $(X,\mD)$ is a compact metric space with countably many points.
\end{theorem}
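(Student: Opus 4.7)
My plan is to prove the ``if and only if'' characterization, from which the dichotomy follows as its contrapositive. The positive direction assumes the metric completion $\bar X$ of $(X,\mD)$ is compact and countable, and constructs, for every $f\in\omega(\log t)$, an algorithm with $R(t)=O_\mI(f(t))$. The negative direction assumes $\bar X$ is non-compact or uncountable, and shows that no algorithm achieves $R(t)=O_\mI(g(t))$ for every instance and any $g\in o(\sqrt t)$.

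\emph{Upper bound.} The Lipschitz payoff $\mu$ extends to $\bar X$, so $\mu^*=\sup_X\mu$ is attained. Using countability of $\bar X$ and density of $X$, fix an enumeration $\bar X=\{y_1,y_2,\dots\}$ and approximating arms $x_j\in X$ with $\mu(x_j)\to\mu(y_j)$; for every instance $\mI$ there is a finite $j^*(\mI)$ such that $\{x_1,\dots,x_{j^*(\mI)}\}$ contains a near-optimal arm. The algorithm runs a single \UCB-style procedure in which the active set grows over time: at time $t$ it restricts attention to the first $k(t)$ enumerated arms, where $k(t)\to\infty$ slowly enough that $k(t)\log t = O(f(t))$---possible because $f/\log t\to\infty$. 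Once $k(t)\ge j^*(\mI)$, a near-optimal arm is active, and the distribution-dependent \UCB analysis yields $R(t)=O_\mI(k(t)\log t)=O_\mI(f(t))$, with the initial phase absorbed into an instance-dependent additive constant.

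\emph{Lower bound.} Both sub-cases rest on the KL-divergence tool encapsulated in Theorem~\ref{thm:LB-technique-MAB}. If $\bar X$ is non-compact, some infinite sequence $\{y_n\}\subset X$ is $\eps$-separated for some $\eps>0$; the needle-in-haystack family where $\mu_n$ equals $\tfrac12$ everywhere except $\mu_n(y_n)=\tfrac12+\Delta$ (each $1$-Lipschitz for $\Delta<\eps/2$) yields, via a standard $K$-arm KL argument, regret $\Omega(\min(\Delta T,\sqrt{KT}))$ against any algorithm restricted to $K$ candidates at horizon $T$. Letting $K\to\infty$ with $T$ and diagonalizing over horizons produces a single instance on which regret is $\Omega(\sqrt t)$ infinitely often. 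If $\bar X$ is uncountable and compact, Cantor--Bendixson yields a non-empty perfect subset $P\subseteq\bar X$; inside $P$ I construct a \emph{ball-tree}, a binary tree of nested closed balls $\{B_s:s\in\{0,1\}^{<\omega}\}$ with $B_{s0},B_{s1}$ disjoint sub-balls of $B_s$ whose radii depend only on the depth $|s|$. Bump-function instances placed at the leaves of depth $d$, combined with the KL argument over the $2^d$ leaves, yield the analogous minimax lower bound, and a diagonalization over depths and horizons again produces a single instance violating $o(\sqrt t)$-tractability.

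\emph{Main obstacle.} The technically hardest step is the ball-tree construction in the uncountable-compact case. Cantor--Bendixson supplies only a topological embedding of the Cantor set into $\bar X$, whereas the KL lower bound requires quantitative control on both the radii and the sibling separations at each depth. My plan is an inductive splitting argument: at each depth, use the defining property of a perfect set (every point is a limit point) to find, inside any current ball, two disjoint sub-balls of positive radius, and use compactness to choose these radii and separations uniformly over all siblings at a given depth. A second, minor subtlety is arranging the diagonalization so that the hard instance is itself Lipschitz on all of $X$, not merely on the ball-tree skeleton; this is handled by extending the bump functions to $X$ using the Lipschitz condition~\refeq{eq:Lipschitz-condition}.
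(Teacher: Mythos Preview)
Your lower-bound plan is essentially the paper's: a perfect subspace yields a ball-tree, which feeds the KL-based ensemble argument; and the non-compact case is handled separately (the paper actually proves the stronger $o(t)$-intractability there, but your $o(\sqrt t)$ target suffices for the dichotomy).

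Your upper bound, however, has a genuine gap. The sentence ``the distribution-dependent \UCB\ analysis yields $R(t)=O_\mI(k(t)\log t)$'' is not justified. \UCB's instance-dependent bound on $k$ arms is $O\bigl(\sum_{j\le k}\log t/\Delta_j\bigr)$, not $O(k\log t)$; the two coincide only when the suboptimality gaps are bounded away from zero, which is exactly what fails in a compact countable space. Take $X=\{0\}\cup\{1/n:n\ge1\}$ under $\ell_1$ with $\mu(x)=\tfrac12-x$: the gap of $1/n$ is $1/n$, so $\sum_{j\le k}1/\Delta_j\asymp k^2$, and your scheme pays $\Omega(k(t)^2\log t)$, not $O(k(t)\log t)$. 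Worse, because the schedule $k(t)$ must be fixed before seeing the instance while the gap sequence $(\Delta_j)$ can decay arbitrarily fast (subject only to Lipschitz), no instance-independent growth rate for $k(t)$ controls $\sum_{j\le k(t)}1/\Delta_j$ against all instances. Growing the active set and running \UCB\ commits you to exploring every newly added arm enough to resolve its (possibly tiny) gap, and that cost is unbounded.

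The paper sidesteps this entirely. Countability of a compact metric space is equivalent (Lemma~\ref{lm:topological-equivalence}) to the existence of a \emph{topological well-ordering} $\prec$, and the key structural fact (Lemma~\ref{lm:structural}) is that there is an optimal arm $x^*$ with $\sup(\mu,\,X\setminus S(x^*))<\mu^*$, where $S(x^*)$ is its initial segment. The algorithm (Theorem~\ref{thm:main-alg}) does \emph{not} enumerate and \UCB; instead each phase plays only a finite $\delta$-covering, eliminates ``losers'' whose sample average is too low, and then invokes an ordering oracle to return the $\prec$-maximal non-loser---which is exactly $x^*$ once $\delta$ is small enough. Thereafter it \emph{commits} to $x^*$ and incurs zero further regret that phase. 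This explore-then-commit-to-the-well-ordered-maximum is what makes the argument work; a plain enumeration cannot substitute for the ordering structure.
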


Thus, we establish the $\log(t)$ vs. $\sqrt{t}$ regret dichotomy, and moreover show that it is determined by some of the most basic set-theoretic and topological properties of the metric space.
% Essentially (if the metric space is compact and complete)
For compact metric spaces,
the dichotomy corresponds to the transition from countable to uncountable strategy sets.
This is also surprising; in particular, it was natural to conjecture that if the dichotomy exists and admits a simple characterization, it would correspond to the finite vs. infinite transition.

Given the $\Omega(\log t)$ lower bound in \cite{Lai-Robbins-85}, our upper bound for the \problem in compact, countable metric spaces is nearly the best possible bound for such spaces, modulo the gap between ``$f(t) = \log t$" and ``$\forall f\in\omega(\log t)$". Furthermore, we show that this gap is inevitable for infinite metric spaces:

\begin{theorem}\label{thm:logT}
The \problem on any infinite metric space is not $(\log t)$-tractable.
\end{theorem}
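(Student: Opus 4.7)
The plan is to show the contrapositive: for any algorithm $\A$ there exists an instance $\mu$ on $(X,\mD)$ with $R_\A(t;\mu)\ne O(\log t)$. Split on the metric completion $\hat X$. If $\hat X$ is non-compact, the last row of Table~\ref{tab:PMO-results} already gives an $\Omega(t)$ lower bound, which is incompatible with $O(\log t)$. So assume $\hat X$ is compact; being infinite, by Bolzano--Weierstrass it contains an accumulation point $x^*\in\hat X$. I treat the case $x^*\in X$ first (the case $x^*\in\hat X\setminus X$ is addressed at the end). Pick distinct $y_1,y_2,\dots\in X$ with $d_n:=\mD(y_n,x^*)\le 2^{-n}$, set $\alpha:=\min\{1,(4\,\mathrm{diam}\,X)^{-1}\}$, and define the $\alpha$-Lipschitz instance $\mu(y):=\tfrac12-\alpha\,\mD(y,x^*)$ with Bernoulli rewards of mean $\mu(y)$. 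Then $\mu$ takes values in $[\tfrac14,\tfrac12]$, the unique optimum is $x^*$ with $\mu^*=\tfrac12$, and arm $y_n$ is suboptimal with gap $\Delta_n=\alpha d_n\le\alpha\,2^{-n}$.

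Suppose for contradiction that $\A$ is $(\log t)$-tractable, so it is in particular uniformly good on $\mu$. By the Lai--Robbins asymptotic lower bound \citep{Lai-Robbins-85} (equivalently the change-of-measure argument encapsulated in Theorem~\ref{thm:LB-technique-MAB}), for each $n$,
\[
\expect_\mu[T_{y_n}(t)] \;\ge\; (1-o_n(1))\,\frac{\log t}{\mathrm{KL}(\mu(y_n),\mu^*)}.
\]
For Bernoulli means in $[\tfrac14,\tfrac12]$ we have $\mathrm{KL}(\mu(y_n),\mu^*)\le C_0\,\Delta_n^2$ for an absolute constant $C_0$, so arm $y_n$'s regret contribution satisfies $\expect[T_{y_n}(t)]\,\Delta_n\ge(1-o_n(1))\log t/(C_0\Delta_n)$. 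Summing over $n=1,\dots,N$,
\[
R_\A(t;\mu) \;\ge\; \sum_{n=1}^{N}\expect[T_{y_n}(t)]\,\Delta_n \;\ge\; \frac{1-o_N(1)}{C_0}\,\log t\,\sum_{n=1}^{N}\frac{1}{\Delta_n} \;\ge\; \frac{(1-o_N(1))\,2^N}{2\alpha C_0}\,\log t.
\]
For any putative constant $C_\mu$, fix $N$ so that the prefactor exceeds $2C_\mu$; then $R_\A(t;\mu)>C_\mu\log t$ for all sufficiently large $t$. Since $C_\mu$ was arbitrary, $R_\A(t;\mu)$ cannot be bounded by any constant multiple of $\log t$, contradicting $(\log t)$-tractability.

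The main remaining obstacle is the sub-case $x^*\in\hat X\setminus X$, in which the supremum $\mu^*=\tfrac12$ is not attained in the arm set, so the Lai--Robbins bound (stated for a uniquely optimal arm inside the arm set) does not apply verbatim. I would resolve this by a perturbation and change-of-measure argument: for each $n$, compare $\mu$ against the Lipschitz perturbation $\mu^{(n)}(y):=\max\{\mu(y),\,\tfrac12-\alpha\,\mD(y,y_n)\}$, on which $y_n$ is uniquely optimal with gap $\Delta_n$. Applying Theorem~\ref{thm:LB-technique-MAB} to the pair $(\mu,\mu^{(n)})$ still yields $\expect_\mu[T_{y_n}(t)]=\Omega(\log t/\Delta_n^2)$, so the summation argument above goes through unchanged.
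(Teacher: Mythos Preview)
Your overall architecture matches the paper's: reduce to a compact completion via Theorem~\ref{thm:boundary-of-tractability}, pick an accumulation point $x^*$, take a sequence $y_n\to x^*$, set a baseline $\mu(y)=\tfrac12-\alpha\,\mD(y,x^*)$, and run a change-of-measure argument. But the core step is broken.

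You track $T_{y_n}(t)$, the number of pulls of the \emph{single arm} $y_n$, and invoke Lai--Robbins to get $\expect_\mu[T_{y_n}(t)]\gtrsim \log t/\Delta_n^2$. That bound is vacuous here: in the \problem the algorithm chooses from all of $X$, not from $\{y_n\}$, and nothing forces it to ever play the specific point $y_n$. An algorithm can be uniformly good while satisfying $T_{y_n}(t)=0$ for every $n$. The Lai--Robbins change of measure perturbs the reward at a single arm; if the algorithm never touches that arm, the two instances are indistinguishable and no lower bound follows. Your appeal to Theorem~\ref{thm:LB-technique-MAB} does not rescue this: that theorem yields a finite-horizon regret bound of the form $R(t)\ge \eps t$ for $t\lesssim k\eps^{-2}$, not an asymptotic pull-count of order $\log t/\Delta^2$, and in any case it is stated for \emph{sets} $S_i$, not points.

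The paper's fix is exactly to replace single arms by \emph{balls}. It defines $B_i=B(x_i,r_i/3)$ and alternative instances $\mu_i=\mu_0+\nu_i$ where the bump $\nu_i$ is supported on $B_i$ (so $\mu_i\equiv\mu_0$ outside $B_i$). Then a direct KL argument (Claim~\ref{cl:logT-KLdiv}) yields, assuming $(\log t)$-tractability on every $\mu_i$, that the expected number of plays in the ball satisfies $\E_0[N_i(t)]=\Omega(r_i^{-2}\log t)$. Since each play in $B_i$ incurs $\Omega(r_i)$ regret under $\mu_0$, one gets $R_{(\A,\mu_0)}(t)\ge \Omega(r_i^{-1}\log t)$ for every $i$, which blows up because $r_i\to 0$. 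Your perturbation $\mu^{(n)}=\max\{\mu,\tfrac12-\alpha\,\mD(\cdot,y_n)\}$ is the right spirit but the wrong shape: it differs from $\mu$ on the half-space $\{\mD(\cdot,y_n)<\mD(\cdot,x^*)\}$, which is not a small ball around $y_n$, so the KL is not controlled by plays near $y_n$. Replace the single-arm counts by ball counts and localize the perturbation to a ball whose radius shrinks with $n$; then the argument goes through exactly as in Section~\ref{sec:logT}.
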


To answer question (Q4), we show that the tractability of the \problem on a complete metric space hinges on the compactness of the metric space.

\begin{theorem}\label{thm:boundary-of-tractability-bandits}
The \problem on a fixed metric space $(X,\mD)$ is $f(t)$-tractable for some $f\in o(t)$ if and only if the metric completion of $(X,\mD)$ is a compact metric space.
\end{theorem}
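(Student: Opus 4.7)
The plan is to prove the two directions of the equivalence separately. The ``if'' direction builds an $o(t)$-tractable algorithm via discretization; the ``only if'' direction requires a matching lower bound and is the main challenge.

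For the ``if'' direction, compactness of the completion of $(X,\mD)$ is equivalent to $(X,\mD)$ being totally bounded: for each $k \geq 1$ there is a finite $\tfrac{1}{k}$-net $S_k$ of some finite cardinality $N_k$. The algorithm proceeds in epochs: in epoch $k$ of length $T_k$ it runs $\UCB$ on the arm set $S_k$, where $T_k$ is chosen so that $\UCB$'s distribution-free regret bound $\tilde{O}(\sqrt{N_k T_k})$ does not exceed $T_k/k$. The Lipschitz condition contributes at most $T_k/k$ of discretization loss per epoch, since every point of $X$ is within $1/k$ of $S_k$. Thus per-epoch regret is $O(T_k/k)$. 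Choosing $T_k$ to grow geometrically (so that $\sum_{j \leq k} T_j = \Theta(T_k)$), the regret at time $t = \sum_{j \leq k}T_j$ satisfies $R(t) = O(\sum_{j \leq k} T_j / j) = O(T_k/k) = O(t/k(t))$ with $k(t) \to \infty$. This gives $R(t) = o(t)$ with a distribution-free constant, so the algorithm is $f(t)$-tractable for some $f \in o(t)$.

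For the ``only if'' direction, if the completion is not compact then $(X,\mD)$ is not totally bounded, so there exist $\eps_0 > 0$ and an infinite $\eps_0$-separated set $\{x_1, x_2, \ldots\} \subset X$. By rescaling we may assume $\eps_0 \geq 1$, so that any $[0,1]$-valued function on $\{x_i\}$ extends to a $1$-Lipschitz function on $X$. Fix any algorithm $A$ and any $f \in o(t)$. I would construct $\mu^* = \mu^*_{A,f}$, supported on $\{x_i\}$, so that $R_A(t,\mu^*)/f(t) \to \infty$ along some subsequence of $t$. The construction proceeds in phases $k = 1, 2, \ldots$: at the start of phase $k$ we have committed $\mu^*$ on a finite set $F_{k-1} \subset \{x_i\}$ with baseline value $\tfrac{1}{2}$ on the rest; we pick $T_k$, simulate $A$ on this partial instance through time $T_k$, and invoke the KL-divergence lower bound encapsulated in Theorem~\ref{thm:LB-technique-MAB} to locate a fresh arm $y_k \notin F_{k-1}$ that $A$ plays with negligibly small probability through $T_k$. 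We then commit $\mu^*(y_k) = \tfrac{1}{2} + \eps_k$, with $\eps_k$ and $T_k$ chosen so $\eps_k T_k = \omega(f(T_k))$. Because the KL bound ensures $A$'s trajectory on the modified instance is within statistical distance $o(1)$ of its trajectory on the pre-commitment instance on $[0, T_k]$, the regret at time $T_k$ under $\mu^*$ is $\Omega(\eps_k T_k) = \omega(f(T_k))$.

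The main obstacle is that commitments from earlier phases could in principle allow $A$ to identify and exploit a previously-planted arm $y_j$, lowering the regret at later times $T_k$ and undoing the hardness established in phase $k$. To handle this, the schedule $\{(T_k, \eps_k)\}$ must be chosen so that at each $T_k$, either (i) the perturbations $\eps_j$ for $j < k$ are small enough that the total ``leak'' from exploiting earlier plantings remains bounded by $\tfrac{1}{2} \eps_k T_k$, or equivalently so that $\sup_j \eps_j$ is attained only in the limit; and (ii) the KL slack in phase $k$ swallows the cumulative effect of earlier modifications. The KL-divergence argument is essential: since only finitely many arms have been played non-negligibly by any finite time, infinitely many fresh candidates $y_k$ remain, and one with the needed low exploration probability can always be selected. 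The ``if'' direction is routine by comparison.
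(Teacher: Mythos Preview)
Your ``if'' direction has the right skeleton but a real gap. You claim you can choose $T_k$ to grow geometrically \emph{and} satisfy $\sqrt{N_k T_k \log T_k} \leq T_k/k$, but if the covering numbers $N_k$ grow super-geometrically (nothing in compactness rules this out), these constraints force $T_k/T_{k-1} \to \infty$. In that regime, consider a time $t$ early in epoch $m$: the regret accumulated so far in epoch $m$ is $\min(s, \sqrt{N_m s \log s}) + s/m$ after $s$ steps, and for $s \ll N_m$ this is just $s$. If $T_m \gg t_{m-1}$ you can have $s \approx t$, giving $R(t)/t \approx 1$. The paper addresses exactly this by setting $t_{i} \geq t_{i+1}^*$ (the ``natural'' duration for the \emph{next} phase's mesh), so that partial regret in phase $i+1$ is bounded by $\eps_{i+1} t_i \leq \eps_{i+1} t$; this look-ahead is the missing idea.

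Your ``only if'' direction takes a genuinely different route from the paper. The paper does not build a single adversarial instance by simulation; instead it defines a \emph{distribution} over instances: the infinite $\eps_0$-separated balls are partitioned into blocks $I_k$ of size $4^{t_k}$, one ball in each block is secretly ``special'' (higher expected payoff), and all ordinary balls carry random $\pm$ wedge functions so that a special ball is statistically indistinguishable from an ordinary one that happened to draw $+1$ signs so far. A counting argument then shows the algorithm cannot locate the special ball in $I_k$ within $t_k$ steps with non-negligible probability, and Borel--Cantelli finishes. This construction sidesteps your ``earlier plantings'' obstacle entirely, because finding $j_1,\ldots,j_{k-1}$ gives no information about $j_k$.

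Your diagonalization sketch, by contrast, does not resolve that obstacle. If $\eps_k$ is increasing with limit $\eps^*$, then by time $T_k$ the algorithm may well be playing $y_{k-1}$, incurring per-step regret only $\eps^* - \eps_{k-1}$, not $\eps_k$; your asserted $\Omega(\eps_k T_k)$ bound does not follow. Worse, bounding the behavior under the final instance $\mu^*$ on $[0,T_k]$ requires controlling the total expected plays of \emph{all} of $y_k, y_{k+1}, \ldots$ under the partial instance $\mu_{k-1}^*$, and your construction only controls each $y_j$ under $\mu_{j-1}^*$ at time $T_j$; chaining these via KL needs a summability condition you have not established. Finally, Theorem~\ref{thm:LB-technique-MAB} is about $(\eps,k)$-ensembles and does not directly furnish ``a fresh arm played rarely''; what you actually need there is a pigeonhole argument plus Lemma~\ref{lem:kl-distinguishing}-style transfer.
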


The main technical contribution in the above theorems is an interplay of online learning and point-set topology, which requires novel algorithmic and lower-bounding techniques. For the $\log(t)$ vs. $\sqrt{t}$ dichotomy result, we identify a simple topological property (existence of a topological well-ordering) which entails the algorithmic result, and another topological property (\emph{perfectness}) which entails the lower bound. The equivalence of the first property
to countability and the second to uncountability (for compact metric spaces) follows from classical theorems of Cantor-Bendixson \citep{Cantor83} and Mazurkiewicz-Sierpinski \citep{MazSier}.

\subsection{Our contributions: the \FFproblem}
\label{sec:intro-experts}

We turn our attention to the \emph{\FFproblem}: the full-feedback version of the \problem. Formally, a problem instance is specified by a triple $(X,\mD,\prob)$, where $(X,\mD)$ is a metric space and $\prob$ is a  probability measure with universe $[0,1]^X$, the set of all functions from $X$ to $[0,1]$, such that the expected payoff function
    $\mu: x \mapsto \E_{f \in \prob}[f(x)]$
is a Lipschitz function on $(X,\mD)$. The metric structure of $(X,\mD)$ is known to the algorithm, the measure $\prob$ is not. We will refer to $\prob$ as the problem instance when the metric space $(X,\mD)$ is clear from the context.

In each round $t$ the algorithm picks a strategy $x_t\in X$, then
the environment chooses an independent sample $f_t: X\to [0,1]$ distributed according to the measure $\prob$. The algorithm receives payoff $f_t(x_t)$, and also observes the entire payoff function $f_t$. More formally, the algorithm can query the value of $f_t$ at an arbitrary finite number of points. Some of our upper bounds are for a (very) restricted version, called \emph{double feedback}, where in each round the algorithm picks two arms $(x,y)$, receives the payoff for $x$ and also observes the payoff for $y$. By abuse of notation, we will treat the bandit setting as a special case of the experts setting.

Note that the payoffs for different arms in a given round are not necessarily independent. This is essential because for any limit point $x$ in the metric space one could use many independent samples from the vicinity of $x$ to learn the expected payoff at $x$ in a single round.

\xhdr{Regret dichotomies.}
We show that the \FFproblem exhibits a regret dichotomy similar to the one in Theorem~\ref{thm:intro-dichotomy-MAB}. Since the optimal regret for a finite strategy set is \emph{constant} \citep{sleeping-colt08}, the dichotomy is between $O(1)$ and $\sqrt{t}$ regret.

\begin{theorem}\label{thm:main-experts}
The \FFproblem on metric space $(X,\mD)$ is either $1$-tractable, even with double feedback, or it is not $g(t)$-tractable for any $g\in o(\sqrt{t})$, even with full feedback. The former case occurs if and only if the completion of $X$ is a compact metric space with countably many points.
\end{theorem}

Theorem~\ref{thm:main-experts} and its bandit counterpart (Theorem~\ref{thm:intro-dichotomy-MAB}) are proved jointly, using essentially the same ideas. In both theorems, the regret dichotomy corresponds to the transition from countable to uncountable strategy set (assuming the metric space is compact and complete). Note that the upper bound in Theorem~\ref{thm:main-experts} only assumes double feedback, whereas the lower bound is for the unrestricted full feedback.

Next, we investigate for which metric spaces the \FFproblem is $o(t)$-tractable. We extend Theorem~\ref{thm:boundary-of-tractability-bandits} for the \problem to another regret dichotomy where the upper bound is for the bandit setting, whereas the lower bound is for full feedback. %We prove this result in Section~\ref{sec:boundary-body}.

\begin{theorem}\label{thm:boundary-of-tractability}
The \FFproblem on metric space $(X,\mD)$ is either $f(t)$-tractable for some $f\in o(t)$, even in the bandit setting, or it is not $g(t)$-tractable for any $g\in o(t)$, even with full feedback. The former occurs if and only if the completion of $X$ is a compact metric space.
\end{theorem}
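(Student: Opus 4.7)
The plan is to handle the two directions separately. The compactness direction is immediate: Theorem~\ref{thm:boundary-of-tractability-bandits} provides a bandit algorithm whose regret is $O_\mu(f(t))$ for some $f \in o(t)$, and such an algorithm can be run in the full-feedback setting by ignoring the extra information, yielding the stated $o(t)$-tractability ``even in the bandit setting.'' The substance of the proof is the converse: non-compact completion should preclude $g(t)$-tractability for every $g \in o(t)$, even under full feedback.

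Assume the completion $\bar X$ is not compact. Since $\bar X$ is complete, this is equivalent to $\bar X$ failing to be totally bounded, so there exist $\eps>0$ and an infinite $\eps$-packing in $\bar X$; by density of $X$ in $\bar X$ I can pull this back to an $(\eps/2)$-packing $\{x_1, x_2, \dots\} \subset X$, after which I relabel $\eps$. I will consider the family
\[
\mu_0 \equiv \tfrac12,\qquad \mu_i(x) = \tfrac12 + \max\bigl(0,\, \tfrac{\eps}{3}-\mD(x,x_i)\bigr)\quad (i\geq 1),
\]
each $1$-Lipschitz, and let $\prob_i$ be the independent-Bernoulli payoff measure with marginal means $\mu_i$. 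The disjointness of the bump balls $B(x_i,\eps/3)$ (from the packing property) is essential: $\prob_i$ and $\prob_0$ differ only on observations inside $B(x_i,\eps/3)$. I will couple all $\prob_i$ via a single array of IID uniforms $\{U_t(x)\}$ by $f_t(x) = \indicator{U_t(x) < \mu_i(x)}$, so under the coupling $f_t(x)$ is identical under $\prob_0$ and $\prob_i$ whenever $x \notin B(x_i,\eps/3)$.

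The central claim is that for every algorithm $\A$ and every horizon $T$, some instance in $\{\prob_i\}$ forces regret $\Omega(\eps T)$; this directly implies non-$g(t)$-tractability for every $g\in o(t)$, using the elementary fact that $\sup_\mu R_\A(\mu,T) \neq O(g(T))$ for every $g\in o(t)$ is equivalent to $\limsup_T \sup_\mu R_\A(\mu,T)/T > 0$. To prove the claim, I run $\A$ under $\prob_0$ and let $\Omega_T$ be the (random) set of arms that $\A$ queries or plays during the first $T$ rounds. Since each round contributes only finitely many arms and $T$ is finite, $|\Omega_T|<\infty$ almost surely, and I can choose $K=K(T,\A)$ so that $\Pr[|\Omega_T|\le K]\ge \tfrac12$. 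Drawing $I$ uniformly from $\{1,\dots,2K\}$ independently of $\A$, and using that the disjoint balls $B(x_i,\eps/3)$ for $i \leq 2K$ can contain at most $|\Omega_T|$ of the points of $\Omega_T$, I conclude that the event $\mathcal{E} = \{|\Omega_T|\le K,\; \Omega_T\cap B(x_I,\eps/3)=\emptyset\}$ has probability at least $\tfrac14$. On $\mathcal{E}$, the coupled trajectory of $\A$ under $\prob_I$ is identical (round-by-round, by induction on the round) to its trajectory under $\prob_0$, so $\A$ never enters $B(x_I,\eps/3)$; since the optimum under $\prob_I$ is $\tfrac12+\eps/3$ attained at $x_I$ while every arm outside $B(x_I,\eps/3)$ has mean $\tfrac12$, the per-round regret is at least $\eps/3$. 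Thus $\E_I R_\A(\prob_I,T) \ge \eps T/12$, so some $i^\ast$ yields $R_\A(\prob_{i^\ast},T)=\Omega(T)$.

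The hard part will be dealing with the apparent strength of full feedback: nothing in the problem statement bounds the per-round query count, so $\E[|\Omega_T|]$ can be arbitrarily large and arguments that try to bound expected touched-ball counts via a total query budget are unavailable. I sidestep this by relying only on the almost-sure finiteness of $|\Omega_T|$ (a finite union of $T$ finite sets is finite) to pick $K$, and then tailoring the packing size $2K$ to that particular $K$; this is why the instance $\prob_{i^\ast}$ is allowed to depend on $T$ and on $\A$, which is the right form of lower bound for the ``not $g(t)$-tractable for any $g\in o(t)$'' conclusion. A secondary subtlety is verifying the coupling argument for \emph{adaptive} queries: one must check that on $\mathcal{E}$ not only the observed payoffs but also the chosen query sets $Q_t$ agree under $\prob_0$ and $\prob_I$, which follows by induction since $Q_t$ depends only on observations in rounds $< t$, and those observations agree outside $B(x_I,\eps/3)$ by construction of the coupling.
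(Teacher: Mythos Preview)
Your lower bound has a genuine gap: it confuses \emph{worst-case} regret with the paper's notion of tractability. You prove that for every $T$ there is some instance $\prob_{i^{\ast}(T)}$ with regret $\Omega(\eps T)$ at time $T$, i.e.\ $\sup_{\mu} R_{\A}(\mu,T)=\Omega(T)$. But $g(t)$-tractability allows an \emph{instance-dependent} constant: an algorithm is $g(t)$-tractable if for every instance $\mu$ there exists $C_{\mu}$ with $R_{\A}(\mu,t)\le C_{\mu}\,g(t)$ for all $t$. A sup over instances that grows linearly does not contradict this, because the sup may be attained by a different $\mu$ at each $T$. To rule out tractability you must exhibit a \emph{single} instance $\mu$ with $\limsup_{t} R_{\A}(\mu,t)/g(t)=\infty$; your diagonal $i^{\ast}(T)$ does not provide one.

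Concretely, your family $\{\prob_i\}$ is defeated (in the paper's sense) by the following full-feedback algorithm: in round $t$, query the centers $x_1,\ldots,x_t$ and play the one with highest sample average among those already queried $\Omega(\eps^{-2}\log t)$ times. For any fixed $\prob_i$, once $t\ge i$ the center $x_i$ is queried every round, and after $O(\eps^{-2}\log t)$ further rounds it is identified as optimal; hence $R_{\A}(\prob_i,t)=O_i(\log t)=o_i(t)$. The constant blows up with $i$, but that is permitted.

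The paper's construction avoids this by building a \emph{single} random instance that is hard at infinitely many scales simultaneously. It partitions the disjoint balls into blocks $I_k$ of sizes $4^{t_1},4^{t_2},\ldots$, designates one special ball $j_k$ per block, and crucially puts \emph{random $\pm 1$ wedges} on all ordinary balls. These random signs are what make the construction work under full feedback: after $t$ rounds the special ball in $I_k$ is information-theoretically indistinguishable from the $\approx 4^{t_k}/2^{t}$ ordinary balls whose signs happened to be $+1$ throughout, so for $t\le t_k$ the algorithm cannot locate $j_k$. This yields, for the one instance $\prob_J$, regret $\Omega(r_k t_k)\ge k\,g(t_k)$ at each $t_k$, and Borel--Cantelli gives infinitely many such $k$ almost surely. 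Your independent-Bernoulli instances have no analogue of this multi-scale structure and cannot be patched into one.

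A smaller point: invoking Theorem~\ref{thm:boundary-of-tractability-bandits} for the upper bound is circular, since in the paper that theorem is a corollary of the present one; the algorithmic direction must be argued directly (phased \NaiveAlg with phase lengths tuned to the covering numbers).
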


\xhdr{Polynomial regret in (very) high dimension.}
In view of the $\sqrt{t}$ lower bound from Theorems~\ref{thm:main-experts}, we are interested in matching upper bounds. \citet{Anupam-experts07} observed that such bounds hold for every metric space $(X,\mD)$ of finite covering dimension: namely, the \FFproblem on $(X,\mD)$ is $\sqrt{t}$-tractable. Therefore it is natural to ask whether there exist metric spaces of \emph{infinite} covering dimension with polynomial regret.

We settle this question by proving a characterization with nearly matching upper and lower bounds in terms of a novel dimensionality notion tailored to the experts problem. We define the \emph{log-covering dimension} of $(X,\mD)$ as the smallest number $d\geq 0$ such that $X$ can be covered by
    $O\left(2^{r^{-d}} \right)$
sets of diameter $r$ for all $r>0$.
More formally:
\begin{align}\label{eq:LCD}
    \LCD(X) = \limsup_{r \to 0}\, \frac{\log\log N_r(X)}{\log (1/r)}.
\end{align}
where $N_r(X)$ is the minimal size (cardinality) of a $r$-covering of $X$, i.e. the smallest number of sets of diameter at most $r$ sufficient to cover $X$. Note that the number of sets allowed by this definition is exponentially larger than the one allowed by the covering dimension.

To give an example of a metric space with a non-trivial log-covering dimension, let us consider a \emph{uniform tree} -- a rooted tree in which all nodes at the same level have the same number of children. An \emph{\eps-uniform tree metric} is a metric on the ends of an infinitely deep uniform tree, in which the distance between two ends is $\eps^{-i}$, where $i$ is the level of their least common ancestor. It is easy to see that an \eps-uniform tree metric such that the branching factor at each level $i$ is
    $ \exp(\eps^{-id} (2^d - 1))$
has log-covering dimension $d$.

For another example, consider the set of all probability measures over $X=[0,1]^d$ under the Wasserstein $W_1$ metric, a.k.a. the Earthmover distance.
\footnote{\label{fn:earthmover}
The Wasserstein $W_1$ metric is one of the standard ways to define a distance on probability measures. In particular, it is widely used in Computer Science literature to compare discrete distributions, e.g. in the context of image retrieval \citep{earthmover-00}.}
We show that the log-covering dimension of this metric space is equal to the covering dimension of $(X,\mD)$. In fact, this example extends to any metric space $X$ of finite diameter and covering dimension $d$; see Appendix~\ref{app:earthmover} for the details.

\begin{theorem}\label{thm:intro-LCD}
Let $(X,\mD)$ be a metric space of log-covering dimension $d$. Then the \FFproblem is
 $(t^\gamma)$-tractable for any $\gamma> \tfrac{d+1}{d+2}$.
\end{theorem}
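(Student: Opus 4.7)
The plan is to reduce to the classical prediction-with-expert-advice problem by discretizing $X$ with a suitable net, and to use the full-feedback oracle to implement the Hedge algorithm over the net. Fix $\gamma > \tfrac{d+1}{d+2}$ and choose $d' \in (d,\infty)$ small enough that $\tfrac{d'+1}{d'+2} < \gamma$; since $\LCD(X) = d$ is defined as a $\limsup$, there exist $r_0 > 0$ and $c > 0$ with $N_r(X) \leq 2^{c\,r^{-d'}}$ for all $r \in (0, r_0]$.

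The algorithm runs in phases of doubling length: phase $i$ has length $T_i = 2^i$. At the start of phase $i$, set $\delta_i := T_i^{-1/(d'+2)}$, and (via the covering oracle applied greedily) extract a $\delta_i$-net $S_i \subset X$, i.e.\ a set such that every $x \in X$ has some $y \in S_i$ with $\mD(x,y) < \delta_i$ and $K_i := |S_i| \leq N_{\delta_i}(X) \leq 2^{c\,\delta_i^{-d'}}$ (valid once $\delta_i \leq r_0$). Then run Hedge on the expert set $S_i$ for $T_i$ rounds with learning rate tuned to $(T_i, K_i)$: in each round $t$, sample an arm $x_t \in S_i$ from the current Hedge distribution, play $x_t$, query the full-feedback oracle for $\{f_t(y) : y \in S_i\}$, and feed the loss vector $(1-f_t(y))_{y \in S_i}$ back to Hedge.

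To analyze a single phase, note that by Lipschitz-continuity of $\mu$ and the $\delta_i$-net property, there exists $x_i^* \in S_i$ with $\mu(x_i^*) \geq \sup(\mu,X) - \delta_i$. The standard Hedge guarantee yields an expected cumulative payoff of at least $T_i\,\mu(x_i^*) - O(\sqrt{T_i\,\log K_i})$, so the regret contributed by phase $i$ is at most
\[
\delta_i T_i + O\!\left(\sqrt{T_i\,\log K_i}\right) \;\leq\; \delta_i T_i + O\!\left(\sqrt{c\,T_i\,\delta_i^{-d'}}\right),
\]
and the choice $\delta_i = T_i^{-1/(d'+2)}$ balances the two summands at $O\!\left(T_i^{(d'+1)/(d'+2)}\right)$. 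Summing this geometric series over phases $i$ with $T_i \leq t$ gives $R(t) = O\!\left(t^{(d'+1)/(d'+2)}\right) = O(t^\gamma)$.

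The only delicate point is the $\limsup$ in the definition of $\LCD$: the bound on $K_i$ applies only once $\delta_i \leq r_0$, so the per-phase estimate above holds only after finitely many initial phases; the regret collected beforehand is a constant depending on the instance and is absorbed into the instance-dependent constant allowed by $f$-tractability. The main obstacle, modest as it is, is conceptual rather than technical: one has to notice that the $\log K_i$ slack of Hedge is only polynomial in $\delta_i^{-1}$ even though $K_i$ itself is doubly exponential in $\delta_i^{-1}$ under the $\LCD$ assumption, which is precisely why the exponent $(d+1)/(d+2)$ matches the familiar covering-dimension bound for \NaiveAlg\ despite $X$ having infinite covering dimension.
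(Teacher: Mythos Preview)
Your proof is correct and follows essentially the same approach as the paper's (Theorem~\ref{thm:ffproblem-naive}): discretize to a $\delta$-net with $\delta = T^{-1/(b+2)}$ in each doubling phase, and exploit the key observation that under the log-covering assumption $\log|S| \leq O(\delta^{-b})$ is only polynomial in $1/\delta$, which balances against the discretization error $\delta T$. The sole difference is the inner subroutine---you run Hedge over the net, whereas the paper plays the previous phase's empirical best and uses sample averages with Chernoff bounds; both yield an $O(\sqrt{T\log|S|})$-type term and the identical exponent $(d'+1)/(d'+2)$, with your variant having the mild bonus that Hedge's guarantee does not require i.i.d.\ payoffs.
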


The algorithm in Theorem~\ref{thm:intro-LCD} is a version of \NaiveAlg. The same algorithm enjoys a better regret bound if each function $f\in \mathtt{support}(\mathbb{P})$ is itself a Lipschitz function on $(X,\mD)$. We term this special case the \emph{\ULproblem}.

\begin{theorem}\label{thm:intro-LCD-uniform}
Let $(X,\mD)$ be a metric space of log-covering dimension $d$. Then the \ULproblem is
 $(t^\gamma)$-tractable for any $\gamma> \tfrac{d-1}{d}$.
\end{theorem}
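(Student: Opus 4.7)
The plan is to analyze a multi-scale variant of \NaiveAlg using a chaining argument that exploits the fact that in the uniformly Lipschitz setting each \emph{realized} payoff function is Lipschitz, not just the mean $\mu$. Fix any $d' > \LCD(X) = d$, so that $\log N_r(X) \leq r^{-d'}$ for all sufficiently small $r$; it suffices to prove regret $\tilde{O}(t^{(d'-1)/d'})$, since sending $d' \to d$ yields $(t^\gamma)$-tractability for every $\gamma > (d-1)/d$. Working within a single phase of length $T$ (anytime regret follows from the usual doubling trick built into \NaiveAlg), choose dyadic scales $\delta_j = 2^{-j}$ for $j = 0, 1, \ldots, J$ with $J$ to be determined, together with nested $\delta_j$-nets $S_0 \subset S_1 \subset \cdots \subset S_J$. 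For each $x \in S_J$ let $a_j(x) \in S_j$ denote a nearest ancestor at scale $\delta_j$; uniform Lipschitzness of every realized payoff function $f_t$ then yields the telescoping identity
\[
f_t(x) \;=\; f_t(a_0(x)) \,+\, \sum_{j=1}^{J} \bigl[ f_t(a_j(x)) - f_t(a_{j-1}(x)) \bigr],
\]
in which the $j$-th increment has magnitude at most $2\delta_{j-1}$.

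The algorithm will maintain $J+1$ coupled exponential-weights instances, one per level, so that the arm played in each round is the level-$J$ sample while the coarser samples are used only to attribute incremental losses. The level-$j$ instance treats the elements of $S_j$ as experts and charges to each expert $y$ the incremental loss $f_t(a_{j-1}(y)) - f_t(y)$, which lies in an interval of width $4\delta_{j-1}$; tuning its learning rate proportionally to $\delta_{j-1}^{-1}$ gives per-level regret $O(\delta_{j-1}\sqrt{T \log |S_j|})$ by the standard scaled-Hedge bound. By the telescoping identity, the algorithm's total regret against any target $x \in S_J$ decomposes as the sum of these per-level regrets, and adding a uniform Lipschitz approximation error $\delta_J T$ (to compare with the global supremum of $\mu$) yields the overall regret against the best arm in $X$.

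Substituting the log-covering bound $\log |S_j| \leq \delta_j^{-d'} = 2^{jd'}$ converts the level-$j$ contribution into $O(\sqrt{T}\cdot 2^{j(d'-2)/2})$. When $d' > 2$ this is a geometric progression dominated by its last term $O(\sqrt{T}\cdot 2^{J(d'-2)/2})$; setting $J$ so that $2^{Jd'/2} = \sqrt{T}$ (equivalently $2^J = T^{1/d'}$) balances it against the approximation error $2^{-J}T$ and gives total regret $O(T^{(d'-1)/d'})$. When $d' \leq 2$ the chaining series converges geometrically and the bound collapses to $\tilde{O}(\sqrt{T})$, recovering the regime of \citet{Anupam-experts07}.

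The main obstacle will be verifying cleanly that the $J+1$ parallel instances combine into a single regret bound against the global best arm without cross-level interference: the level-$j$ instance only sees the correct incremental losses when its sampling is conditional on the previous levels' choices, which must themselves be unbiased at the relevant scale. A top-down coupling -- sample the coarsest level first, then refine by sampling each child instance conditional on the previous level's choice -- resolves this, and the resulting regret decomposition becomes standard once the chaining identity is in hand. A secondary technical point is that $\LCD$ is defined via a $\limsup$, so $\log N_{\delta_j} \leq \delta_j^{-d'}$ holds only below some threshold $r_0$; the finitely many coarser levels contribute an extra $O(\sqrt{T})$ that is absorbed into $T^{(d'-1)/d'}$ whenever the chaining bound is nontrivial, namely when $d' \geq 2$.
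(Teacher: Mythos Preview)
Your chaining intuition and the arithmetic are correct, but the algorithm you specify has a genuine gap at exactly the point you flag as ``the main obstacle''. With incremental losses $\ell_t^{(j)}(y)=f_t(a_{j-1}(y))-f_t(y)$, the level-$j$ Hedge regret bound compares $\sum_t\bigl[f_t(a_j(x^*))-f_t(a_{j-1}(x^*))\bigr]$ against $\sum_t\E_{y\sim p_t^{(j)}}\bigl[f_t(y)-f_t(a_{j-1}(y))\bigr]$. Summing over $j$, the first half telescopes to $\sum_t f_t(x^*)$, but the second half only equals the algorithm's realized payoff if the sampled points $y_t^{(j)}$ form a \emph{consistent} root-to-leaf path, i.e.\ $a_{j-1}(y_t^{(j)})=y_t^{(j-1)}$ for every $j$. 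Your top-down coupling enforces that, but then each level-$j$ Hedge lives at a single parent node and its regret is against the best \emph{child of the sampled parent}, not against $a_j(x^*)$; when the path deviates from the $x^*$-path at some level, the later per-level bounds say nothing about $x^*$. One clean repair is to feed each node's Hedge the \emph{subtree} payoff $A_c(t)=\E_{x\sim\text{subtree}(c)}[f_t(x)]$ rather than the point value $f_t(c)$: by uniform Lipschitzness all subtree payoffs under a level-$(j{-}1)$ node lie within $O(\delta_{j-1})$ of each other, so the scaled-Hedge bound still applies, and now the recursion $A_v(t)=\E_{c\sim p_t(\cdot\mid v)}[A_c(t)]$ makes the telescoping $\sum_t A_{z^{(j)}}(t)-\sum_t A_{z^{(j-1)}}(t)\le R_j$ exact along the $x^*$-path.

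The paper's route avoids these coupling issues entirely by putting all the chaining in the \emph{analysis} rather than the algorithm. It runs the same \NaiveExp as in the non-uniform case---simply take the empirical argmax over a single $\delta$-net $S$ with $\delta=T^{-1/b}$---and then shows $\mu(\arg\max_S\mu)-\mu(\arg\max_S\nu)\le O(\delta\log T)$ by applying Chernoff not to individual arms but to the differences $f_t(x)-f_t(y)$, which are bounded by $\mD(x,y)$ thanks to uniform Lipschitzness. These pairwise deviation bounds are then propagated up a covering tree by an induction on depth (Claim in Section~\ref{sec:FFproblem-uniform}), yielding the $O(\delta)$-per-level increment you were after without any Hedge machinery. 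The upshot is the same $T^{(b-1)/b}$ rate; your approach, once patched, would work too and has the advantage of being adversarially robust, whereas the paper's concentration argument is specific to the IID setting but keeps the algorithm trivial.
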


The analysis is much more sophisticated compared to Theorem~\ref{thm:intro-LCD}, using a chaining technique from empirical process theory (see \citet{Talagrand-book05} for background).

%The proofs of these theorems can be found in Section~\ref{sec:FFproblem-LCD} and Section~\ref{sec:FFproblem-uniform}, respectively.

\xhdr{Per-metric optimal regret bounds.}
We find that the log-covering dimension is not the right notion to characterize optimal regret for arbitrary metric spaces. Instead, we define the \emph{max-min-log-covering dimension} ($\MaxMinLCD$): essentially, we take the definition of $\MaxMinCOV$ and replace covering dimension with log-covering dimension. \begin{align}\label{eq:MaxMinLCD}
\MaxMinLCD(X) = \textstyle{\sup_{Y\subset X}} \;
    \inf\{ \,\LCD(Z):\; \text{open non-empty $Z\subset Y$} \}.
\end{align}
\noindent Note that in general $\MaxMinLCD(X) \leq \LCD(X)$. Equality holds for ``homogeneous" metric spaces such as \eps-uniform tree metrics.
We derive the regret characterization in terms $\MaxMinLCD$; the characterization is tight for the \ULproblem.

\begin{theorem}\label{thm:intro-MaxMinLCD}
Let $(X,\mD)$ be an uncountable metric space and $d = \MaxMinLCD\geq 0$. Then:
\begin{OneLiners}
\item[(a)] the \FFproblem is $(t^\gamma)$-tractable for any $\gamma> \tfrac{d+1}{d+2}$,

\item[(b)] the \ULproblem is $(t^\gamma)$-tractable for any
    $\gamma> \max(\tfrac{d-1}{d}, \tfrac12)$,

\item[(c)] the \ULproblem is not $(t^\gamma)$-tractable for any $\gamma < \max(\tfrac{d-1}{d}, \tfrac12)$.
\end{OneLiners}
\end{theorem}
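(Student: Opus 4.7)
My plan is to follow the blueprint of Theorem~\ref{thm:pmo} from Section~\ref{sec:pmo}, systematically replacing the covering dimension $\COV$ by the log-covering dimension $\LCD$ throughout, and replacing the base ``zooming'' machinery by the appropriate base experts algorithms from Theorems~\ref{thm:intro-LCD} and~\ref{thm:intro-LCD-uniform}. Two analogs of the key structural results in Section~\ref{sec:pmo} must be established: (i) an analog of Proposition~\ref{prop:fatness-dim} showing that $\MaxMinLCD(X)$ equals the infimum of $d$ such that $X$ admits a \emph{transfinite $d$-log-fatness decomposition}, defined exactly as in Definition~\ref{def:fatness-transfinite} but with $\LCD$ in place of $\COV$ in condition~(b); and (ii) an analog of Definition~\ref{def:ball-tree} and Lemma~\ref{lm:pmo-LM-MaxMinCOV}, namely a \emph{log-ball-tree} in which ``strength $d$'' requires every node with children of radius $r$ to have at least $\max(2,\,2^{r^{-d}})$ children, and such a tree exists whenever $d < \MaxMinLCD(X)$.

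For the upper bounds (a) and (b), I would first prove the log-fatness proposition by transfinite induction, using exactly the argument of Proposition~\ref{prop:fatness-dim}: the forward direction shows that any $Y$ with $\MinLCD(Y) > d$ is contained in every $S_\lambda$, forcing $S_{\beta} \neq \emptyset$; the reverse direction constructs the decomposition by iteratively stripping out ``log-thin'' points (those lying in an open set $U$ with $\LCD(U) < d$). Given a transfinite $d$-log-fatness decomposition, I would adapt Algorithm~\ref{alg:zooming-pmo} to the experts setting, replacing the core zooming subroutine with a phase of the naive mesh algorithm (as in Theorem~\ref{thm:intro-LCD} for part~(a), or the refined uniformly-Lipschitz variant of Theorem~\ref{thm:intro-LCD-uniform} for part~(b)) restricted to $\mathcal{N}\cup S_{\lambda^*}$, where $\mathcal{N}$ is a coarse net of $X$ and $\lambda^*$ is the current target ordinal. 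The machinery of Claim~\ref{cl:pmo-analysis-targetOrdinal} carries over essentially verbatim with full feedback replacing bandit feedback (in fact it becomes easier, since empirical means are estimated at every queried point simultaneously): after any sufficiently long clean phase, $\lambda^* = \DesiredLambda$, and since $\LCD(S_{\DesiredLambda})\leq d$, the base algorithm attains regret $T^\gamma$ with the desired $\gamma$ on a phase of length $T$. Summing over phases yields the claimed instance-dependent bound.

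For the lower bound (c), I would prove the log-ball-tree existence result by repeating the construction of Lemma~\ref{lm:pmo-LM-MaxMinCOV}, using a packing-folklore analog of Lemma~\ref{lem:packing-folklore} stating that if $\LCD(Y) > d$ then for every $r_0, C > 0$ there exists $r < r_0$ with an $r$-packing of $Y$ of size $\geq 2^{C r^{-d}}$ (this follows from the $\limsup$ in~\eqref{eq:LCD} and Fact~\ref{fact:packing-covering}). Given a log-ball-tree of strength $d$, I would then adapt the randomized instance construction of Lemma~\ref{lm:pmo-LB-balltree} to produce a distribution $\PP$ over \emph{uniformly Lipschitz} payoff functions. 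Concretely, at each level of the tree I would define a payoff measure by coupling the per-round realized payoffs across all arms in a ball through a single shared Bernoulli random variable (scaled by the bump function of~\eqref{eq:bump-fn}), so that each realization $f$ is itself a Lipschitz perturbation of $\mu_\mathbf{w}$; Lemma~\ref{lm:LB-Lipschitz} already guarantees that such bump-sum constructions yield Lipschitz functions with any $[-1,1]$-weighting, which is exactly what I need to certify uniform Lipschitzness. The KL-divergence step would use the full-feedback lower-bound template (Theorem~\ref{thm:LB-technique} from the full-feedback treatment), applied at each tree node $u$ with $k = 2^{r^{-d}}$ children at gap $\eps = r/6$: a single round of uniformly-Lipschitz full feedback contributes at most $O(r^{-d})$ nats to the KL between sibling instances (because observations at two points within distance $r$ are deterministically within $r$ of each other, collapsing the effective dimension), so distinguishing the correct child requires time $\Omega(\log(k)/(\eps^{2}\cdot r^{-d})) = \Omega(r^{-d}\cdot r^{-2}/r^{-d}) \cdot \text{(correction)}$; balancing regret $\eps t$ against this threshold yields the exponent $(d-1)/d$, with the floor $\tfrac12$ inherited from the classical two-arm $\sqrt{t}$ lower bound applied at the root node. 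A Borel-Cantelli argument identical to~\eqref{eq:LB-pmo-analysis} upgrades the one-shot bound to infinitely-often regret.

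The main obstacle is step~(c). The subtlety is that I must simultaneously (a) couple the realized payoffs across arms tightly enough to preserve uniform Lipschitzness of every realization, (b) keep per-round informational content small enough to force the $(d-1)/d$ exponent, and (c) preserve independence across rounds and the required gaps between sibling instances so that Theorem~\ref{thm:LB-technique}'s KL bound applies cleanly. A naive coupling would either break Lipschitzness or allow full identification of the active child in one round (trivially killing the lower bound). Getting the precise $O(r^{-d})$ per-round KL scaling — rather than the $\log N_r = r^{-d}$ that a black-box full-feedback argument would give — is the critical calculation; this is where the uniform Lipschitz hypothesis is used in an essential, non-obvious way, and verifying it will require careful information-theoretic book-keeping that does not appear in the bandit proof of Lemma~\ref{lm:pmo-LB-balltree}.
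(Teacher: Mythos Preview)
Your plan for the upper bounds (a) and (b) is correct and essentially identical to the paper's: define a transfinite LCD decomposition (Definition~\ref{def:fatness-transfinite-experts}), prove the analog of Proposition~\ref{prop:fatness-dim} (Lemma~\ref{prop:fatness-dim-experts}), and run a phase-based per-metric algorithm using \NaiveExp\ as the base subroutine on $S_{\lambda^*}$ together with a coarse net of $X$.

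Your lower bound plan has the right high-level ingredients (log-strength ball-tree, the packing lemma for $\LCD$, Theorem~\ref{thm:LB-technique}), but your account of the mechanism is muddled, and the ``main obstacle'' you identify is a phantom. You write that a round of full feedback contributes ``$O(r^{-d})$ nats to the KL between sibling instances'' and then try to balance $\log(k)/(\eps^2 \cdot r^{-d})$; this conflates the gap parameter $\eps$ with the indistinguishability parameter $\delta$ of the $(\eps,\delta,k)$-ensemble, and the resulting calculation (which you yourself flag with ``(correction)'') does not produce $(d-1)/d$. The paper's argument is much cleaner and requires no delicate per-round KL bookkeeping: it simply reuses Construction~\ref{con:LB-payoffs} from Section~\ref{sec:lower-bound} with \emph{constant} biases $\delta_i \equiv \tfrac13$. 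By Corollary~\ref{cor:ensemble} the children of a depth-$(i-1)$ node form an $(\eps,\delta,k)$-ensemble with $\eps = \Theta(r_i)$, $\delta = \tfrac13$, and $k = 2^{r_i^{-b}}$. Theorem~\ref{thm:LB-technique} then applies for any $t < \ln(17k)/(2\delta^2) = \Theta(r_i^{-b})$, yielding regret at least $\eps t/2 = \Theta(r_i^{1-b}) = \Theta(t^{(b-1)/b})$. The entire calculation is two lines; the exponent comes from $\ln k = \Theta(r^{-b})$, not from any information-content-per-round scaling.

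In particular, the uniform Lipschitz hypothesis is \emph{not} ``used in an essential, non-obvious way'' to derive the bound. It is merely satisfied by Construction~\ref{con:LB-payoffs} (every realized payoff function is Lipschitz by Lemma~\ref{lm:LB-Lipschitz}), which means the lower bound happens to apply even to the restricted \ULproblem. The coupling you worry about---random signs on \emph{all} tree nodes, biased only along the lineage---is already built into Construction~\ref{con:LB-payoffs} and was designed precisely so that sibling instances differ in the bias at a single node, giving a constant ratio bound $1\pm\delta$ on all events (Lemma~\ref{lm:LB-experts-salient}(i)). There is no additional information-theoretic book-keeping to do. The $\tfrac12$ floor comes directly from Theorem~\ref{thm:main-experts}, i.e.\ the $\sqrt{t}$ lower bound for any uncountable metric space, not from a separate two-arm argument at the root.
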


The algorithms in parts (a) and (b) use a generalization of the transfinite decomposition from the bandit per-metric optimal algorithm (Theorem~\ref{thm:intro.pmo}). The lower bound in part (c) builds on the lower-bounding technique for the $\sqrt{t}$ lower bound on uncountable metric spaces. %The proof can be found in Section~\ref{sec:FFproblem-characterization}.

Our results for Lipschitz experts amount to a nearly complete characterization of per-metric optimal regret bounds, analogous to that in Table~\ref{tab:PMO-results} on page~\pageref{tab:PMO-results}.
This characterization is summarized in the table below.
(The characterization falls short of being complete because the
upper and lower bounds for finite $\MaxMinLCD = d \in [0,\infty)$
do not match.)

\begin{table}[h]
\label{tab:experts-results}
\caption{Per-metric optimal bounds for Lipschitz experts}
%\tbl{Per-metric optimal bounds for Lipschitz experts\label{tab:experts-results}}{%
\begin{tabular}{l|l|l}
If the completion of $(X,\mD)$ is ...  & then regret can be ...    & but not ... \\ \hline\\
compact and countable    & $O(1)$      & ~~--- \\
compact and uncountable  & & \\
~~~~~~~~~~~~ finite covering dimension
    & $\tilde{O}\left( \sqrt{t} \,\right)$
    & $o(\sqrt{t})$  \\
~~~~~~~~~~~~
$\MaxMinLCD=d\in [0,\infty)$
    & $\tilde{O}\left( t^\gamma \right)$, $\gamma >\tfrac{d+1}{d+2}$
    & $o\left( t^\gamma \right)$, $\gamma=\tfrac12$ or $\gamma <\tfrac{d-1}{d}$  \\
~~~~~~~~~~~~
$\MaxMinLCD=\infty$   & $o(t)$  & $O\left( t^\gamma \right)$, $\gamma<1$ \\
non-compact           & $O(t)$  & $o(t)$
\end{tabular}
\end{table}

\newcommand{\DpthOracle}{{\mathtt{Length}}}
\newcommand{\DCovOracle}{{\mbox{$\mD$-$\mathtt{Cov}$}}}

\subsection{Discussion}
\label{subsec:intro-access}

\xhdr{Accessing the metric space.}
In stating the theorems above, we have been imprecise about specifying the model of computation.  In particular, we have ignored the thorny issue of how to provide an algorithm with an input describing %containing
a metric space which may have an infinite number of points. The simplest way to interpret our theorems is to ignore implementation details and interpret an ``algorithm'' to mean an abstract decision rule, i.e. a (possibly randomized) Borel-measurable function mapping the history of past observations to an arm $x \in X$ which
is played in the current period. All of our theorems are valid under this interpretation, but they can also be made into precise algorithmic results provided that the algorithm is given appropriate oracle access to the metric space.

The zooming algorithm requires only a \emph{covering oracle} which takes a finite collection of open balls and either declares that they cover $X$ or outputs an uncovered point. The algorithm poses only one oracle query in each round $t$, for a collection of at most $t$ balls. (For infinite metric spaces of interest that admit a finite description, e.g.~rational convex polytopes in Euclidean space, it is generally easy to implement a covering oracle given a description of the metric space.)
The per-metric optimal algorithm in Theorem~\ref{thm:intro.pmo} uses more complicated oracles, and we defer the definition of these oracles to Section~\ref{sec:pmo}.

The $\omega(\log t)$-regret algorithms for countably infinite metric spaces (Theorems~\ref{thm:intro-dichotomy-MAB} and~\ref{thm:main-experts}) require an oracle which represents the well-ordering of the metric space. We also provide an extension for compact metric spaces with a finite number of limit points for which a more intuitive oracle access suffices. In fact, this extension holds for a much wider family of metric spaces: those with a finite \emph{Cantor-Bendixson rank}, a classic notion from point-set topology.

\xhdr{Further directions.}
While general, our model is idealized in several ways. Numerical similarity information, such as the distances and the Lipschitz constant, may be difficult to obtain in practice. The notion of similarity is ``worst-case", so that the distances may need to be large in order to accommodate a few outliers. The reward distribution does not change over time. These issues gave rise to a line of follow-up work, detailed in Section~\ref{sec:related-followup}.

%contexts, partial similarity info, beyond IID, metric pessimistic / GP bandits

\xhdr{Map of the paper.}
We discuss related work in Section~\ref{sec:related}. In particular, a considerable amount of \emph{follow-up work} is surveyed in Section~\ref{sec:related-followup}. Preliminaries are presented in Section~\ref{sec:prelims}, including sufficient background on metric topology and dimensionality notions, and the proof of the initial observation %result
(Theorem~\ref{thm:intro-covDim}).

In the rest of the paper we present our technical results. Section~\ref{sec:adaptive-exploration} is on Lipschitz bandits with benign payoff functions; it presents the zooming algorithm and extensions thereof. Section~\ref{sec:pmo} is on the per-metric optimal algorithms for Lipschitz bandits, focusing on polynomial regret.
%Section~\ref{sec:experts-overview} defines the full-feedback version (Lipschitz experts) and overviews our contributions.
The next two sections concern both Lipschitz bandits and Lipschitz experts:
Section~\ref{sec:dichotomies} is on the dichotomy between (sub)logarithmic and $\sqrt{t}$ regret, and Section~\ref{sec:boundary-body} studies for which metric spaces the Lipschitz bandits/experts problem is $o(t)$-tractable. Section~\ref{sec:FFproblem} is on the polynomial-regret algorithms for Lipschitz experts. We conclude with directions for further work in Section~\ref{sec:conclusions}.

To preserve the flow of the paper, some material is deferred to appendices. In Appendix~\ref{sec:KL-divergence} we present sufficient background on Kullback-Leibler divergence (KL-divergence) and the technical proofs which use the KL-divergence technique. In Appendix~\ref{sec:reduction} we reduce the Lipschitz bandits/experts problem to that on complete metric spaces. In Appendix~\ref{sec:topological} we present a self-contained proof of a theorem from general topology, implicit in \cite{Cantor83,MazSier}, which ties together the upper and lower bounds of the regret dichotomy result. Finally, in Appendix~\ref{app:earthmover} we flesh out the Earthmover distance example from Section~\ref{sec:intro-experts}.

\section{Related and follow-up work}
\label{sec:related}

\xhdr{Multi-armed bandits.}
MAB problems have a long history; a thorough survey is beyond the scope of this paper. For background, a reader can refer to a book \cite{CesaBL-book} and a recent survey \cite{Bubeck-survey12} on regret-minimizing bandits.
The Bayesian perspective (less relevant to the present paper) can be found in books and surveys
\citep{Bergemann-survey06,Gittins-book11}.
%\citep{Bergemann-survey06,Gittins-book11,Mahajan-survey08,Sundaram-survey03}.
On a very high level, there is a crucial distinction between regret-minimizing formulations and Bayesian/MDP formulations. Among regret-minimizing formulations, an important distinction is between stochastic payoffs \citep{Lai-Robbins-85,bandits-ucb1} and adversarial payoffs \citep{bandits-exp3}.

This paper is on regret minimization with stochastic payoffs. The basic setting here is $k<\infty$ arms with no additional structure. Then the optimal regret is $R(t) = O(\sqrt{kt})$, and $R(t) = O_\mI(\log t)$ with an instance-dependent constant \citep{Lai-Robbins-85,bandits-exp3,bandits-ucb1}.
Note that the distinction between regret rates with and without instance-dependent constants is inherent even in this basic bandit setting.
The $\UCB$ algorithm \citep{bandits-ucb1} achieves the $O_\mI(\log t)$ bound and simultaneously matches the $O(\sqrt{kt})$ bound up to a logarithmic factor.

\OMIT{Several recent papers \cite{Bubeck-colt09,Audibert-TCS09,Honda-colt10,Auer-UCB-10,Maillard-colt11,Garivier-colt11} improve over $\UCB$, obtaining algorithms with regret bounds that are even closer to the lower bound.}

Our zooming algorithm relies on the ``UCB index'' technique from \citet{bandits-ucb1}. This is a simple but very powerful idea: arms are chosen according to a numerical score, called \emph{index}, which is defined as an upper confidence bound (UCB) on the expected payoff of a given arm. Thus, the UCB index can be represented as a sample average plus a confidence term, which represent, respectively, exploitation and exploration, so that the sum represents a balance between the two. Several papers \citep{Bubeck-colt09,Audibert-TCS09,Honda-colt10,Auer-UCB-10,Maillard-colt11,Garivier-colt11} designed improved versions of the UCB index for the $k$-armed MAB problem with stochastic payoffs, achieving regret bounds which are even closer to the lower bound. Moreover, the UCB index idea and various extensions thereof have been tremendously useful in many other settings with exploration-exploitation tradeoff, e.g. \citep{Auer-focs00,DynamicMAB-colt08,sleeping-colt08,Munos-nips08,Bubeck-colt10,contextualMAB-colt11,Csaba-nips11,DynPricing-ec12}.
It is worth noting that the zooming algorithm, as originally published in \citep{LipschitzMAB-stoc08}, was one of the first results in this line of work.

Many papers enriched the basic MAB setting by assuming some structure on arms, typically in order to handle settings where the number of arms is very large or infinite. Most relevant to this paper is the work on \emph{continuum-armed bandits} \citep{Agrawal-bandits-95,Bobby-nips04,AuerOS/07}, a special case of Lipschitz MAB where the metric space is $([0,1], \ell_1)$. A closely related model posits that arms correspond to leaves on a tree, but no metric space is revealed to the algorithm \citep{Kocsis-ecml06,yahoo-bandits07,Munos-uai07,ImplicitMAB-nips11}.
Another commonly assumed structure is linear or convex payoffs,  e.g.~\citep{Bobby-stoc04,FlaxmanKM-soda05,DaniHK-nips07,AbernethyHR-colt08,Hazan-soda09}.
Linear/convex payoffs is a much stronger assumption than similarity, essentially because it allows to make strong inferences about far-away arms. Accordingly, it admits much stronger regret bounds, such as $\tilde{O}(d\sqrt{t})$ for arms in $\R^d$. Other structures in the literature include infinitely many i.i.d. arms \citep{Berry-bandits-97,Munos-nips08}, Gaussian Process Bandits \citep{GPbandits-icml10,GPbandits-nips11,GPbandits-icml12} and Functional bandits \citep{functionalMAB-colt11}; Gaussian Process MAB and Functional MAB are discussed in more detail in Section~\ref{sec:related-followup}.

Closely related to continuum-armed bandits is the model of (regret-minimizing) \emph{dynamic pricing} with unlimited supply \citep{Blum03,KleinbergL03}. In this model, an algorithm is a seller with unlimited supply of identical items, such as a digital good (a movie, a song, or a program) that can be replicated at no cost. Customers arrive sequentially, and to each customer the algorithm offers one item at a non-negotiable price. Here prices correspond to arms, and accordingly the ``arms'' have a real-valued structure. Due to the discontinuous nature of demand (a customer who values the item at $v$ will pay a price of $v-\eps$ but will pay nothing if offered a price of $v+\eps$) dynamic pricing is not a special case of Lipschitz MAB, but there is a close relationship between the techniques that have been used to solve both problems. Moreover, when the distribution of customer values has bounded support and bounded probability density, the expected revenue is a Lipschitz function of the offered price, so regret-minimizing dynamic pricing in this case reduces to the \problem.%
\footnote{Some of the work on dynamic pricing, e.g. \citep{BZ09,Wang-OR14}, makes the Lipschitz assumption directly.}
One can also consider selling $d>1$ products, offering a different price for each. When the expected revenue is a Lipschitz function of the offered price vector, this is a special case of Lipschitz MAB with arms in $\R^d$.

%BZ09,DynPricing-ec12

Interestingly, the dichotomy between (poly)logarithmic and $\sqrt{t}$ regret has appeared in four different MAB settings: Theorem~\ref{thm:intro-dichotomy-MAB} in this paper, $k$-armed bandits with stochastic payoffs (as mentioned above),
bandits with linear payoffs \citep{DaniHK-colt08},
and an extension of MAB to pay-per-click auctions \citep{MechMAB-ec09,DevanurK09,Transform-ec10-jacm}.
These four dichotomy results have no obvious technical connection.

\OMIT{ %%%%%%%%%%%%
Most similar to Theorem~\ref{thm:intro-dichotomy-MAB} is the result on Linear MAB in \cite{DaniHK-colt08}, where the polylogarithmic tractability is determined by a certain topological property.%
\footnote{However, the topological property in \cite{DaniHK-colt08} -- a uniformly positive gap in expected payoff between the optimal point and any other extremal point -- is very different from the one in Theorem~\ref{thm:intro-dichotomy-MAB}. Note that this property depends the payoff function, rather than the space of arms alone. \BKnote{1}{I'm not sure about this discussion. First of all, it is a bit of a misnomer to call this a topological property; topologically, extreme points are no different from other points on the boundary of a convex set. Secondly, if one allows instance-dependent constants as in our paper, then Dani et al don't require a uniformly positive gap between the optimal point and any other extreme point, they only require a positive gap. So I don't think these two dichotomies are as different as you are making them out to be.}}
} %%%%%%

\xhdr{Metric spaces and dimensionality notions.}
Algorithmic problems on metric spaces have a long history in many different domains. These domains include: constructing space-efficient and/or algorithmically tractable representations such as metric embeddings, distance labels, or distance oracles; problems with costs where costs have a metric structure, e.g. facility location and traveling salesman; offline and online optimization on a metric space; finding hidden structure in a metric space (classification and clustering).

Covering dimension is closely related to several other notions of dimensionality of a metric space, such as Haussdorff dimension, capacity dimension, box-counting dimension, and Minkowski-Bouligand Dimension. All these notions are used to characterize the covering properties of a metric space in fractal geometry; discussing fine distinctions between them is beyond our scope. A reader can refer to \cite{Schroeder-1991} for background.

Covering numbers and covering dimension have been widely used in Machine Learning to characterize the complexity of the \emph{hypothesis space}: a space of functions over $X$, the domain for which the learner needs to predict or classify, under functional $\ell_2$ norm and some distribution over $X$. This is different from the way covering numbers and similar notions are used in the context of the \problem, and we are not aware of a clear technical connection.%
\footnote{In the \problem, one is interested in the family $\F$ of all Lipschitz-continuous functions on $(X,\mD)$, and therefore one could consider the covering numbers for $\F$, or use any other standard notions such as VC-dimension or fat-shattering dimension. However, we have not been able to reach useful results with this approach.}
Non-metric notions to characterize the complexity of function classes include VC-dimension, fat-shattering dimension, and Rademacher averages; see \cite{Shai-MLbook-2014} for background.

\OMIT{ %%%%%%%
\BKnote{1}{I think the following work is not related to our paper in any
meaningful sense, and I propose eliminating this paragraph.}
Covering dimension has been used in \cite{Abrahao-imc08} to empirically characterize the ``intrinsic dimensionality'' of the Internet delay space (i.e., the matrix of pairwise round-trip times). While prior work \cite{Meridian-sigcomm05,Viennot-infocom08} considered doubling dimension for the same purpose, \cite{Abrahao-imc08} found that covering dimension is much more suitable.
} %%%%%%%

Various notions of dimensionality of metric spaces have been studied in the theoretical computer science literature, with a goal to arrive at (more) algorithmically tractable problem instances.  The most popular notions have
been the ball-growth dimension, e.g. \citep{Karger-stoc02,Hildrum-spaa04,Abr05-SPAA,Slivkins-podc07}, and the doubling dimension, e.g. \citep{Gup03,Tal04,Slivkins-focs04,Slivkins-podc05,Men05,Cha05}. These notions have been useful in many different problems, including metric embeddings, other space-efficient representations such as distance labels and sparse spanners, network primitives such as routing schemes and distributed hash tables, and approximation algorithms for various optimization problems such as traveling salesman, $k$-median, and facility location.

\subsection*{Concurrent and independent work}
\label{sec:related-concurrent}

\cite{xbandits-nips08-conf,xbandits-nips08} obtain results similar to Theorem~\ref{thm:intro.zooming} and Theorem~\ref{thm:intro.zooming-extended}. They use similar, but technically different notions of instance-dependent metric dimension and relaxed Lipschitzness. They also obtain stronger regret bounds for some special cases; these extensions are similar in spirit to the extended analysis of the zooming algorithm in this paper (but technically different).  Their results use a different algorithm and the proof techniques appear different.

While the publication of our conference version \citep{LipschitzMAB-stoc08} predated the submission of theirs \citep{xbandits-nips08-conf},%
\footnote{\citet{xbandits-nips08-conf} acknowledge Theorem~\ref{thm:intro.zooming} as prior work. It appears that the authors of \citep{xbandits-nips08-conf} have not been aware of other results in \citep{LipschitzMAB-stoc08} at the time (which were only briefly mentioned in the conference version, and fleshed out in the full version \citep{LipschitzMAB-arxiv}).}
 we believe the latter is concurrent and independent work.

\OMIT{%%%% Full version of the footnote re x-armed bandits
Theorems~\ref{thm:intro.zooming} and~\ref{thm:intro.zooming-extended} have appeared in \cite{LipschitzMAB-stoc08,LipschitzMAB-arxiv}, which predates \cite{xbandits-nips08}. However, \cite{xbandits-nips08} is a follow-up work with respect to Theorem~\ref{thm:intro.zooming} and independent work with respect to Theorem~\ref{thm:intro.zooming-extended} and other extensions of the zooming algorithm.

Theorem~\ref{thm:intro.zooming} is one of the main results in the conference version of \cite{LipschitzMAB-stoc08}, published in ACM STOC in May 2008. Theorem~\ref{thm:intro.zooming-extended} and other extensions are briefly mentioned in the conference version of \cite{LipschitzMAB-stoc08} and fleshed out in the full version, which has been available from the authors' webpages since May 2008 and published as a technical report on {\tt arxiv.org} in September 2008.

\cite{xbandits-nips08} was submitted to NIPS in June 2008 and appeared in NIPS in December 2008. To the best of our understanding, at the time of the NIPS submission the authors of \cite{xbandits-nips08} were aware of Theorem~\ref{thm:intro.zooming} but not of Theorem~\ref{thm:intro.zooming-extended} and other extensions.
} %%%%%%%%%%%

\subsection*{Follow-up work}
\label{sec:related-followup}

Since the conference publication of \cite{LipschitzMAB-stoc08} there has been a considerable amount of follow-up work on Lipschitz MAB and various extensions thereof.

\xhdr{Lower bounds.}
While our lower bound for ``benign" problem instances (in Theorem~\ref{thm:intro.zooming}) comes from the worst-case scenario when the zooming dimension equals the covering dimension, \citet{contextualMAB-colt11} and \citet{Combes-colt14} provide more refined, \emph{instance-dependent} lower bounds. \citet{contextualMAB-colt11} proves that the upper bound in Theorem~\ref{thm:zooming-covering-num} is tight, up to $O(\log^2 t)$ factors, \emph{for every value of the said upper bound}.%
\footnote{In fact, this lower bound extends to the contextual bandit setting.}
\citet{Combes-colt14} focus on regret bounds of the form $C \cdot \log(t)+O(1)$, where $C$ depends on the problem instance, but not on time. They derive a lower bound on the $C$, and provide an algorithm which comes arbitrarily close to this lower bound.

%(optimal constant in front of log(T): instance-dep LB and matching UB (up to 1+eps),with additive term which is OK iff log(T)>>K)

\xhdr{Contextual Lipschitz MAB and applications.}
\cite{Pal-Bandits-aistats10} and \cite{contextualMAB-colt11}, simultaneous and independent w.r.t. one another,%
\footnote{The initial version of \cite{contextualMAB-colt11} has appeared on {\tt arxiv.org} in 2009. It contained the main algorithm, the same as in the final version, but only derived results for the covering dimension. The conference version from \emph{COLT 2011} contained essentially the same results as in the final journal version from 2014.}
extend Lipschitz MAB to
the contextual bandit setting, where in each round the algorithm receives a context (``hint'') $h$ and picks an arm $x$, and the expected payoff is a function of both $h$ and $x$. The motivational examples include placing ads on webpages (webpages and/or users are contexts, ads are arms), serving documents to users (users are contexts, documents are arms), and offering prices to customers (customers are contexts, prices are arms). The similarity information is expressed as a metric on contexts and a metric on arms, with the corresponding two Lipschitz conditions.
\cite{Pal-Bandits-aistats10} consider this setting and extend \NaiveAlg to obtain regret bounds in terms of the covering dimensions of the two metric spaces. \cite{contextualMAB-colt11} extends the  zooming algorithm to the contextual setting and obtains improved regret bounds in terms of a suitable ``contextual" version of the zooming dimension.

The ``contextual zooming algorithm" from \cite{contextualMAB-colt11} works in a more general setting where similarity information is represented as a metric space on the set of ``allowed'' context-arm pairs, and the expected payoff function is Lipschitz with respect to this metric space. This is a very versatile setting:
%apart from the ``standard'' intuition mentioned above,
it can also encode \emph{sleeping bandits} \citep{Blum-sleeping97,Freund-colt97,sleeping-colt08} (in each round, some arms are ``asleep'', i.e. not available) and slowly changing payoffs \citep{DynamicMAB-colt08} (here in each round $t$ the context is $t$ itself, and the metric on contexts expresses the constraint how fast the expected payoffs can change). This setting showcases the full power of the adaptive refinement technique which underlies the zooming algorithm.

Further, \cite{ZoomingRBA-icml10} use the zooming algorithm from this paper and its contextual version from \cite{contextualMAB-colt11} in the context of \emph{ranked bandits} \citep{RBA-icml08}. Here in each round a bandit algorithm chooses an ordered list of $k$ documents (from a much larger pool of available documents) and presents it to a user who scrolls the list top-down and clicks on the first document that she finds  relevant. The user may leave after the first click; the goal is to minimize the number of users with no clicks. The contribution of \citep{ZoomingRBA-icml10} is to combine ranked bandits with Lipschitz MAB; among other things, this requires a significantly extended model: if two documents are close in the metric space,
their click probabilities are similar even conditional on the event that some other documents are not clicked by the current user.

\xhdr{Partial similarity information.}
A number of papers tackle the issue that the numerical similarity information required for the \problem may be difficult to obtain in practice. These papers make various assumptions on what is and is not revealed to the algorithm, with a general goal to do (almost) as well as if the full metric space were known. \cite{Bubeck-alt11} study a version with strategy set $[0,1]^d$ and Lipschitz constant that is not revealed, and match the optimal regret rate for algorithms that know the Lipschitz constant. \cite{Minsker-colt13} considers the same strategy set and distance function of the form $\|x-y\|_\infty^\beta$, where the smoothness parameter $\beta\in(0,1]$ is not known.
\citep{ImplicitMAB-nips11,Munos-nips11,Bull-bandits14,Valko-icml13} study a version in which the algorithm only inputs a ``taxonomy" on arms (i.e., a tree whose leaves are arms), whereas the numerical similarity information is not revealed at all. This version features a \emph{second} exploration-exploitation tradeoff: the tradeoff between learning more about the numerical similarity information (or some relevant portions thereof), and exploiting this knowledge to run a Lipschitz MAB algorithm.

The latter line of work proceeds as follows. \citet{ImplicitMAB-nips11} considers the metric space implicitly defined by the taxonomy, where the distance between any two arms is the maximal difference in expected rewards in the least common subtree. He puts forward an extension of the zooming algorithm which adaptively reconstructs the implicit metric space, and (under some additional assumptions) essentially matches the performance of the zooming algorithm on the same metric space. \cite{Munos-nips11} and \cite{Valko-icml13} allow a more general relation between the implicit metric space and the taxonomy, and moreover relax the Lipschitz condition to only hold w.r.t. the maximum (as in Theorem~\ref{thm:zooming-dim-relaxed}). \cite{Munos-nips11} focuses on deterministic rewards, and essentially matches the regret bound in Corollary~\ref{cor:confRad-deterministic}, whereas \cite{Valko-icml13} study the general IID case. Finally, \cite{Bull-bandits14} considers a somewhat more general setting with multiple taxonomies on arms (or with arms embedded in $[0,1]^d$, where the embedding is then used to define the taxonomies). The paper extends and refines the algorithm from \cite{ImplicitMAB-nips11}, and carefully traces out the conditions under which one can achieve $\tilde{O}(\sqrt{T})$ regret. \citet{Munos-trends14} surveys some of this work, with emphasis on the techniques from \citep{xbandits-nips08,Munos-nips11,Valko-icml13}.

As a stepping stone to the result mentioned above, \cite{Munos-nips11} considers Lipschitz MAB with deterministic rewards and essentially matches our result for this setting (Corollary~\ref{cor:confRad-deterministic}).
\footnote{While our original publication \citep{LipschitzMAB-stoc08,LipschitzMAB-arxiv}  predates \citep{Munos-nips11}, the latter is independent work to the best of our understanding.}

\xhdr{Beyond IID rewards.}
Several papers \citep{Azar-icml14,Munos-ecml10,contextualMAB-colt11} consider Lipschitz bandits/experts with non-IID rewards.\footnote{The first result in this direction appeared in \citet{Bobby-nips04}. He considers Lipschitz MAB with adversarial rewards, and proposes a version of \NaiveAlg where an adversarial bandit algorithm is used instead of \UCB. This algorithm achieves the same worst-case regret as \NaiveAlg on IID rewards.}
\citet{Azar-icml14} consider a version of Lipschitz MAB in which the IID condition is replaced by more sophisticated ergodicity and mixing assumptions, and essentially recover the performance of the zooming algorithm. \cite{Munos-ecml10} consider Lipschitz experts in a Euclidean space $(\R^d, \ell_2)$ of constant dimension $d$. Assuming the Lipschitz condition on realized payoffs (rather than expected payoffs), they achieve a surprisingly strong regret of $O(\sqrt{t})$.
\cite{contextualMAB-colt11} considers contextual bandits with Lipschitz condition on expected payoffs, and provides a ``meta-algorithm'' which uses an off-the-shelf bandit algorithm such as \EXP \citep{bandits-exp3} as a subroutine and adaptively refines the space of contexts. Also, as discussed above, the contextual zoooming algorithm from \cite{contextualMAB-colt11} can handle Lipschitz MAB with slowly changing rewards.

\xhdr{Other structural models of MAB.}
One drawback of Lipschitz MAB as a model is that $\mD(x,y)$ only gives a worst-case notion of similarity between arms $x$ and $y$: a hard upper bound on $|\mu(x)-\mu(y)|$ rather than a typical or expected upper bound. In particular, the distances may need to be very large in order to accommodate a few outliers, which would make $\mD$ less informative elsewhere.%
\footnote{This concern is partially addressed by Theorem~\ref{thm:intro.zooming-extended}.}
With this criticism in mind, \cite{GPbandits-icml10} define a probabilistic model, called \emph{Gaussian Processes Bandits},  where the expected payoff function is distributed according to a suitable Gaussian Process on $X$, thus ensuring a notion of ``probabilistic smoothness'' with respect to $X$. Further work in this model includes \cite{GPbandits-nips11} and \cite{GPbandits-icml12}.

Given the work on Lipschitz MAB (and other ``structured'' bandit models such as linear payoffs) it is tempting to consider MAB with \emph{arbitrary} known structure on payoff functions. \cite{functionalMAB-colt11} initiate this direction: in their model, the structure is explicitly represented as the collection of all possible payoff functions. However, their results do not subsume any prior work on Lipschitz MAB or MAB with linear or convex payoffs.

\xhdr{Further applications of our techniques.}
\cite{RepeatedPA-ec14} design a version of the zooming algorithm in the context of crowdsourcing markets. Here the algorithm is an employer who offers a quality-contingent contract to each arriving worker, and adjusts the contract over time. This is an MAB problem in which arms are contracts (essentially, vectors of prices), and a single round is modeled as a standard ``principal-agent model" from contract theory. \cite{RepeatedPA-ec14} do not assume a Lipschitz condition, or any other explicit guarantee on similarity between arms. Instead, their algorithm estimates the similarity information on the fly, taking advantage of the structure provided by the principal-agent model.%
\footnote{\citep{ImplicitMAB-nips11,Bull-bandits14,Valko-icml13} estimate the ``hidden" similarity information for a general Lipschitz MAB setting (using some additional assumptions), but \cite{RepeatedPA-ec14} uses a different, problem-specific approach which side-steps some of the assumptions.}

On a final note, one of our minor results -- the improved confidence radius from Section~\ref{subsec:zooming-max} -- may be of independent interest. In particular, this result is essential for some of the main results in \citep{DynPricing-ec12,BwK-focs13,AgrawalDevanur-ec14,Shipra-ec16}, in the context of dynamic pricing and other MAB problems with global supply/budget constraints.

%Bubeck et al. \cite{Bubeck-alt09}

\section{Preliminaries}
\label{sec:prelims}

This section contains various definitions which make the paper essentially self-contained (the only exception being \emph{ordinal numbers}). In particular, the paper uses notions from General Topology which are typically covered in any introductory text or course on the subject.

\subsection*{Problem formulation and notation}

In the \problem, the problem instance is a triple $(X,\mD,\mu)$, where $(X,\mD)$ is a metric space and $\mu \,:\, X \rightarrow [0,1]$
is a  a Lipschitz function on $(X,\mD)$ with Lipschitz constant $1$. (In other words, $\mu$ satisfies \emph{Lipschitz condition}~\eqref{eq:Lipschitz-condition}). $(X,\mD)$ is revealed to an algorithm, whereas $\mu$ is not. In each round $t$ the algorithm chooses a strategy $x=x_t \in X$ and receives payoff $f_t(x)\in [0,1]$ chosen independently from some distribution $\prob_x$ with expectation $\mu(x)$. Without loss of generality, the diameter of $(X,\mD)$ is at most $1$. To simplify exposition, the parameterized family of reward distributions $\prob_x$ is assumed to be fixed over time, and suppressed from the notation.

Throughout the paper, $(X,\mD)$ and $\mu$ will denote, respectively, a metric space of diameter $\leq 1$ and a Lipschitz function as above. We will say that $X$ is the set of strategies (``arms''), $\mD$ is the \emph{similarity function}, and $\mu$ is the \emph{payoff function}.

\OMIT{ %%%%% A.S. 2013-08-28:
            % no need to define this in prelims because Lipschitz experts
            % are only defined in the body
In the full feedback version (the \FFproblem), the setup is slightly more complicated. We have $(X,\mD,\mu)$ as above. Further, letting $[0,1]^X$ be the set of all functions from $X$ to $[0,1]$, we have a probability measure $\prob$ with universe $[0,1]^X$
such that
    $\mu(x) =  \E_{f \in \prob}[f(x)]$.
A problem instance is a triple $(X,\mD,\prob)$, where neither $\mu$ nor $\prob$ are revealed to the algorithm.  In each round the algorithm picks a strategy $x\in X$, then
the environment chooses an i.i.d. sample $f_t: X\to [0,1]$ according to the measure $\prob$. The algorithm receives payoff $f(x)$, and also observes the entire payoff function $f$. More formally, the algorithm can query $f$ at an arbitrary finite number of points. In the \emph{double feedback} version, the algorithm can query $f$ at at most one point other than $x$. The only assumption on $\prob$ is that it is a Borel measure
with respect to the Borel $\sigma$-algebra induced by the product topology on $[0,1]^X$.
} %%%%

Performance of an algorithm is measured via \emph{regret} with respect to the best fixed strategy:
\begin{align}\label{eq:def-regret}
	R(t) = t\,\sup_{x\in X} \mu(x) - \E\left[ \sum_{s=1}^t \mu(x_s)\right],
\end{align}
where $x_t\in X$ is the strategy chosen by the algorithm in round $t$. Note that when the supremum is attained, the first summand in \eqref{eq:def-regret} is the expected reward of an algorithm that always plays the best strategy.

Throughout the paper, the constants in the $O(\cdot)$ notation are absolute unless specified otherwise. The notation $O_{\text{subscript}}$ means that the constant in $O()$ can depend on the things listed in the subscript.
Denote
	$\sup(\mu,X) = \sup_{x\in X} \mu(x)$
and similarly
	$\argmax(\mu,X) = \argmax_{x\in X} \mu(x)$.

\subsection*{Metric topology and set theory}

Let $X$ be a set and let $(X,\mD)$ be a metric space. An open ball of radius $r$ around point $x\in X$ is
	$B(x,r) = \{y\in X:\, \mD(x,y) <r\}$.
The diameter of a set is the maximal distance between any two points in this set.

A \emph{Cauchy sequence} in $(X,\mD)$ is a sequence such that for every $\delta>0$, there is an open ball of radius $\delta$ containing all but finitely many points of the sequence.  We say $X$ is \emph{complete} if every Cauchy sequence has a limit point in $X$.  For two Cauchy sequences $\mathbf{x}=(x_1,x_2,\ldots)$ and $\mathbf{y}=(y_1,y_2,\ldots)$ the \emph{distance} $d(\mathbf{x},\mathbf{y}) = \lim_{i \rightarrow \infty} d(x_i,y_i)$ is well-defined.  Two Cauchy sequences are declared to be equivalent if their distance is $0$.  The equivalence classes of Cauchy sequences form a metric space $(X^*,\mD)$ called the \emph{(metric) completion} of $(X,\mD)$. The subspace of all constant sequences is identified with $(X,\mD)$: formally, it is a dense subspace of $(X^*,\mD)$ which is isometric to $(X,\mD)$. A metric space $(X,\mD)$ is \emph{compact} if every collection of open balls  covering $(X,\mD)$ has a finite subcollection that also covers $(X,\mD)$.  Every compact metric space is complete, but not vice-versa.

A family $\F$ of subsets of $X$ is called a \emph{topology} if it contains $\emptyset$ and $X$ and is closed under arbitrary unions and finite intersections. When a specific topology is fixed and clear from the context, the elements of $\F$ are called \emph{open sets}, and their complements are called \emph{closed sets}.  Throughout this paper, these terms will refer to the \emph{metric topology} of the underlying metric space, the smallest topology that contains all open balls (namely, the intersection of all such topologies). A point $x$ is called \emph{isolated} if the singleton set $\{x\}$ is open. A function between topological spaces is \emph{continuous} if the inverse image of every open set is open.

A \emph{well-ordering} on a set $X$ is a total order on $X$ with the property that every non-empty subset of $X$ has a least element in this order.
In Section~\ref{sec:MaxMinLCD-UB} we use \emph{ordinals}, a.k.a. \emph{ordinal numbers}, a class of well-ordered sets that, in some sense, extends natural numbers beyond infinity. Understanding this paper requires only the basic notions about ordinals, namely the standard (von Neumann) definition of ordinals, successor and limit ordinals, and transfinite induction. The necessary material can be found in any introductory text on Mathematical Logic and Set Theory, and also on \emph{Wikipedia}.

\subsection*{Dimensionality notions}
\label{sec:dim-notions}

\newcommand{\Npack}{N^{\mathtt{pack}}} % r-packing number, without the subscript r.

Throughout this paper we will use various notions of dimensionality of a metric space. The basic notion will be the \emph{covering dimension}, which is a version of the \emph{fractal dimension} that is based on covering numbers. We will also use several refinements of the covering dimension that are tuned to the Lipschitz MAB problem.

\begin{definition}\label{def:CovDim}
Let $Y$ be a set of points in a metric space $(X,\mD)$. For each $r>0$,
an \emph{$r$-covering} of $Y$ is a collection of subsets of $Y$, each of diameter strictly less than $r$, that cover $Y$. The minimal number of subsets in an $r$-covering is called the \emph{$r$-covering number} of $Y$ and denoted $N_r(Y)$. The \emph{covering dimension} of $Y$ with multiplier $c$, denoted $\COV_c(Y)$, is the infimum of all $d\geq 0$ such that
    $N_r(Y) \leq c\, r^{-d}$
for each $r>0$.
\end{definition}

\noindent This definition is \emph{robust}: $N_r(Y')\leq N_r(Y)$ for any $Y'\subset Y$, and consequently
    $\COV_c(Y') \leq \COV_c(Y)$.
While covering numbers are often defined via radius-$r$ balls rather than diameter-$r$ sets, the former alternative does not have this appealing ``robustness" property.

\begin{note}{Remark.}
Fractal dimensions of infinite spaces are often defined using $\limsup$ as the distance scale tends to $0$. The $\limsup$-version of the covering dimension would be
\begin{align}
\COV(Y)
    &\triangleq \limsup_{r\to 0} \frac{\log N_r(Y)}{\log 1/r}
        \label{eq:cov-dim-limsup}\\
    &=  \inf \left\{ \,d\geq 0:\; \exists c\;
            \forall r>0\quad N(r) \leq c r^{-d} \, \right\} \nonumber \\
    &= \lim_{c\to \infty} \COV_c(Y). \nonumber
\end{align}
This definition is simpler in that it does not require an extra parameter $c$. However, it hides an arbitrarily large constant, and is uninformative for finite metric spaces. On the contrary, the version in Definition~\ref{def:CovDim} makes the constant explicit (which allows for numerically sharper bounds), and is meaningful for both finite and infinite metric spaces.
\end{note}

\begin{note}{Remark.}
Instead of the covering-based notions in Definition~\ref{def:CovDim} one could define and use the corresponding packing-based notions. A subset $S\subset Y$ is an \emph{$r$-packing} of $Y$ if the distance between any two points in $S$ is at least $r$.
An \emph{$r$-net} of $Y$ is a set-wise maximal $r$-packing; equivalently, $S$ is an $r$-net if and only if it is an $r$-packing and the balls $B(x,r), \, x \in S$ cover $Y$.
The maximal number of points of an $r$-packing is called the \emph{$r$-packing number} of $Y$ and denoted $\Npack_r(Y)$. The ``packing dimension" can then be defined as in Definition~\ref{def:CovDim}. It is a well-known folklore result that the packing and covering notions are closely related:

\begin{fact}\label{fact:packing-covering}
$N_{2r}(Y) \leq \Npack_r(Y) \leq N_r(Y)$.
\end{fact}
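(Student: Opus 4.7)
My plan is to prove the two inequalities separately; both rely on the dual role of a maximum $r$-packing as a lower bound for covering numbers \emph{and} as a skeleton from which to build an explicit cover.

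For the upper bound $\Npack_r(Y) \leq N_r(Y)$, I would fix any $r$-packing $P \subseteq Y$ and any $r$-cover $\{C_j\}_{j=1}^m$ of $Y$, and argue by pigeonhole: each $C_j$ has diameter strictly less than $r$ while any two distinct points of $P$ are at pairwise distance at least $r$, so each $C_j$ contains at most one point of $P$. Therefore $|P| \leq m$, and taking $P$ to be a maximum $r$-packing and $\{C_j\}$ a minimum $r$-cover yields the inequality.

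For the lower bound $N_{2r}(Y) \leq \Npack_r(Y)$, I would take a maximum $r$-packing $S = \{x_1, \ldots, x_k\}$ with $k = \Npack_r(Y)$ and use it as the set of centers of a cover. By maximality of $S$, no $y \in Y$ can be adjoined to $S$, so every $y$ must lie in some open ball $B(x_i, r)$; assigning each $y$ to such an $x_i$ (breaking ties by smallest index) partitions $Y$ into $k$ pieces $A_1, \ldots, A_k$ with $A_i \subseteq B(x_i, r)$. By the triangle inequality, every pair $y, y' \in A_i$ satisfies $\mD(y, y') \leq \mD(y, x_i) + \mD(x_i, y') < 2r$, so the $A_i$ form a $2r$-cover of $Y$ with only $k$ sets.

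The main obstacle I anticipate is a subtle mismatch between ``every pair of points in $A_i$ is at distance $< 2r$'' and ``$\mathrm{diam}(A_i) < 2r$'' as demanded by Definition~\ref{def:CovDim}: the diameter is a supremum and could in principle equal $2r$ even when every realized pair falls strictly below it (as happens, e.g., for the open interval $(-r, r) \subset \R$). I expect this to be handled either by the standard convention that treats $r$-covers with non-strict diameter inequality, or by a harmless infinitesimal perturbation; the pairwise bound already gives $N_{2r+\epsilon}(Y) \leq \Npack_r(Y)$ for every $\epsilon > 0$, which suffices for the dimension comparisons used later in the paper.
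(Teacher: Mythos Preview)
Your proof is correct and follows essentially the same approach as the paper: pigeonhole for $\Npack_r(Y)\le N_r(Y)$, and the balls around a maximal $r$-packing for $N_{2r}(Y)\le\Npack_r(Y)$. In fact, you are more scrupulous than the paper on the diameter issue you flag: the paper simply asserts that $\{B(x,r):x\in S\}$ is a $2r$-covering without addressing that $\operatorname{diam} B(x,r)$ may equal $2r$ under the strict-diameter definition, so your caveat is well placed (and, as you note, immaterial for the dimension results that follow).
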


\begin{proof}
Suppose the maximal size of an $r$-packing is finite, and let $S$ be an $r$-packing of this size. First, for any $r$-covering $\{Y_i\}$, each set $Y_i$ can contain at most one point in $S$, and each point in $S$ is contained in some $Y_i$. So the $r$-covering has size at least $|S|$. Thus, $\Npack_r(Y) \leq N_r(Y)$. Second, $\{B(x,r):\, x\in S\}$ is a $2r$-covering: else there exists a point $x_0$ that is not covered, and $S\cup \{x_0\}$ is an $r$-packing of larger size. So $N_{2r}(Y) \leq \Npack_r(Y)$. It remains to consider the case when there exists an $r$-packing $S$ of infinite size. Then using the same argument as above we show that any $r$-covering consists of infinitely many sets.
\end{proof}
\end{note}

For any set of finite diameter, the covering dimension (with multiplier $1$) is at most the doubling dimension, which in turn is at most $d$ for any point set in $(\R^d, \ell_p)$. The doubling dimension~\cite{Heinonen01} has been a standard notion in the theoretical computer science literature (e.g.~\cite{Gup03,Tal04,Slivkins-focs04,Cole-stoc06}). For the sake of completeness, and because we use it in Section~\ref{subsec:zooming-target}, let us give the definition: the \emph{doubling dimension} of a subset $Y\subset X$ is the smallest (infimum) $d>0$ such that any subset $S\subset Y$ whose diameter is $r$ can be covered by $2^d$ sets of diameter at most $r/2$.
The doubling dimension is much more restrictive that the covering dimension. For example, $Y=\{2^{-i}:\, i\in \N\}$ under the $\ell_1$ metric has doubling dimension $1$ and covering dimension $0$.

\OMIT{The following fact is well-known: if distance between any two points in $S$ is $>r$, then any ball of radius $r$ contains at most $\Cdbl^2$ points of $S$.}

\subsection*{Concentration inequalities}

We use an elementary concentration inequality known as {\em the Chernoff bounds}. Several formulations exist in the literature; the one we use is from~\cite{MitzUpfal-book05}.

\begin{theorem}[Chernoff Bounds~\cite{MitzUpfal-book05}]
\label{thm:chernoff}
Consider i.i.d. random variables $Z_1, \ldots, Z_n$ with values in $[0,1]$. Let
    $Z = \tfrac{1}{n} \sum_{i=1}^n Z_i$ be their average, and let $\zeta = \E[Z]$. Then:
\begin{OneLiners}
\item[(a)]
    $\Pr\left[ |Z-\zeta| > \delta \zeta \right]
        < 2\, \exp(-\zeta n \delta^2/3) $
	for any $\delta\in (0,1)$.
\item[(b)] $\Pr[ Z > a ] < 2^{-an} $
	for any $a>6\zeta$.
\end{OneLiners}
\end{theorem}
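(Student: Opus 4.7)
The plan is to prove both bounds via the classical moment generating function (MGF) method going back to Bernstein and Chernoff. The starting point is Markov's inequality applied to the exponentiated sum: for any $t>0$,
\[
\Pr[Z > a] = \Pr[e^{tnZ} > e^{tna}] \leq e^{-tna}\, \E[e^{tnZ}] = e^{-tna}\, \prod_{i=1}^n \E[e^{tZ_i}],
\]
where the last equality uses independence. A symmetric inequality with $t<0$ handles the lower tail. The main analytic ingredient is a clean bound on the single-variable MGF: since $Z_i \in [0,1]$, convexity of $s \mapsto e^{ts}$ gives $e^{tZ_i} \leq 1 + Z_i(e^t-1)$, so taking expectations and then using $1+u \leq e^u$ yields
\[
\E[e^{tZ_i}] \;\leq\; 1 + \zeta(e^t - 1) \;\leq\; \exp\bigl(\zeta(e^t-1)\bigr).
\]
Combined with the display above, this produces the master inequality $\Pr[Z > a] \leq \exp\!\bigl(-n[ta - \zeta(e^t-1)]\bigr)$ valid for all $t>0$.

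For part (a), I would apply the master inequality with $a = (1+\delta)\zeta$, optimize over $t$ (the minimizer is $t = \ln(1+\delta)$), and obtain the standard multiplicative Chernoff bound $\bigl(e^\delta/(1+\delta)^{1+\delta}\bigr)^{n\zeta}$ on the upper tail. The elementary calculus estimate $\tfrac{e^\delta}{(1+\delta)^{1+\delta}} \leq e^{-\delta^2/3}$ for $\delta \in (0,1)$ then yields $\exp(-n\zeta\delta^2/3)$. A parallel derivation with $t<0$ gives an even smaller bound $\exp(-n\zeta\delta^2/2)$ on the lower tail $\Pr[Z < (1-\delta)\zeta]$. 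Adding the two tail probabilities gives the claimed factor of $2$.

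For part (b), I would return to the master inequality and optimize: setting $e^t = a/\zeta$ (valid since $a > 6\zeta > \zeta$) produces the bound $\exp(-n[a\ln(a/\zeta) - a + \zeta])$. To compare this against $2^{-an} = \exp(-an\ln 2)$, it suffices to verify $a\ln(a/(2\zeta)) \geq a - \zeta$, equivalently $\ln(a/(2\zeta)) \geq 1 - \zeta/a$. When $a > 6\zeta$, the left side is at least $\ln 3 > 1$, while the right side is strictly less than $1$, so the inequality holds with room to spare.

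There is no real obstacle; the only care needed is the algebraic estimate $e^\delta/(1+\delta)^{1+\delta} \leq e^{-\delta^2/3}$ in part (a), which is a standard textbook exercise (e.g., expanding $(1+\delta)\ln(1+\delta)$ in a Taylor series and bounding the tail), and verifying the constant $6$ suffices in part (b), which is the short numerical check above. Since the statement is quoted directly from \cite{MitzUpfal-book05}, the proof can be given by citation; I include the sketch only to indicate how the two specific constants arise.
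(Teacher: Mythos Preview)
Your sketch is the standard MGF/Chernoff argument and is correct, including the checks on the constants $1/3$ and $6$. The paper itself offers no proof of this statement: it is quoted directly from \cite{MitzUpfal-book05} as background, so there is nothing to compare against beyond noting that your approach is exactly the textbook one.
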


\subsection*{Initial observation: proof of Theorem~\ref{thm:intro-covDim}}

We extend algorithm $\NaiveAlg$ from metric space $([0,1], \ell_1^d)$ to an arbitrary metric space of covering dimension $d$. The algorithm is parameterized by $d$. It divides time into phases of exponentially increasing length. During each phase $i$, the algorithm picks some $\delta_i>0$, chooses an arbitrary $\delta_i$-net $S_i$ for the metric space,
% \footnote{A subset $S\subset X$ of a metric space is a \emph{$\delta$-net} if it is a $\delta$-packing, and balls $B(x,\delta)$, $x\in S$ cover $X$. This is a standard notion from Computational Geometry. }
and only plays arms in $S_i$ throughout the phase.
Specifically, it runs an
$|S_i|$-armed bandit algorithm
on the arms in $S_i$.
For concreteness, let us say that we use \UCB
(any other bandit algorithm with the same regret
guarantee will suffice), and each phase $i$ lasts $2^i$ rounds. The parameter $\delta_i$ is tuned optimally given $d$ and the phase duration $T$; the optimal value turns out to be $\delta = \tilde{O}(T^{-1/(d+2)})$. The algorithm can be analyzed using the technique from~\cite{Bobby-nips04}.

\begin{theorem}\label{thm:naiveAlg}
Consider the \problem on a metric space $(X,\mD)$. Let $d$ be the covering dimension of $(X,\mD)$ with multiplier $c$. Then regret of \NaiveAlg, parameterized by $d$, satisfies
\begin{align}\label{eq:thm-naiveAlg}
	R(t) = O\left( (c \log t)^{1/(d+2)}\; t^{1-1/(d+2)} \right)\quad
        \text{for every time $t$}.
\end{align}
\end{theorem}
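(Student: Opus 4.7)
The plan is to analyze the algorithm phase-by-phase and then sum the per-phase regrets.

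First, I will bound the regret within a single phase $i$ of duration $T_i = 2^i$. Fix the $\delta_i$-net $S_i$. The Lipschitz condition~\eqref{eq:Lipschitz-condition} together with the net property gives that $\sup_{s \in S_i} \mu(s) \geq \sup(\mu,X) - \delta_i$, so the \emph{discretization error} incurred by restricting attention to $S_i$ is at most $T_i \delta_i$ over the whole phase. Next, the size of the net can be bounded by its packing interpretation: any $\delta_i$-net is a $\delta_i$-packing, so by Fact~\ref{fact:packing-covering} and the definition of covering dimension with multiplier $c$,
\[
|S_i| \;\leq\; \Npack_{\delta_i}(X) \;\leq\; N_{\delta_i}(X) \;\leq\; c\,\delta_i^{-d}.
\]
Running \UCB on the $|S_i|$ arms of $S_i$ gives an in-phase \emph{learning regret} of $O\bigl(\sqrt{|S_i|\,T_i \log T_i}\bigr)$ relative to the best arm in $S_i$. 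Adding the two contributions, the regret accumulated during phase $i$ is
\[
O\!\left( T_i \,\delta_i \;+\; \sqrt{c\,\delta_i^{-d}\, T_i\, \log T_i}\right).
\]

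Second, I will optimize $\delta_i$ for each phase. Setting $\delta_i = (c \log T_i / T_i)^{1/(d+2)}$ (this is what the algorithm uses, given $d$; the factor of $c$ can be absorbed since a looser choice only costs constants in the final bound) balances the two terms, yielding per-phase regret
\[
O\!\left( (c \log T_i)^{1/(d+2)}\; T_i^{\,1 - 1/(d+2)} \right).
\]
The algebra is the standard tradeoff argument already used in~\cite{Bobby-nips04} for the one-dimensional case; here the only new ingredient is substituting $c \delta^{-d}$ for the covering number, which is exactly what the covering-dimension definition buys us.

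Third, I will sum over phases. Given $t$, let $I$ be the index of the phase containing round $t$, so that $2^I \leq t < 2^{I+1}$. The total regret is bounded by
\[
\sum_{i=0}^{I} O\!\left((c \log 2^i)^{1/(d+2)} \,(2^i)^{1-1/(d+2)}\right).
\]
Because the exponent $1 - 1/(d+2)$ is positive, the summand grows geometrically in $i$, so the sum is dominated by its last term (up to a multiplicative constant depending only on $d$). That last term is $O\bigl((c \log t)^{1/(d+2)} \, t^{1-1/(d+2)}\bigr)$, matching~\eqref{eq:thm-naiveAlg}.

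The only mildly delicate point is the relationship between the size of an arbitrary $\delta$-net and the covering number; I will handle this cleanly by invoking Fact~\ref{fact:packing-covering} as above. Everything else is either a direct consequence of the Lipschitz condition or a standard worst-case \UCB regret bound for finitely many arms, so no fundamentally new ideas are required beyond identifying covering dimension as the right parameter replacing the ambient dimension of $[0,1]$.
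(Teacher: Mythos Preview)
Your proposal is correct and follows essentially the same approach as the paper: bound the per-phase regret as discretization error plus the \UCB learning regret, substitute the net-size bound $|S_i|\leq c\,\delta_i^{-d}$, optimize $\delta_i$, and sum the geometric series over phases. The only cosmetic difference is that you explicitly invoke Fact~\ref{fact:packing-covering} to justify the net-size bound, whereas the paper simply asserts it; otherwise the arguments coincide.
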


\OMIT{
Indeed, for any $\delta'<\delta$ the metric space can be covered by $c\,\delta^{-d}$ sets of diameter at most $\delta'$, each of which can contain only one point from $S$.}

\begin{proof}
Let us analyze a given phase $i$ of the algorithm. Let $R_i(t)$ be the regret accumulated in rounds $1$ to $t$ in this phase. Let $\delta = \delta_i$ and let $K= |S_i|$ be the number of arms in this phase that are considered by the algorithm. The regret of \UCB in $t$ rounds is
    $O(\sqrt{K\, t \log t})$~\cite{bandits-ucb1}.
It follows that
$$ R_i(t) \leq O(\sqrt{K t \log t}) + t ( \mu^*-\sup(\mu,S_i)),
    \text{where $\mu^* = \sup(\mu,X)$}.
$$

Note that
    $\sup(\mu,S_i) \geq \mu^*-\delta$.
(Indeed, since $\mu$ is a Lipschitz-continuous function on a compact metric space, there exists an optimal arm $x^*\in X$ such that $\mu(x^*) = \mu^*$. Take an arm $x\in S_i$ such that $\mD(x,x^*)<\delta$. Then $\mu(x) \geq \mu^*-\delta$.)
Further,
    $K\leq c\delta^{-d}$
since $S_i$ is a $\delta$-net. We obtain:
$$ R_i(t) \leq O(\sqrt{c\,\delta^{-d}\, t\log t} + \delta t).
$$
Substituting $t= 2^i$,
    $\delta = (ct\,\log t)^{-1/(d+2)}$,
yields
	$R_i(t) = O\left( (c \log t)^{1/(d+2)}\; t^{1-1/(d+2)} \right)$.

We obtain~\eqref{eq:thm-naiveAlg} by summing over all phases
    $i = 1,2 \LDOTS \cel{\log t} $.
For the last phase $i = \cel{\log t}$ (which is possibly incomplete), the regret accumulated in this phase is at most $R_i(2^i)$.
\end{proof}

\newcommand{\Czoom}{C}
\newcommand{\Phase}{\ensuremath{\mathtt{ph}}}
\newcommand{\iPhase}{\ensuremath{i_\mathtt{ph}}\xspace}

\section{The zooming algorithm for Lipschitz MAB}
\label{sec:adaptive-exploration}

This section is on the {\it zooming algorithm}, which uses adaptive refinement to take advantage of ``benign'' input instances. We state and anlyze the algorithm, and derive a number of extensions.

The zooming algorithm proceeds in phases $i=1,2,3,\ldots\;$. Each phase $i$ lasts $2^i$ rounds. Let us define the algorithm for a single phase \iPhase of the algorithm. For each arm $x\in X$ and time $t$, let $n_t(x)$ be the number of times arm $x$ has been played in this phase before time $t$, and let $\mu_t(x)$ be the corresponding average reward. Define $\mu_t(x)=0$ if $n_t(x)=0$. Note that at time $t$ both quantities are known to the algorithm.

Define the \emph{confidence radius} of arm $x$ at time $t$ as
\begin{equation}\label{eq:confidence-radius}
%	r_t(v) := \sqrt{8\, \iPhase\,/\,(2+n_t(v))}.
	r_t(x) := \sqrt{\frac{8\,\iPhase }{1+n_t(x)}}.
%	$$r_t(v) := \sqrt{8(\log t )/ (1+n_t(v))}.$$
\end{equation}
\noindent The meaning of the confidence radius is that with high probability
(i.e., with probability tending to 1 exponentially fast as
\iPhase increases)
it bounds from above the deviation of $\mu_t(x)$ from its expectation $\mu(x)$. That is:\footnote{Here and throughout this paper, we use the
abbreviation ``w.h.p.'' to denote the phrase {\em with high probability}.}
\begin{align}\label{eq:confRad-meaning}
	\mbox{w.h.p.} \quad |\mu_t(x) - \mu(x)| \leq r_t(x) \quad
    \text{for all times $t$ and arms $x$}.
\end{align}
Our intuition is that the samples from arm $x$ available at time $t$ allow us to estimate $\mu(x)$ only up to $\pm r_t(x)$. Thus, the available samples from $x$ do not provide enough confidence to distinguish $x$ from any other arm in the ball of radius $r_t(x)$ around $x$. Call $B(x,\, r_t(x))$ the \emph{confidence ball} of arm $x$ (at time $t$).

Throughout the execution of the algorithm, a finite number of arms are designated \emph{active}, so that in each round the algorithm only selects among the active arms. In each round at most one additional arm is activated. Once an arm becomes active, it stays active until the end of the phase. It remains to specify two things: the \emph{selection rule} which decides which arm to play in a given round, and the \emph{activation rule} which decides whether and which arm to activate.
\begin{itemize}
\item \emph{Selection rule.}
Choose an active arm $x$ with the maximal \emph{index}, defined as
\begin{align}\label{eq:index}
	I_t (x) = \mu_t(x) + 2\, r_t(x).
\end{align}
\noindent This definition of the index is meaningful because as long as~\eqref{eq:confRad-meaning} holds, the index of $x$ is an upper bound on the expected payoff of any arm in the confidence ball of $x$. (We will prove this later.) The factor 2 in~\eqref{eq:index} is needed because we ``spend'' one $+r_t(x)$ to take care of the sampling uncertainty, and another $+r_t(x)$ to generalize from $x$ to the confidence ball of $x$. Note that the index in algorithm~\UCB \citep{bandits-ucb1} is essentially $\mu_t(x) + r_t(x)$.

\item \emph{Activation rule.} Say that an arm is \emph{covered} at time $t$ if it is contained in the confidence ball of some active arm. We maintain the invariant that at each time all arms are covered. The activation rule simply maintains this invariant: if there is an arm which is not covered, pick any such arm and make it active. Note that the confidence radius of this newly activated arm is initially greater than $1$, so all arms are trivially covered. In particular, it suffices to activate at most one arm per round. The activation rule is implemented using the \emph{covering oracle}, as defined in Section~\ref{subsec:intro-access}.
\end{itemize}
The bare pseudocode of the algorithm is very simple; see Algorithm~\ref{alg:zooming}.

\begin{algorithm}[h]
\caption{Zooming Algorithm}
\label{alg:zooming}
%\noline

\begin{algorithmic}
\FOR{phase $i=1,2,3, \ldots$}
\STATE Initially, no arms are active.
\FOR{round $t=1,2,3, \ldots, 2^i$}
\STATE \emph{Activation rule:} if some arm is not covered, pick any such arm and activate it.
\STATE \emph{Selection rule:} play any active arm with the maximal index~\refeq{eq:index}.
\ENDFOR
\ENDFOR
\end{algorithmic}
\end{algorithm}

To state the provable guarantees, we need the notion of \emph{zooming dimension} of a problem instance. As discussed in Section~\ref{subsec:intro-poly}, this notion bounds the covering number of near-optimal arms, thus sidestepping the worst-case lower-bound examples. Throughout this section, 
    $\mu^* \triangleq \sup(\mu,X)$
denotes the maximal reward, and $\Delta(x) = \mu^*-\mu(x)$ is the ``badness'' of arm $x$.

\begin{definition}
Consider a problem instance $(\mD,X,\mu)$.
The set of near-optimal arms at scale $r\in(0,1]$ is defined to be
\begin{align*}
X_{\mu,\,r} \triangleq \{x\in X:\, \tfrac{r}{2} < \Delta(x) \leq r  \}.
\end{align*}
The \emph{zooming dimension} with multiplier $c>0$ is the smallest $d\geq 0$ such that for every scale $r \in(0,1]$ the set $X_{\mu,\,r}$
can be covered by $c\,r^{-d}$ sets of diameter strictly less than $r/8$.
\label{def:zooming-dim}
\end{definition}

\begin{theorem}\label{thm:zooming-dim}
Consider an instance of the \problem. Fix any $c>0$ and let $d$ be the zooming dimension with multiplier $c$. Then the regret $R(t)$ of the zooming algorithm satisfies:
\begin{equation}\label{eq:thm-zooming-dim}
	R(t) \leq O(c \log t)^{\frac{1}{d+2}}\,\times t^{\frac{d+1}{d+2}}\;\;
	\text{for all times $t$.}
\end{equation}
\end{theorem}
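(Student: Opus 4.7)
\textbf{Proof plan for Theorem~\ref{thm:zooming-dim}.} I will analyze a single phase of length $T = 2^{\iPhase}$ and then sum over phases $\iPhase = 1, \dotsc, \lceil \log t\rceil$, noting that the geometric growth of phase lengths makes the total regret dominated (up to a constant factor) by the final phase. Throughout, let $\mu^* = \sup(\mu,X)$ and $\Delta(x) = \mu^* - \mu(x)$.

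\textbf{Step 1 (Clean event).} Define the \emph{clean event} $\mathcal{E}$ to be: for every $t \leq T$ and every arm $x$ activated during the phase, $|\mu_t(x) - \mu(x)| \leq r_t(x)$. The choice of the numerator $8\,\iPhase$ in the confidence radius~\refeq{eq:confidence-radius} is calibrated via Chernoff bounds (Theorem~\ref{thm:chernoff}) and a union bound over all $(x,t)$ pairs so that $\Pr[\neg\mathcal{E}] = T^{-\Omega(1)}$. Since regret in a phase is trivially $\leq T$, the contribution of $\neg\mathcal{E}$ to expected regret is $O(1)$, so I can condition on $\mathcal{E}$.

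\textbf{Step 2 (Index dominates the optimum).} Fix a round $t$ and let $x_t$ be the played arm. Let $x^* \in X$ be near-optimal (take a sequence $x^*_k$ with $\mu(x^*_k) \to \mu^*$ if the supremum is not attained; the argument goes through by continuity). The activation rule guarantees some active arm $y^*$ covers $x^*$, i.e.\ $\mD(x^*, y^*) < r_t(y^*)$. By the Lipschitz condition, $\mu(y^*) \geq \mu^* - r_t(y^*)$, and under $\mathcal{E}$, $\mu_t(y^*) \geq \mu(y^*) - r_t(y^*)$, so $I_t(y^*) = \mu_t(y^*) + 2 r_t(y^*) \geq \mu^*$. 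Since the selection rule picks the arm of maximal index, $I_t(x_t) \geq \mu^*$. Combined with the upper bound $I_t(x_t) \leq \mu(x_t) + 3 r_t(x_t)$ (also using $\mathcal{E}$), this yields $\Delta(x_t) \leq 3 r_t(x_t)$.

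\textbf{Step 3 (Per-arm play bound and packing property).} Solving $\Delta(x) \leq 3 r_t(x)$ for $n_t(x)$ gives that whenever $x$ is played, $n_t(x) \leq 72\,\iPhase/\Delta(x)^2$, hence the total plays $n(x)$ within the phase satisfy the same bound. Since $r_t(x)$ is monotonically nonincreasing in $t$, it stays $\geq \Delta(x)/3$ throughout (up to a harmless constant factor from the ``$1+n_t(x)$'' versus ``$n_t(x)$''). Now for two arms $x, y$ active at the end of the phase, with $x$ activated first and $y$ activated at time $s_y$: by the activation rule $y$ was not covered by $x$ at time $s_y$, so $\mD(x,y) \geq r_{s_y}(x) \geq \Delta(x)/3$. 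Symmetrically $\mD(x,y) \geq \min(\Delta(x),\Delta(y))/3$.

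\textbf{Step 4 (Counting active arms per scale).} Apply Step 3 to the set of active arms in $X_{\mu,r}$ for $r = 2^{-k}$: they form a $(r/6)$-packing of $X_{\mu,r}$. By Fact~\ref{fact:packing-covering} and monotonicity of covering numbers, this packing has size $\leq N_{r/8}(X_{\mu,r}) \leq c\,r^{-d}$, using the definition of zooming dimension.

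\textbf{Step 5 (Summing regret).} Regret in the phase equals $\sum_{x} n(x)\Delta(x)$ over activated arms $x$. Split by scale: for each $k \geq 0$, arms with $\Delta(x) \in (2^{-k-1}, 2^{-k}]$ contribute at most
\begin{equation*}
(\text{number of such arms}) \cdot \max_x n(x)\Delta(x) \;\leq\; c\,2^{kd}\,\cdot\, O(\iPhase\cdot 2^{k}) \;=\; O(c\,\iPhase)\cdot 2^{k(d+1)}.
\end{equation*}
Summing this for $k = 0,1,\dotsc,K$ (geometric in $k$) gives $O(c\,\iPhase) \cdot 2^{K(d+1)}$. For arms with $\Delta(x) \leq \rho := 2^{-K-1}$, the contribution is trivially $\leq \rho T$. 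Optimizing $\rho^{d+2} \sim c\,\iPhase/T$ balances the two terms at $O\bigl((c\,\iPhase)^{1/(d+2)} T^{(d+1)/(d+2)}\bigr)$. Substituting $T = 2^{\iPhase}$ and summing over phases (the bound is geometric in $\iPhase$ since $(d+1)/(d+2) < 1$... wait, it's \emph{increasing} in $\iPhase$, so the last phase dominates) yields the claimed bound~\refeq{eq:thm-zooming-dim}.

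\textbf{Main obstacle.} The most delicate step is Step 3, specifically ensuring that the monotone decrease of $r_t(x)$ allows the packing argument to apply uniformly over the phase (not just at activation times), and carefully tracking constants so that the packing radius $r/6$ is compatible with the covering radius $r/8$ in the definition of zooming dimension. Also, care is needed when $\mu^*$ is not attained (working with a limiting sequence of near-optimal arms, using that only $\mu(y^*) \geq \mu^* - r_t(y^*) - \eps$ is needed and taking $\eps \to 0$).
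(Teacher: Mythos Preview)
Your proposal is correct and matches the paper's proof essentially step for step: the clean event (Claim~\ref{cl:conf-rad}), the index lower bound $I_t(x_t)\geq \mu^*$ and the resulting $\Delta(x)\leq 3r_t(x)$ (Lemma~\ref{lm:bound-active}), the packing property of active arms (Corollary~\ref{cor:sparsity}), the scale-by-scale count via $N_{r/8}(X_{\mu,r})$, and the final optimization over the cutoff $\rho$. Your concern about the $r/6$ packing versus the $r/8$ covering is fine---since $r/8<r/6$, any diameter-$<r/8$ set contains at most one active arm from the scale, so the covering number $N_{r/8}$ bounds the count with room to spare.
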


The zooming algorithm is \emph{self-tuning} in that it does not input the zooming dimension~$d$. Moreover, it is not parameterized by the multiplier $c$, and yet it satisfies the corresponding regret bound for any given $c>0$.  For  sharper guarantees, $c$ can be tuned to the specific problem instance and specific time $t$.

Note that the regret bound in Theorem~\ref{thm:zooming-dim} has the same ``shape'' as the worst-case result (Theorem~\ref{thm:intro-covDim}), except that $d$ now stands for the zooming dimension rather than the covering dimension.
Thus, the zooming dimension is our way to quantify the benignness of a problem instance.
(It is immediate from Definition~\ref{def:zooming-dim} that the covering dimension with multiplier $c$ is an upper bound
on the zooming dimension with the same multiplier.)
Let us flesh out (and generalize) two examples from Section~\ref{sec:intro} where the zooming dimension is small:
\begin{itemize}
\item all arms with $\Delta(v)<r$ lie in a low-dimensional region $S\subset X$, for some $r>0$.

\item $\mu(x) = \max(0,\, \mu^*-\mD(x,S))$ for some $\mu^*\in(0,1]$ and subset $S\subset X$.
\end{itemize}
In both examples, for a sufficiently large constant multiplier $c$, the zooming dimension is bounded from above by $\COV(S)$ (as opposed to $\COV(X)$). Note that in the second example a natural special case is when $S$ is a finite point set, in which case $\COV(S)=0$. The technical fine print is very mild: $(X,\mD)$ can be any compact metric space, and the second example requires some open neighborhood of $S$ to have constant doubling dimension. The first example is immediate; the second example is analyzed in Section~\ref{subsec:zooming-target}.

Our proof of Theorem~\ref{thm:zooming-dim} does not require all the assumptions in the Lipschitz MAB problem. It never uses the triangle inequality, and it only needs a relaxed version of the Lipschitz condition~\refeq{eq:Lipschitz-condition}. If there exists a unique best arm $x^*$ then the relaxed Lipschitz condition is~\refeq{eq:Lipschitz-condition} with $y=x^*$. In a more efficient notation: $\Delta(x)\leq \mD(x,x^*)$ for each arm $x$. This needs to hold for each best arm $x^*$ if there is more than one. A more general version, not assuming that the optimal payoff $\sup(\mu,X)$ is attained by some arm, is as follows:
\begin{align}\label{eq:relaxedLipschitz}
	(\forall \eps>0)\quad
	(\exists x^*\in X) \quad
	(\forall x\in X)\quad
	\Delta(x) \leq \mD(x,x^*) + \eps.
\end{align}

\begin{theorem}\label{thm:zooming-dim-relaxed}
The guarantees in Theorem~\ref{thm:zooming-dim} hold even if the similarity function $\mD$ is not required to satisfy the triangle inequality,\footnote{Formally, we require $\mD$ to be a symmetric  function $X\times X\rightarrow [0, \infty]$ such that $\mD(x,x)=0$ for all $x\in X$. We call such a function a \emph{\quasimetric} on $X$.} and the Lipschitz condition~\refeq{eq:Lipschitz-condition} is relaxed to~\eqref{eq:relaxedLipschitz}.
\end{theorem}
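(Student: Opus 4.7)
My plan is to re-examine the proof of Theorem~\ref{thm:zooming-dim} and locate the (few) places where it invokes the triangle inequality or the two-sided Lipschitz condition \eqref{eq:Lipschitz-condition}, then check that each survives the weakening. First, the clean event $|\mu_t(x)-\mu(x)| \le r_t(x)$ (for every active arm $x$ and every round $t$ of the current phase, with high probability) is a Chernoff-plus-union-bound statement about the per-arm payoff distributions, so it is wholly unaffected. No property of $\mD$ is used.

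The crucial place where \eqref{eq:Lipschitz-condition} is invoked in the original proof is to show that, at every time $t$, some active arm $y$ satisfies $I_t(y)\ge \mu^*$. To salvage this under the weakened hypotheses, fix $\eps>0$ and let $x^*\in X$ be the point provided by \eqref{eq:relaxedLipschitz}. The activation invariant supplies an active arm $y$ with $x^*\in B(y,r_t(y))$, i.e.\ $\mD(y,x^*)<r_t(y)$; this uses only the definition of the confidence ball, not the triangle inequality. Instantiating \eqref{eq:relaxedLipschitz} at $x=y$ gives $\mu(y)\ge \mu^*-\mD(y,x^*)-\eps$, and the clean event gives $\mu_t(y)\ge \mu(y)-r_t(y)$. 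Combining,
\[
  I_t(y) = \mu_t(y)+2\,r_t(y) \;\ge\; \mu(y)+r_t(y) \;\ge\; \mu^*-\eps,
\]
where the last inequality uses $\mD(y,x^*)<r_t(y)$. If the algorithm plays arm $x$ at time $t$, the selection rule gives $I_t(x)\ge I_t(y)\ge \mu^*-\eps$, while the clean event gives $I_t(x)\le \mu(x)+3\,r_t(x)$; therefore $\Delta(x)\le 3\,r_t(x)+\eps$. Since $\eps>0$ was arbitrary and neither $\Delta(x)$ nor $r_t(x)$ depends on $\eps$, letting $\eps\to 0$ recovers the key per-round inequality $\Delta(x)\le 3\,r_t(x)$ that drives the original proof.

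The remainder of the argument --- bounding the number of active arms inside each ``badness band'' $X_{\mu,r}$ via the zooming dimension, and then summing regret over dyadic scales $r=2^{-k}$ --- rests on only two ingredients: the bound $\Delta(x)\le 3\,r_t(x)$ just established, and the separation property that, when $x$ is activated at some time $t_0$ after $y$, one has $\mD(x,y)\ge r_{t_0}(y)$ (because $x$ was not covered by $y$'s confidence ball at that moment). The separation property is immediate from the activation rule and uses only the definition of an open ball; no property of $\mD$ beyond symmetry and $\mD(x,x)=0$ enters. Combined with the $r/8$ threshold in Definition~\ref{def:zooming-dim}, this yields the packing bound on $|X_{\mu,r}\cap\text{active}|$ verbatim, and the regret sum \eqref{eq:thm-zooming-dim} follows.

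The main conceptual obstacle is that the witness $x^*$ in \eqref{eq:relaxedLipschitz} depends on $\eps$ and may differ between rounds, so there need not exist a single ``optimal arm'' one can track globally over time. The resolution is that $\eps$ enters additively and only through the covering-arm argument of the second paragraph, so it can be driven to zero at the very end of that paragraph --- before any packing count or scale-based summation begins. After that point the analysis is identical to that of Theorem~\ref{thm:zooming-dim}.
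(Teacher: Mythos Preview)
Your proposal is correct and mirrors the paper's own approach essentially step for step. The paper does not give a separate proof of Theorem~\ref{thm:zooming-dim-relaxed}; rather, the analysis in Section~\ref{subsec:zooming-analysis} is already written so that the only use of the Lipschitz condition is the inequality $\mu(x_t)\ge \mu(x^*)-\mD(x_t,x^*)$ inside Lemma~\ref{lm:bound-active} (with $x^*$ an $\eps$-optimal witness and $\eps\to 0$ at the end), and the only use of $\mD$ in Corollary~\ref{cor:sparsity} is the activation rule $\mD(x,y)>r_s(x)$ --- exactly the two places you isolate and handle.
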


Further, we obtain a regret bound in terms of the covering numbers.

\begin{theorem}\label{thm:zooming-covering-num}
Fix an instance of the \problem (relaxed as in Theorem~\ref{thm:zooming-dim-relaxed}). Then the regret $R(t)$ of the zooming algorithm satisfies
\begin{align*}
R(t) \leq \min_{\rho>0} \left(
    \rho t + O(\log^2 t)\,\textstyle \sum_{r\in \mS:\, r\geq \rho}
        \tfrac{1}{r} \; N_{r/8}(X_{\mu,r})
\right), \text{ where }
\mS = \{2^{-i}:\, i\in\N\}.
\end{align*}
\end{theorem}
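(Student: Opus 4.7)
The plan is to prove a per-phase regret bound whose dominant terms match the summands in the claim, then sum over phases. Fix phase \iPhase of length $T=2^{\iPhase}$ and condition on the \emph{clean event} that the deviation bound~\refeq{eq:confRad-meaning} holds throughout the phase; a standard Chernoff-plus-union-bound argument (using $r_t(x)^2 \cdot n_t(x) = 8\iPhase$) shows this event fails with probability $O(T^{-2})$, contributing $O(1)$ to expected regret. Under the clean event, the core claim is the standard zooming-algorithm bound: whenever an arm $x$ is played at time $t$, $r_t(x) \geq \Delta(x)/3$. Indeed, pick any $\eps>0$ and let $x^*$ be the almost-optimal arm from~\eqref{eq:relaxedLipschitz}; since $x^*$ is covered by some active arm $y$, the clean event plus the relaxed Lipschitz condition forces $I_t(y)\geq \mu^*-O(\eps)$, while the selection rule yields $I_t(x)\geq I_t(y)$ and the clean event gives $I_t(x)\leq \mu(x)+3r_t(x)$. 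Letting $\eps\to 0$ and inverting~\refeq{eq:confidence-radius} delivers $n_T(x)\leq 72\iPhase/\Delta(x)^2$.

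Next I would establish the separation property for active arms, which is the source of the covering-number factor. If two arms $x,y$ are active and $y$ was activated at a later time $t_y$, the activation rule implies $\mD(x,y)>r_{t_y}(x)$; since $r_\cdot(x)$ is non-increasing and $x$ must have been played at time $t_y-1$, we get $\mD(x,y)\geq r_T(x)\geq \Delta(x)/3$. Consequently, among active arms lying in the scale-$r$ slice $X_{\mu,r}=\{x:\Delta(x)\in (r/2,r]\}$, every pair is separated by strictly more than $r/6 > r/8$, so no set of diameter $<r/8$ can contain two of them. Therefore the number of active arms inside $X_{\mu,r}$ is at most $N_{r/8}(X_{\mu,r})$.

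Now combine the play-count bound and the separation bound to control phase regret. For any threshold $\rho\in\mS$ (any $\rho>0$ can be rounded down into $\mS$), arms with $\Delta(x)\le\rho$ contribute at most $\rho T$ in total over the phase. For each $r\in\mS$ with $r>\rho$, an active arm $x\in X_{\mu,r}$ contributes $\Delta(x)\cdot n_T(x)=O(\iPhase/r)$; multiplying by the at-most $N_{r/8}(X_{\mu,r})$ such arms and summing gives phase regret at most
\begin{align*}
\rho T \;+\; O(\iPhase)\sum_{r\in\mS,\, r\ge \rho}\tfrac{1}{r}\, N_{r/8}(X_{\mu,r}).
\end{align*}
Summing across phases $\iPhase=1,\ldots,\lceil\log_2 t\rceil$, the $\rho T$ terms form a geometric series bounded by $O(\rho t)$, and the factor $\iPhase$ in the second term integrates to $\sum_{\iPhase=1}^{\lceil\log_2 t\rceil}\iPhase = O(\log^2 t)$. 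Taking the infimum over $\rho>0$ yields the claim.

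The main delicate point will be matching constants so that the covering bound reads exactly $N_{r/8}(X_{\mu,r})$ rather than $N_{r/c}$ for some larger $c$; this is why the constant $8$ appears both in~\refeq{eq:confidence-radius} and in Definition~\ref{def:zooming-dim}, and the separation $\Delta(x)/3 > r/6$ provides comfortable slack. A secondary subtlety is that~\eqref{eq:relaxedLipschitz} only gives $\eps$-near-optimal witnesses, but taking $\eps\to 0$ after establishing the index inequality handles this cleanly since the index bound is continuous in $\eps$.
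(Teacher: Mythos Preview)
Your proposal is correct and follows essentially the same approach as the paper's proof in Section~\ref{subsec:zooming-analysis}: the clean-phase bound $r_t(x)\ge\Delta(x)/3$ (Lemma~\ref{lm:bound-active}), the separation of active arms (Corollary~\ref{cor:sparsity}), counting active arms per scale via $N_{r/8}(X_{\mu,r})$, and summing over scales and phases. The remark that ``$x$ must have been played at time $t_y-1$'' is false but irrelevant---you only need $r_{t_y}(x)\ge r_T(x)\ge\Delta(x)/3$, and the first inequality follows simply from $n_t(x)$ being non-decreasing in $t$.
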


This regret bound takes advantage of problem instances for which $X_{\mu,r}$ is a much smaller set than $X$. It can be useful even if the benignness of the problem instance cannot be summarized via a non-trivial upper-bound on the zooming dimension.

\vspace{2mm}

The rest of this section is organized as follows. In Section~\ref{subsec:zooming-analysis} we prove the above theorems. In addition, we provide some extensions and applications.
\begin{itemize}

\item In Section~\ref{subsec:zooming-max} we derive a regret bound that matches~\eqref{eq:thm-zooming-dim} and gets much smaller if the maximal payoff is close to $1$. This result relies on an improved confidence radius, which may be of independent interest.

\item In Section~\ref{subsec:zooming-target} we analyze the special case in which the expected payoff of a given arm is a function of the distance from this arm to the (unknown) ``target set'' $S\subset X$. This is a generalization of the $\mu(x) = \max(0,\, \mu^*-\mD(x,S))$ example above.
\item In Section~\ref{subsec:zooming-noise} we prove improved regret bounds for several examples in which the payoff of each arm $x$ is $\mu(x)$ plus i.i.d. noise of known and ``benign'' distribution. For these results, we replace $\mu_t(x),\,r_t(x)$ with better estimates: $\hat{\mu}_t(x),\,\hat{r}_t(x)$ such that
    $|\hat{\mu}_t(x) - \mu(x)| \leq \hat{r}_t(x)< r_t(x)$
with high probability.

\end{itemize}

\subsection{Analysis of the zooming algorithm}
\label{subsec:zooming-analysis}

First we use Chernoff bounds to prove~\eqref{eq:confRad-meaning}. A given phase will be called \emph{clean} if for each round $t$ in this phase and each arm $x\in X$ we have
	$ |\mu_t(x) - \mu(x)| \leq r_t(x)$.

\begin{claim}\label{cl:conf-rad}
Each phase \iPhase is clean with probability at least $1-4^{-\iPhase}$.
\end{claim}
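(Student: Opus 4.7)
The plan is to combine a Chernoff-style concentration bound on each empirical mean $\mu_t(x)$ with a union bound over ``activation slots'' and play counts. The subtlety --- and the main obstacle --- is that the arms which receive any plays during the phase form a random subset of $X$ determined adaptively from the observed payoffs, so there is no direct union bound available over $x \in X$.

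First I would dispense with unplayed arms: if $n_t(x)=0$, then $\mu_t(x)=0$ by convention while $r_t(x) = \sqrt{8\,\iPhase} > 1 \geq \mu(x)$, so the inequality $|\mu_t(x)-\mu(x)| \leq r_t(x)$ holds deterministically. It therefore suffices to control the arms that are actually played.

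Next I would introduce a standard coupling: for each arm $x \in X$, fix an i.i.d.\ sequence $Y_x^{(1)}, Y_x^{(2)}, \ldots$ from the payoff distribution at $x$, with the sequences mutually independent and independent of the algorithm's internal randomness, and declare that the payoff from the $k$-th play of $x$ is $Y_x^{(k)}$. Writing $\bar\mu_k(x) = \tfrac{1}{k}\sum_{j=1}^{k} Y_x^{(j)}$, an application of Theorem~\ref{thm:chernoff} (with the multiplicative form converted to an additive bound in the usual way) yields for every arm $x$ and every integer $k \geq 1$
\[
\Pr\!\left[\,\bigl|\bar\mu_k(x) - \mu(x)\bigr| > \sqrt{8\,\iPhase/(1+k)}\,\right] \;\leq\; 2\exp(-c\,\iPhase),
\]
for an absolute constant $c$ that is large enough to defeat the $4^{\iPhase}$ term below; the constant $8$ in~\eqref{eq:confidence-radius} is calibrated precisely to enable this step.

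Finally, I would handle the adaptive activation by enumerating the activated arms of the phase as $a_1, \ldots, a_J$ in order of activation, with $J \leq 2^{\iPhase}$ since at most one arm is activated per round and the phase lasts $2^{\iPhase}$ rounds; similarly each arm is played at most $2^{\iPhase}$ times. The key measurability observation is that at the moment $a_j$ is activated, no payoff from $a_j$ itself has been observed, so for each fixed $y \in X$ the event $\{a_j = y\}$ lies in the $\sigma$-algebra generated by the algorithm's internal randomness together with $\{Y_z^{(\cdot)} : z \neq y\}$, which is independent of the sequence $\{Y_y^{(k)}\}_{k \geq 1}$. Consequently, conditional on $\{a_j = y\}$ the random variable $\bar\mu_k(a_j)$ is distributed as the unconditional $\bar\mu_k(y)$, and the Chernoff bound above transfers to $\bar\mu_k(a_j)$. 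A union bound over the at most $4^{\iPhase}$ pairs $(j,k) \in \{1,\ldots,2^{\iPhase}\}^2$ then produces a failure probability of at most $4^{\iPhase} \cdot 2\exp(-c\,\iPhase) \leq 4^{-\iPhase}$, as claimed. The measurability step that circumvents the would-be uncountable union bound over $X$ is the only non-routine ingredient; everything else is bookkeeping.
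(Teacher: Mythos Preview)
Your proof is correct and follows essentially the same approach as the paper's: pre-sample an i.i.d.\ reward sequence for every arm, exploit the independence between the activation event $\{a_j=y\}$ and the reward sequence of $y$ (since activation precedes any observation from that arm), and union-bound over the at most $2^{\iPhase}$ activation slots and $2^{\iPhase}$ possible play counts. The paper's bookkeeping is only cosmetically different---it pads each arm's reward sequence to length $T$, applies Chernoff to obtain failure probability $T^{-4}$ per $(x,k)$, conditions on $\{x_j=x\}$, integrates over $x$, then union-bounds over $k$ and $j$ separately---but the argument and the key independence observation are the same.
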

\begin{proof}
The only difficulty is to set up a suitable application of Chernoff bounds along with the union bound. Let $T=2^{\iPhase}$ be the duration of a given phase \iPhase.

Fix some arm $x$. Recall that each time an algorithm plays arm $x$, the payoff is sampled i.i.d.\ from some distribution $\prob_x$. Define random variables $Z_{x,s}$ for $1 \leq s \leq T$ as follows: for $s \leq n(x)$, $Z_{x,s}$ is the payoff from the $s$-th time arm $x$ is played, and for $s > n(x)$ it is an independent sample from $\prob_x$.
% and $j\leq T$. The payoff from the $s$-th time this arm is played is $Z_{x,s}$, where $\{Z_{x,s}:\, 1\leq s \leq T\}$ is a family of random variables that are conditionally mutually independent given the event $\{x=x_j\}$. Note that this family consists of $T$ random variables, only $N$ of which are revealed in a given run of the algorithm.
For each $k\leq T$ we can apply Chernoff bounds to $\{Z_{x,s}:\, 1\leq s \leq k\}$
% in the probability space conditioned on the event $\{x=x_j\}$,
and obtain that
 \begin{equation} \label{eq:cl-conf-rad.0}
     \Pr\left[\; \left|
        \mu(x) - \textstyle \tfrac{1}{k}\sum_{s=1}^k Z_{x,s}
        \right| \leq
        \sqrt{\tfrac{8 \,\iPhase}{1+k}}
   % \;\right|\; x_j=x \;
       \right] > 1-T^{-4}.
 \end{equation}
Let $N$ be the number of arms activated in phase \iPhase; note that $N\leq T$. Define $\mbox{$X$-valued}$ random variables $x_1,\ldots,x_T$ as follows: $x_j$ is the $\min(j,N)$-th arm activated in this phase.
For any $x \in X$ and $j \leq T$, the event $\{x = x_j\}$ is independent
of the random variables $\{Z_{x,s}\}$; the former event depends only on
payoffs observed before $x$ is activated, while the latter set of random
variables has no dependence on payoffs of arms other than $x$.
Therefore,~\eqref{eq:cl-conf-rad.0} remains valid if we
replace the probability on the left side with conditional probability,
conditioned on the event $\{x = x_j\}$.
Taking the union bound over all $k\leq T$, and using the notation of $\mu_t(x)$ and $r_t(x)$, it follows that
    $$ \Pr\left[\; \forall t\; \left| \mu(x) - \mu_t(x)\right| \leq r_t(x) \,\mid\, x_j=x \;\right]
        > 1-T^{-3},$$
where $t$ ranges over all rounds in phase \iPhase. Integrating over all arms $x$ we obtain
$$ \Pr\left[\; \forall t\; \left| \mu(x_j) - \mu_t(x_j)\right| \leq r_t(x_j) \;\right]
        > 1-T^{-3}.$$
Finally, we obtain the claim by taking the union bound over all $j\leq T$.
\end{proof}

Next, we present a crucial argument which connects the best arm and the arm played at a given round, which in turn allows us to bound the number of plays of a suboptimal arm in terms of its badness.

\begin{lemma} \label{lm:bound-active}
If phase \iPhase is clean then we have
	$\Delta(x) \leq 3\, r_t(x)$
for any time $t$ and any arm $x$.
\end{lemma}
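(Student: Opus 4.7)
I will first prove $\Delta(x) \le 3\, r_s(x)$ at any round $s$ in which the algorithm actually selects arm $x$, and then transfer this bound to arbitrary times $t$.

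For the setup, since $\mu^* = \sup(\mu, X)$ need not be attained, I would fix $\eps > 0$ and pick a nearly-optimal arm $x^* \in X$: in the setting of Theorem~\ref{thm:zooming-dim} any $x^*$ with $\mu(x^*) > \mu^* - \eps$ will do, while in the relaxed setting of Theorem~\ref{thm:zooming-dim-relaxed} I would choose $x^*$ via~\eqref{eq:relaxedLipschitz}. The coverage invariant maintained by the activation rule then guarantees that at the beginning of any round $s$ there is an active arm $y$ whose confidence ball contains $x^*$, i.e.~$\mD(x^*, y) < r_s(y)$.

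At any round $s$ at which the algorithm plays $x$, the key step is to chain three ingredients in order to lower-bound $I_s(y)$: (i) the (relaxed) Lipschitz inequality $\mu(y) \ge \mu^* - \eps - \mD(x^*, y)$, (ii) the clean-phase bound $\mu_s(y) \ge \mu(y) - r_s(y)$ from Claim~\ref{cl:conf-rad}, and (iii) the containment $\mD(x^*, y) < r_s(y)$. These combine into
\[
I_s(y) = \mu_s(y) + 2\, r_s(y) \;\ge\; \mu(y) + r_s(y) \;\ge\; \mu^* - \eps - \mD(x^*, y) + r_s(y) \;>\; \mu^* - \eps.
\]
Since $x$ has the maximum index at round $s$, $I_s(x) \ge I_s(y) > \mu^* - \eps$. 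Expanding $I_s(x) = \mu_s(x) + 2\, r_s(x)$ and applying the clean-phase upper bound $\mu_s(x) \le \mu(x) + r_s(x)$ yields $\mu(x) + 3\, r_s(x) > \mu^* - \eps$, i.e.~$\Delta(x) \le 3\, r_s(x) + \eps$. Sending $\eps \to 0$ gives the desired $\Delta(x) \le 3\, r_s(x)$.

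To pass from round $s$ to an arbitrary pair $(t, x)$: if $n_t(x) = 0$, then $r_t(x) \ge \sqrt{8\,\iPhase} \ge 1 \ge \Delta(x)$ and the claim is immediate; otherwise let $s \le t$ be the most recent round in which $x$ was played, and observe that $n_t(x) = n_s(x) + 1$ makes $r_s(x)$ and $r_t(x)$ comparable, with any slack absorbed by the trivial bound $3\, r_t(x) \ge 1 \ge \Delta(x)$ that holds whenever $n_t(x)$ is not yet large enough to make the non-trivial part of the bound bite. The main obstacle is assembling the three ingredients in the key chain with correctly calibrated constants: $r_s(y)$ must simultaneously be small enough for clean-phase concentration to give a tight index at $y$, and large enough for the coverage invariant to enclose $x^*$ inside $B(y, r_s(y))$. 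The factor $2$ in the index definition and the scale $\sqrt{8\iPhase}$ in the confidence radius are chosen precisely to make these dual demands compatible.
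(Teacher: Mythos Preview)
Your argument follows essentially the same route as the paper's: first bound $\Delta(x)\le 3\,r_s(x)$ at any round $s$ where $x$ is actually selected (by chaining the covering invariant, clean-phase concentration at the covering arm, and index maximality of $x$), then extend to general $(t,x)$ via the never-played case and the last-played round $s$. The only difference is cosmetic: in the transfer step the paper just observes that the confidence radius of $x$ does not change between its last play and time $t$, so $r_t(x)=r_s(x)$, rather than your more elaborate ``slack absorbed by the trivial bound'' argument.
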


\begin{proof}
Suppose arm $x$ is played at time $t$ in clean phase \iPhase.
First we claim that
	$I_t(x)\geq \mu^*$.
Indeed, fix $\eps>0$. By definition of $\mu^*$ there exists a arm $x^*$ such that $\Delta(x^*) < \eps$. Recall that all arms are covered at all times, so there exists an active arm $x_t$ that covers $x^*$ at time $t$, meaning that $x^*$ is contained in the confidence ball of $x_t$. Since arm $x$ was chosen over $x_t$, we have
	$I_t(x) \geq I_t(x_t)$.
Since this is a clean phase, it follows that
	$I_t(x_t) \geq \mu(x_t) + r_t(x_t)$.
By the Lipschitz property we have
	$\mu(x_t) \geq \mu(x^*) - \mD(x_t, x^*)$.
Since $x_t$ covers $x^*$, we have
	$\mD(x_t, x^*) \leq r_t(x_t)$
Putting all these inequalities together, we have
	$I_t(x) \geq \mu(x^*) \geq \mu^* - \eps$.
Since this inequality holds for an arbitrary $\eps>0$, we in fact have $I_t(x)\geq \mu^*$. Claim proved.

Furthermore, note that by the definitions of ``clean phase'' and ``index''
we have
	$$\mu^* \leq I_t(x) \leq \mu(x) + 3\, r_t(x)$$
and therefore
	$\Delta(x) \leq 3\, r_t(x)$.

Now suppose arm $x$ is not played at time $t$. If it has never been played before time $t$ in this phase, then $r_t(x) > 1$ and thus  the lemma is trivial. Else, let $s$ be the last time arm $x$ has been played before time $t$. Then by definition of the confidence radius
	$r_t(x) = r_s(x) \geq \tfrac13\, \Delta(x) $.
\end{proof}

\begin{corollary}\label{cor:bound-active}
If phase \iPhase is clean then each arm $x$ is played at most $O(\iPhase)\, \left(\Delta(x) \right)^{-2}$
times.
\end{corollary}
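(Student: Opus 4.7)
The plan is to combine Lemma~\ref{lm:bound-active} with the definition of the confidence radius~\eqref{eq:confidence-radius} and read off the corollary as a one-line calculation. First I would fix an arm $x$ and let $t$ be the last round in the phase at which $x$ is played; if $x$ is never played in this phase, there is nothing to prove. Let $k$ denote the total number of times $x$ is played throughout the phase, so that $n_t(x) = k-1$ at the moment just before this final play.

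Next I would invoke Lemma~\ref{lm:bound-active} at time $t$ (which applies since the phase is clean), giving
\[
    \Delta(x) \;\leq\; 3\, r_t(x) \;=\; 3\sqrt{\frac{8\,\iPhase}{1+n_t(x)}} \;=\; 3\sqrt{\frac{8\,\iPhase}{k}}.
\]
Squaring and rearranging yields $k \leq 72\,\iPhase/\Delta(x)^2 = O(\iPhase)\,\Delta(x)^{-2}$, which is the claimed bound. There is no real obstacle here: the corollary is a direct algebraic consequence of Lemma~\ref{lm:bound-active}, with all of the conceptual work already done in establishing that lemma. The only subtlety is being careful that the bound applies only at rounds when $x$ is actually played (so that $r_t(x)$ is evaluated at the right moment), which is handled by choosing $t$ to be the last such round.
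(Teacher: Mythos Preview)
Your proposal is correct and matches the paper's approach exactly: the paper's proof is the one-liner ``This follows by plugging the definition of the confidence radius into Lemma~\ref{lm:bound-active},'' and you have spelled out precisely that computation. One minor remark: Lemma~\ref{lm:bound-active} actually holds for \emph{all} times $t$, not only when $x$ is played, so your stated ``subtlety'' is not really a restriction---but your choice of $t$ as the last play is a perfectly fine way to read off the bound on $k$.
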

\begin{proof}
This follows by plugging the definition of the confidence radius into Lemma~\ref{lm:bound-active}.
\end{proof}

%The above claim implies a bound on the number of trials:
%	$n_t(v) \leq O(\log t)\, \Delta^{-2}(v)$.

\begin{corollary} \label{cor:sparsity}
In a clean phase, for any active arms $x,y$ we have
	$\mD(x,y) > \tfrac13 \min(\Delta(x), \Delta(y))$.
\end{corollary}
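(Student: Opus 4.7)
The plan is to extract the conclusion from two ingredients: the activation rule of the zooming algorithm, and Lemma~\ref{lm:bound-active}. Since the activation rule activates at most one new arm per round, arms $x$ and $y$ must have been activated in different rounds; without loss of generality let $x$ have been activated first and $y$ at some round~$s$. By the activation rule, $y$ was uncovered at round $s$, and in particular $y \notin B(x, r_s(x))$, which by the definition of an open ball gives $\mD(x,y) \geq r_s(x)$.

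Now apply Lemma~\ref{lm:bound-active} to arm $x$ at time $s$ in the clean phase to obtain $\Delta(x) \leq 3\,r_s(x)$, i.e.\ $r_s(x) \geq \Delta(x)/3$. Chaining these two bounds yields
\[
	\mD(x,y) \;\geq\; r_s(x) \;\geq\; \tfrac{1}{3}\,\Delta(x)
	\;\geq\; \tfrac{1}{3}\,\min(\Delta(x), \Delta(y)),
\]
which is the desired inequality up to strictness. To upgrade to the strict inequality claimed in the statement, one can squeeze a bit of slack from either ingredient: for instance, if $y$ has ever been played at a time $s' > s$ (or $x$ has been played since round $s$), then the corresponding confidence radius is strictly smaller than $r_s(x)$, and the previous argument applied at $s'$ yields strict inequality; in the boundary case in which this does not happen, $r_s(x) > 1 \geq \Delta(x)$ already and the strictness is immediate.

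I do not expect a serious obstacle here: the two operative ideas, namely ``$y$ escapes $x$'s confidence ball when activated'' and ``the confidence radius upper-bounds $\Delta(\cdot)/3$ in a clean phase,'' are already in place, and the corollary amounts to putting them side by side. The only minor care needed is in handling strict versus non-strict inequality as sketched above.
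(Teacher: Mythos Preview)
Your argument is exactly the paper's: assume $x$ was activated before $y$, let $s$ be the activation time of $y$, use the activation rule to get $\mD(x,y)\geq r_s(x)$, and invoke Lemma~\ref{lm:bound-active} for $r_s(x)\geq\tfrac13\Delta(x)$. The paper simply writes $\mD(x,y)>r_s(x)$ from the algorithm specification and moves on; your extra paragraph on recovering strictness is more care than the paper expends, and the specific justification you sketch (re-applying the argument at a later time $s'$) does not actually work, since the ``$y$ uncovered'' condition is only available at round $s$---but this is immaterial, as the non-strict inequality already suffices for the subsequent use of the corollary.
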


\begin{proof}
Assume $x$ has been activated before $y$. Let $s$ be the time when $y$ has been activated. Then by the algorithm specification we have
	$\mD(x,y)> r_s(x)$.
By Lemma~\ref{lm:bound-active}
	$r_s(x)\geq \tfrac13 \Delta(x)$.
\end{proof}

Consider round $t$ which belongs to a clean phase \iPhase. Let $S_t$ be the set of all arms that are active at time $t$, and let
	$$A_{(i,t)} = \left\{ x\in S_t:\; 2^i \leq \tfrac{1}{\Delta(x)} < 2^{i+1} \right\}.$$

\noindent Recall that by Corollary~\ref{cor:bound-active} for each $x\in A_{(i,t)}$ we have
	$ n_t(x)  \leq O(\log t)\, (\Delta(x))^{-2}$.
Therefore:
\begin{align*}
\sum_{x\in A_{(i,t)}} \Delta(x)\, n_t(x)
\leq O(\log t) \, \sum_{x\in A_{(i,t)}} \tfrac{1}{\Delta(x)}
\leq O(2^i\,\log t) \, |A_{(i,t)}|.
\end{align*}

\noindent
Letting $r=2^{-i}$, note that by Corollary~\ref{cor:sparsity} any set of diameter less than $r/8$ contains at most one arm from $A_{(i,t)}$. It follows that
	$|A_{(i,t)}| \leq N_{r/8}(X_{\mu,r})$,
the smallest number of sets of diameter less than $r/8$ sufficient to cover all arms $x$ such that $\tfrac{r}{2} < \Delta(x)\leq r$.
It follows that
\begin{align*}
\sum_{x\in A_{(i,t)}} \Delta(x)\, n_t(x)
\leq O(\log t) \, \tfrac{1}{r}\, N_{r/8}(X_{\mu,r}).
\end{align*}

Let $\mS = \{2^{-i}:\, i\in\N\}$.
% be the set of reward scales.
For each $\rho\in(0,1)$,  we have:
\begin{align}
\sum_{x\in S_t} \Delta(x)\, n_t(x)
&\leq
    \sum_{x\in S_t:\; \Delta(x) \leq  \rho} \Delta(x)\, n_t(x)
  \;+\; \sum_{i< \log(1/\rho)}\; \sum_{x\in A_{(i,t)}} \Delta(x)\, n_t(x) \nonumber \\
  &\leq \rho (t-2^{\iPhase-1})
    + O(\log t)\,\textstyle \sum_{r\in \mS:\, r\geq \rho}
        \tfrac{1}{r} \; N_{r/8}(X_{\mu,r}).
        \label{eq:clean-phase}
\end{align}
Here $t-2^{\iPhase-1}$ is the number of rounds in phase \iPhase before and including round $t$.

Let $R_\Phase(t)$ be the left-hand side of~\eqref{eq:clean-phase}. By Claim~\ref{cl:conf-rad}, the probability that phase \iPhase is non-clean is negligible. Therefore, we obtain the following:

\begin{claim}
Fix round $t$ and let \iPhase be the round to which $t$ belongs. Then:
\begin{align}
\E[R_\Phase(t)]
&\leq \inf_{\rho>0} \left( \rho (t-2^{\iPhase-1})
    + O(\log t)\,\textstyle \sum_{r\in \mS:\, r\geq \rho}
        \tfrac{1}{r} \; N_{r/8}(X_{\mu,r})
    \right) \label{eq:clean-phase-2}.
\end{align}
\end{claim}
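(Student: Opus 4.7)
The essential work has already been done: inequality~\eqref{eq:clean-phase} states, pointwise on clean phases, that
$R_\Phase(t) \leq \rho\,(t - 2^{\iPhase-1}) + O(\log t)\sum_{r\in\mS:\,r\geq\rho} \tfrac{1}{r}\,N_{r/8}(X_{\mu,r})$
for every $\rho>0$. Since the left-hand side does not depend on $\rho$, we may take the infimum over $\rho>0$ on the right, yielding the claimed bound $R_\Phase(t) \leq \inf_{\rho>0} f(\rho)$ on clean phases, where $f(\rho)$ is the bracketed expression in~\eqref{eq:clean-phase-2}. So the plan is simply to control the contribution of the non-clean event and invoke Claim~\ref{cl:conf-rad}.

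For the non-clean case I would use a trivial deterministic bound. Since $\mu(\cdot)\in[0,1]$, the instantaneous regret in any round is at most $1$, and the phase contributes at most $2^\iPhase$ rounds; equivalently, $\sum_{x\in S_t}\Delta(x)\,n_t(x) \leq \sum_{x\in S_t} n_t(x) \leq 2^\iPhase$. Therefore $R_\Phase(t)\leq 2^\iPhase$ always, regardless of whether the phase is clean.

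Combining via the law of total expectation, and writing $\mathcal{C}$ for the clean-phase event, I would split
$\E[R_\Phase(t)] \leq \E[R_\Phase(t)\,\mathbf{1}_{\mathcal{C}}] + \E[R_\Phase(t)\,\mathbf{1}_{\neg\mathcal{C}}] \leq \inf_{\rho>0} f(\rho) + \Pr[\neg\mathcal{C}]\cdot 2^\iPhase$.
By Claim~\ref{cl:conf-rad}, $\Pr[\neg\mathcal{C}]\leq 4^{-\iPhase}$, so the second term is at most $4^{-\iPhase}\cdot 2^\iPhase = 2^{-\iPhase}$. This is $O(1/t)$ and thus negligible: it is dominated by any single term in the sum defining $f(\rho)$ (which carries a factor of $O(\log t)$), so it can be absorbed into the $O(\log t)$ constant in $f(\rho)$, giving the claim.

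The only potential subtlety is making the absorption clean: one may formally argue either that the sum $\sum_{r\in\mS:\,r\geq\rho}\tfrac{1}{r}N_{r/8}(X_{\mu,r})$ is at least a constant (taking $r=1$, the term is $\geq 1$ whenever $X_{\mu,1}$ is nonempty, which holds in any nontrivial instance), so the additive $2^{-\iPhase}\leq 1$ is swallowed by the $O(\log t)$ factor; or, alternatively, state the bound with an additional additive $O(1)$ term. Either way, no genuine obstacle arises — this final step is a routine tidying-up once the clean-phase analysis of the preceding pages is in hand.
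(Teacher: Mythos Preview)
Your proposal is correct and follows essentially the same approach as the paper. The paper's own justification is a single sentence---``By Claim~\ref{cl:conf-rad}, the probability that phase $\iPhase$ is non-clean is negligible''---and you have simply fleshed out what ``negligible'' means: bound $R_\Phase(t)$ trivially by the phase length on the non-clean event, multiply by $\Pr[\neg\mathcal{C}]\leq 4^{-\iPhase}$, and observe that the resulting $2^{-\iPhase}=O(1/t)$ term is absorbed into the $O(\log t)$ factor.
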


We complete the proof as follows. Let $t$ be the current round and let \iPhase be the current phase. Let $t_i=2^i \; (\mbox{for } i<\iPhase)$ be the last round of each phase $i<\iPhase$, and let $t_{\iPhase}=t$. Note that regret up to time $t$ can be expressed as
    $$ R(t) = \sum_{i\leq \iPhase} \E\left[\, R_\Phase(t_i) \,\right]. $$

Theorem~\ref{thm:zooming-covering-num} follows by summing up \eqref{eq:clean-phase-2} over all phases $i\leq \iPhase$.

We derive Theorem~\ref{thm:zooming-dim-relaxed} from \eqref{eq:clean-phase-2} as follows. Note that
    $N_{r/8}(X_{\mu,r}) \leq c\, r^{-d} $
by definition of the zooming dimension $d$ with multiplier $c>0$. For a given phase $i$, letting $t_0 = t_i -2^{i-1}$ and choosing $\rho$ such that
        $ \rho\, t_0 = (\tfrac{1}{\rho})^{d+1} (c\,\log t)$,
we obtain
\begin{align*}
\E[R_\Phase(t_i)]
%% \E[R_\Phase(\iPhase, t)]
 \leq O(c \log t)^{1/(d+2)}\,\times t_0^{(d+1)/(d+2)}.
\end{align*}
We obtain Theorem~\ref{thm:zooming-dim-relaxed} by summing this over all phases $i\leq \iPhase$.

%%%%%%%%%%%%
\subsection{Extension: maximal expected payoff close to 1}
\label{subsec:zooming-max}

\newcommand{\Rrel}{\hat{r}}

We obtain a sharper regret bound which matches~\eqref{eq:thm-zooming-dim} and gets much smaller if the optimal reward $\mu^* = \sup(\mu,X)$ is close to~$1$. The key ingredient here is a more elaborate confidence radius:
\begin{equation}\label{eq:confRad-sharp}
 \Rrel_t(x) \triangleq \frac{\alpha}{1+n_t(x)} +
	\sqrt{ \alpha\; \frac{1-\mu_t(x)}{1+n_t(x)}}\;\;
	\text{for some $\alpha= \Theta(\iPhase)$}.
\end{equation}
The confidence radius in~\eqref{eq:confRad-sharp} performs as well as $r_t(\cdot)$ (up to constant factors) in the worst case:
    $\Rrel_t(x) \leq \sqrt{\tfrac{O(\iPhase)}{n_t(x)}}$,
and gets much better when $\mu_t(x)$ is close to~$1$:
    $\Rrel_t(x) \leq \tfrac{O(\iPhase)}{n_t(x)}$.
Note that the right side of~\eqref{eq:confRad-sharp} can be computed from the observable data; in particular, it does not require the knowledge of $\mu^*$.

\begin{theorem}\label{thm:mu-close-to-1}
Consider an instance of the \problem, in the relaxed setting of Theorem~\ref{thm:zooming-dim-relaxed}. Fix any $c>0$ and let $d$ be the zooming dimension with multiplier $c$. Let $\mu^* = \sup(\mu,X)$ be the optimal reward. Then zooming algorithm with confidence radius~\eqref{eq:confRad-sharp} satisfies, for all times $t$:
\begin{align*}
R(t) \leq O(c\log^2 t)+ O(c \log t)^{\frac{1}{d+1}}
    \,\times \max\left(
        t^{1-\tfrac{1}{d+1}},\; (1-\mu^*)\, t^{1-\frac{1}{d+2}}
    \right).
\end{align*}
\end{theorem}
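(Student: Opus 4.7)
The proof plan is to mirror the analysis of Theorem~\ref{thm:zooming-dim} step-by-step, with $r_t(x)$ replaced by $\hat r_t(x)$, but substituting a variance-aware concentration inequality for the vanilla Chernoff bound. Specifically, Claim~\ref{cl:conf-rad} becomes: every phase is \emph{clean} (meaning $|\mu_t(x)-\mu(x)|\le\hat r_t(x)$ for every arm $x$ and round $t$ in the phase) with probability at least $1-4^{-\iPhase}$. This follows from applying Theorem~\ref{thm:chernoff}(a) to the random variables $1-Z_{x,s}$, whose mean is $1-\mu(x)$, yielding a deviation of order $\sqrt{\alpha(1-\mu(x))/n_t(x)}$; one iteration converts this into a one-sided bound in terms of $\mu_t(x)$, and the additive $\alpha/(1+n_t(x))$ summand in~\eqref{eq:confRad-sharp} is exactly the slack needed to absorb the error of the iteration. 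The activation-time union-bound skeleton of Claim~\ref{cl:conf-rad} carries through verbatim.

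Next I would re-prove the analog of Lemma~\ref{lm:bound-active}, namely $\Delta(x)\le 3\,\hat r_t(x)$, which is just the original argument with $\hat r$ in place of $r$, and then translate it into a sharp upper bound on $n_t(x)$. \textbf{This is the main obstacle.} Applying the lemma at the last round $s\le t$ at which $x$ was played and writing $n=n_t(x)$, $\delta=\Delta(x)$, $M=1-\mu_s(x)$, squaring gives
\[
\delta^2\;\le\;O\!\left(\alpha^2/n^2+\alpha M/n\right),\qquad\text{hence}\qquad n\;\le\;O(\alpha M/\delta^2+\alpha/\delta).
\]
The trouble is that $M$ itself depends on $\hat r_s(x)$ through the clean-phase inequality $M\le 1-\mu^*+\delta+\hat r_s(x)$. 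Substituting $\hat r_s(x)\le \alpha/n+\sqrt{\alpha M/n}$ and solving the resulting quadratic in $\sqrt M$ gives $M\le O(1-\mu^*+\delta+\alpha/n)$; plugging back and solving a second quadratic (now in $n$) yields the Bernstein-shaped play-count bound
\[
\Delta(x)\,n_t(x)\;\le\;O\!\left(\frac{\alpha(1-\mu^*)}{\Delta(x)}+\alpha\right).
\]

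The remainder follows the pattern of Section~\ref{subsec:zooming-analysis}. Corollary~\ref{cor:sparsity} uses only the badness lemma and hence transfers, giving $|A_{(i,t)}|\le c\,r^{-d}$ at scale $r=2^{-i}$, so
\[
\sum_{x\in A_{(i,t)}}\Delta(x)\,n_t(x)\;\le\;O(c\alpha)\!\left((1-\mu^*)\,r^{-(d+1)}+r^{-d}\right).
\]
Summing the geometric series over $r\in\mS$ with $r\ge\rho$ (the $d=0$ edge case contributing an extra $\log$ factor that is absorbed into the $O(c\log^2 t)$ summand) and adding the $\rho\,t$ contribution from arms with $\Delta(x)<\rho$ gives the master inequality
\[
R(t)\;\le\;O(c\log^2 t)+\rho\,t+O(c\log t)\!\left(\frac{1-\mu^*}{\rho^{d+1}}+\frac{1}{\rho^d}\right),\qquad\alpha=\Theta(\log t).
\]
Optimizing over $\rho$ yields two candidate balances, $\rho_A=(c\log t/t)^{1/(d+1)}$ (balancing the $1/\rho^d$ term against $\rho\,t$, with contribution $(c\log t)^{1/(d+1)}\,t^{1-1/(d+1)}$) and $\rho_B=((1-\mu^*)c\log t/t)^{1/(d+2)}$ (balancing the $(1-\mu^*)/\rho^{d+1}$ term, with contribution controlled by $(c\log t)^{1/(d+1)}(1-\mu^*)\,t^{1-1/(d+2)}$ after absorbing sub-dominant factors). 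Taking $\rho=\max(\rho_A,\rho_B)$ bounds $\rho\,t$ by the maximum of their individual contributions, producing the $\max(\cdot,\cdot)$ structure in the theorem statement.
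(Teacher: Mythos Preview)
Your plan is correct and arrives at the same destination, but it takes a detour the paper avoids. The paper isolates a standalone concentration lemma (Theorem~\ref{lm:my-chernoff}) which, applied to the variables $1-Z_{x,s}$, yields \emph{simultaneously} the two events
\[
|\mu_t(x)-\mu(x)| \;<\; \hat r_t(x) \;<\; 3\Bigl(\tfrac{\alpha}{n_t(x)} + \sqrt{\tfrac{\alpha(1-\mu(x))}{n_t(x)}}\Bigr),
\]
and both are folded into the clean-phase definition. Because the second inequality is stated in terms of the \emph{true} $1-\mu(x)$, the analysis never touches the empirical $M=1-\mu_s(x)$: from $\Delta(x)\le 3\hat r_t(x)$ one immediately gets $\Delta(x)\le O\bigl(\alpha/n+\sqrt{\alpha(1-\mu(x))/n}\bigr)$, and since $1-\mu(x)=(1-\mu^*)+\Delta(x)$ a single case split (which of the two summands dominates) yields $\hat r_t(x)\le O\bigl(\max(\alpha/n,\sqrt{\alpha(1-\mu^*)/n})\bigr)$. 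The paper then literally runs the Section~\ref{subsec:zooming-analysis} computation twice, once for each branch of the max, and takes the larger answer.

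Your route instead keeps $\hat r_t$ in its empirical form, bounds $M$ in terms of $1-\mu^*$, $\Delta(x)$, and $\hat r_s(x)$ itself, and unwinds two nested quadratics. That is fine, and is the standard empirical-Bernstein self-bounding argument; it just costs more algebra. One caution: your description of the concentration step (``apply Theorem~\ref{thm:chernoff}(a) to $1-Z_{x,s}$, then iterate'') is loose in the regime $1-\mu(x)\lesssim\alpha/n$, where the multiplicative Chernoff bound of Theorem~\ref{thm:chernoff}(a) degenerates and you genuinely need the additive $\alpha/n$ term; the paper handles this case separately via Theorem~\ref{thm:chernoff}(b), and that is exactly what Theorem~\ref{lm:my-chernoff} packages.
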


Compared to the regret bound in Theorem~\ref{thm:zooming-dim}, this result effectively reduces the zooming dimension by $1$ if $\mu^*$ is close to $1$ (and $d>1$). Moreover, regret becomes \emph{polylogarithmic} if $\mu^*=1$ and $d=0$.

We analyze the new confidence radius~\refeq{eq:confRad-sharp} using the following corollary of Chernoff bounds which, to the best of our knowledge, has not appeared in the literature, and may be of independent interest.

\begin{theorem}
\label{lm:my-chernoff}
Consider $n$ i.i.d. random variables $Z_1 \ldots Z_n$ on $[0,1]$. Let $Z$ be their average, and let $\zeta = \E[Z]$. Then for any $\alpha>0$, letting
$r(\alpha,x)
		= \tfrac{\alpha}{n} + \sqrt{\tfrac{\alpha x}{n}}$,
we have:
\begin{align*}
 \Pr\left[\, |Z-\zeta| < r(\alpha,Z) < 3\,r(\alpha,\zeta) \, \right]
    > 1- \left( 2^{-\alpha} + 2\,e^{-\alpha/72} \right).
\end{align*}
\end{theorem}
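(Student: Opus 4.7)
The plan is to prove the two inequalities $|Z - \zeta| < r(\alpha, Z)$ and $r(\alpha, Z) < 3 r(\alpha, \zeta)$ simultaneously, by splitting on the size of the deterministic quantity $\zeta$ relative to $\alpha/n$ and invoking whichever form of Theorem~\ref{thm:chernoff} is admissible in each regime. Neither Chernoff bound alone suffices: part~(a) requires $\delta<1$, so fails when $\zeta \ll \alpha/n$; part~(b) requires $a > 6\zeta$, so is useful only when $\zeta$ is small relative to $\alpha/n$. I therefore split into three disjoint cases.

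In \emph{Case~1} ($\zeta > \alpha/n$), I apply Theorem~\ref{thm:chernoff}(a) with $\delta = \sqrt{\alpha/(n\zeta)} \in (0,1)$, which yields $|Z-\zeta| \leq \sqrt{\alpha\zeta/n}$ (and in particular $Z \leq 2\zeta$) with probability at least $1 - 2 e^{-\alpha/3}$. In \emph{Case~2} ($\zeta < \alpha/(6n)$), I apply Theorem~\ref{thm:chernoff}(b) with $a = \alpha/n > 6\zeta$, giving $Z \leq \alpha/n$ with probability at least $1 - 2^{-\alpha}$. In \emph{Case~3} ($\alpha/(6n) \leq \zeta \leq \alpha/n$), I apply Theorem~\ref{thm:chernoff}(a) with the fixed choice $\delta = 1/2$: this gives $Z \in [\zeta/2,\, 3\zeta/2]$ with probability at least $1 - 2 \exp(-n\zeta/12) \geq 1 - 2 e^{-\alpha/72}$, using $n\zeta \geq \alpha/6$. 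Exactly one case applies for any given $\zeta$, so the overall failure probability is bounded by $\max(2 e^{-\alpha/3},\, 2^{-\alpha},\, 2 e^{-\alpha/72}) \leq 2^{-\alpha} + 2 e^{-\alpha/72}$, matching the statement.

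On each good event the two target inequalities reduce to short algebraic manipulations. The only delicate substep is in Case~1, where I need $|Z-\zeta| \leq \sqrt{\alpha\zeta/n} \leq r(\alpha, Z) = \alpha/n + \sqrt{\alpha Z/n}$, even though the right-hand side involves $\sqrt{Z}$ rather than $\sqrt{\zeta}$. The key identity is
\[
\sqrt{\zeta} - \sqrt{Z} \;=\; \frac{\zeta - Z}{\sqrt{\zeta}+\sqrt{Z}} \;\leq\; \frac{\zeta - Z}{\sqrt{\zeta}} \;\leq\; \sqrt{\alpha/n},
\]
which after multiplication by $\sqrt{\alpha/n}$ rearranges to $\sqrt{\alpha\zeta/n} \leq \alpha/n + \sqrt{\alpha Z/n}$. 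The companion bound $r(\alpha, Z) \leq \sqrt{2}\, r(\alpha,\zeta) < 3 r(\alpha, \zeta)$ then follows from $Z \leq 2\zeta$. In Cases~2 and~3 both inequalities are immediate: $|Z - \zeta| \leq \max(Z,\zeta)$ is dominated by $r(\alpha,Z) \geq \alpha/n$, and $Z$ lies within a small constant factor of $\alpha/n$ (resp.~of $\zeta$), which keeps $r(\alpha,Z)/r(\alpha,\zeta)$ bounded by a constant strictly less than $3$.

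The main obstacle I anticipate is Case~3, where neither form of Chernoff's bound is tight: part~(a) with the sharp choice $\delta = \sqrt{\alpha/(n\zeta)}$ fails because $\delta$ can be as large as $\sqrt{6}$, while part~(b) demands a larger gap between $\alpha/n$ and $\zeta$. Settling for part~(a) with the crude constant choice $\delta = 1/2$ is exactly what forces the looser $e^{-\alpha/72}$ factor in the final failure probability; the remaining algebra is routine.
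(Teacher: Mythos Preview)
Your proof is correct and follows the same overall strategy as the paper: split on the size of $\zeta$ relative to $\alpha/n$, apply Theorem~\ref{thm:chernoff}(b) when $\zeta$ is small and Theorem~\ref{thm:chernoff}(a) otherwise, and then verify both inequalities algebraically on each good event. The details differ only in how the large-$\zeta$ regime is handled. You split it further into $\zeta>\alpha/n$ (Case~1, with $\delta=\sqrt{\alpha/(n\zeta)}$) and $\alpha/(6n)\le\zeta\le\alpha/n$ (Case~3, with $\delta=1/2$); the paper instead treats all of $\zeta\ge\alpha/(6n)$ at once via the single choice $\delta=\tfrac12\sqrt{\alpha/(6\zeta n)}$, which is automatically at most $\tfrac12$ there and yields the same $e^{-\alpha/72}$ tail. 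That unified choice also forces $|Z-\zeta|\le\zeta/2$, hence $Z\ge\zeta/2$, so the comparison $\tfrac12\sqrt{\alpha\zeta/n}\le\sqrt{\alpha Z/n}$ is immediate and your ``key identity'' $\sqrt{\zeta}-\sqrt{Z}\le\sqrt{\alpha/n}$ is not needed. So the paper's two-case split is slightly slicker, while your three-case version makes more explicit where the $72$ comes from; both are valid and land on the same bound.
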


\begin{proof}
Suppose $\zeta< \tfrac{\alpha}{6n}$. Then using Chernoff bounds (Theorem~\ref{thm:chernoff}(b)) with $a = \tfrac{\alpha}{n}>6 \zeta$, we obtain that with probability at least $1- 2^{-\alpha}$ we have
	$Z < \tfrac{\alpha}{n}$,
and therefore
    $|Z-\zeta| < \tfrac{\alpha}{n} < r(\alpha, Z)$
and
$$ |Z-\zeta| < \tfrac{\alpha}{n} < r(\alpha, Z) <  (1+\sqrt{2})\, \tfrac{\alpha}{n}
	< 3\, r(\alpha, \zeta).
$$

Now, suppose
	$\zeta\geq \tfrac{\alpha}{6n} $. Apply Chernoff bounds (Theorem~\ref{thm:chernoff}(a)) with
	$\delta = \tfrac12 \sqrt{\tfrac{\alpha}{6\zeta n}}$.
Thus with probability at least $1- 2\,e^{-\alpha/72}$ we have
	$|Z-\zeta| < \delta\zeta \leq \zeta/2$.
Plugging in $\delta$,
\begin{align*}
 |Z-\zeta|
	< \tfrac12  \sqrt{\tfrac{\alpha \zeta}{n}}
	\leq \sqrt{\tfrac{\alpha Z}{n}}
	\leq r(\alpha, Z) < 1.5\, r(\alpha, \zeta). \qquad \qquad \qedhere
\end{align*}
\end{proof}

\begin{proofof}{Theorem~\ref{thm:mu-close-to-1}}
Let us fix an arm $x$ and time $t$. Let us use Theorem~\ref{lm:my-chernoff} with $n=n_t(x)$ and $\alpha = \Theta(\iPhase)$ as in~\eqref{eq:confRad-sharp}, setting each random variable $X_i$ equal to 1 minus the reward from the $i$-th time arm $x$ is played in the current phase. Then $\zeta = 1-\mu(x)$ and $Z = 1-\mu_t(x)$, so the theorem says that
\begin{align}\label{eq:max1-pf}
 \Pr\left[\,
	|\mu_t(x)-\mu(x)| < r_t(x)
	< 3\left(
			\frac{\alpha}{n_t(x)} + \sqrt{\frac{\alpha\, (1-\mu(x))}{n_t(x)}}\,
 	   \right)
 	\right]
  > 1-2^{\Omega(\alpha)}.
\end{align}

We modify the analysis in Section~\ref{subsec:zooming-analysis} as follows. We redefine a ``clean phase'' to mean that the event in the left-hand side of~\eqref{eq:max1-pf} holds for all rounds $t$ and all arms $x$. We use~\eqref{eq:max1-pf} instead of the standard Chernoff bound in the proof of Claim~\ref{cl:conf-rad} to show that each phase \iPhase is clean with probability at least $1-4^{\iPhase}$. Then we obtain Lemma~\ref{lm:bound-active} as is, for the new definition of $r_t(x)$.
Then we replace Corollary~\ref{cor:bound-active} with a more efficient corollary based on the new $r_t(x)$. More precisely, we derive two regret bounds: one assuming
    $r_t(x) = \frac{O(\iPhase)}{n_t(x)}$,
and another assuming
    $r_t(x) = \sqrt{\frac{O(\iPhase)(1-\mu^*)}{n_t(x)}}$,
and take the maximum of the two. We omit the easy details.
\end{proofof}

%%%%%%%%%%%%%%%%%%%%
%\subsection{Example: expected reward $=$ distance to the target}
%\label{sec:targetMAB}

\subsection{Application: Lipschitz MAB with a ``target set''}
\label{subsec:zooming-target}

We consider a version of the \problem in which the expected reward of each arm $x$ is determined by the distance between this arm and a fixed \emph{target set} $S\subset X$ which is not revealed to the algorithm. Here the distance is defined as
    $\mD(x,S) \triangleq \inf_{y\in S} \mD(x,y)$.
The motivating example is
    $\mu(x) = \max(0,\,\mu^*-\mD(x,S))$.
More generally, we assume that
    $ \mu(x) = f(\mD(x,S))) $
for each arm $x$, for some known non-increasing function $f:[0,1]\to [0,1]$.
We call this version the \emph{Target MAB problem} with target set $S$ and shape function $f$.%
\footnote{Note that the payoff function $\mu$ does not necessarily satisfy the Lipschitz condition with respect to $\mD$.
However, if $f(z) = \mu^*-z$ then $\mu(x) = \mu^* - \mD(x,S)$, and the Lipschitz condition is satisfied because
$\mD(x,S)-\mD(y,S) \leq \mD(x,y)$.
}

The key idea is to use the quasi-distance
% modified distance
function
	$\mD_f(x,y) = f(0) - f(\mD(x,y))$.
It is easy to see that $\mD_f$ satisfies~\eqref{eq:relaxedLipschitz}. Indeed, fix any arm $x^*\in S$. Then for each $x\in X$ we have:
$$ \Delta(x) = \mu(x^*) - \mu(x) = f(0) - f(\mD(x,S)) = \mD_f(x,S)
	\leq \mD_f(x,x^*).
$$
Therefore Theorem~\ref{thm:zooming-dim-relaxed} applies: we can use the zooming algorithm in conjunction with $\mD_f$ rather than $\mD$. The performance of this algorithm depends on the zooming dimension of the problem instance $(X,\mD_f,\mu)$.

\begin{theorem}\label{thm:targetMAB-general}
Consider the Target MAB problem with target set $S\subset X$ and shape function $f$. For some fixed multiplier $c>0$, let $d$ be the zooming dimension of  $(X,\mD_f,\mu)$. Then the zooming algorithm on $(\mD_f,X)$ has regret
$ R(t) \leq (c\,\log t)^{\frac{1}{d+2}}\; \; t^{\frac{d+1}{d+2}}$
for all times $t$.
\end{theorem}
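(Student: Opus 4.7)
The plan is to reduce directly to Theorem~\ref{thm:zooming-dim-relaxed} applied to the modified problem instance $(X, \mD_f, \mu)$. Most of the technical content is already assembled in the paragraph preceding the theorem statement; I would organize it into three short verifications.

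First, I would check that $\mD_f$ is a bona fide \quasidistance{} on $X$. Non-negativity follows from $f$ being non-increasing on $[0,1]$ and $\mD(x,y)\geq 0 = \mD(x,x)$, which gives $f(\mD(x,y)) \leq f(0)$. Symmetry of $\mD_f$ follows from symmetry of $\mD$, and $\mD_f(x,x) = f(0) - f(0) = 0$. So $\mD_f$ fits the notion of \quasidistance{} used in Theorem~\ref{thm:zooming-dim-relaxed} (no triangle inequality is needed).

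Second, I would verify the relaxed Lipschitz condition~\eqref{eq:relaxedLipschitz} for $\mu$ with respect to $\mD_f$. Pick any $x^* \in S$ (assuming $S$ is non-empty, which is implicit in the problem definition). Since $f$ is non-increasing, $\sup(\mu,X) = f(0)$ is attained at $x^*$, so $\mu^* = f(0)$. For any $x \in X$, the inequality $\mD(x, x^*) \geq \mD(x, S)$ combined with $f$ non-increasing yields
\begin{equation*}
\Delta(x) \;=\; f(0) - f(\mD(x,S)) \;\leq\; f(0) - f(\mD(x,x^*)) \;=\; \mD_f(x, x^*).
\end{equation*}
This is stronger than \eqref{eq:relaxedLipschitz} (which allows an $\eps$ slack), so the hypothesis is satisfied.

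Finally, I would invoke Theorem~\ref{thm:zooming-dim-relaxed} on the instance $(X,\mD_f,\mu)$: running the zooming algorithm using the similarity function $\mD_f$, the regret at time $t$ is at most $O(c\log t)^{1/(d+2)}\, t^{(d+1)/(d+2)}$, where $d$ is the zooming dimension of $(X,\mD_f,\mu)$ with multiplier $c$. This is precisely the statement of Theorem~\ref{thm:targetMAB-general}. There is no real obstacle here; the only conceptual step is recognizing that, although $\mu$ need not be Lipschitz with respect to the original metric $\mD$, the transformation $\mD_f = f(0) - f(\mD(\cdot,\cdot))$ absorbs the shape function $f$ into the similarity function so that the relaxed Lipschitz condition holds automatically. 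All the heavy lifting is done by Theorem~\ref{thm:zooming-dim-relaxed}.
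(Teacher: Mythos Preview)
Your proposal is correct and follows essentially the same approach as the paper: verify that $\mD_f$ is a \quasimetric, check the relaxed Lipschitz condition~\eqref{eq:relaxedLipschitz} by fixing any $x^*\in S$ and using monotonicity of $f$ together with $\mD(x,S)\leq \mD(x,x^*)$, then invoke Theorem~\ref{thm:zooming-dim-relaxed}. The paper's argument is exactly this (it is the paragraph immediately preceding the theorem statement), so there is nothing to add.
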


Note that the zooming algorithm is self-tuning: it does not need to know the properties of $S$ or $f$, and in fact it does not even need to know that it is presented with an instance of the Target MAB problem. We obtain a further improvement via Theorem~\ref{thm:mu-close-to-1} if $f(0)$ is close to $1$.

Let us consider the main example
    $\mu(x) = \max(0,\,\mu^* - \mD(x,S))$
and more generally
\begin{align}\label{eq:targetMAB-example}
    \mu(x) = \max(\mu_0,\,\mu^*- \mD(x,S)^{1/\alpha})
\end{align}
for some constant $\alpha>0$ and $0\leq \mu_0 <\mu^* \leq 1$. Here $\mu_0$ and $\mu^*$ are, respectively, the minimal and maximal expected payoffs. \eqref{eq:targetMAB-example} corresponds to
    $f(z) = \max(\mu_0,\,\mu^*-z^{1/\alpha})$.
Then
    $$\mD_f(x,y) = \min(\mu^*-\mu_0,\;(\mD(x,y))^{1/\alpha}).$$

\noindent
We find that the zooming dimension of the problem instance $(X,\mD_f,\mu)$ is, essentially, at most $\alpha$ times the covering dimension of $S$. (This result holds as long as $(X,\mD)$ has constant doubling dimension.) Intuitively, $S$ is a low-dimensional subset of the metric space, in the sense that it has a (much) smaller covering dimension.

\newcommand{\dDBL}{d_{\mathtt{DBL}}}
\newcommand{\cZOOM}{c_{\mathtt{zoom}}}

\begin{lemma}\label{lm:targetMAB-shape}
Consider the Target MAB problem with payoff function given by~\eqref{eq:targetMAB-example}. Let $d$ be the covering dimension of the target set $S$, for any fixed multiplier $c>0$. Let $\dDBL$ be the doubling dimension of $(X,\mD)$; assume it is finite. Then the zooming dimension of $(X,\mD_f,\mu)$ is $\alpha d$, with constant multiplier
\begin{align*}
    \cZOOM = \left(\max\left( c\, 2^{4\alpha+2},\;
        \tfrac{2}{\mu^*-\mu_0} \right)\right)^{\dDBL}.
\end{align*}
\end{lemma}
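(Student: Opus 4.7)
The plan is to reduce the zooming covering problem in $(X,\mD_f)$ to a covering problem for the target set $S$ in $(X,\mD)$, using explicit formulas for $\mD_f$ and $\Delta$. First I compute, directly from the definition of $f$, that
$\mD_f(x,y) = \min(\mu^*-\mu_0,\; \mD(x,y)^{1/\alpha})$ and $\Delta(x) = \min(\mu^*-\mu_0,\; \mD(x,S)^{1/\alpha})$. These formulas yield two immediate reductions: (a) the set $X_{\mu,r}$ is empty for $r > 2(\mu^*-\mu_0)$, is contained in the $r^\alpha$-neighborhood $\{x:\, \mD(x,S)\leq r^\alpha\}$ of $S$ for $r \leq \mu^*-\mu_0$, and is a subset of $X$ in general; (b) the condition ``$\mD_f$-diameter $< r/8$'' becomes ``$\mD$-diameter $<(r/8)^\alpha$'' when $r/8 \leq \mu^*-\mu_0$, and holds trivially otherwise.

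The proof then splits into two regimes. In the \emph{small-$r$ regime} $r \leq \mu^*-\mu_0$, I will cover the $r^\alpha$-neighborhood of $S$ by first taking a cover of $S$ itself. By the covering dimension of $S$, one can find $N \leq c\,(r^\alpha)^{-d} = c\,r^{-\alpha d}$ subsets of $S$ of $\mD$-diameter less than $r^\alpha$, with representative points $s_i$. A triangle-inequality argument in $\mD$ shows that any $x$ with $\mD(x,S)\leq r^\alpha$ lies in $B_{\mD}(s_i,\,2r^\alpha)$ for some $i$, so the neighborhood is covered by these balls, each of $\mD$-diameter at most $4r^\alpha$. Iterating the doubling property of $(X,\mD)$ roughly $3\alpha + O(1)$ times, each such ball splits into at most $2^{O(\alpha)\dDBL}$ subsets of $\mD$-diameter less than $(r/8)^\alpha$. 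Multiplying by $N$ and absorbing constants into the first argument of the max defining $\cZOOM$ gives a cover of size at most $\cZOOM\cdot r^{-\alpha d}$.

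In the \emph{medium-$r$ regime} $\mu^*-\mu_0 < r \leq 2(\mu^*-\mu_0)$, the set $X_{\mu,r}$ consists of points far from $S$ and can be essentially all of $X$, so covering via $S$ no longer helps. Instead I will exploit two facts: the $\mD_f$-diameter of $X$ is at most $\mu^*-\mu_0 < r$, and the $\mD_f$-doubling dimension is at most $\alpha\dDBL$ (halving an $\mD_f$-radius corresponds to taking an $\alpha$-th power of the $\mD$-radius, which costs $\alpha$ successive applications of $\mD$-doubling). A standard bound on covering numbers of bounded-diameter sets in a doubling metric then gives $N_{r/8}^{\mD_f}(X) \leq 2^{\alpha\dDBL}(8(\mu^*-\mu_0)/r)^{\alpha\dDBL} \leq 2^{O(\alpha)\dDBL}$, since $(\mu^*-\mu_0)/r < 1$. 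As $r^{-\alpha d}\geq 1$ in this regime, this constant is absorbed by the second argument $(2/(\mu^*-\mu_0))^{\dDBL}$ of the max in $\cZOOM$. For $r > 2(\mu^*-\mu_0)$ the set $X_{\mu,r}$ is empty, so there is nothing to prove.

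The main obstacle is the medium-$r$ regime: once $r$ exceeds $\mu^*-\mu_0$, the set $X_{\mu,r}$ detaches from $S$, so the covering dimension of $S$ no longer controls the cover size, while the required $\mD$-scale $(r/8)^\alpha$ is still small. Routing the argument through the $\mD_f$-doubling dimension of $X$, which is bounded via the $1/\alpha$-power relation between the two metrics, is the cleanest way to handle this. Combining both regimes yields $N_{r/8}^{\mD_f}(X_{\mu,r}) \leq \cZOOM\cdot r^{-\alpha d}$ for every $r\in(0,1]$, which is precisely the claim that the zooming dimension of $(X,\mD_f,\mu)$ is at most $\alpha d$ with multiplier $\cZOOM$.
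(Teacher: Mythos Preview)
Your approach matches the paper's in structure and in the main (small-$r$) regime: both compute $\mD_f(x,y)=\min(\mu^*-\mu_0,\mD(x,y)^{1/\alpha})$ and $\Delta(x)=\min(\mu^*-\mu_0,\mD(x,S)^{1/\alpha})$, split on $r$ versus $\mu^*-\mu_0$, cover $S$ by $c\,r^{-\alpha d}$ sets at $\mD$-scale $r^\alpha$, thicken to the $r^\alpha$-neighborhood, and refine by iterating $\mD$-doubling roughly $4\alpha+2$ times. Two cosmetic differences: the paper covers the superset $S_r=\{x:\Delta(x)\le r\}$ rather than $X_{\mu,r}$ (so it never appeals to emptiness for large $r$), and in the large-$r$ case the paper stays in the $\mD$-metric throughout, covering $X$ directly by $(2/(\mu^*-\mu_0))^{\dDBL}$ sets of $\mD$-diameter at most $\mu^*-\mu_0$, rather than detouring through an $\mD_f$-doubling-dimension bound.

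There is one genuine slip in your medium-$r$ bookkeeping. The constant you produce there is of order $2^{O(\alpha)\,\dDBL}$, and you claim it is absorbed by the second argument $(2/(\mu^*-\mu_0))^{\dDBL}$ of the $\max$ in $\cZOOM$. These two quantities are incomparable: one scales with $\alpha$, the other with $1/(\mu^*-\mu_0)$, and neither controls the other in general. The $\alpha$-dependent constant would naturally sit under the \emph{first} argument $(c\cdot 2^{4\alpha+2})^{\dDBL}$, but that only works when $c$ is bounded below, which the lemma does not assume. The paper's direct $\mD$-doubling argument for $r\ge\mu^*-\mu_0$ is what actually generates the $(2/(\mu^*-\mu_0))^{\dDBL}$ term, so following that route avoids the mismatch.
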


\begin{proof}
For each $r>0$, it suffices to cover the set
	$S_r = \{x\in X:\, \Delta(x)\leq r\}$
with $\cZOOM\, r^{-\alpha d}$ sets of $\mD_f$-diameter at most $r/16$.
Note that
    $ \Delta(x) = \min(\mu^*-\mu_0,\, (\mD(x,S))^{1/\alpha})$.

Assume $r<\mu^*-\mu_0$. Then for each
	$x\in S_r$ we have $ \mD(x,S) \leq r^\alpha$.
By definition of the covering dimension, $S$ can be covered with
    $c\,r^{-\alpha d}$
sets
	$\{\, C_i \,\}_i$
of $\mD$-diameter at most $r^\alpha$. It follows that $S_r$ can be covered with $r^{-\alpha d}$ sets
	$\{\, B(C_i, r) \,\}_i$,
where $B(C_i, r) \triangleq \cup_{u\in C_i}\,B(x,r) $. The $\mD$-diameter of each such set is at most $3\,r^{\alpha}$. Since $\dDBL$ is the doubling dimension of $(X,\mD)$, each $B(C_i, r)$ can be covered by with $2^{(4\alpha+2)\,\dDBL}$ of sets of $\mD$-diameter at most $(r/16)^\alpha$. Therefore, $S_r$ can be covered by
	$c\, 2^{(4\alpha+2)\,\dDBL}\,r^{-\alpha d} $
sets whose $\mD$-diameter is at most $(r/16)^\alpha$, so that their $\mD_f$-diameter is at most $r/16$.

For $r\geq \mu^*-\mu_0$ we have $S_r = X$, and by definition of the doubling dimension $X$ can be covered by
    $\left(\frac{2}{\mu^*-\mu_0}\right)^{\dDBL}$
sets of diameter at most
    $\mu^*-\mu_0$.
\end{proof}

The most striking (and very reasonable) special case is when $S$ consists of finitely many points.

\begin{corollary}\label{cor:targetMAB-shape}
Consider the Target MAB problem with payoff function given by~\eqref{eq:targetMAB-example}. Suppose the target set $S$ consists of finitely many points. Let $\cZOOM$ be from Lemma~\ref{lm:targetMAB-shape} with $c=|S|$.
Then the zooming algorithm on $(\mD_f,X)$ has regret
    $ R(t) = O(\sqrt{\cZOOM\,t\,\log t})$
for all times $t$. Moreover, the regret is
    $R(t) = O(\cZOOM \log t)^2 $
if $\mu^*=1$.
\end{corollary}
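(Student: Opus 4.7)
The plan is to show that Corollary~\ref{cor:targetMAB-shape} follows by applying the two results about the Target MAB problem, after observing that a finite target set yields a zooming dimension of~$0$. First I would verify the covering dimension of~$S$: since $|S|<\infty$, the set $S$ can be covered by $|S|$ singletons, each of diameter $0<r$, for every $r>0$; hence the covering dimension of~$S$ with multiplier $c=|S|$ equals $0$. Applying Lemma~\ref{lm:targetMAB-shape} with $d=0$ and $c=|S|$, the zooming dimension of $(X,\mD_f,\mu)$ is $\alpha d = 0$, with constant multiplier exactly $\cZOOM$ as defined in the corollary.

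Next, for the main bound, I would plug zooming dimension $d=0$ and multiplier $\cZOOM$ into Theorem~\ref{thm:targetMAB-general}, which immediately yields
\begin{align*}
R(t) \;\leq\; (\cZOOM\,\log t)^{1/2}\; t^{1/2} \;=\; O\!\left(\sqrt{\cZOOM\,t\,\log t}\right),
\end{align*}
as claimed. This step is essentially bookkeeping once the zooming-dimension bound is in place.

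For the refined bound under $\mu^*=1$, I would instead run the zooming algorithm with the improved confidence radius~\eqref{eq:confRad-sharp} and invoke Theorem~\ref{thm:mu-close-to-1} (whose hypotheses are satisfied by Theorem~\ref{thm:zooming-dim-relaxed}, since $\mD_f$ is only a quasi-distance and the Lipschitz condition was already relaxed to~\eqref{eq:relaxedLipschitz} in the derivation of Theorem~\ref{thm:targetMAB-general}). With $d=0$, the theorem gives
\begin{align*}
R(t) \;\leq\; O(\cZOOM\,\log^2 t) + O(\cZOOM\,\log t)\;\max\!\left(1,\;(1-\mu^*)\,t^{1/2}\right).
\end{align*}
Setting $\mu^*=1$ makes the second max-term vanish, leaving $R(t) = O(\cZOOM\,\log^2 t)$, which is the desired bound.

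The only mild subtlety (and the one step I would actually spell out rather than cite) is confirming that Theorem~\ref{thm:mu-close-to-1} applies in the quasi-distance setting induced by $\mD_f$; this amounts to checking that the proof of Theorem~\ref{thm:mu-close-to-1} in Section~\ref{subsec:zooming-max} inherits the relaxation of Theorem~\ref{thm:zooming-dim-relaxed} (it does, since the only metric-dependent ingredient reused is Lemma~\ref{lm:bound-active}, whose proof uses~\eqref{eq:relaxedLipschitz} rather than the full triangle inequality). Everything else is a direct substitution of $d=0$ and $c=\cZOOM$ into previously established theorems, so no new technical obstacle arises.
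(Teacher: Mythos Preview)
Your proposal is correct and follows essentially the same route as the paper: observe that a finite $S$ has covering dimension $0$ with multiplier $|S|$, apply Lemma~\ref{lm:targetMAB-shape} to get zooming dimension $0$ with multiplier $\cZOOM$, then invoke Theorem~\ref{thm:targetMAB-general} for the $\sqrt{t}$ bound and Theorem~\ref{thm:mu-close-to-1} for the polylogarithmic bound when $\mu^*=1$. The paper's proof is a two-sentence version of exactly this; your added remark about Theorem~\ref{thm:mu-close-to-1} inheriting the relaxed setting of Theorem~\ref{thm:zooming-dim-relaxed} is a reasonable extra check that the paper leaves implicit.
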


\begin{proof}
The covering dimension of $S$ is $0$ with multiplier $c=|S|$. Then by Lemma~\ref{lm:targetMAB-shape} the zooming dimension is $0$, with multiplier $\cZOOM$. We obtain the $\tilde{O}(\sqrt{\cZOOM\, t\log t})$ regret using Theorem~\ref{thm:targetMAB-general}, and the $O(\cZOOM\, \log t)^2$ regret result  using Theorem~\ref{thm:mu-close-to-1}.
\end{proof}

The proof of Lemma~\ref{lm:targetMAB-shape} easily extends to shape functions $f$ such that
\begin{align*}
    x^{1/\alpha} \leq f(0)-f(x) \leq x^{1/\alpha'}\quad \forall x\in (0,1],
\end{align*}
for some constants $\alpha\geq\alpha'>0$. Then, using the notation in Lemma~\ref{lm:targetMAB-shape}, the zooming dimension of $(X,\mD_f,\mu)$ is $\alpha d$, with multiplier
$\cZOOM = \max\left( c\, 2^{(4\alpha'+2)\,\dDBL},\;
        \left(\frac{2}{\mu^*-\mu_0}\right)^{\dDBL} \right)$.

\subsection{Application: mean-zero noise with known shape}
\label{subsec:zooming-noise}

\newcommand{\PP}{\ensuremath{\mathcal{P}}}

Improved regret bounds are possible if the reward from playing each arm $x$ is $\mu(x)$ plus noise of known shape. More precisely, we assume that the reward from playing arm $x$ is $\mu(x)$ plus an independent random sample from some fixed, mean-zero distribution $\PP$, called the \emph{noise distribution}, which is revealed to the algorithm. We call this version the \emph{noisy Lipschitz MAB problem}. We present several examples in which we take advantage of a ``benign" shape of \PP. In these examples, the payoff distributions are not restricted to have bounded support.\footnote{Recall that throughout the paper the payoff distribution of each arm $x$ has support
	$\mathcal{S}(x)\subset [0, 1]$.
In this subsection, by a slight abuse of notation, we do not make this assumption.}

\xhdr{Normal distributions.}
We start with perhaps the most natural example when the noise distribution $\PP$ is the zero-mean normal distribution. Then instead of the confidence radius $r_t$ defined by~\eqref{eq:confidence-radius} we can use the confidence radius
    $\hat{r}_t(\cdot) = \sigma\, r_t(\cdot)$,
where $\sigma$ is the standard deviation of $\PP$. Consequently we obtain a regret bound~\refeq{eq:thm-zooming-dim} with the right-hand side multiplied by $\sigma$.

In fact, this result can be generalized to all noise distributions $\PP$ such that
\begin{align}\label{eq:stoc-bdd}
 \E_{Z\sim \PP}\left[ e^{rZ} \right] \leq e^{r^2 \sigma^2/2}
    \text{~~for all $r\in [-\rho, \rho]$}.
\end{align}
The normal distribution with standard deviation $\sigma$ satisfies~\eqref{eq:stoc-bdd} for $\rho=\infty$. Any distribution with support $[-\sigma,\sigma]$ satisfies~\eqref{eq:stoc-bdd} for $\rho=1$.  The meaning of~\eqref{eq:stoc-bdd} is that it is precisely the condition needed to establish an Azuma-type inequality: if $Z_1,\ldots,Z_n$ are independent samples from $\PP$ then $\sum_{i=1}^n Z_i \leq \Tilde{O}(\sigma \sqrt{n})$ with high probability. More precisely:
\begin{equation}\label{eq:stochBdd-Azuma}
 \Pr \left[ \textstyle \sum_{i=1}^n Z_i > \lambda \sigma \sqrt{n}  \right]
    \leq \exp(-\lambda^2/2)\quad
    \text{for any $\lambda \leq  \tfrac12\, \rho\,\sigma\sqrt{n} $.}
\end{equation}
We can derive an analog of Claim~\ref{cl:conf-rad} for the new confidence radius $\hat{r}_t(\cdot) = \sigma\, r_t(\cdot)$  by using~\eqref{eq:stochBdd-Azuma} instead of the standard Chernoff bound; we omit the easy details.

\xhdr{Tool: generalized confidence radius.}
More generally, we may be able to use a different, smaller confidence radius $\hat{r}_t(\cdot)$ instead of $r_t(\cdot)$ from~\eqref{eq:confidence-radius}, perhaps in conjunction with a different estimate $\hat{\mu}_t(\cdot)$ of $\mu(\cdot)$ instead of the sample average $\mu_t(\cdot)$. We will need the pair $(\hat{\mu}_t, \hat{r}_t)$ to satisfy an analog of Claim~\ref{cl:conf-rad}:
\begin{align}\label{eq:confRad-gen}
\Pr\left[\; |\hat{\mu}_t(x) - \mu(x)| \leq \hat{r}_t(x) \;
        \text{for all times $t$ and arms $x$} \;\right]
    \geq 1-4^{-\iPhase}.
\end{align}
Further, we will need the confidence radius $\hat{r}_t$ to be small in the following sense:
\begin{align}\label{eq:confRad-small}
\text{for each arm $x$ and any $r>0$, inequality
$\hat{r}_t(x) \leq r$ implies $n_t(x)\leq c_0\, r^{-\beta}\log t$},
\end{align}
for some constants $c_0$ and $\beta\geq 0$. Recall that $\hat{r}_t=r_t$ satisfies ~\eqref{eq:confRad-small} with $\beta=2$ and $c_0=O(1)$.

\begin{lemma}\label{lm:confRad-generalized}
Consider the \problem (relaxed as in Theorem~\ref{thm:zooming-dim-relaxed}).
Consider the zooming algorithm with estimator $\hat{\mu}_t$ and confidence radius
$\hat{r}_t$, and consider a problem instance such that the pair
    $(\hat{\mu}_t,\hat{r}_t)$
satisfies~\eqref{eq:confRad-gen}. Suppose $\hat{r}_t$ satisfies~\eqref{eq:confRad-small}. Let $d$ be the zooming dimension of the problem instance, for any fixed multiplier $c>0$. Then regret of the algorithm is
\begin{align}\label{eq:confRad-generalized}
R(t) \leq O(c\,c_0\,\log^2 t) +
      O(c\,c_0\,\log^2 t)^{1/(d+\beta)}\;\times t^{1-1/(d+\beta)} \;
      \text{for all times $t$}.
\end{align}
\end{lemma}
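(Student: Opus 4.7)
The plan is to reproduce the proof of Theorem~\ref{thm:zooming-dim-relaxed} from Section~\ref{subsec:zooming-analysis} with $(\hat\mu_t,\hat r_t)$ in place of $(\mu_t,r_t)$, invoking hypotheses~\eqref{eq:confRad-gen} and \eqref{eq:confRad-small} at the two points where the original argument used specific properties of $(\mu_t,r_t)$. Concretely, the index is redefined as $I_t(x) = \hat\mu_t(x) + 2\hat r_t(x)$, the activation rule uses the confidence ball $B(x,\hat r_t(x))$, and a phase \iPhase is called \emph{clean} if the event in~\eqref{eq:confRad-gen} holds; by hypothesis a phase is clean with probability at least $1-4^{-\iPhase}$, so Claim~\ref{cl:conf-rad} is replaced by an immediate appeal to~\eqref{eq:confRad-gen}.

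First, I would verify that Lemma~\ref{lm:bound-active} goes through verbatim in a clean phase, yielding $\Delta(x) \leq 3\,\hat r_t(x)$ at every time $t$ and every arm $x$. The proof used only: (i)~the relaxed Lipschitz bound~\eqref{eq:relaxedLipschitz}, (ii)~the covering invariant (all arms covered by confidence balls of active arms), (iii)~the definition of the index, and (iv)~the clean-phase inequality $|\hat\mu_t(x)-\mu(x)|\le \hat r_t(x)$ — none of these refer to the particular form of $r_t$ or $\mu_t$. Similarly, Corollary~\ref{cor:sparsity} carries over: any two active arms satisfy $\mD(x,y) > \tfrac13 \min(\Delta(x),\Delta(y))$.

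Next, the one place where the specific form of the confidence radius entered the old proof was Corollary~\ref{cor:bound-active}, which converted $\Delta(x)\le 3 r_t(x)$ into $n_t(x)\le O(\log t)/\Delta(x)^2$. Here I would replace that step by applying~\eqref{eq:confRad-small} to $r = 3\hat r_t(x) \geq \Delta(x)$: since $\hat r_t(x)\le r$, we get $n_t(x)\le c_0 r^{-\beta}\log t$, hence
\[
n_t(x) \;\le\; c_0 \,(\Delta(x)/3)^{-\beta}\log t \;=\; O(c_0 \log t)\,\Delta(x)^{-\beta}.
\]
This is the sole structural change to the argument, and it essentially replaces the exponent $2$ by $\beta$ everywhere downstream.

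Finally, I would rerun the bucketing step. Partitioning active arms by $A_{(i,t)}=\{x\in S_t:\,2^i\le 1/\Delta(x)<2^{i+1}\}$ and using Corollary~\ref{cor:sparsity} with the zooming dimension gives $|A_{(i,t)}|\le c\,r^{-d}$ for $r=2^{-i}$. Combined with the new bound on $n_t(x)$, each bucket contributes at most $O(c\,c_0\,\log t)\, r^{1-\beta}\cdot r^{-d}$ to $\sum_x \Delta(x)n_t(x)$. Summing over scales $r\in\mS$ with $r\ge \rho$ and adding a $\rho t$ term for arms with $\Delta(x)\le \rho$, then optimizing $\rho$ by setting $\rho^{d+\beta}\asymp (c\,c_0\log t)/t$, yields the desired $(c\,c_0\log t)^{1/(d+\beta)}\,t^{1-1/(d+\beta)}$ term per phase; the $O(c\,c_0\log^2 t)$ additive term absorbs both the geometric sum across the $O(\log t)$ phases and the boundary cases $d+\beta\le 1$ where the geometric sum in $r$ contributes an extra logarithmic factor. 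I do not expect a genuine obstacle: the only delicate point is checking that~\eqref{eq:confRad-small} is applied with a value of $r$ that genuinely upper-bounds $\hat r_t(x)$, which is ensured by choosing $r=3\hat r_t(x)$ (or just $r=\hat r_t(x)$ if one prefers a tighter constant).
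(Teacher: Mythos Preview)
Your proposal is correct and follows precisely the approach the paper intends: the paper's proof consists of the single sentence ``proved by plugging in the improved confidence radius into the analysis in Section~\ref{subsec:zooming-analysis}; we omit the easy details,'' and you have supplied exactly those details, correctly identifying that the only substantive change is replacing Corollary~\ref{cor:bound-active} by an appeal to~\eqref{eq:confRad-small}, which swaps the exponent $2$ for $\beta$ throughout.
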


Lemma~\ref{lm:confRad-generalized} is proved by plugging in the improved confidence radius into the analysis in Section~\ref{subsec:zooming-analysis}; we omit the easy details. We obtain an improvement over Theorem~\ref{thm:zooming-dim} and Theorem~\ref{thm:zooming-dim-relaxed} whenever $\beta<2$. Below we give some examples for which we can construct improved
    $(\hat{\mu}_t,\hat{r}_t)$.

\xhdr{Example: deterministic rewards.}
For the important special case of deterministic rewards, we obtain regret bound~\refeq{eq:confRad-generalized} with $\beta=0$. (The proof is a special case of the next example.)

\begin{corollary}\label{cor:confRad-deterministic}
Consider the \problem with deterministic rewards (relaxed as in Theorem~\ref{thm:zooming-dim-relaxed}). Then the zooming algorithm with suitably defined estimator $\hat{\mu}_t$ and confidence radius $\hat{r}_t$ achieve regret bound~\refeq{eq:confRad-generalized} with $\beta=0$.
\end{corollary}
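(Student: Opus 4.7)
The plan is to apply Lemma~\ref{lm:confRad-generalized} with $\beta=0$, exploiting the fact that a single play of any arm reveals its payoff exactly. I would set $\hat{\mu}_t(x)=\mu(x)$ whenever $n_t(x)\geq 1$ and $\hat{\mu}_t(x)=0$ otherwise, so that the estimator error is zero after one play and at most $1$ before any play. For the confidence radius, I would take $\hat{r}_t(x)=1$ when $n_t(x)=0$ (so a freshly activated arm's ball covers the diameter-$1$ space), and $\hat{r}_t(x)=(\hat{\mu}^{\mathrm{UB}}_t-\mu(x))/3$ when $n_t(x)\geq 1$, where $\hat{\mu}^{\mathrm{UB}}_t$ is an upper bound on $\mu^\ast$ (the trivial constant $1$ works, and a tighter Lipschitz envelope determined by the played arms works better). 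By construction $3\hat{r}_t(x)\geq \Delta(x)$, which is precisely what the proofs of Lemma~\ref{lm:bound-active} and Corollary~\ref{cor:sparsity} require, so the separation bound $\mD(x,y)\geq \Delta(x)/3$ for active arms persists and the zooming-dimension counting carries over unchanged.

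Condition~\eqref{eq:confRad-gen} is then immediate from the construction. For condition~\eqref{eq:confRad-small} with $\beta=0$, I would note that every unplayed active arm has index $0+2\cdot 1=2$, strictly larger than $\mu(x)+2\hat{r}_t(x)\leq (\mu(x)+2\hat{\mu}^{\mathrm{UB}}_t)/3\leq 1$ for any played arm. Hence whenever an active unplayed arm exists, the selection rule chooses it, and each arm is therefore played at most once per phase; in particular $n_t(x)\leq 1\leq c_0\log t$ whenever $\hat{r}_t(x)>0$, with $c_0=1$. With both hypotheses of Lemma~\ref{lm:confRad-generalized} verified for $\beta=0$, the regret bound in~\eqref{eq:confRad-generalized} follows.

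The main obstacle I anticipate is those rounds in which the currently active arms already cover $X$ and no activation is triggered, so the selection rule would default to the best-known arm and could inflate its play count beyond the polylogarithmic budget required by~\eqref{eq:confRad-small}. This is resolved by taking $\hat{\mu}^{\mathrm{UB}}_t$ to be the tight Lipschitz upper bound $\sup_{y\in X}\min_{x\text{ played}}\bigl(\mu(x)+\mD(x,y)\bigr)$: as long as this quantity strictly exceeds $\hat{\mu}^{\ast}_t$ there must exist an uncovered candidate maximum, which triggers further activation and prevents the default-to-best behavior; conversely, once $\hat{\mu}^{\mathrm{UB}}_t=\mu^{\ast}$, the optimal arm has $\hat{r}_t=0$ and~\eqref{eq:confRad-small} is vacuous. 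Combining these two observations closes the loop and yields~\eqref{eq:confRad-generalized} with $\beta=0$.
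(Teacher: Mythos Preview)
Your plan does not actually verify the hypotheses of Lemma~\ref{lm:confRad-generalized}. Condition~\eqref{eq:confRad-small} is meant to be a structural property of the pair $(\hat r_t,n_t)$: from a bound on $\hat r_t(x)$ one must be able to read off a bound on $n_t(x)$, \emph{uniformly in the algorithm's trajectory}. But your radius $\hat r_t(x)=(\hat\mu^{\mathrm{UB}}_t-\mu(x))/3$ does not depend on $n_t(x)$ at all once the arm has been played, so no such implication holds. What you do in the last two paragraphs is abandon the lemma and argue directly about the algorithm's dynamics (unplayed arms have higher index, hence each arm is played once, etc.). That can be made to work, but it is a fresh analysis rather than an invocation of Lemma~\ref{lm:confRad-generalized}, and it hinges on computing a global Lipschitz envelope $\hat\mu^{\mathrm{UB}}_t=\sup_{y}\min_{x\text{ played}}(\mu(x)+\mD(x,y))$ each round --- an operation that is neither needed nor supported by the covering-oracle access model, and whose ``once $\hat\mu^{\mathrm{UB}}_t=\mu^*$'' terminal case may never occur exactly.

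The paper's route is far shorter: deterministic rewards are simply the special case of the point-mass example (Corollary~\ref{cor:confRad-pointMass}) with $\PP$ a unit mass at $0$. That construction keeps the sample-average estimator and sets $\hat r_t(x)=r_t(x)\,\mathbf{1}\{n_t(x)\le c_0\log t\}$; because the radius collapses to zero once $n_t(x)$ exceeds the threshold, a positive $\hat r_t(x)$ forces $n_t(x)\le c_0\log t$, which is exactly~\eqref{eq:confRad-small} with $\beta=0$. Lemma~\ref{lm:confRad-generalized} then applies verbatim, with no envelope computation and no separate dynamical argument.
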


\xhdr{Example: noise distribution with a point mass.}
Consider noise distributions \PP\ having at least one \emph{point mass}: a point $z\in\R$ of positive probability mass: $\PP(z)>0$. (Deterministic rewards correspond to the special case $\PP(0)=1$).

\begin{corollary}\label{cor:confRad-pointMass}
Consider the \problem (relaxed as in Theorem~\ref{thm:zooming-dim-relaxed}). Assume mean-zero noise distribution with at least one point mass. Then the zooming algorithm with suitably defined estimator $\hat{\mu}_t$ and confidence radius $\hat{r}_t$ achieve regret bound~\refeq{eq:confRad-generalized} with $\beta=0$.
\end{corollary}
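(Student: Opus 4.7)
The plan is to exploit the point mass of $\PP$ to recover $\mu(x)$ \emph{exactly} once $x$ has been played enough times, generalizing the construction that underlies Corollary~\ref{cor:confRad-deterministic} (where $\PP$ is the point mass at $0$). I would let $p_0 := \sup_{z \in \R} \PP(\{z\}) > 0$; since $\sum_z \PP(\{z\}) \leq 1$, at most finitely many $z$ can have $\PP(\{z\}) \geq p_0/2$, so the supremum is attained. Set $p_1 := \sup\{\PP(\{z\}) : \PP(\{z\}) < p_0\}$; the same counting argument gives $p_1 < p_0$, hence $\gamma := (p_0 - p_1)/2 > 0$. Let $z^{(1)}$ be the smallest (of the finitely many) $z$ with $\PP(\{z\}) = p_0$; both $z^{(1)}$ and $\gamma$ are computable from the known $\PP$.

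Given the samples $Y_1, \ldots, Y_n$ observed from arm $x$ (with $n = n_t(x)$), define the empirical mass $J(y) := \tfrac{1}{n}\,|\{i : Y_i = y\}|$ and set
\begin{align*}
    v^* \;:=\; \min\{\,y \in \R :\; J(y) \geq p_0 - \gamma\,\},
    \qquad
    \hat{\mu}_t(x) \;:=\; v^* - z^{(1)},
\end{align*}
taking $\hat{\mu}_t(x)$ arbitrarily in $[0,1]$ if the set above is empty. Define $\hat{r}_t(x) := 1$ whenever $n_t(x) < C \iPhase / \gamma^2$, and $\hat{r}_t(x) := 0$ otherwise, for an absolute constant $C$ to be chosen.

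The core claim will be that once $n \geq C \iPhase / \gamma^2$, we have $\hat{\mu}_t(x) = \mu(x)$ exactly with probability at least $1 - T^{-4}$, where $T = 2^{\iPhase}$. The argument uses the Dvoretzky-Kiefer-Wolfowitz (DKW) inequality applied to the samples from $x$, which are i.i.d.\ copies of $\mu(x) + \PP$: with the stated probability, $\sup_y |\hat{F}_n(y) - F_Y(y)| \leq \eps_n = O(\sqrt{\iPhase/n})$, where $\hat{F}_n, F_Y$ are the empirical and true CDFs. Since each difference $J(y) - \PP(\{y - \mu(x)\})$ is a difference of two values of $\hat{F}_n - F_Y$, this yields $\sup_y |J(y) - \PP(\{y - \mu(x)\})| \leq 2\eps_n < \gamma/2$ for $C$ large enough. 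Consequently $J(y) > p_0 - \gamma$ precisely for $y$ in the translated set $\mu(x) + \{z : \PP(\{z\}) = p_0\}$, whose minimum is $\mu(x) + z^{(1)}$; thus $v^* = \mu(x) + z^{(1)}$ and $\hat{\mu}_t(x) = \mu(x)$.

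A union bound over the at most $T$ arms and $T$ rounds of phase \iPhase then establishes condition~\eqref{eq:confRad-gen}; for $n_t(x) < C \iPhase / \gamma^2$ the trivial bound $|\hat{\mu}_t(x) - \mu(x)| \leq 1 = \hat{r}_t(x)$ applies since $\mu(x) \in [0,1]$. Condition~\eqref{eq:confRad-small} holds with $\beta = 0$ and $c_0 = C / \gamma^2$, because $\hat{r}_t(x) > 0$ forces $n_t(x) \leq C \iPhase / \gamma^2$. Substituting into Lemma~\ref{lm:confRad-generalized} then yields the regret bound~\eqref{eq:confRad-generalized} with $\beta = 0$. \textbf{The main obstacle} will be that $\PP$ can have countably many point masses or a continuous part, which foils any naive per-mass union bound; DKW sidesteps this by supplying a single uniform concentration for all empirical-vs-true jumps simultaneously.
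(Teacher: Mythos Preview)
Your approach is essentially the paper's: identify the maximal point-mass value $p_0=p$ and the gap to the next value $p_1=q$, and after $\Theta\bigl(\log t/(p-q)^2\bigr)$ samples use empirical frequencies to pin down the translated set $\mu(x)+\{z:\PP(\{z\})=p_0\}$ exactly, then read off $\mu(x)$ by subtracting a known element (you take the minimum, the paper takes the maximum). The one genuine difference is the concentration step. The paper partitions $\R\setminus S$ into $O(|S|+1/q)$ intervals or points each of $\PP$-measure at most $q$, applies Chernoff to each piece, and takes a union bound; this keeps everything within the elementary Chernoff toolkit already used in the paper, at the price of an ad hoc partition. Your DKW argument bounds $\sup_y\lvert J(y)-\PP(\{y-\mu(x)\})\rvert$ in a single stroke via two CDF evaluations, so the ``main obstacle'' you flagged (countably many atoms plus a continuous part) simply never arises. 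Both routes yield $\beta=0$ with $c_0$ depending only on the known $\PP$.
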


\begin{proof}
We will show that we can use a confidence radius
    $\hat{r}_t(u)= r_t(u)\,\indicator{n_t(u)\leq c_\PP \log t}$,
for some constant $c_\PP$ that depends only on $\PP$. This implies regret bound~\refeq{eq:confRad-generalized} with $\beta=0$.

Indeed, let $p = \max_{z\in \R} \PP(z)$ be the largest point mass in distribution $\PP$, and $q = \max_{z\in\R:\, \PP(z)<p} \PP(z)$ be the second largest point mass. Let
    $S = \{ z\in \R: \PP(z)=p\}$,
and let $k = |S| + \frac1q$ if $q>0$, or $k=|S|$ if $q=0$.
Then for some
    $c_\PP = \Theta\left(\tfrac{\log(|S|+k)}{p-q} \right)$,
it suffices to have $n\geq c_\PP \log t$ independent samples from $\PP$  to ensure that with probability at least $1-t^{-4}$ each number $z\in S$ is sampled at least $n(p+q)/2$ times, whereas any number $z\not\in S$ is sampled less often.%
\footnote{To prove that each number $z\not\in S$ is sampled less than $n(p+q)/2$ times when $q>0$, we need to be somewhat careful in how we apply the Union Bound. It is possible to partition the set $\R\setminus S$ into at most $O(|S|+\tfrac{1}{q})$ measurable subsets, namely intervals or points, whose measure is at most $q$ (and at least $q/2$). Apply Chernoff bound to each subset separately, then take the Union Bound.}

For a given arm $x$ and time $t$, we define a new estimator $\hat{\mu}_t(x)$ as follows. Let $n = n_t(x)$ be the number of rewards from $x$ so far. If $n<c_\PP \log t$, use the sample average: let $\hat{\mu}_t(x) = \mu_t(x)$. Else, let $R$ be the set of rewards that have appeared at least $n(p+q)/2$ times. Then $R = \mu(x)+S$ with probability at least $1-t^{-4}$. In particular,
    $\max(R) = \mu(x) + \max(S)$.
So we can define
    $\hat{\mu}_t(x) = \max(R)-\max(S)$.
\end{proof}

\xhdr{Example: noise distributions with a sharp peak.}
If the noise distribution $\PP$ has a sharp peak around $0$, then small regions around this peak can be identified more efficiently than using the standard confidence radius $r_t$.

More precisely, suppose $\PP$ has a probability density function $f(z)$ which is symmetric around $0$ and non-increasing for $z>0$, and suppose $f(z)$ has a sharp peak:
	$ f(z) = \Theta(|z|^{-\alpha})$
on some open neighborhood of $0$, for some constant $\alpha\in (0,1)$. We will show that we can use a new confidence radius
    $\hat{r}_t(x) = C\,(\iPhase/n_t(x))^{1/(1-\alpha)}$,
for a sufficiently high constant $C$, which leads to regret bound~\refeq{eq:confRad-generalized} with $\beta = 1-\alpha$.

Fix arm $x$ and time $t$. We define the estimator $\hat{\mu}_t(x)$ as follows.
Let $S$ be the multiset of rewards received from arm $x$ so far. Let
    $r = \tfrac12\, \hat{r}_t(x)$.
Cover the $[0,1]$ interval with $\cel{1/r}$ subintervals $I_j = [jr,\, (j+1)r)$.
Pick the subinterval that has most points from $S$ (break ties arbitrarily), and define $\hat{\mu}_t(x)$ as some point in this subinterval.

Let us show that
    $|\mu(x) - \hat{\mu}_t(x)| \leq \hat{r}_t(x)$
with high probability. Let $I_j$ be the subinterval that contains $\mu(x)$. Let
    $n = n_t(x)$
be the number of times arm $x$ has been played so far;
note that
    $n > \Omega(C\, r^{\alpha-1}\,\log t)$.
By Chernoff bounds, for a sufficiently high constant $C$, it holds that with probability at least $1-t^{-4}$ subinterval $I_j$ contains more points from $S$ than any other subinterval $I_\ell$ such that $|j-\ell| \geq 2$. Conditional on this high-probability event, the estimate
    $\hat{\mu}_t(x)$
lies in subinterval $I_\ell$ such that $|j-\ell|\leq 1$, which implies that
    $|\mu(x) - \hat{\mu}_t(x)| \leq 2r$.

%%%%%%%%%%%%%
%%%%%%%%%%%%%%%%%%%%%%%%%%%%%%%%%%%%%%%%
%%%%%%%%%%%%%%%%%%%%%%%%%%%%%%%%%%%%%%%%
%%%%%%%%%%%%%%%%%%%%%%%%%%%%%%%%%%%%%%%%
\section{Optimal per-metric performance}
\label{sec:pmo}

This section is concerned with Question (Q1) raised in Section~\ref{subsec:intro-scope}: What is the best possible algorithm for the
\problem on a given metric space $(X,\mD)$.  We consider the worst-case regret of a given algorithm over all possible problem instances on $(X,\mD)$.%
\footnote{Formally, we can define the \emph{per-metric performance} of an algorithm on a given metric space as the worst-case regret of this algorithm over all problem instances on this metric space.}
We focus on minimizing the exponent $\gamma$ such that for each payoff function $\mu$ the algorithm's regret is
    $R(t) \leq t^\gamma$
for all $t\geq t_0(\mu)$. With Theorem~\ref{thm:intro-covDim} in mind, we will use a more focused notation: we define the \emph{regret dimension} of an algorithm on $(X,\mD)$ as, essentially, the smallest $d\geq 0$ such that one can achieve the exponent
    $\gamma = \tfrac{d+1}{d+2}$.

\OMIT{%%%
\begin{definition}\label{def:tractability}
Consider the \problem on a fixed metric space. A bandit algorithm is \emph{$f(t)$-tractable} if its regret satisfies
	$R(t) \leq C_\mu\, f(t)$ for all $t$,
for some constant $C_\mu$ that can depend on the payoff function $\mu$. The problem is \emph{$f(t)$-tractable} if such an algorithm exists.
\end{definition}
} %%%%%%

\begin{definition}
Consider the \problem on a given metric space $(X,\mD)$. For algorithm \A\ and payoff function $\mu$, define the \emph{instance-specific regret dimension} of $\A$ as
\begin{align*}
\RegretDim_{\mu}(\A)
    &= \inf \{ {d\geq 0} \;\; \mid  \;\; \exists t_0=t_0(\mu)\quad
	       R_\A(t) \leq t^{1-1/(d+2)} \quad \text{for all $t\geq t_0$}  \} \\
    &= \inf \{ {d\geq 0} \;\; \mid   \;\; \exists C=C(\mu) \quad
	       R_\A(t) \leq C\,t^{1-1/(d+2)} \quad \text{for all $t$} \} .
\end{align*}
The \emph{regret dimension} of \A\ is
	$\RegretDim(\A) = \sup_{\mu}\, \RegretDim_{\mu}(\A)$,
where the supremum is over all payoff functions $\mu$.
\end{definition}

Thus, according to  Theorem~\ref{thm:intro-covDim}, the regret dimension of \NaiveAlg is at most the covering dimension of the metric space. We ask: {\bf\em is it possible to achieve a better regret dimension}, perhaps using a more sophisticated algorithm? We show that this is indeed the case. Moreover, we provide an algorithm such that for any given metric space its regret dimension is arbitrarily close to optimal. Our main result as follows:

\begin{theorem} \label{thm:pmo}
Consider the \problem on a compact metric space $(X,\mD)$. Then for any $d > \MaxMinCOV(X)$ then there exists a bandit algorithm $\A$ whose regret dimension is at most $d$; moreover, the instance-specific regret dimension of $\A$ is at most the zooming dimension. No algorithm can have regret dimension strictly less than $\MaxMinCOV(X)$.
\end{theorem}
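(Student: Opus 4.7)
The plan is to establish the upper and lower bounds separately, in a way that a single algorithm delivers both the worst-case per-metric guarantee and the sharper instance-specific zooming-dimension guarantee. For the upper bound, the central idea is to wrap the zooming algorithm of Section~\ref{sec:adaptive-exploration} in a meta-layer that exploits a transfinite decomposition of the metric space. First I would build a decreasing chain $X = S_0 \supseteq S_1 \supseteq \cdots$ indexed by ordinals as follows: at successor steps, peel off an open, relatively ``fat'' subset $S_\lambda \setminus S_{\lambda+1}$ whose covering dimension strictly exceeds $d$ (if no such subset exists the process terminates); at limit ordinals, intersect. Compactness forces the chain to stabilize at some ordinal $\lambda^\star$, and by construction every relatively-open subset of the residual set $S_{\lambda^\star}$ has covering dimension at most $d$. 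This decomposition, whose existence is driven by the inequality $d > \MaxMinCOV(X)$, is the combinatorial scaffold controlling the algorithm.

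The algorithm then runs in exponentially growing phases. In each phase $i$ it maintains an ordinal guess $\lambda_i$ for the largest ordinal such that $S_{\lambda_i}$ still contains a near-optimal arm, and executes a \emph{quota-limited zooming} procedure: the ordinary zooming algorithm, but capping the number of active arms outside $S_{\lambda_i}$ at a quota calibrated to the phase length. The quota is tuned so that, when $\lambda_i$ is correctly chosen, arms outside $S_{\lambda_i}$ contribute only negligibly to regret, while arms inside $S_{\lambda_i}$ are governed by the bound of Theorem~\ref{thm:zooming-dim} applied to a space whose effective covering dimension is at most $d$. The phase concludes by using the observed payoffs to update $\lambda_{i+1}$. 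The analysis has two parts: first, conditional on $(\lambda_i)$ being eventually constant at its correct value, the zooming analysis yields the exponent $\tfrac{d+1}{d+2}$ and, because the quota never triggers on instances where the zooming dimension is small, simultaneously recovers the instance-specific zooming-dimension bound; second, one shows that $(\lambda_i)$ is indeed almost surely eventually constant, since an underestimate is eventually betrayed by a high index appearing outside $S_{\lambda_i}$, while the converse event has vanishing probability. The doubling phase lengths absorb the finite prefix of incorrect guesses into the instance-dependent constant.

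For the lower bound, the plan is to construct a $d$-dimensional \emph{ball-tree} in $(X,\mD)$ whenever $d < \MaxMinCOV(X)$: an infinite, nested, branching family of balls at geometrically decreasing radii whose packing structure supports a multi-scale needle-in-haystack instance. The key metric-topological lemma is that such a ball-tree exists precisely when $\MaxMinCOV(X) \geq d$; one direction motivates the decomposition used in the upper bound, and the other direction is a transfinite-induction argument running the decomposition in reverse. Given a ball-tree, I would construct an adversarial family of payoff functions, one per infinite branch, each planting a slight optimum at the end of its branch, and apply the KL-divergence technique encapsulated in Theorem~\ref{thm:LB-technique-MAB} to argue that no algorithm can identify the correct branch quickly enough, forcing regret $\Omega(t^{(d+1)/(d+2)})$.

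The hardest part will be the ``infinitely often'' requirement demanded by the formulation with instance-dependent constants: it is not enough to defeat a given algorithm at one time $t$ on some instance; one must exhibit a single instance on which the algorithm's regret exceeds $C\, t^{(d+1)/(d+2)}$ at infinitely many $t$, for every $C$. This will require carefully chaining the hard subfamilies across a sequence of time scales so that every algorithm is driven into the needle-in-haystack trap at infinitely many scales along some branch selected adversarially after seeing the algorithm's behavior. The secondary obstacle is the ball-tree existence lemma, which is the genuinely new metric-topology contribution of this part of the paper and whose proof must be tightly coupled with the transfinite decomposition used on the algorithmic side, so that the two constructions meet at the common boundary set by $\MaxMinCOV(X)$.
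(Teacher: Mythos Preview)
Your plan matches the paper's proof almost exactly: the upper bound goes through a transfinite fatness decomposition (Proposition~\ref{prop:fatness-dim}) feeding a quota-limited zooming algorithm that maintains a target ordinal updated each phase (Algorithm~\ref{alg:zooming-pmo}, Theorem~\ref{thm:PMO}), and the lower bound goes through a strength-$d$ ball-tree (Lemma~\ref{lm:pmo-LM-MaxMinCOV}) plus the $(\eps,k)$-ensemble machinery and a Borel--Cantelli argument for the ``infinitely often'' clause (Lemma~\ref{lm:pmo-LB-balltree}).

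One correction worth flagging: your description of the decomposition is reversed. At each successor step the paper \emph{removes the thin points} (those possessing a relatively open neighborhood of covering dimension below $d$) and \emph{retains} the fat core as $S_{\lambda+1}$; the defining property is $\COV(S_\lambda \setminus U) \leq d$ for every open $U \supseteq S_{\lambda+1}$, and the hypothesis $d > \MaxMinCOV(X)$ is what forces the chain to reach $S_\beta = \emptyset$ rather than stabilize at a nonempty low-dimensional residual. Relatedly, the paper does not prove a biconditional linking ball-trees to $\MaxMinCOV$; the two directions are handled by separate one-sided structural lemmas (ball-tree existence from $d < \MaxMinCOV(X)$, decomposition existence from $d > \MaxMinCOV(X)$), and these together pin down the threshold.
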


Here $\MaxMinCOV(X)$ is the \emph{max-min-covering dimension} which we defined in Section~\ref{sec:intro}. We show that $\MaxMinCOV(X)$ can be arbitrarily small compared to $\COV(X)$.

The rest of this section is organized as follows. The first two subsections are concerned with the lower bound: in Section~\ref{sec:PMO-LB} we develop a lower bound on regret dimension which relies on a certain ``tree of balls'' structure, and in Section~\ref{sec:pmo-MaxMinCOV} we derive the existence of this structure from the max-min-covering dimension. A lengthy KL-divergence argument (which is similar to prior work) is deferred to Section~\ref{sec:KL-divergence}. The next two subsections deal with an instructive special case: in Section~\ref{sec:fat-subset} we define a family of metric spaces for which $\MaxMinCOV(X)$ can be arbitrarily small compared to $\COV(X)$, and in Section~\ref{sec:warm-up} we design a version of the zooming algorithm tailored to such metric spaces. Finally, in Section~\ref{sec:fatness-PMO} we design and analyze an algorithm whose regret dimension is arbitrarily close to $\MaxMinCOV(X)$. We use the max-min-covering dimension to derive the existence of a certain decomposition of the metric space which we then take advantage of algorithmically. Our per-metric optimal algorithm builds on the machinery developed for the special case. Collectively, these results amount to Theorem~\ref{thm:pmo}.

\OMIT{%%%%%%%%%
One can prove a matching lower bound: $\RegretDim(\A)\geq d$. In fact, this bound applies to any algorithm which uses a generic $k$-armed bandit algorithm on an evenly spaced sample of the strategy space, in the sense made precise in Section~\ref{sec:LB-naive}. In Section~\ref{sec:LB-naive} we prove a matching lower bound for the na\"ive algorithm.

\subsection{Lower bound for the na\"ive-type algorithms}
\label{sec:LB-naive}

We consider a more general version of the na\"ive algorithm described above, abstracting away the crucial idea of using a generic $k$-armed bandit algorithm on an evenly spaced sample of the strategy space.

\begin{definition}
A bandit algorithm is called \emph{uniformly exploring} if it proceeds in phases such that in the beginning of each phase $i$ it chooses $\delta_i>0$, a $\delta_i$-net $\mathcal{N}_i$ of the strategy space, and a $k$-armed bandit algorithm \A, $k=|\mathcal{N}_i|$
\end{definition}

We prove that the covering dimension is the best regret dimension achievable by a na\"ive algorithm.

\begin{theorem}\label{thm:naive-alg}
Consider the na\"ive algorithm for the Lipschitz MAB problem on a metric space of covering dimension $d$. The regret dimension of this algorithm is exactly $d$.
\end{theorem}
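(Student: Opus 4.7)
The upper bound $\RegretDim(\NaiveAlg)\le d$ is immediate from Theorem~\ref{thm:naiveAlg}: on a metric space of covering dimension $d$ one has $R(t)=O((c\log t)^{1/(d+2)}\,t^{1-1/(d+2)})$ for every payoff function, and the polylogarithmic factor is absorbed by the infimum in the definition of $\RegretDim_\mu$, so $\RegretDim_\mu(\NaiveAlg)\le d$ uniformly over $\mu$.

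For the matching lower bound I would work on the concrete metric space $(X,\mD)=([0,1]^d,\ell_\infty)$, which has covering dimension exactly $d$, and exhibit a single $1$-Lipschitz payoff function $\mu^\star$ with $\RegretDim_{\mu^\star}(\NaiveAlg)\ge d$. Consider the family of peak functions $\mu_{x^*}(x)=1-\mD(x,x^*)$ for $x^*\in X$, each attaining its unique maximum $1$ at $x^*$. Because $\NaiveAlg$ plays only arms of its chosen $\delta_i$-net $\mathcal N_i$ during phase $i$, with $\delta_i=\Theta((2^i/\log 2^i)^{-1/(d+2)})$ per the algorithm's specification, the best mean payoff it can achieve in that phase is $1-\mD(x^*,\mathcal N_i)$, so the phase-$i$ regret satisfies $R_i\ge \mD(x^*,\mathcal N_i)\cdot 2^i$. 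The key step is to produce one deterministic $x^*$ for which $\mD(x^*,\mathcal N_i)=\Omega(\delta_i)$ holds for infinitely many $i$.

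I would establish this through a volumetric argument coupled with a limsup inequality. If $x^*$ is sampled uniformly from $X$, then the union of $\ell_\infty$-balls of radius $\delta_i/C$ around the at most $c\,\delta_i^{-d}$ points of $\mathcal N_i$ has Lebesgue measure at most $c(2/C)^d\le 1/2$ for a sufficiently large constant $C=C(c,d)$. Hence the event $E_i\triangleq\{\mD(x^*,\mathcal N_i)\ge \delta_i/C\}$ satisfies $\Pr_{x^*}[E_i]\ge 1/2$ for \emph{every} $i$, and on $E_i$ the phase-$i$ regret is $\Omega(2^{i(d+1)/(d+2)}/\mathrm{polylog}(2^i))$. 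Applying the general inequality $\Pr[\limsup_i E_i]\ge \limsup_i \Pr[E_i]$---which holds for arbitrary events, since $\bigcup_{j\ge i}E_j\supseteq E_i$ and one takes the limit as $i\to\infty$ of the decreasing sequence of unions---we obtain $\Pr_{x^*}[E_i \text{ i.o.}]\ge 1/2$. Any deterministic $x^*$ in that event delivers $\mu^\star=\mu_{x^*}$ with $R(t)\ge t^{(d+1)/(d+2)}/\mathrm{polylog}(t)$ at $t=2^i$ for infinitely many $i$, which exceeds $t^{(d'+1)/(d'+2)}$ eventually for every $d'<d$, yielding $\RegretDim_{\mu^\star}(\NaiveAlg)\ge d$.

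The hard part is precisely this decoupling step: a phase-by-phase adversary easily produces a \emph{different} bad $x^*$ for each $i$, but the regret dimension is defined with respect to a single fixed $\mu$. The randomization-and-$\limsup$ device is what converts "bad with constant probability in each phase" into "bad infinitely often for some single instance," and it crucially avoids any independence assumption across phases, using only the uniform lower bound $\Pr_{x^*}[E_i]\ge 1/2$. The remaining ingredients---that the restriction of $\NaiveAlg$ to $\mathcal N_i$ really does impose a discretization floor of $\mD(x^*,\mathcal N_i)$ per round, and that cumulative regret at time $2^i$ is at least the regret accrued in phase $i$ alone---are routine.
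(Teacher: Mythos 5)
A preliminary remark on the comparison itself: the paper does not actually prove this statement. The theorem sits inside a commented-out block, and its proof environment contains only the placeholder ``add proof!'', so there is no paper proof to measure you against; your argument has to stand on its own. On its own terms, the core of your lower bound is sound \emph{for the space you chose}. The reduction of phase-$i$ regret to the discretization floor $\mD(x^*,\mathcal N_i)\cdot 2^i$ is legitimate precisely because \NaiveAlg's mesh is non-adaptive and its $\delta_i$ is pinned to $\tilde\Theta(2^{-i/(d+2)})$; you should say so explicitly, since both facts are load-bearing. The peak function $1-\mD(x,x^*)$ leaks the location of $x^*$ through every observed payoff, so a history-dependent mesh would place a net point next to $x^*$ in later phases and kill your bound, and an algorithm free to choose a much finer $\delta_i$ would force you to also account for the $\sqrt{Kt}$ needle-in-haystack cost, which your argument never touches. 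The reverse-Fatou step $\Pr[\limsup_i E_i]\ge\limsup_i\Pr[E_i]$ is correct and is exactly the right device for converting ``bad with probability $\tfrac12$ in each phase'' into a single instance that is bad infinitely often without any independence across phases; it is also consistent in spirit with how the paper's other lower bounds (e.g.\ Lemma~\ref{lm:pmo-LB-balltree}) extract a single bad instance from a distribution.

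The genuine gap is generality. The theorem quantifies over an \emph{arbitrary} metric space of covering dimension $d$ (including fractional $d$), whereas your lower bound is proved only for $([0,1]^d,\ell_\infty)$ with $d\in\N$, and the decisive step --- ``the union of the $c\,\delta_i^{-d}$ shrunken balls misses a constant fraction of candidate $x^*$'s in every phase'' --- is a Lebesgue-measure computation with no canonical analogue in a general metric space. Worse, covering dimension is a $\limsup$ over scales: Lemma~\ref{lem:packing-folklore} guarantees large packings only at \emph{some} arbitrarily small scales, not at the algorithm's prescribed scales $\delta_i$, and without a Frostman-type measure there is no obvious way to show that a single $x^*$ is $\Omega(\delta_i)$-far from $\mathcal N_i$ for infinitely many $i$. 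Closing this in general requires a different mechanism --- e.g.\ a nested sequence of balls, each containing points far from infinitely many of the nets, which is essentially the ball-tree/diagonalization machinery of Section~\ref{sec:PMO-LB}. So what you have is a correct and pleasantly elementary proof of the cube case (extendable to $([0,1],\ell_1^{1/d})$ for fractional $d$), but not yet a proof of the statement as written.
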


\begin{proofof}{\ref{thm:naive-alg}}
\bf{add proof!}
\end{proofof}
} %%%%%%%

%%%%%%%%%%%%%%%%%%%%%%%
\subsection{Lower bound on regret dimension}
\label{sec:PMO-LB}

It is known \citep{bandits-exp3} that a worst-case instance of the $K$-armed bandit problem consists of $K-1$ arms with identical payoff distributions, and one which is slightly better.  We refer to this as a ``needle-in-haystack'' instance.  Our lower bound relies on a \emph{multi-scale} needle-in-haystack instance in which there are $K$ disjoint open sets, and $K-1$ of them consist of arms with identical payoff distributions, but in the remaining open set there are arms whose payoff is slightly better.  Moreover, this special open set contains $K' \gg K$ disjoint subsets, only one of which contains arms superior to the others, and so on down through infinitely many levels of recursion.

In more precise terms, we require the existence of a certain structure: an infinitely deep rooted tree whose nodes correspond to balls in the metric space, so that for any parent ball $B$ the children balls are disjoint subsets of $B$.

\begin{definition}[ball-tree]\label{def:ball-tree}
Fix a metric space $(X,\mD)$. Let an \emph{\myball} be a pair $w=(x,r)$, where $x\in X$ is a ``center'' and $r\in (0,1]$ is a ``radius''.%
\footnote{Note that an open ball $B(x,r)$ denotes a subset of the metric space, so there can be distinct \myballs $(x,r)$ and $(x',r')$ such that $B(x,r) = B(x',r')$. We use \myballs to avoid this ambiguity.}
A \emph{ball-tree} is an infinite rooted tree where each node corresponds to an \myball. The following properties are required:
\begin{OneLiners}
\item all children of the same parent have the same radius, which is at most a quarter of the parent's.
\item if $(x,r)$ is a parent of $(x',r')$ then
    $\mD(x,x') + r' < r/2$.
\item if $(x, r_x)$ and $(y,r_y)$ are siblings, then
	$r_x + r_y < \mD(x,y)$.
\end{OneLiners}
The ball-tree has \emph{strength} $d\geq 0$ if each tree node with children of radius $r$ has at least $\max(2,r^{-d})$ children.
\end{definition}

Once there exists a ball-tree of strength $d$, we can show that, essentially, regret
    $O(t^{1-1/(d+2)})$
is the best possible. More precisely, we construct a probability distribution over problem instances which is hard for every given algorithm. Intuitively, this is the best possible ``shape'' of a regret bound since, obviously, a single problem instance cannot be hard for every algorithm.

\begin{lemma} \label{lm:pmo-LB-balltree}
Consider the \problem on a metric space $(X,\mD)$ such that there exists a ball-tree of strength $d\geq 0$. Assume 0-1 payoffs (i.e., the payoff of each arm is either $1$ or $0$). Then there exist a distribution $\mathcal{P}$ over problem instances $\mu$ and an absolute constant $C>0$ such that for any bandit algorithm $\A$ the following holds:
\begin{align}\label{eq:LB-stronger}
 \Pr_{\mu\in\mathcal{P}}
	\left[\; R_{(\A,\,\mu)}(t) \geq C\,t^{1-1/(d+2)}\; \text{for infinitely many $t$} \;\right] = 1.
\end{align}
It follows that the regret dimension of any algorithm is at least $d$.
\end{lemma}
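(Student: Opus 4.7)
The plan is to construct a random ``multi-scale needle-in-haystack'' distribution $\mathcal{P}$ tailored to the ball-tree, and then show via a conditional Borel--Cantelli argument that the regret exceeds $C t^{1-1/(d+2)}$ at infinitely many times almost surely. For the construction, I draw a random infinite descending path $(v_0, v_1, v_2, \ldots)$ from the root by choosing, at each level $k$, a uniformly random child of $v_{k-1}$, all choices independent. Writing $r_k$ for the radius at level $k$ and $N_k \geq r_k^{-d}$ for the branching, I set $\Delta_k := c_0\, r_k$ (with $c_0$ a small absolute constant) and define $\mu := \tfrac{1}{2} + \sum_{k \geq 1} \phi_k$, where $\phi_k$ is a bump function of height $\Delta_k$ whose plateau contains $B_{v_{k+1}}$ and whose support lies in $B_{v_k}$. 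The ball-tree properties $\mD(x_{v_k}, x_{v_{k+1}}) + r_{k+1} < r_k/2$ and $r_{k+1} \leq r_k/4$ leave enough room for the bumps to nest so that at any point at most one $\phi_k$ lies in its transition region, making $\mu$ one-Lipschitz; the sibling-separation property guarantees that any arm outside $B_{v_k}$ is suboptimal by at least $\Omega(\Delta_k)$. I finally realize the 0-1 payoff assumption by drawing the reward from $\mathrm{Bernoulli}(\mu(x))$.

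Second, I prove a level-by-level lower bound. Conditionally on the path down to $v_{k-1}$, the algorithm faces an induced $N_k$-armed needle-in-haystack subproblem within $B_{v_{k-1}}$ with gap $\Delta_k = \Theta(r_k)$. Invoking the KL-divergence technique encapsulated in Theorem~\ref{thm:LB-technique-MAB}, for $T_k := c_1\, r_k^{-d-2}$ and some absolute constant $c_1 > 0$, with probability at least $\tfrac{1}{2}$ over the uniform choice of $v_k$ (conditionally on the past) a constant fraction of the algorithm's plays within $B_{v_{k-1}}$ during $[1, T_k]$ are at \emph{wrong} level-$k$ children, each incurring per-round regret $\Omega(\Delta_k) = \Omega(r_k)$. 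Rounds spent \emph{outside} $B_{v_{k-1}}$ cost at least $\Omega(r_{k-1}) \geq \Omega(r_k)$ per round, so combining both cases, on an event $E_k$ of probability $\geq \tfrac{1}{2}$ the total regret at time $t_k := T_k$ is at least a constant multiple of $r_k \cdot t_k = r_k^{-d-1} = t_k^{1 - 1/(d+2)}$.

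Third, to upgrade ``with probability at least $1/2$'' to ``infinitely often almost surely,'' let $\mathcal{F}_k$ be the $\sigma$-algebra generated by the algorithm's random bits and by the path choices $v_1, \ldots, v_{k-1}$. Because $v_k$ is uniform among the children of $v_{k-1}$ and independent of $\mathcal{F}_k$, the preceding paragraph gives $\Pr[E_k \mid \mathcal{F}_k] \geq \tfrac{1}{2}$ almost surely, so $\sum_k \Pr[E_k \mid \mathcal{F}_k] = \infty$ a.s. L\'evy's conditional Borel--Cantelli lemma then yields $\Pr[E_k \text{ i.o.}] = 1$, which, since $t_k \to \infty$, is exactly~\eqref{eq:LB-stronger}. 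The stated lower bound on the regret dimension of any algorithm follows immediately from the contrapositive.

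The most delicate step is the level-$k$ KL argument: I must control the actual number of plays at wrong level-$k$ children (not merely the posterior probability of identifying the correct child), and do so while conditioning on the entire past, including the algorithm's state shaped by earlier levels. Independence of $v_k$ from $\mathcal{F}_k$ makes the conditional application of Theorem~\ref{thm:LB-technique-MAB} valid, but still requires care to extract a \emph{regret} bound rather than a hypothesis-testing bound. A secondary technical point is the Lipschitz construction of $\mu$: ensuring that nested bumps of height $\Theta(r_k)$ simultaneously preserve the one-Lipschitz constraint uses the quantitative nesting $\mD(x_{v_k}, x_{v_{k+1}}) + r_{k+1} < r_k/2$ and the geometric shrinkage $r_{k+1} \leq r_k/4$ in an essential way.
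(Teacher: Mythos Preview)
Your proposal is correct and follows essentially the same route as the paper: a random end of the ball-tree defines $\mu$ via nested bump functions, the $(\eps,k)$-ensemble machinery of Theorem~\ref{thm:LB-technique-MAB} yields a level-$k$ regret lower bound that holds for at least half the children of each node, and a conditional Borel--Cantelli argument converts this into ``infinitely often almost surely.''

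One remark on the step you flag as most delicate. You worry about applying Theorem~\ref{thm:LB-technique-MAB} ``conditionally on the entire past, including the algorithm's state.'' The paper sidesteps this entirely: for each node $u$ it defines $\F(u_i)$ to be the set of \emph{all} payoff functions $\mu_{\mathbf{w}}$ whose end passes through child $u_i$. Theorem~\ref{thm:LB-technique-MAB} then furnishes, for the \emph{fixed} algorithm $\A$, a bad set $I_u$ of at least $k/2$ children such that the regret bound holds for \emph{every} $\mu \in \F(u_i)$ with $i \in I_u$ --- in particular for whatever deeper choices the random path makes. Thus no conditioning on the algorithm's state is needed; the only randomness in the event $\mathcal{E}_j$ is the uniform choice of child, and the conditional probability $\geq \tfrac12$ is immediate. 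Your decomposition into ``plays inside $B_{v_{k-1}}$'' versus ``plays outside'' is likewise unnecessary, since the ensemble theorem already delivers total regret directly. Finally, the paper replaces L\'evy's conditional Borel--Cantelli with the bare-hands observation that $\Pr[\cap_{j \geq j_0} \neg \mathcal{E}_j] \leq \prod_{j \geq j_0} \tfrac12 = 0$, which suffices here.
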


\OMIT{ %%%%%
\begin{align}\label{eq:LB-stronger}
 (\forall C>0,\;\gamma<\tfrac{d+1}{d+2})\quad
\Pr_{\mu\in\mathcal{P}}
	\left[\; (\exists t_0\; \forall t\geq t_0)\;
		R_{(\A,\,\mu)}(t) \leq C\,t^\gamma \;\right] = 0.
\end{align}
} %%%%%%%

For our purposes, a weaker version of~\eqref{eq:LB-stronger} suffices: for any algorithm $\A$ there exists a payoff function $\mu$ such that the event in~\refeq{eq:LB-stronger} holds (which implies $\RegretDim(\A)\geq d$). In Section~\ref{sec:dichotomies} we will also use this lower bound for $d=0$. In the rest of this subsection we prove Lemma~\ref{lm:pmo-LB-balltree}.

\xhdr{Randomized problem instance.}
Given a metric space $(X,\mD)$ with a ball-tree, we construct a distribution $\mP$ over payoff functions as follows. For each tree node $w = (x_0, r_0)$ define the \emph{bump function}
	$F_w: X \rightarrow [0,1]$ by
\begin{align}\label{eq:bump-fn}
F_w(x) = \begin{cases}
	\min\{ r_0 - \mD(x,x_0),\, r_0/2  \} & \text{if $x\in B(x_0, r_0)$}, \\
	0	& \text{otherwise.}
\end{cases}
\end{align}
This function constitutes a ``bump'' supported on $B(x_0,r_0)$.

An \emph{end} in a ball-tree is an infinite path from the root:
	$\mathbf{w} = (w_0, w_1, w_2,\,\ldots)$.
Let us define the payoff function induced by each node $w = w_j$ as
\begin{align*}
	\mu_w = \frac13+ \frac13\,\sum_{i=1}^j F_{w_i},
\end{align*}
and the payoff function induced by the end $\mathbf{w}$ as
\begin{align*} %\label{eq:PMO-LB-payoff-fn}
\mu_\mathbf{w}
    := \lim_{j\to \infty} \mu_{w_j}
    = \frac13+ \frac13\,\sum_{i=1}^\infty F_{w_i}.
\end{align*}
Let $\mP$ be the distribution over payoff functions $\mu_\mathbf{w}$ in which the end $\mathbf{w}$ is sampled uniformly at random from the ball-tree (that is, $w_0$ is the root, and each subsequent node $w_{i+1}$ is sampled independently and uniformly at random among the children of $w_i$).

Let us show that $\mu_\mathbf{w}$ is a valid payoff function for the \problem. First, $\mu_\mathbf{w}(x)\in[0,1]$ for each arm $x\in X$, and the sum in the definition of $\mu_\mathbf{w}$ converges, because for each $i\geq 1$, letting $r_i$ be the radius of $w_i$, we have
    $r_i\leq r_1/4^i$
and
    $F_{w_i}(x)\in [0,\; r_i]$.
In fact, it is easy to see that the payoff function induced by any node or end in the ball-tree is bounded on $[\tfrac13,\tfrac23]$.
Second, $\mu_\mathbf{w}$ is Lipschitz on $(X,\mD)$ due to Lemma~\ref{lm:LB-Lipschitz}, which we state and prove in Section~\ref{sec:pmo-Lipschitz} so as not to break the flow.

The salient property of our construction is as follows.

\begin{lemma}\label{lm:PMO-LB-salient}
Consider a tree node $u$ in a ball-tree. Let
    $u_1 \LDOTS u_k$
be the children of $u$, and let $r$ be their radius. Let
    $B_1 \LDOTS B_k$
be the corresponding balls. Fix an arbitrary child $u_i$, and let $\mathbf{w}$ be an arbitrary end in the ball-tree such that $u_i\in \mathbf{w}$. Then:
\begin{OneLiners}
\item[(i)] $\mu_\mathbf{w}$ coincides with $\mu_u$ on all $B_\ell$, $\ell\neq i$.

\item[(ii)] $\sup(\mu_\mathbf{w}, B_i) - \sup(\mu_u,X) \geq r/6$.

\item[(iii)] $0\leq \mu_i-\mu_u \leq r/3$.
\end{OneLiners}
\end{lemma}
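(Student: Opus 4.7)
The plan is to unpack the notation and reduce the three parts to elementary estimates on the bump functions. Place $u$ at depth $d$ in the end $\mathbf{w} = (w_0, w_1, \ldots)$, so that $u = w_d$ and $u_i = w_{d+1}$. Write $c_j, r_j$ for the center and radius of $w_j$; in particular $r_{d+1} = r$. By the definition of $\mu_u$ and $\mu_\mathbf{w}$, their difference is $\tfrac{1}{3}\sum_{j \geq d+1} F_{w_j}$.

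For (i), I will show that each $F_{w_j}$ with $j \geq d+1$ is supported in $B_i$, using ball-tree property~(2) inductively: each child's ball lies strictly inside half the parent's ball, so telescoping gives $B(c_j,r_j) \subseteq B(c_{d+1}, r_{d+1}) = B_i$ for every such $j$. Combined with property~(3), which makes sibling balls $B_\ell$ and $B_i$ disjoint, this immediately yields that $\mu_\mathbf{w}$ and $\mu_u$ agree on $B_\ell$ for each $\ell \neq i$.

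For (ii), the crux will be the claim that $\sup(\mu_u, X)$ is attained at $c_d$, with value $\tfrac{1}{3} + \tfrac{1}{6}\sum_{j=1}^d r_j$. I will verify this by showing $\mD(c_d, c_j) < r_j/2$ for every ancestor $w_j$ of $u$, via the telescoping estimate
\[
\mD(c_d, c_j) \;\leq\; \sum_{\ell=j+1}^d \mD(c_{\ell-1},c_\ell) \;<\; \sum_{\ell=j+1}^d \bigl(r_{\ell-1}/2 - r_\ell\bigr) \;<\; r_j/2,
\]
where the middle inequality uses property~(2) for each link and the last one follows since all the terms $r_{\ell}/2\,(\ell = j{+}1,\ldots,d{-}1)$ and $r_d$ left over in the telescope are positive. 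This places $c_d$ in the plateau region $\{F_{w_j} = r_j/2\}$ of every ancestor bump. Applying the identical estimate one level deeper at $c_{d+1}\in B_i$ gives $F_{w_j}(c_{d+1}) = r_j/2$ for all $j \leq d+1$, while all remaining bumps $F_{w_j}$ with $j > d+1$ contribute nonnegatively, so $\mu_\mathbf{w}(c_{d+1}) \geq \sup(\mu_u, X) + r_{d+1}/6 = \sup(\mu_u, X) + r/6$. Since $c_{d+1}\in B_i$, this gives (ii).

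For (iii), since $u_i$ is a child of $u$, $\mu_{u_i} = \mu_u + \tfrac{1}{3} F_{u_i}$, and $F_{u_i}$ takes values in $[0, r/2]$ by~\eqref{eq:bump-fn}, yielding $0 \leq \mu_{u_i} - \mu_u \leq r/6 \leq r/3$. The only substantive work in the whole proof is the telescoping estimate in (ii); parts (i) and (iii) are then mostly bookkeeping built on ball-tree properties~(2) and~(3).
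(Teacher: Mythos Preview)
Your arguments for (i) and (ii) are correct and essentially match the paper's; your telescoping estimate in (ii) is a more explicit version of the paper's one-line assertion that $\mu_u$ is constant on the half-ball $B(c_d,r_d/2)$ (proved later as Claim~\ref{cl:pmo-lipschitz-1}).

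For (iii), however, you have misread the notation. The symbol $\mu_i$ in the statement is not the node-induced function $\mu_{u_i}$ but the end-induced function $\mu_{\mathbf{w}}$; the paper's notation is admittedly sloppy, but the proof it gives sums over all $j\geq j_0$, and the application immediately after the lemma (showing the $\F(u_i)$ form an $(\tfrac{r}{6},k)$-ensemble) requires the bound for every $\mu_{\mathbf{w}}\in\F(u_i)$, not just for $\mu_{u_i}$. Your computation $\mu_{u_i}-\mu_u=\tfrac13 F_{u_i}\in[0,r/6]$ is true but not what is asked. The intended claim is $0\leq \mu_{\mathbf{w}}-\mu_u\leq r/3$, and the missing step is the geometric tail:
\[
\mu_{\mathbf{w}}-\mu_u \;=\; \tfrac13\sum_{j\geq d+1} F_{w_j}
\;\leq\; \tfrac13\sum_{j\geq d+1}\tfrac{r_j}{2}
\;\leq\; \tfrac{r}{6}\sum_{n\geq 0} 4^{-n}
\;=\; \tfrac{2r}{9} \;<\; \tfrac{r}{3},
\]
using the ball-tree property that child radii shrink by a factor of at least $4$.
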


\begin{proof}
Let $\mathbf{w} = (w_0, w_1,\, \ldots)$, and let $j_0$ be the depth of $u_i$ in the ball-tree. Then $u=w_{j_0-1}$ and $u_i = w_{j_0}$, and
    $ \mu_{\mathbf{w}} = \mu_u + \tfrac13\, F_{u_i} +
        \tfrac13\, \sum_{j>j_0} F_{w_j}$.

Let $B$ be the ball corresponding to $u$. Observe that the balls corresponding to tree nodes $w_j$, $j>j_0$ are contained in $B_i$. It follows that on $B\setminus B_i$ all functions $F_{w_j}$, $j\geq j_0$ are identically $0$, and consequently $\mu_{\mathbf{w}} = \mu_u$. Since the balls $B_1 \LDOTS B_k$ are pairwise disjoint, this implies part (i).

For part (ii), let $(x^*,r^*)$ be the \myball corresponding to tree node $u$. Note that $\mu_u$ attains its supremum on $B(x^*,r^*/2)$, and in fact is a constant on that set. Also, recall that $B_i \subset B(x^*,r^*/2)$ by definition of the ball-tree. Let $x_i$ be the center of $u_i$. Observe that on $B(x_i,r/2)$ we have
    $F_{u_i} = r/2$,
and therefore
    $\mu_\mathbf{w} \geq \mu_u + r/6$.

For part (iii), note that $\mu_i-\mu_0 = \sum_{j\geq j_0} F_{w_j}$, and the latter is at most
    $\sum_{j\geq j_0} \tfrac12\, r/4^{j-j_0} \leq r/3$.
\end{proof}

Below we use Lemma~\ref{lm:PMO-LB-salient} to derive a lower bound on regret.

\xhdr{Regret lower bounds via $(\eps,k)$-ensembles.}
We make use of the lower-bounding technique from \citet{bandits-exp3} for the basic $k$-armed bandit problem. For a cleaner exposition, we encapsulate the usage of this technique in a theorem.  This theorem is considerably  more general than the original lower bound in \cite{bandits-exp3}, but the underlying idea and the proof are very similar. The theorem formulation (mainly, Definition~\ref{def:eps-k-ensemble} below) is new and may be of independent interest.

We use a very general MAB setting where the algorithm is given a strategy set $X$ and a collection $\F$ of feasible payoff functions; we call it the \emph{feasible MAB problem} on $(X,\F)$. In our construction, $\F$ consists of all functions
$\mu:X\to [0,1]$ that are Lipschitz with respect to the metric space.
The lower bound relies on the existence of a collection of subsets of $\F$ with certain properties, as defined below. These subsets correspond to children of a given tree node in the ball-tree (we give a precise connection after we state the definition).

\begin{definition}\label{def:eps-k-ensemble}
Let $X$ be the strategy set and $\F$ be the set of all feasible payoff functions. An \emph{$(\eps,k)$-ensemble} is
 a collection of subsets $\F_1 \LDOTS \F_k \subset \F$ such that there exist mutually disjoint subsets
        $S_1 \LDOTS S_k \subset X$ and a
% number $\mu_0 \in [\tfrac13, \tfrac23]$
function $\mu_0 : X \to [\tfrac13, \tfrac23]$
% which satisfy the following:
%. Let
%        $S=\cup_{i=1}^k S_i$.
%Then
%\begin{OneLiners}
%\item on $X\setminus S$, any two functions in $\cup_i\, \F_i $ coincide, and are bounded from above by $\mu_0$.
%\item
such that
for each $i=1 \ldots k$ and each function $\mu_i\in \F_i$ the following holds:
(i) $\mu_i \equiv \mu_0$ on each $S_\ell$, $\ell\neq i$, and
(ii) $\sup(\mu_i, S_i) - \sup(\mu_0,X) \geq \eps$, and
(iii) $0\leq \mu_i-\mu_0\leq 2\eps$ on $S_i$.
\end{definition}

For each tree node $u$, let
    $\F(u) = \{\mu_{\mathbf{w}}: u\in \mathbf{w} \}$
be the set of all payoff functions induced by ends $\mathbf{w}$ that contain $u$. By Lemma~\ref{lm:PMO-LB-salient}, if $u_1 \LDOTS u_k$ are siblings whose radius is $r$, then
    $(\F(u_1) \LDOTS \F(u_k))$
form an $(\tfrac{r}{6},k)$-ensemble.

\OMIT{In our application, subsets $S_1 \LDOTS S_k$ correspond to children $x_1 \LDOTS x_k$  of a given tree node in the ball-tree, and each $\F_i$ consists of payoff functions induced by the ends in the subtree rooted at $x_i$.}

\begin{theorem}\label{thm:LB-technique-MAB}
Consider the feasible MAB problem with 0-1 payoffs. Let  $\F_1, \ldots, \F_k$ be an $(\eps,k)$-ensemble,
where $k\geq 2$ and $\eps\in(0,\,\tfrac{1}{12})$. Then for any
    $t \leq \tfrac{1}{128}\, k\,\eps^{-2}$
and any bandit algorithm there exist at least $k/2$ distinct $i$'s such that the regret of this algorithm on any payoff function from $\F_i$ is at least $\tfrac{1}{60}\,\eps t$.
\end{theorem}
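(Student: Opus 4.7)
The plan is to follow the KL-divergence lower-bound template of \citet{bandits-exp3}, with $\mu_0$ playing the role of the null hypothesis against each alternative $\mu \in \F_i$. Since payoffs are $0$-$1$, the reward of arm $x$ under payoff function $\mu$ is $\mathrm{Bern}(\mu(x))$. For a bandit algorithm $\A$ and payoff function $\mu$, write $\prob_\mu$ and $\E_\mu$ for the law and expectation of the history of (arm, reward) pairs over $t$ rounds, and let $n_i = \#\{s \leq t : x_s \in S_i\}$. The driving observation is that parts (i)--(ii) of Definition~\ref{def:eps-k-ensemble} give, for every $\mu \in \F_i$ and every arm $x \notin S_i$, the chain $\mu(x) = \mu_0(x) \leq \sup(\mu_0,X) \leq \sup(\mu,X) - \eps$. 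Hence every round outside $S_i$ costs at least $\eps$ in regret while rounds inside $S_i$ cost at least $0$, yielding $R_{\A,\mu}(t) \geq \eps\,(t - \E_\mu[n_i])$. It therefore suffices to exhibit $\geq k/2$ indices $i$ for which $\E_\mu[n_i] \leq (1-\tfrac{1}{60})\,t$ holds \emph{uniformly over} $\mu \in \F_i$.

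First I would pigeonhole under $\mu_0$: since $S_1,\ldots,S_k$ are disjoint, $\sum_i \E_{\mu_0}[n_i] \leq t$, so Markov's inequality gives at least $\lceil k/2 \rceil$ ``good'' indices with $\E_{\mu_0}[n_i] \leq 2t/k$ (for $k \geq 4$; for $k \in \{2,3\}$ the minimum over indices yields $\E_{\mu_0}[n_i] \leq t/k$, which is what the argument needs). For each good $i$ and an arbitrary $\mu \in \F_i$, I would invoke the chain rule for KL-divergence on the algorithm's history:
\begin{align*}
\mathrm{KL}(\prob_{\mu_0} \,\|\, \prob_\mu)
    \;=\; \E_{\mu_0}\!\left[\, \sum_{s \leq t} \mathrm{KL}\!\left(\mathrm{Bern}(\mu_0(x_s)) \,\|\, \mathrm{Bern}(\mu(x_s))\right) \,\right].
\end{align*}
By Definition~\ref{def:eps-k-ensemble}(i) the summand vanishes whenever $x_s \notin S_i$; for $x_s \in S_i$, both means lie in $[\tfrac13,\,\tfrac23 + 2\eps] \subseteq [\tfrac13,\tfrac56]$ (using $\eps < \tfrac1{12}$) and differ by at most $2\eps$ by Definition~\ref{def:eps-k-ensemble}(iii), so the per-round KL is bounded by $c\eps^2$ for an absolute constant $c$. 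This gives $\mathrm{KL}(\prob_{\mu_0} \,\|\, \prob_\mu) \leq c\eps^2\,\E_{\mu_0}[n_i]$. Pinsker's inequality combined with $n_i \leq t$ then yields
\begin{align*}
\E_\mu[n_i]
    \;\leq\; \E_{\mu_0}[n_i] + t\,\|\prob_{\mu_0} - \prob_\mu\|_{\mathrm{TV}}
    \;\leq\; \E_{\mu_0}[n_i] + O\!\left(\eps\,t\,\sqrt{\E_{\mu_0}[n_i]}\right).
\end{align*}
Substituting $\E_{\mu_0}[n_i] \leq 2t/k$ along with the hypothesis $t\eps^2 \leq k/128$ shrinks the error term to a small absolute fraction of $t$, from which the threshold $\E_\mu[n_i] \leq (1-\tfrac{1}{60})\,t$ (and hence $R_{\A,\mu}(t) \geq \tfrac{1}{60}\eps t$) follows by direct calculation.

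A useful feature of this plan is that the KL bound $c\eps^2 \E_{\mu_0}[n_i]$ depends on the choice of $\mu \in \F_i$ only through the two properties $\mu \equiv \mu_0$ on $X \setminus S_i$ and $|\mu - \mu_0| \leq 2\eps$ on $S_i$, both of which are guaranteed by the definition of $(\eps,k)$-ensemble for \emph{every} $\mu \in \F_i$ simultaneously. Consequently the resulting upper bound on $\E_\mu[n_i]$, and therefore the $\tfrac{1}{60}\eps t$ regret lower bound, apply to every payoff function in $\F_i$ at once, as required by the theorem statement. The main obstacle is purely quantitative: tracking constants tightly enough through the Bernoulli KL estimate, Pinsker's inequality, and the pigeonhole on $\sum_i \E_{\mu_0}[n_i]$ to reach the specific factor $\tfrac{1}{60}$; the scaling $t \asymp k/\eps^2$ is the familiar threshold at which $k$-armed ensembles with gap $\eps$ become information-theoretically indistinguishable, so only the bookkeeping of absolute constants should require care.
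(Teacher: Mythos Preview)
Your proposal is correct and follows the same high-level template as the paper: pigeonhole under $\mu_0$ to find $\geq k/2$ indices with small $\E_{\mu_0}[n_i]$, bound $\KL(\prob_{\mu_0}\|\prob_\mu)$ via the chain rule and a per-round Bernoulli KL estimate, then transfer to $\mu$. The only genuine difference is the transfer step. You invoke Pinsker's inequality to control $|\E_\mu[n_i]-\E_{\mu_0}[n_i]| \leq t\cdot\|\prob_{\mu_0}-\prob_\mu\|_{\mathrm{TV}}$ directly. The paper instead fixes the threshold event $\mathcal{E}=\{Z_i \leq 5t/(3k)\}$, uses Markov under $\mu_0$ to get $p(\mathcal{E})\geq 0.4$, and then applies the change-of-measure inequality $q(\mathcal{E})\geq p(\mathcal{E})\exp(-(\kappa+1/e)/p(\mathcal{E}))$ (Lemma~\ref{lem:kl-distinguishing}) to conclude $q(\mathcal{E})\geq 0.1$, whence $\E_q[t-Z_i]\geq 0.1\cdot(t-5t/(3k))\geq t/60$. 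Your Pinsker route is more direct and avoids the auxiliary event; the paper's route makes the constant $1/60$ drop out with essentially no bookkeeping once $\KL\leq 1/6$ is established.

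One small wrinkle: for $k=3$ you need at least two good indices, but ``take the minimum'' supplies only one. The $2t/k$ pigeonhole you use for $k\geq 4$ already works for $k=3$ as well (strictly more than $k/2$ indices satisfy $\E_{\mu_0}[n_i]\leq 2t/k$), so no separate case is needed. Also, both your argument and the paper's implicitly use $\mu\equiv\mu_0$ on all of $X\setminus S_i$ (for the KL and the regret bound), which is slightly stronger than the literal clause (i) of Definition~\ref{def:eps-k-ensemble}; this is the intended reading and holds in every application.
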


The idea is that if the payoff function $\mu$ lies in $\cup_i\, \F_i$, an algorithm needs to play arms in $S_i$ for at least $\Omega(\eps^{-2})$ rounds in order to determine whether $\mu$ lies in a given $\F_i$, and each such round incurs regret at $\eps$ (or more) if $\mu\not\in\F_i$.

\citet{bandits-exp3} analyzed a special case in which there are $k$ arms
	$x_1 \LDOTS x_k$,
and each $\F_i$ consists of a single payoff function that assigns expected payoff $\tfrac12+\eps$ to arm $x_i$, and $\tfrac12$ to all other arms. To preserve the flow of the paper, the proof of Theorem~\ref{thm:LB-technique-MAB} is presented in Appendix~\ref{sec:KL-divergence}.

\xhdr{Regret analysis.}
Let us fix a bandit algorithm \A, and let $T$ the ball-tree of strength $d\geq 0$. Without loss of generality, let us assume that each tree node in $T$ has finitely many children. Recall that each end of $T$ induces a payoff function. Let
    $\F_T$
be the set of all payoff functions induced by the ends of $T$. Throughout, the constants in $\Omega(\cdot)$ are absolute.

Consider a level-$j$ tree node $u$ in $T$. Let
    $u_1 \LDOTS u_k$
be the children of $u$, and let $r$ be their radius. Recall that $k\geq \max(2,\, r^{-d})$. Then
$\{\F(u_i): 1\leq i \leq k \}$ is a $(\tfrac{r}{6},k)$-ensemble.
By Theorem~\ref{thm:LB-technique-MAB} there exist a subset $I_u \subset \{1 \LDOTS k\}$ and a time
$t>\Omega(k\, r^{-2})$ such that for any payoff function $\mu\in \F(u_i)$, $i\in I_u$ we have
	$R_{(\A,\,\mu)}(t) \geq \Omega(rt)$.
Plugging in $k\geq r^{-d}$ and $r\leq 4^{-j}$, we see that there exists a time $t\geq 2^{\Omega(j)}$ such that for each payoff function $\mu\in \F(u_i)$, $i\in I_u$ we have
    $R_{(\A,\,\mu)}(t) \geq \Omega(t^{1-1/(d+2)})$.

Consider the distribution $\PP$ over payoff functions $\F_T$ from our construction. Let $\mE_j$ be the event that $\mu\in \F(u_i)$, $i\in I_u$ for some level-$j$ tree node $u$. If $\mu\in \mE_j$ for infinitely many $j$'s then
    $R_{(\A,\,\mu)}(t) \geq \Omega\left( t^{1-1/(d+2)} \right)$
for infinitely many times $t$. We complete the proof of Lemma~\ref{lm:pmo-LB-balltree} by showing that
\begin{align}\label{eq:LB-pmo-analysis}
    \Pr_{\mu\sim \PP} \left[\; \mu\in\mE_j\; \text{for infinitely many $j$} \;\right] = 1.
\end{align}

The proof of~\eqref{eq:LB-pmo-analysis} is similar to the proof of the Borel-Cantelli Lemma. If $\mu\in\mE_j$ only for finitely many $j$'s, then
    $\mu\in \cap_{j\geq j_0}\; \neg \mE_j$ for some $j_0$.
Fix some $j_0\in \N$, and let us show that
    $\Pr[\cap_{j\geq j_0}\; \neg \mE_j] =0$.
For each tree node $u$ at level $j$ we have
    $\Pr[ \mE_j \,|\, \mu \in \F(u)] \geq \tfrac12$.
It follows that for any $j>j_0$
    $$ \Pr\left[ \neg \mE_j \,|\, \neg\mE_{j_0} \LDOTS \neg\mE_{j-1} \right] \leq \tfrac12.$$
Therefore
    $ \Pr[\cap_{j\geq j_0}\; \neg \mE_j]
        = \Pr[ \neg\mE_{j_0} ]\times \prod_{j>j_0} \Pr\left[ \neg \mE_j \,|\, \neg\mE_{j_0} \LDOTS \neg\mE_{j-1} \right] =0$,
claim proved.

\subsubsection{Lipschitz-continuity of the lower-bounding construction}
\label{sec:pmo-Lipschitz}

In this subsection we prove that $\mu_\mathbf{w}$ is Lipschitz function on $(X,\mD)$, for any end $\mathbf{w}$ of the ball-tree. In fact, we state and prove a more general lemma in which the bump functions are summed over all tree nodes with arbitrary weights in $[-1,1]$. This lemma will also be used for the lower-bounding constructions in Section~\ref{sec:lower-bound} and Section~\ref{sec:MaxMinLCD-LB}.

\begin{lemma}\label{lm:LB-Lipschitz}
Consider a ball-tree on a metric space $(X,\mD)$. Let $V$ be the set of all tree nodes. For any given weight vector $\sigma:V\to [-1,1]$ and an absolute constant $c_0\in[0,\tfrac12]$ define the payoff function
\begin{align*}
	\mu_\sigma =c_0+\frac13\,\sum_{w\in V} \sigma(w)\cdot F_w,
\end{align*}
where $F_w$ is the bump function from \refeq{eq:bump-fn}. Then $\mu_\sigma$
is Lipschitz on $(X,\mD)$.
\end{lemma}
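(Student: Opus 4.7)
The plan is to show that $|\mu_\sigma(x) - \mu_\sigma(y)| \leq L\,\mD(x,y)$ for some absolute constant $L$. Since $|\sigma(w)| \leq 1$ and the factor $\tfrac{1}{3}$ only helps, it suffices to bound $\sum_{w\in V} |F_w(x) - F_w(y)|$ linearly in $\mD(x,y)$. The proof will exploit two structural features of the ball-tree. First, its balls form a \emph{laminar family}: any two balls are either nested or disjoint, which follows by induction on tree depth from property (iii) of Definition~\ref{def:ball-tree} (siblings are disjoint) and property (ii) (each child ball lies inside $B(x_{\mathrm{par}}, r_{\mathrm{par}}/2)$). Second, a \emph{plateau observation}: if a node $w = (x_w, r_w)$ has a strict tree-descendant $w'$ with $z \in B(w')$, then $z \in B(x_w, r_w/2)$, because iterating property (ii) gives $B(w') \subset B(x_w, r_w/2)$; in particular $F_w(z) = r_w/2$. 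Pointwise convergence of $\mu_\sigma$ is then immediate: by laminarity the chain $C_z = \{w : z \in B(w)\}$ contains at most one node per level, and $F_w(z) \leq r_w/2 \leq \tfrac12 \cdot 4^{-(\mathrm{depth}(w)-1)}$ yields a geometric bound.

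Fix $x,y \in X$ with $\delta = \mD(x,y)$, let $w^*$ be the deepest node in $C_x \cap C_y$ (with depth $j^*$), and write $w_j \in C_x$ and $w'_j \in C_y$ for the level-$j$ ancestors of $x$ and $y$ respectively. Only balls in $C_x \cup C_y$ contribute, and I would partition the contributions by level. At levels $j < j^*$ the common ancestor $w_j = w'_j$ has a deeper descendant (namely $w_{j^*}$) containing both $x$ and $y$, so the plateau observation forces $F_{w_j}(x) = F_{w_j}(y) = r_{w_j}/2$ and the contribution is zero. At level $j = j^*$ the 1-Lipschitz property of $F_{w^*}$ bounds the single contribution by $\delta$. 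At levels $j > j^*$ the $x$- and $y$-containing balls $w_j, w'_j$ (when they exist) are distinct and each contains exactly one of $x, y$, so they are of ``Type A''.

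The main obstacle lies in the regime $j > j^*$: the naive per-level bound $\min(\delta, r_{w_j}/2)$ would introduce a spurious $\log(1/\delta)$ factor when summed. I would dispel it with a second application of the plateau observation. Let $j_x^{\max}$ denote the depth of the deepest ball containing $x$ (possibly $\infty$). For $j^* < j < j_x^{\max}$, $w_j$ has a deeper $x$-containing descendant, so $\mD(x, x_{w_j}) \leq r_{w_j}/2$; combined with $y \notin B(w_j)$ (which gives $\mD(y, x_{w_j}) \geq r_{w_j}$) and the triangle inequality, this yields $\delta \geq r_{w_j}/2$, i.e., $r_{w_j} \leq 2\delta$. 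Since radii along $C_x$ shrink by a factor of at least four per level, the contributions $F_{w_j}(x) = r_{w_j}/2$ form a geometric series with leading term at most $\delta$ and ratio at most $1/4$, summing to $O(\delta)$; the endpoint $j = j_x^{\max}$ (if finite) is absorbed by the crude Type A bound $F_w(x) \leq \delta$, which itself follows from $y \notin B(w)$ and the triangle inequality. An identical argument controls the $C_y$-side contributions. Adding up the three regimes gives $\sum_{w\in V} |F_w(x) - F_w(y)| = O(\delta)$, so $\mu_\sigma$ is Lipschitz with an absolute constant, completing the proof.
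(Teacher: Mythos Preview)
Your argument is correct and follows essentially the same route as the paper: both use the least-common-ancestor decomposition, the plateau observation that ancestors of the LCA contribute zero (the paper's Claim~\ref{cl:pmo-lipschitz-1}), the bound $F_w(y)\le\mD(x,y)$ when only $y\in B_w$ (Claim~\ref{cl:pmo-lipschitz-5}), and the geometric decay of radii along each chain below the LCA. One difference worth flagging: bounding the LCA term and each side separately as you do yields a Lipschitz constant of roughly $\tfrac{11}{9}$, whereas the paper combines the LCA contribution with the first descendant term via Claim~\ref{cl:pmo-lipschitz-4} ($F_w(y)-F_w(x)+F_v(y)\le\mD(x,y)$ when $w\vdash v$, $x\in B_w\setminus B_v$, $y\in B_v$) to push the constant to $\tfrac{8}{9}<1$; this matters because the application needs $\mu_\sigma$ to be $1$-Lipschitz as a payoff function. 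Also, you tacitly assume $C_x\cap C_y\neq\emptyset$; when one chain is empty your same Type-A/geometric argument handles it (and in fact gives constant $<1$ there).
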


In the rest of this subsection we prove Lemma~\ref{lm:LB-Lipschitz}. (The proof for the special case $\mu_\sigma = \mu_\mathbf{w}$ uses essentially the same ideas and does not get much simpler.)

\newcommand{\parent}{\vdash} % parent-to-child relation
\newcommand{\child}{\dashv}
\newcommand{\ancestor}{\succ} % parent-to-child relation
\newcommand{\descend}{\prec}
\newcommand{\ancestoreq}{\succeq} % parent-to-child relation
\newcommand{\descendeq}{\preceq}

Let us specify some notation. Throughout, $u,v,w$ denote tree nodes.
Write $u\parent w$ if $u$ is a parent of $w$, and $u\ancestor w$ if $u$ is an ancestor of $w$. (Accordingly, define $\child$ and $\descend$ relations.) Generally our convention will be that
    $u\ancestor w\ancestor v$.
 Let $x_w$ and $r_w$ be, resp., the center and the radius of $w$, and let $B_w = B(x_w,r_w)$ denote the corresponding ball. Fix the weight vector $\sigma:V\to [-1,1]$ and arms $x,y\in X$. Write $\mu = \mu_\sigma$ for brevity.
We need to prove that $|\mu(x)-\mu(y)|\leq \mD(x,y)$.

We start with some observations about the bump functions. First, $F_w(y)\leq \mD(x,y)$ under appropriate conditions.

\begin{claim}\label{cl:pmo-lipschitz-5}
If $y\in B_w$ and $x\not\in B_w$, then $F_w(y) \leq \mD(x,y)$.
\end{claim}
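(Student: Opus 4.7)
The plan is to observe that the claim reduces immediately to the triangle inequality, once we unpack the definition of the bump function. Since this is a simple ``support'' estimate, the proof should fit in essentially one chain of inequalities; the only care required is to handle the $\min$ in the definition of $F_w$.

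First I would record that, since $y\in B_w$, we have $\mD(y,x_w)<r_w$, so the definition~\eqref{eq:bump-fn} of $F_w$ gives
\[
F_w(y) \;=\; \min\{\,r_w-\mD(y,x_w),\; r_w/2\,\} \;\leq\; r_w-\mD(y,x_w).
\]
The inequality above drops the $r_w/2$ term from the minimum, which is the only slightly non-trivial step; after that we only need to bound the right-hand side by $\mD(x,y)$.

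Next I would use the assumption $x\notin B_w$, which by the definition of an open ball means $\mD(x,x_w)\geq r_w$. The triangle inequality in $(X,\mD)$ then yields
\[
\mD(x,x_w) \;\leq\; \mD(x,y)+\mD(y,x_w),
\]
so rearranging gives $\mD(x,y)\geq \mD(x,x_w)-\mD(y,x_w)\geq r_w-\mD(y,x_w)$. Combining this with the displayed inequality above completes the proof: $F_w(y)\leq r_w-\mD(y,x_w)\leq \mD(x,y)$.

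I do not anticipate any obstacle: the claim is a direct consequence of the triangle inequality applied to the three points $x$, $y$, $x_w$, together with the trivial bound that drops the $r_w/2$ term from the $\min$ in the definition of $F_w$. The claim will presumably be used in subsequent lemmas (together with analogous case analyses for pairs of tree nodes in the ancestor/descendant or sibling relation) to establish the Lipschitz property of $\mu_\sigma$ asserted in Lemma~\ref{lm:LB-Lipschitz}.
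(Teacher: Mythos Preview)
Your proof is correct and essentially identical to the paper's: both drop the $r_w/2$ term to get $F_w(y)\leq r_w-\mD(x_w,y)$, then use $x\notin B_w$ to replace $r_w$ by $\mD(x_w,x)$, and finish with the triangle inequality.
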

\begin{proof}
Observe that
\begin{align*}
F_w(y)
    &\leq r_w-\mD(x_w,y)            & \text{(because $y\in B_w$)} \\
    &\leq \mD(x_w,x) - \mD(x_w,y)   & \text{(because $x\not\in B_w$)} \\
    &\leq \mD(x,y)                  & \text{(by triangle inequality)} \qquad \qedhere
\end{align*}
\end{proof}

Second, each bump function $F_w$ is Lipschitz.

\begin{claim}\label{cl:pmo-lipschitz-6}
$|F_w(x)-F_w(y)| \leq \mD(x,y)$.
\end{claim}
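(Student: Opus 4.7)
\medskip

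The cleanest path is to rewrite the bump function in a form that makes its Lipschitz property transparent. Observe that on $B_w$ we have $r_w - \mD(x, x_w) > 0$, while off $B_w$ we have $r_w - \mD(x, x_w) \leq 0$. Hence the piecewise definition can be expressed uniformly as
\[
F_w(x) \;=\; \max\!\bigl(0,\; \min\bigl(r_w - \mD(x, x_w),\; r_w/2\bigr)\bigr).
\]
Then the claim reduces to two elementary facts: (i) the map $x \mapsto r_w - \mD(x, x_w)$ is $1$-Lipschitz (by triangle inequality for $\mD$), and (ii) pointwise $\min$ and $\max$ of a $1$-Lipschitz function with any constant are again $1$-Lipschitz. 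Composing, $F_w$ is $1$-Lipschitz on all of $X$, which is exactly the statement.

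If a reader prefers a self-contained argument without invoking the $\min/\max$ preservation lemma, the same result follows by a short case analysis on whether $x$ and $y$ belong to $B_w$. If both lie in $B_w$, then both values of $F_w$ have the form $\min(r_w - \mD(\cdot, x_w),\, r_w/2)$, and one can use $|\min(a,c) - \min(b,c)| \leq |a-b|$ together with the reverse triangle inequality $|\mD(x,x_w) - \mD(y,x_w)| \leq \mD(x,y)$. If both lie outside $B_w$, the difference is $0$. The only nontrivial case is when exactly one point, say $x$, lies in $B_w$; then $F_w(y) = 0$, so $|F_w(x) - F_w(y)| = F_w(x)$, and this is bounded by $\mD(x,y)$ by Claim~\ref{cl:pmo-lipschitz-5} applied with the roles of $x$ and $y$ swapped.

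No step here is an obstacle: the only mildly subtle point is ensuring the piecewise definition matches the $\max(0, \min(\cdot,\cdot))$ reformulation at the boundary of $B_w$, which is handled by the observation above that $r_w - \mD(x,x_w) \leq 0$ off $B_w$. Thus I would present the one-line reformulation as the main proof and leave the case analysis as an implicit alternative.
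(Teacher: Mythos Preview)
Your argument is correct. The reformulation $F_w(x) = \max\bigl(0,\, \min(r_w - \mD(x,x_w),\, r_w/2)\bigr)$ is valid, and the closure of the class of $1$-Lipschitz functions under pointwise $\min$/$\max$ with constants immediately gives the result.

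The paper takes the explicit case-analysis route you sketch as an alternative: it treats the cases where none, one, or both of $x,y$ lie in $B_w$, and in the last case breaks into four sub-cases according to whether each of $\mD(x,x_w)$ and $\mD(y,x_w)$ lies above or below $r_w/2$, checking the inequality by hand in each sub-case (using the reverse triangle inequality and Claim~\ref{cl:pmo-lipschitz-5}). Your $\max/\min$ reformulation is genuinely cleaner: it replaces that four-way split with a single structural observation and avoids any case-by-case arithmetic. The paper's version, on the other hand, is entirely self-contained and does not rely on the reader knowing (or the paper stating) the $\min$/$\max$ preservation lemma. Either is fine here; your main argument is shorter and arguably more robust.
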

\begin{proof}
If $x,y\not\in B_w$, then $F_w(x)=F_w(y)=0$, and we are done. If $x\not\in B_w$, but $y\in B_w$, then $F_w(x)=0$ and $F_w(y)\leq \mD(x,y)$ by Claim~\ref{cl:pmo-lipschitz-5}, and we are done. In what follows, assume $x,y\in B_w$.

We consider four cases, depending on whether $\mD(x,x_w)$ and $\mD(y,x_w)$ are larger than $r_w/2$. If both $\mD(x,x_w)$ and $\mD(y,x_w)$ are at least $r_w/2$, then
    $F_w(x) = F_w(y)=r_w/2$, and we are done.
If both $\mD(x,x_w)$ and $\mD(y,x_w)$ are at most $r_w/2$, then
    $F_w(x)-F_w(y) = \mD(y,x_w)-\mD(x,x_w)$,
which is at most $\mD(x,y)$ by triangle inequality, and we are done.
If
    $\mD(x,x_w) \leq r_w/2 \leq \mD(y,x_w)$,
then
\begin{align*}
F_w(x) - F_w(y)
    &= r_w/2 - (r_w - \mD(y,x_w))
    = \mD(y,x_w)-r_w/2 \\
    &\leq \mD(y,x_w) - \mD(x,x_w) \leq \mD(x,y).
\end{align*}
The fourth case is treated similarly. $\qquad\qedhere$
\end{proof}

Third, we give a convenient upper bound for $F_w(y)-F_w(x)$, assuming $x\in B_w$.

\begin{claim}\label{cl:pmo-lipschitz-2}
Assume $x\in B_w$. Then
    $F_w(y)-F_w(x) \leq \max(0,\, \mD(x,x_w)-r_w/2)$.
\end{claim}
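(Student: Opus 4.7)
The plan is to do a straightforward case split on where $x$ sits inside the ball $B_w$, namely on whether $\mD(x,x_w)\le r_w/2$ or $\mD(x,x_w)>r_w/2$. Since $x\in B_w$, the definition of the bump function gives $F_w(x)=\min\{\,r_w-\mD(x,x_w),\; r_w/2\,\}$, and in either case one of the two arguments of the minimum is active; meanwhile $F_w(y)\le r_w/2$ holds unconditionally (from the definition of $F_w$). I will use these two facts to bound $F_w(y)-F_w(x)$ in each case and match the stated right-hand side.

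In the first case, $\mD(x,x_w)\le r_w/2$, one has $r_w-\mD(x,x_w)\ge r_w/2$, so $F_w(x)=r_w/2$. Since $F_w(y)\le r_w/2$ as well, we get $F_w(y)-F_w(x)\le 0$, which matches $\max(0,\mD(x,x_w)-r_w/2)=0$.

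In the second case, $r_w/2<\mD(x,x_w)<r_w$ (the upper bound is because $x\in B_w$), so $F_w(x)=r_w-\mD(x,x_w)$. Combining with $F_w(y)\le r_w/2$ yields
\begin{equation*}
F_w(y)-F_w(x)\;\le\; \tfrac{r_w}{2}-\bigl(r_w-\mD(x,x_w)\bigr)\;=\;\mD(x,x_w)-\tfrac{r_w}{2},
\end{equation*}
which matches the right-hand side in this regime. Both cases being elementary, I do not anticipate any obstacle; the only care needed is to notice that the ``$r_w/2$'' plateau of $F_w$ is exactly what makes the bound tight when $x$ lies in the inner half-ball, and that no condition on $y$ (beyond membership of $X$) is required since we used only $F_w(y)\le r_w/2$.
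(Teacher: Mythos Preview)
Your proof is correct and is essentially the same argument as the paper's, just with the case split on $\mD(x,x_w)\lessgtr r_w/2$ written out explicitly. The paper compresses both cases into the single observation that $F_w(y)\le r_w/2$ and $F_w(x)=\min(r_w/2,\,r_w-\mD(x,x_w))$, from which the bound follows immediately.
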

\begin{proof}
This is because $F_w(y)\leq r_w/2$ and
    $F_w(x) = \min(r_w/2, r_w-\mD(x,x_w))$.
$\qquad\qedhere$
\end{proof}

Some of the key arguments are encapsulated below. First, $F_u$ is constant on $B_w$, $u\ancestor w$.

\begin{claim}\label{cl:pmo-lipschitz-1}
$F_u(x) = r_u/2$ whenever $x\in B_w$ and $u\ancestor w$.
\end{claim}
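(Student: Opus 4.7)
\textbf{Proof plan for Claim~\ref{cl:pmo-lipschitz-1}.} The plan is to first show the set-theoretic containment $B_w \subseteq B(x_u, r_u/2)$ for any ancestor $u$ of $w$, and then read off the value of $F_u$ on this smaller ball directly from its definition \eqref{eq:bump-fn}. The key observation is that property~(ii) of a ball-tree says more than $B_{v'} \subseteq B_v$ for a parent--child pair $v \vdash v'$: it actually gives $B_{v'} \subseteq B(x_v, r_v/2)$, since for any $y \in B_{v'}$ the triangle inequality yields $\mD(y,x_v) \leq \mD(y,x_{v'}) + \mD(x_{v'},x_v) < r_{v'} + \mD(x_{v'},x_v) < r_v/2$.

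Next, let $u = v_0 \vdash v_1 \vdash \cdots \vdash v_k = w$ be the path from $u$ down to $w$ in the ball-tree. Applying the observation above to each consecutive pair, I get the chain
\begin{equation*}
B_w = B_{v_k} \;\subseteq\; B_{v_{k-1}} \;\subseteq\; \cdots \;\subseteq\; B_{v_1} \;\subseteq\; B(x_u, r_u/2),
\end{equation*}
where the final containment uses property~(ii) applied to the pair $u \vdash v_1$, and the earlier ones use the weaker consequence $B_{v_{i+1}} \subseteq B(x_{v_i}, r_{v_i}/2) \subseteq B_{v_i}$. Thus any $x \in B_w$ satisfies $\mD(x, x_u) < r_u/2$.

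Finally, plugging into the definition of the bump function \eqref{eq:bump-fn}: since $\mD(x,x_u) < r_u/2 < r_u$, we are in the first case of~\eqref{eq:bump-fn}, and $r_u - \mD(x,x_u) > r_u/2$, so $F_u(x) = \min\{r_u - \mD(x,x_u),\, r_u/2\} = r_u/2$, as claimed. There is no real obstacle here; the only subtlety is recognizing that property~(ii) of the ball-tree gives containment in the half-radius ball of the parent, which is precisely where the bump function $F_u$ is saturated at its maximum value $r_u/2$.
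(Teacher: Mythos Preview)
Your proof is correct and follows essentially the same approach as the paper: both arguments reduce to the parent case via the nesting $B_w \subseteq B_{v_1}$ (where $v_1$ is the child of $u$ on the path to $w$), then use the triangle inequality together with ball-tree property~(ii) to show $\mD(x,x_u) < r_u/2$, from which $F_u(x) = r_u/2$ follows directly from the definition of the bump function. The paper just compresses the chain of containments into the one line ``since $B_w \supset B_v$ whenever $w \succ v$, it suffices to assume that $u$ is a parent of $w$,'' whereas you spell it out.
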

\begin{proof}
Since $B_w\supset B_v$ whenever $w \ancestor v$, it suffices to assume that $u$ is a parent of $w$. Then
\begin{align*}
\mD(x,x_u)
    &\leq \mD(x_w,x_u) + \mD(x,x_w)  & \text{(by triangle inequality)} \\
    &< \mD(x_w,x_u) + r_w            & \text{(since $x\in B_w$)} \\
    &< r_u/2                        & \text{(by definition of ball-tree)}. \qquad\qedhere
\end{align*}
\end{proof}

Second, suppose $B_w$ separates $x$ and $y$ (in the sense that $B_w$ contains $y$ but not $x$), and $\mD(x,y)$ is small compared to $r_w$. We show that $y\not\in B_v$, $w\parent v$.

\begin{claim}\label{cl:pmo-lipschitz-3}
Assume $y\in B_w$ and $x\not\in B_w$. Then $y\not\in B_v$, whenever $w\parent v$ and $\mD(x,y)\leq r_w/2$.
\end{claim}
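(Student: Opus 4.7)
The plan is to prove Claim~\ref{cl:pmo-lipschitz-3} by contradiction: suppose $y\in B_v$ and derive $\mD(x,x_w)<r_w$, contradicting the hypothesis that $x\not\in B_w$.

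First I would unpack the four hypotheses into inequalities: $\mD(y,x_w)<r_w$ from $y\in B_w$; $\mD(x,x_w)\geq r_w$ from $x\not\in B_w$; $\mD(x,y)\leq r_w/2$ by assumption; and the crucial ball-tree property for a parent-child pair $w\parent v$, namely $\mD(x_w,x_v)+r_v<r_w/2$ (from property~(ii) of Definition~\ref{def:ball-tree}). The contradiction hypothesis adds $\mD(y,x_v)<r_v$.

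Next, I would chain these via the triangle inequality through the path $x\to y\to x_v\to x_w$:
\begin{align*}
\mD(x,x_w)
 &\leq \mD(x,y)+\mD(y,x_v)+\mD(x_v,x_w) \\
 &< \tfrac{r_w}{2} + r_v + \bigl(\tfrac{r_w}{2}-r_v\bigr) \;=\; r_w,
\end{align*}
where the first term uses the hypothesis, the second uses the contradiction assumption $y\in B_v$, and the third uses the ball-tree parent-child property. This contradicts $x\not\in B_w$, so the assumption $y\in B_v$ must fail, proving the claim.

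There is no real obstacle here: the argument is a short triangle-inequality computation whose only nontrivial ingredient is the correct use of the ball-tree property $\mD(x_w,x_v)+r_v<r_w/2$, which is precisely engineered for exactly this kind of nesting bound. The only thing to double-check is that we invoke the strict-versus-nonstrict inequalities in a consistent way, which the computation above handles (the two strict inequalities $\mD(x,y)\leq r_w/2$ combined with $\mD(x_w,x_v)+r_v<r_w/2$ already yield a strict bound $<r_w$).
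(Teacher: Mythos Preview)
Your proof is correct and uses essentially the same ingredients as the paper's: the triangle inequality along the path $x\to y\to x_v\to x_w$ together with the ball-tree parent--child bound $\mD(x_w,x_v)+r_v<r_w/2$. The paper argues directly that $\mD(x_v,y)\geq r_v$ by lower-bounding $\mD(x_v,y)$ through the same chain of inequalities, whereas you rephrase this as a contradiction; the two arguments are algebraic rearrangements of one another.
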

\begin{proof}
Observe that
\begin{align*}
\mD(x_v,y)
    &\geq \mD(x_w,y) - \mD(x_w,x_v) & \text{(by triangle inequality)} \\
    &\geq \mD(x_w,x)- \mD(x,y) - \mD(x_w,x_v) & \text{(by triangle inequality)} \\
    &\geq r_w- \mD(x,y) - \mD(x_w,x_v) & \text{(because $x\not\in B_w$ )} \\
    &\geq r_v + r_w/2 - \mD(x,y).
\end{align*}
The last inequality follows because
    $r_w/2 - \mD(x_w,x_v) \geq r_v$
by definition of ball-tree. It follows that
    $\mD(x_v,y) \geq r_v$
whenever $\mD(x,y)\leq r_w/2$.
\end{proof}

\begin{claim}\label{cl:pmo-lipschitz-4}
Assume $w\parent v$ and $x\in B_w\setminus B_v$ and $y\in B_v$. Then
    $ F_w(y) - F_w(x) + F_v(y) \leq \mD(x,y)$.
\end{claim}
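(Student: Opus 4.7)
I would split into two cases according to whether $x$ lies deep inside $B_w$ or near its boundary, as measured by the threshold $r_w/2$ that appears in the definition of $F_w$.

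\textbf{Case 1: $\mD(x,x_w) \le r_w/2$.} Then Claim~\ref{cl:pmo-lipschitz-2} (applied with the hypothesis $x \in B_w$) yields $F_w(y)-F_w(x) \le 0$. Since $x \notin B_v$ and $y \in B_v$, Claim~\ref{cl:pmo-lipschitz-5} gives $F_v(y) \le \mD(x,y)$, and we are done.

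\textbf{Case 2: $\mD(x,x_w) > r_w/2$.} Here Claim~\ref{cl:pmo-lipschitz-2} gives the sharper bound
$F_w(y)-F_w(x) \le \mD(x,x_w) - r_w/2$.
For the other term, since $y \in B_v$ one checks directly from the definition of $F_v$ that $F_v(y) \le r_v - \mD(y,x_v)$ (the inequality is an equality when $\mD(y,x_v) \ge r_v/2$, and otherwise both sides are bounded by $r_v/2$). It remains to extract enough from $\mD(x,y)$ via the triangle inequality routed through $x_v$ and then through $x_w$:
\begin{align*}
\mD(x,y) \;\ge\; \mD(x,x_v) - \mD(y,x_v) \;\ge\; \mD(x,x_w) - \mD(x_w,x_v) - \mD(y,x_v).
\end{align*}
The ball-tree property that $w$ is the parent of $v$ gives $\mD(x_w,x_v) + r_v < r_w/2$, i.e.\ $\mD(x_w,x_v) < r_w/2 - r_v$. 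Substituting yields
\begin{align*}
\mD(x,y) \;>\; \bigl(\mD(x,x_w) - r_w/2\bigr) + \bigl(r_v - \mD(y,x_v)\bigr) \;\ge\; \bigl(F_w(y)-F_w(x)\bigr) + F_v(y),
\end{align*}
which is the desired inequality.

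\textbf{Main obstacle.} Conceptually there is no deep obstacle — everything reduces to the triangle inequality and the quantitative separation built into the ball-tree. The only subtlety is that the naive estimate $F_v(y) \le \mD(x,y)$ from Claim~\ref{cl:pmo-lipschitz-5} is too weak in Case 2, where $F_w(y)-F_w(x)$ can be positive: one must avoid double-counting the ``slack'' $r_v - \mD(y,x_v)$ inside $B_v$ against the slack $\mD(x,x_w) - r_w/2$ outside $B_v$, and it is precisely the parent-child separation $\mD(x_w,x_v) + r_v < r_w/2$ that makes these two slacks add up consistently.
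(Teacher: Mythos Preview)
Your proof is correct and follows essentially the same approach as the paper. The only cosmetic difference is the case split: the paper splits on whether $F_w(y)\le F_w(x)$ versus $F_w(y)>F_w(x)$, while you split on whether $\mD(x,x_w)\le r_w/2$ versus $\mD(x,x_w)>r_w/2$; in each case the same three ingredients (Claim~\ref{cl:pmo-lipschitz-2}, the bound $F_v(y)\le r_v-\mD(y,x_v)$, and the ball-tree separation $\mD(x_w,x_v)+r_v<r_w/2$) are combined via the triangle inequality in the same way.
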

\begin{proof}
If $F_w(y)\leq F_w(x)$ then it suffices to observe that
    $F_v(y) \leq \mD(x,y)$
by Claim~\ref{cl:pmo-lipschitz-5}. From here on, assume
    $F_w(y)>F_w(x)\geq 0$.
Observe that
\begin{align*}
F_w(y) - F_w(x)
    &\leq \mD(x,x_w) - r_w/2 & \text{(by Claim~\ref{cl:pmo-lipschitz-2})} \\
F_v(y)
    &\leq r_v - \mD(x_v,y) & \text{(by definition of $F_v$)} \\
0 &\leq r_w/2 - r_v - \mD(x_w,x_v) & \text{(by definition of ball-tree)}.
\end{align*}
Summing this up,
\begin{align*}
F_w(y) - F_w(x) + F_v(y)
    &\leq \mD(x,x_w) - \mD(x_w,x_v) - \mD(x_v,y) &\\
    &\leq \mD(x,x_w) - \mD(x_w,y)    & \text{(by triangle inequality)} \\
    &\leq \mD(x,y)                   & \text{(by triangle inequality)} \qquad\qedhere
\end{align*}
\end{proof}

Now we are ready to put the pieces together. Let $w$ be the least common ancestor of $x$ and $y$ in the ball-tree, i.e., the smallest tree node $w$ such that $x,y\in B_w$.
(Such $w$ exists because the radii of the tree nodes go to zero along any end of the ball-tree.) Observe that:
\begin{OneLiners}
\item $F_u(x)=F_u(y)=r_u/2$ for all $u\ancestor w$ (by Claim~\ref{cl:pmo-lipschitz-1}).

\item $F_{w'}(x) = F_{w'}(y)=0$ for all tree nodes $w'$ incomparable with $w$, because $x,y\not\in B_{w'}$.
\end{OneLiners}
Therefore,
\begin{align}\label{eq:pmo-lipschitz-pf-0}
\mu(y)-\mu(x)
    =   \frac13 \left(
        \sum_{v\descendeq w} \sigma(v)\,F_v(x)
    \right).
\end{align}

Let $w_x$ (resp., $w_y$) be the unique child containing $x$ (resp., $y$) if such child exists, and an arbitrary child of $w$ otherwise. By minimality of $w$ and the fact that $w$ has at least two children, we can pick $w_x$ and $w_y$ so that they are distinct. Then:
\begin{OneLiners}
\item $F_v(x)=0$ for all tree nodes $v \descendeq w_y$, because $x\not\in B_v$.

\item $F_v(y)=0$ for all tree nodes $v \descendeq w_x$, because $y\not\in B_v$.
\end{OneLiners}
Plugging these observations into \refeq{eq:pmo-lipschitz-pf-0}, we obtain:
\begin{align}
\mu(y)-\mu(x)
    &=   \frac13 \left(
        \sigma(w) \left(F_w(y)-F_w(x)\right) +
        \sum_{v\descendeq w_x} \sigma(v)\,F_v(x) +
        \sum_{v\descendeq w_y} \sigma(v)\,F_v(y)
    \right). \nonumber \\
|\mu(y)-\mu(x)|
    &\leq  \frac13 \left(
        \left|F_w(y)-F_w(x)\right| +
        \sum_{v\descendeq w_x} F_v(x) +
        \sum_{v\descendeq w_y} F_v(y).
    \right). \label{eq:pmo-lipschitz-pf-1}
\end{align}
Note that \refeq{eq:pmo-lipschitz-pf-1} no longer depends on the weight vector $\sigma$.

To complete the proof, it suffices to show the following:
\begin{align}
\left|F_w(y)-F_w(x)\right| +
        \sum_{v\descendeq w_x} F_v(x)
        &\leq \tfrac43\, \mD(x,y). \label{eq:pmo-lipschitz-pf-2} \\
\left|F_w(y)-F_w(x)\right| +
        \sum_{v\descendeq w_y} F_v(y)
        &\leq \tfrac43\, \mD(x,y). \label{eq:pmo-lipschitz-pf-3}
\end{align}

In the remainder of the proof we show \eqref{eq:pmo-lipschitz-pf-3} (and \eqref{eq:pmo-lipschitz-pf-2} follows similarly). Let $\Gamma$ denote the left-hand side of \eqref{eq:pmo-lipschitz-pf-3}. If $y\not\in B_{w_y}$, then $F_v(y) = 0$ for all $v\descendeq w_y$, and $\Gamma\leq \mD(x,y)$
by Claim~\ref{cl:pmo-lipschitz-6}. From here on, assume $y\in B_{w_y}$. Then by Claim~\ref{cl:pmo-lipschitz-1} we have
    $F_w(y) = r_w/2$.
It follows that
\begin{align} \label{eq:pmo-lipschitz-pf-4}
\Gamma
    &= F_w(y)-F_w(x) + \sum_{v\descendeq w_y} F_v(y)
    \leq \mD(x,y) + \sum_{v\descend w_y} F_v(y),
 \end{align}
where the last inequality follows from Claim~\ref{cl:pmo-lipschitz-4}.

Now consider two cases, depending on whether $\mD(x,y)\leq r_{w_y}/2$. If so, then $y\not\in B_v$ for any $v\descend w_y$ by Claim~\ref{cl:pmo-lipschitz-3}, and therefore $F_v(y)=0$ for all such $v$ and we are done.

The remaining case is that $\mD(x,y)> r_{w_y}/2$. Define the sequence of tree nodes $(v_j: j\in\N)$ inductively by $v_0 = w_y$ and for each $j\in \N$ letting $v_{j+1}$ be the child of $v_j$ that contains $y$, if such child exists, and any child of $v_j$ otherwise. Then
\begin{align*}
F_{v_j}(y)
    &\leq r_{v_j}/2
     \leq 4^{-j}\, r_{v_0}/2
     < 4^{-j}\, \mD(x,y) \qquad \forall j\geq 1 \\
 \sum_{v\descend w_y} F_v(y)
    &= \sum_{j=1}^\infty F_{v_j}(y)
    \leq \sum_{j=1}^\infty 4^{-j}\, \mD(x,y)
    = \mD(x,y)/3.
\end{align*}
Plugging this into \refeq{eq:pmo-lipschitz-pf-4} completes the proof of \refeq{eq:pmo-lipschitz-pf-3}, which in turn completes the proof of Lemma~\ref{lm:LB-Lipschitz}.

%%%%%%%%%%%%%%%%%%%%%
\subsection{The max-min-covering dimension}
\label{sec:pmo-MaxMinCOV}

We would like to derive the existence of a strength-$d$ ball-tree using a covering property similar to the covering dimension. We need a more nuanced notion, which we call the max-min-covering dimension, to ensure that \emph{each} of the open sets arising in the construction of the ball-tree has sufficiently many disjoint subsets to continue to the next level of recursion.%
\footnote{%
We've defined this notion while stating our results in Section~\ref{subsec:intro-poly}. Here we restate if for the sake of convenience.}
Further, this new notion is an intermediary that connects our lower bound with the upper bound that we develop in the forthcoming subsections.

\begin{definition} \label{def:mincov}
For a metric space $(X,\mD)$ and subsets $Y\subseteq X$ we define
\begin{align*}
\MinCOV(Y)    &= \inf \{\COV(U):
    \text{$U\subset Y$ is non-empty and open in $(Y,\mD)$} \},\\
\MaxMinCOV(X) &= \sup \{ \MinCOV(Y) \,:\, Y \subseteq X \}.
\end{align*}
We call them the \emph{min-covering dimension} and the \emph{max-min-covering dimension} of $X$, respectively.
\end{definition}

The infimum over open $U \subseteq Y$ in the
definition of min-covering dimension ensures that every
open set which may arise in the needle-in-haystack
construction described above will contain $\Omega(\delta^{\eps-d})$
disjoint $\delta$-balls
for some sufficiently small positive $\delta, \eps$.
Constructing lower bounds for Lipschitz MAB algorithms
in a metric space $X$ only requires  that $X$ should have \emph{subsets}
with large min-covering dimension, which explains the
supremum over subsets in the definition of
max-min-covering dimension. Note that for every subset $Y\subseteq X$ we have
    $\MinCOV(Y) \leq \COV(Y)\leq \COV(X)$,
which implies $\MaxMinCOV(X)\leq \COV(X)$.

\begin{lemma} \label{lm:pmo-LM-MaxMinCOV}
Consider the \problem on a metric space $(X,\mD)$ and $\MaxMinCOV(X)>0$. Then for any $d\in (0,\MaxMinCOV(X))$ there exists a ball-tree of strength $d$. It follows (using Lemma~\ref{lm:pmo-LB-balltree}) that the regret dimension of any algorithm is at least $\MaxMinCOV(X)$.
\end{lemma}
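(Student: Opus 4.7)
The plan is to build the ball-tree recursively, ensuring every center lies in a well-chosen subset $Y \subseteq X$ so that the construction can iterate indefinitely. First I would use the supremum in the definition of $\MaxMinCOV(X)$ to pick $Y \subseteq X$ with $\MinCOV(Y) > d$, and fix any $d' \in (d, \MinCOV(Y))$. The key property to exploit is that \emph{every} non-empty $U$ which is open in $(Y,\mD)$ satisfies $\COV(U) \geq \MinCOV(Y) > d'$, which by the limsup formulation~\eqref{eq:cov-dim-limsup} means $N_\rho(U) > \rho^{-d'}$ for arbitrarily small $\rho > 0$.

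For the inductive step, suppose we have built a node $(x,r)$ with $x \in Y$. I would set $U = B(x, r/4) \cap Y$, which is open in $(Y,\mD)$ and non-empty (it contains $x$). Then I would choose a child radius $s > 0$ small enough that $s \leq r/4$ and $N_{6s}(U) \geq \max(2, s^{-d})$; this is possible because the limsup property supplies scales $\rho = 6s$ arbitrarily close to $0$ with $N_\rho(U) > \rho^{-d'}$, and $\rho^{-d'}$ eventually dominates $\max(2, s^{-d})$ since $d' > d$. By Fact~\ref{fact:packing-covering}, $\Npack_{3s}(U) \geq N_{6s}(U) \geq \max(2, s^{-d})$, so $U$ admits a $(3s)$-packing of size $k := \max(2, s^{-d})$; take its points as the children centers $x_1,\ldots,x_k$, each with radius $s$. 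Since each $x_i \in U \subseteq Y$, the recursion continues. Verifying the ball-tree axioms is then routine: (a) all children share the radius $s \leq r/4$; (b) $\mD(x,x_i) < r/4$ and $s \leq r/4$ give $\mD(x,x_i) + s < r/2$; (c) the packing property yields $\mD(x_i,x_j) \geq 3s > 2s = s + s$ for siblings.

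Starting the recursion from any root $(x_0, 1)$ with $x_0 \in Y$ produces a ball-tree of strength $d$. Lemma~\ref{lm:pmo-LB-balltree} then gives $\RegretDim(\A) \geq d$ for every algorithm $\A$, and taking the supremum over $d \in (0,\MaxMinCOV(X))$ yields $\RegretDim(\A) \geq \MaxMinCOV(X)$. The main obstacle I anticipate is arranging that both the packing constraint (enough well-separated children) and the containment constraint (children inside $B(x, r/4)$ with radius bounded by $r/4$) can be met simultaneously at \emph{every} node of the recursion, regardless of how small $r$ has become or how irregular the local geometry of $X$ near $x$ is. This is precisely why the infimum-over-open-subsets appears in the definition of $\MinCOV$: it delivers a uniform lower bound on $\COV(U)$ over \emph{all} non-empty open $U \subseteq Y$, rather than only those occurring at the first few levels of the construction.
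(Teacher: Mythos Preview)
Your proposal is correct and follows essentially the same approach as the paper: pick $Y$ with $\MinCOV(Y)>d$, root the ball-tree at a point of $Y$, and at each node $(x,r)$ use that $U=B(x,r/4)\cap Y$ is open in $(Y,\mD)$ with $\COV(U)>d$ to extract a sufficiently large, well-separated packing for the children. The paper packages the packing step into a separate folklore lemma rather than invoking Fact~\ref{fact:packing-covering} directly, and uses a $(2r')$-packing where you use a $(3s)$-packing (your slack makes the strict sibling inequality $r_x+r_y<\mD(x,y)$ cleaner), but these are cosmetic differences.
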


Recall from Section~\ref{sec:dim-notions} that an \emph{$r$-packing} of a metric space $(X,\mD)$ be a subset $S\subset X$ such that any two points in $S$ are at distance at least $r$ from one another. The proof will use the following simple packing lemma.

\begin{lemma}[Folklore] \label{lem:packing-folklore}
Suppose $(X,\mD)$ is a metric space of covering dimension $d$. Then for any $b<d$, $r_0>0$ and $C>0$ there exists
$r \in (0,r_0)$ such that $X$ contains an $r$-packing of size at least $C\,r^{-b}$.
\end{lemma}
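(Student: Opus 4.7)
The plan is to combine two ingredients: the $\limsup$-characterization of covering dimension given in \eqref{eq:cov-dim-limsup}, and the duality between packings and coverings encapsulated in Fact~\ref{fact:packing-covering}. The former delivers arbitrarily small covering scales at which $N_r(X)$ grows faster than $r^{-b'}$ for some $b' \in (b,d)$; the latter converts covering numbers into packing numbers.

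I would begin by fixing an intermediate exponent $b' \in (b,d)$, which is possible since $b<d$. By~\eqref{eq:cov-dim-limsup}, $\limsup_{r \to 0}\,\log N_r(X)/\log(1/r) \;=\; \COV(X) = d \;>\; b'$. By the definition of $\limsup$, for every $\rho>0$ there then exists $r \in (0,\rho)$ (and $r<1$) with $N_r(X) > r^{-b'}$. Applying Fact~\ref{fact:packing-covering} at scale $r/2$ gives an $(r/2)$-packing of $X$ of size at least $N_r(X) > r^{-b'} = 2^{-b'}\,(r/2)^{-b'}$.

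Writing $r' := r/2$, this is an $r'$-packing of size strictly greater than $2^{-b'}\,(r')^{-b'}$. To meet the target size $C\,(r')^{-b}$ it suffices that $(r')^{b'-b} \leq 2^{-b'}/C$, i.e., $r' \leq (2^{b'}\,C)^{-1/(b'-b)}$; the exponent $b'-b$ is strictly positive by the choice of $b'$. I would therefore choose $\rho$ small enough that both $\rho/2 < r_0$ and $\rho/2 \leq (2^{b'}\,C)^{-1/(b'-b)}$ hold, and then invoke the previous paragraph at this $\rho$. The scale $r \in (0,\rho)$ produced there yields an $r'$-packing of the required size at scale $r' = r/2 \in (0,r_0)$.

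There is no serious obstacle; the one point to handle with care is that the intermediate exponent must be chosen strictly between $b$ and $d$ (rather than equal to $d$), so that the positive gap $b'-b$ is available to absorb the constant $C$ once $r'$ is taken sufficiently small. This is precisely where the flexibility ``$b<d$'' (as opposed to ``$b\leq d$'') in the hypothesis is used.
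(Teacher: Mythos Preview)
Your proof is correct and follows essentially the same route as the paper's: use that $\COV(X)>b$ to find a small scale at which covering numbers exceed the target, then convert covering numbers to packing numbers. The only differences are cosmetic: the paper re-derives the covering-to-packing step by taking a maximal $r$-packing and noting it is an $r$-net (rather than invoking Fact~\ref{fact:packing-covering}), and it folds your intermediate-exponent argument into the single sentence ``such an $r$ exists, because the covering dimension is strictly greater than $b$''. Your version, with the explicit intermediate $b'\in(b,d)$ to absorb both the constant $C$ and the constraint $r<r_0$, is arguably more carefully justified but not substantively different.
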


\begin{proof}
Let $r < r_0$ be a positive number such that every covering of $(X,\mD)$ with radius-$r$ balls requires more than $C\,r^{-b}$ balls.  Such an $r$ exists, because the covering dimension of $(X,\mD)$ is strictly greater than $b$.

Now let $S$ be any maximal $r$-packing in $(X,\mD)$. For every $x \in X$ there must exist some point $y\in S$ such that $\mD(x,y)<r$, as otherwise $S\cup \{x\}$ would be an $r$-packing, contradicting the maximality of $S$. Therefore, balls $B(x,r)$, $x\in S$ cover the metric space. It follows that $|S| \geq C\,r^{-b}$ as desired.
\end{proof}

\begin{proofof}{Lemma~\ref{lm:pmo-LM-MaxMinCOV}}
Pick $c\in (d,\MaxMinCOV(X))$. Choose $Y\subset X$ such that $\MinCOV(Y) \geq c$.

Let us recursively construct a ball-tree of strength $d$. Each tree node will correspond to an \myball centered in $Y$. Define the root to be some radius-$1$ \myball with center in $Y$.%
\footnote{Recall from Definition~\ref{def:ball-tree} that \myball is a pair $(x,r)$ where $x\in X$ is the ``center'' and $r\in (0,1]$ is the ``radius''.}
Now suppose we have defined a tree node $w$ which corresponds to an \myball with center $y\in Y$ and radius $r$. Let us consider the set
    $B = Y\cap B(y,\tfrac{r}{4})$.
Then $B$ is non-empty and open in the metric space $(Y,\mD)$. By definition of the min-covering dimension we have
    $\COV(B)\geq c$.
Now Lemma~\ref{lem:packing-folklore} guarantees the existence of a $(2r')$-packing $S\subset B$ such that $r'<r/4$ and $|S|\geq (r')^{-d}$. Let the children of $w$ correspond to points in $S$, so that for each $x\in S$ there is a child with center $x$ and radius $r'$.
\end{proofof}

%%%%%%%%%%%%%%
\subsection{Special case: metric space with a ``fat subset''}
\label{sec:fat-subset}

To gain intuition on the max-min-covering dimension, let us present a family of metric spaces where for a given covering dimension the max-min-covering dimension can be arbitrarily small.

Let us start with two concrete examples. Both examples involve an infinite rooted tree where the out-degree is low for most nodes and very high for a few. On every level of the tree the high-degree nodes produce exponentially more children than the low-degree nodes. For concreteness, let us say that all low-degree nodes have degree $2$, all high-degree nodes on a given level of the tree have the same degree, and this degree is such that the tree contains $4^i$ nodes on every level $i$. The two examples are as follows:
\begin{itemize}
\item one high-degree node on every level; the high-degree nodes form a path, called the \emph{fat end}.

\item $2^i$ high-degree nodes on every level $i$; the high-degree nodes form a binary tree, called the \emph{fat subtree}.
\end{itemize}
We assign a \emph{width} of $2^{-i/d}$, for
some constant $d>0$, to each level-$i$ node; this is
the diameter of the set of points contained in the corresponding
subtree.  The tree induces a metric space $(X,\mD)$ where $X$ is the set of all
ends\footnote{Recall from Section~\ref{sec:PMO-LB} that
an end of an infinite rooted tree is an infinite path starting
at the root.},
and for $x,y \in X$ we define
$\mD(x,y)$ to be the width of the least common ancestor of ends
$x$ and $y$.

In both examples the covering dimension of the entire metric space is $2d$, whereas there exists a low-dimensional ``fat subset'' --- the fat end or the fat subtree --- which is, in some sense, responsible for the high covering dimension of $X$. Specifically, for any subtree $U$ containing the fat end (which is just a point in the metric space) it holds that $\COV(U)=2d$ but
    $\COV(X\setminus U)=d$.
Similarly, if $S$ is the fat subtree and $U$ is a union of subtrees that cover $S$ then $\COV(U)=2d$ but $\COV(S\cup (X\setminus U))=d$.

It is easy to generalize the notion of ``fat subtree'' to an arbitrary metric space:

\begin{definition}
Given a metric space $(X,\mD)$, a closed subset $S\subset X$ is called \emph{$d$-fat}, $d<\COV(X)$, if $\COV(S)\leq d$ and $\COV(X\setminus U) \leq d$ for any open neighborhood $U$ of $S$.
%if for any open neighborhood $U$ of $S$ the covering dimension of
%    $S\cup (X\setminus U)$
%is at most $d$.
\end{definition}

\OMIT{ %%%
Note that any open neighborhood $U$ of a $d$-fat subset $S$ has covering dimension $\COV(X)$, whereas cutting out $U\setminus S$ reduces the covering dimension by a positive amount that is bounded away from zero,
irrespective of the choice of $U$.
}
In both examples above, the max-min-covering dimension is $d$. This is because  every point outside the fat subset has an open neighborhood whose covering dimension is at most $d$ (and the covering dimension of the fat subset itself is at most $d$, too). We formalize this argument as follows:

\begin{claim}\label{cl:fat-subset-MaxMinCOV}
Suppose metric space $(X,\mD)$ contains a subset $S\subset X$ of covering dimension at most $d$ such that every $x\in X\setminus S$ has an open neighborhood $N_x$ of covering dimension at most $d$. Then $\MaxMinCOV(X)\leq d$.
\end{claim}

\begin{proof}
Equivalently, we need to show that $\MinCOV(Y)\leq d$ for any subset $Y\subset X$.

Fix $Y\subset X$. For each $\eps>0$ we need to produce a non-empty subset $U\subset Y$ such that $U$ is open in $(Y,\mD)$ and its covering dimension is at most $d+\eps$. If $Y\subset S$ we can simply take $U=Y$, because $\COV(Y)\leq \COV(S)\leq d $. Now suppose there exists a point $x\in Y\setminus S$. Then $U = N_x \cap Y$ is non-empty and open in the metric space restricted to $Y$, and
    $\COV(U)\leq \COV(N_x)\leq d$.
\end{proof}

In fact, this property applies to any $d$-fat subset in a compact metric space.

\begin{lemma}\label{lm:fat-subset-MaxMinCOV}
Suppose a compact metric space $(X,\mD)$ contains a $d$-fat subset $S\subset X$. Then 
\begin{OneLiners}
\item[(a)] every $x\in X\setminus S$ has an open neighborhood $N_x$ of covering dimension at most $d$.
\item[(b)] $\MaxMinCOV(X)\leq d$.
\end{OneLiners}
\end{lemma}

\begin{proof}
To prove part (a), consider some point $x\in X\setminus S$. Since $S$ is closed, $\mD(x,S)>0$. Denoting $r=\tfrac14\,\mD(x,S)$, let $U$ be the union of all radius-$r$ open balls centered in $S$. Then $U$ is an open set containing $S$, so $\COV(X\setminus U)\leq d$. Since $B(x,r) \subset X\setminus U$, its covering dimension is at most $d$, too. 

Part (b) follows from part(a) by Claim~\ref{cl:fat-subset-MaxMinCOV}, using the fact that $\COV(S)\leq d$.
\end{proof}

%%%%%%%%%%%%%
\subsection{Warm-up: taking advantage of fat subsets}
\label{sec:warm-up}

\OMIT{ %%%%%%%
This far, we know that for every metric space $(X,\mD)$ there exists a bandit
algorithm whose regret dimension is $\COV(X)$, and no bandit algorithm can have regret dimension less than $\MaxMinCOV(X)$. When $\MaxMinCOV(X) <\COV(X)$, which of these two bounds is correct, or can they both be wrong? }

As a warm-up for our general algorithmic result, let us consider metric spaces with $d^*$-fat subsets, and design a modification of the zooming algorithm whose regret dimension can be arbitrarily close to $d^*$ for such metric spaces. In particular, we establish that $\COV(X)$ is, in general, not an optimal regret dimension. Further, our algorithm essentially retains the instance-specific guarantee with respect to the zooming dimension. As a by-product, we develop much of the technology needed for the general result in the next subsection.

The zooming algorithm from Section~\ref{sec:adaptive-exploration} may perform poorly on metric spaces with a fat subset $S$ if the optimal arm $x^*$ is located inside $S$. This is because as the confidence ball containing $x^*$ shrinks, it may be too burdensome to keep covering%
\footnote{Recall that an arm $x$ is called \emph{covered} at time $t$ if for some active arm $y$ we have $\mD(x,y) \leq r_t(y)$.}
the profusion of arms located near $x^*$, in the sense that it may require activating too many arms.
We fix this problem by imposing \emph{quotas} on the number of active arms. Thus some arms may not be covered. However, we show that (for a sufficiently long phase, with very high probability) there exists an optimal arm that is covered, which suffices for the technique in Section~\ref{subsec:zooming-analysis} to produce the desired regret bound.

We define the quotas as follows. For a phase of duration $T$ and a fixed $d>d^*$, the quotas are
\begin{align*}
\forall\, Y\in \{ X\setminus S,\, S\}: \quad
|\{ \emph{active arms $x \in Y$}:\, r_t(x) \geq \rho \}| \leq \rho^{-d},
\quad \rho= T^{- 1/(d+2)}.
\end{align*}
\noindent We use a generic modification of the zooming algorithm where the activation rule only considers arms that, if activated, do not violate any of the given quotas.

\newcommand{\Eligible}{\mathtt{EligibleArms}}

\begin{algorithm}[h]
\caption{(zooming algorithm with quotas)}
\label{alg:zooming-fat}

\begin{algorithmic}
\FOR{phase $i=1,2,3, \ldots$}
\STATE Initially, no arms are active.
\FOR{round $t=1,2,3, \ldots, 2^i$}
\STATE
    $\Eligible = \{ \text{arms $x\in X$: activating $x$ does not violate any quotas}\}$.
\STATE \emph{Activation rule:} if some arm $x\in \Eligible$ is not covered,
\STATE \hspace{5mm}   pick any such arm and activate it.
\STATE \emph{Selection rule:} play any active arm with the maximal index~\refeq{eq:index}.
\ENDFOR
\ENDFOR
\end{algorithmic}
\end{algorithm}

To pave the way for a generalization, consider a sequence of sets $(S_0,S_1,S_2) = (X,S,\emptyset)$, and let $k=1$ be the number of non-trivial sets in this sequence. Our algorithm and analysis easily generalize to a sequence of closed subsets
    $S_0 \LDOTS S_{k+1} \subset X$, $k\geq 1$
which satisfies the following properties:
\begin{OneLiners}
\item $X = S_0 \supset S_1 \supset \ldots \supset S_k \supset S_{k+1} = \emptyset $; the sequence is strictly decreasing.
\item for every $i\in \{0\LDOTS k\}$ and any open subset $U\subset X$ that contains $S_{i+1}$, it holds that $\COV(S_i\setminus U) \leq d$.~%
\footnote{For $i=k$, this condition is equivalent to $\COV(S_k)\leq d$.}
\end{OneLiners}
We call such sequence a \emph{$d$-fatness decomposition} of length $k$.

We generalize the quotas in an obvious way: for a phase of duration $T$ and a fixed $d>d^*$,
\begin{align}
&\forall i\in\{0 \LDOTS k\}: \nonumber\\
&\qquad|\{ \emph{active arms $x \in S_i\setminus S_{i+1}$}:\, r_t(x) \geq \rho \}| \leq \rho^{-d},
\quad \rho= T^{- 1/(d+2)}. \label{eq:quota-fatSet}
\end{align}
This completes the specification of the algorithm. The invariant~\refeq{eq:quota-fatSet} holds for each round; this is because during a given phase the confidence radius of every active arm does not increase over time.

\begin{remark}
Essentially, the algorithm ``knows" the decomposition 
    $S_0 \LDOTS S_{k+1}$
and the parameter $d$.
To implement the algorithm, it suffices to use the decomposition via a covering oracle for each subset
    $S_{i+1}\setminus S_i$, $i\in \{0 \LDOTS k\}$.
Here a covering oracle for subset $Y\subset X$ 
%generalizes the covering oracle defined in Section~\ref{subsec:intro-access}: 
%the oracle 
takes a finite collection of open balls, where each ball is represented as a (center, radius) pair, and either declares that these balls cover $Y$ or outputs an uncovered point.
\end{remark}

\begin{theorem}\label{lm:zooming-fatSet}
Consider the \problem on a compact metric space $(X,\mD)$ which contains a $d^*$-fatness decomposition $(S_0 \LDOTS S_{k+1})$ of finite length $k \geq 1$. Let $\A$ be the zooming algorithm (Algorithm~\ref{alg:zooming-fat}) with quotas~\refeq{eq:quota-fatSet}, for some known parameter $d>d^*$. Then the regret dimension of $\A$ is at most $d$. Moreover, the instance-specific regret dimension of $\A$ is bounded from above by the zooming dimension.
\end{theorem}

The remainder of this section presents the full proof of
Theorem~\ref{lm:zooming-fatSet}. The following rough outline of the proof
may serve as a useful guide.

\begin{proof}[Proof Outline]
For simplicity, assume there is a unique optimal arm, and call it $x^*$. The desired regret bounds follow from the analysis in Section~\ref{subsec:zooming-analysis} as long as $x^*$ is covered w.h.p.\ throughout any sufficiently long phase. All arms in $S=S_k$ are covered eventually because $\COV(S)<d$, so if $x^*\in S$ then we are done. If $x^*\not\in S$ then pick the largest $\ell$ such that $x^*\in S_{\ell}\setminus S_{\ell+1}$. Then there is some $\eps>0$ such that all arms in $S_{\ell+1}$ are suboptimal by at least $\eps$. Letting $U$ be an $\frac{\eps}{2}$-neighborhood of $S_{\ell+1}$, note that each arm in $U$ is suboptimal by at least $\frac{\eps}{2}$. It follows that (w.h.p.)  the algorithm cannot activate too many arms in $U$. On the other hand, $S_\ell \setminus U$ has a low covering dimension, so (w.h.p.) the algorithm cannot activate too many arms in $S_{\ell} \setminus U$, either. It follows that (for a sufficiently long phase, w.h.p.) the algorithm stays within the quota, in which case $x^*$ is covered.
\end{proof}

To prove Theorem~\ref{lm:zooming-fatSet}, we incorporate the analysis from Section~\ref{subsec:zooming-analysis} as follows. We state a general lemma which applies to the zooming algorithm with a modified activation rule (and no other changes). We assume that the new activation rule is at least as selective as the original one: an arm is activated only if it is not covered, and at most one arm is activated in a given round. We call such algorithms \emph{zooming-compatible}.

A phase of a zooming-compatible algorithm is called \emph{clean} if the property in Claim~\ref{cl:conf-rad} holds for each round in this phase. Claim~\ref{cl:conf-rad} carries over: each phase $i$ is clean with probability at least $1-4^{-i}$. Let us say that a given round in the execution of the algorithm is \emph{well-covered} if after the activation step in this round some optimal arm is covered. (We focus on compact metric spaces, so the supremum $\mu^* = \sup(\mu,X)$ is achieved by some arm; we will call such arm \emph{optimal}.) A phase of the algorithm is called well-covered if all rounds in this phase are well-covered.

An algorithm is called \emph{$(k,d)$-constrained} if in every round $t$ it holds that
\begin{align*}
|\{ \emph{active arms $x \in X$}:\, r_t(x) \geq \rho \}| \leq (k+1) \,\rho^{-d},
\quad \rho= T^{- 1/(d+2)},
\end{align*}
where $T$ is the duration of the current phase. Note that the zooming algorithm with quotas~\refeq{eq:quota-fatSet} is $(k,d)$-constrained by design.

\begin{lemma}[immediate from Section~\ref{subsec:zooming-analysis}]
\label{lm:zooming-compatible}
Consider the \problem on a compact metric space. Let $\A_T$ be one phase of a zooming-compatible algorithm, where $T$ is the duration of the phase. Consider a clean run of $\A_T$.
\begin{itemize}
\item[(a)] Consider some round $t\leq T$ such that all previous rounds are well-covered. Then
\begin{align}
	\mD(x,y) > \min(r_t(x), r_t(y)) \geq \tfrac13 \min(\Delta(x), \Delta(y)).
\end{align}
for any two distinct active arms $x,y \in X$.
\item[(b)] Suppose the phase is well-covered. If $\A_T$ is $(c,d)$-constrained, $d\geq 0$, then it has regret
    $$R(T) \leq O(c\log T)^{\frac{1}{d+2}}\,\times T^{\frac{d+1}{d+2}}.$$
This regret bound also holds if $d$ is the zooming dimension with multiplier $c>0$.
\end{itemize}
\end{lemma}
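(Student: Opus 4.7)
My plan is to adapt the analysis of Section~\ref{subsec:zooming-analysis} to zooming-compatible algorithms, showing that the well-covered hypothesis is exactly what was needed to keep that analysis alive when the activation rule is modified. The original proof invoked the activation rule in two places: (i) inside Lemma~\ref{lm:bound-active}, where ``all arms covered at all times'' was used to produce an active arm close to an optimal one; and (ii) inside Corollary~\ref{cor:sparsity}, where non-coverage of a newly activated arm at its activation time yielded a lower bound on its distance to prior active arms. The second use carries over verbatim for any zooming-compatible algorithm (it only uses the precondition that a newly activated arm is uncovered), and the first use is exactly what the well-covered hypothesis provides.

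For part (a), I would re-run the proof of Lemma~\ref{lm:bound-active} at every well-covered round $s$: pick an optimal arm $x^*$ (which exists by compactness of $X$) and an active arm covering it at round $s$; the computation $I_s(x)\geq \mu^*$ for the arm $x$ played at round $s$ is unchanged, yielding $\Delta(x)\leq 3\,r_s(x)$. The ``last-played'' reduction then gives $\Delta(x)\leq 3\,r_t(x)$ at time $t$ for every active arm (all previous rounds, including all past plays, being well-covered). For two distinct active arms $x,y$ at time $t$ with $x$ activated first, zooming-compatibility forces $y$ to be uncovered at its activation time $s'$, so $\mD(x,y)>r_{s'}(x)\geq r_t(x)$ (the confidence radius is nonincreasing while an arm is active), and the preceding inequality finishes part~(a).

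For part (b), I would slot the just-derived sparsity together with the $(c,d)$-constraint into the bookkeeping of Section~\ref{subsec:zooming-analysis}. Set $\rho=T^{-1/(d+2)}$ and split active arms at $\Delta(x)=3\rho$: the sub-$3\rho$ contribution to $\sum \Delta(x)n_T(x)$ is at most $3\rho T$; for the super-$3\rho$ arms, part (a) gives $r_T(x)\geq \Delta(x)/3>\rho$, so $(c,d)$-constrainedness bounds their number by $(c+1)\rho^{-d}$. Combining with Corollary~\ref{cor:bound-active}'s per-arm bound $\Delta(x)n_T(x)\leq O(\log T)/\Delta(x)$ and resumming over scale bands exactly as in inequality~\eqref{eq:clean-phase} yields the claimed regret (absorbing constants into the $O(\cdot)$). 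For the zooming-dimension variant, the $(c,d)$-constraint is not needed at all: the sparsity in part (a) guarantees that any set of diameter $<r/8$ contains at most one arm from $A_{(i,t)}$, so $|A_{(i,t)}|\leq N_{r/8}(X_{\mu,r})\leq c\,r^{-d}$, and the original computation in Section~\ref{subsec:zooming-analysis} goes through word for word.

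The main obstacle is purely bookkeeping: confirming that the ``last-played'' reduction inside Lemma~\ref{lm:bound-active} remains valid when only past rounds (and not the current round $t$) are assumed well-covered. This is immediate since a play at a past round $s<t$ is itself a well-covered round, so the $I_s(x)\geq \mu^*$ computation applies there, and the bound on $\Delta(x)$ propagates to $r_t(x)$ via the monotonicity of the confidence radius on each active arm's lifetime. No new probabilistic tools or structural assumptions are required; Claim~\ref{cl:conf-rad} on clean phases depends only on the payoff samples, not on which arms the algorithm chooses, so it transfers verbatim to any zooming-compatible algorithm.
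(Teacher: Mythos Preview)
Your proposal is correct and matches the paper's approach exactly: the lemma is labeled ``immediate from Section~\ref{subsec:zooming-analysis}'' because the well-covered hypothesis is precisely the substitute for the original algorithm's ``all arms covered at all times'' invariant, and you have correctly located the two places where that invariant was used (inside Lemma~\ref{lm:bound-active} for the index lower bound, and inside Corollary~\ref{cor:sparsity} for the packing). Your handling of the bookkeeping --- in particular, noting that only \emph{past} rounds need to be well-covered because $r_t(x)$ depends only on plays strictly before $t$, and that zooming-compatibility alone (not well-coveredness) yields the activation-time non-coverage --- is exactly the care the paper implicitly assumes.
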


An algorithm is called \emph{eventually well-covered} if for every
problem instance $(X,\mD,\mu)$ there is a constant $i_0$ such that every
clean phase $i > i_0$ is guaranteed to be well-covered,
as long as the preceding phase $i-1$ is also clean.%
\footnote{The last clause (``as long as the preceding phase is also clean'') is not needed for this subsection; it is added for compatibility with the analysis of the per-metric optimal algorithm in Section~\ref{sec:fatness-PMO}.}

\begin{corollary}[immediate from Section~\ref{subsec:zooming-analysis}]\label{cor:zooming-compatible}
Consider the \problem on a compact metric space. Let $\A$ be a zooming-compatible algorithm. Assume $\A$ is eventually well-covered. Then its (instance-specific) regret dimension is at most the zooming dimension. Further, if $\A$ is $(k,d)$-constrained, for some constant $k>0$, then the regret dimension of $\A$ is at most $d$.
\end{corollary}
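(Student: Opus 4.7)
The plan is to assemble the pieces already provided in Section~\ref{subsec:zooming-analysis} and formalized in Lemma~\ref{lm:zooming-compatible}. Fix a problem instance $(X,\mD,\mu)$ and let $i_0 = i_0(\mu)$ be the constant supplied by the eventually-well-covered assumption. Because $\A$ is zooming-compatible, the definition of the confidence radius and the selection rule coincide with those of the original zooming algorithm, so Claim~\ref{cl:conf-rad} carries over verbatim: each phase $i$ is clean with probability at least $1-4^{-i}$. Fix any $d$ strictly greater than the zooming dimension and let $c$ be the associated multiplier.

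For a single phase $i>i_0$, I would union-bound the event that both phase $i-1$ and phase $i$ are clean, which happens with probability at least $1-4^{-(i-1)}-4^{-i}$. On this event, by the eventually-well-covered hypothesis, phase $i$ is well-covered, so Lemma~\ref{lm:zooming-compatible}(b) applies to the length-$T_i$ run, yielding a phase regret of at most
\[
O(c\log T_i)^{1/(d+2)}\, T_i^{(d+1)/(d+2)},\qquad T_i=2^i.
\]
On the complementary low-probability event, I bound the phase regret trivially by $T_i$; this contributes $O(4^{-i}\cdot T_i)=O(2^{-i})$ in expectation, which sums to $O(1)$ over all $i$ and can be absorbed into the instance-dependent constant.

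Summing over phases, the contribution of phases $i\leq i_0$ is at most $\sum_{i\leq i_0} 2^i = O(2^{i_0})$, a constant depending only on $\mu$. The contribution of phases $i_0 < i \leq \lceil\log_2 t\rceil$ is dominated by its last term, giving a total expected regret at time $t$ of at most $C(\mu)\, (\log t)^{1/(d+2)}\, t^{(d+1)/(d+2)}$ for some constant $C(\mu)$. By definition, this shows $\RegretDim_\mu(\A)\leq d$, and since $d$ was an arbitrary upper bound on the zooming dimension, $\RegretDim_\mu(\A)$ is at most the zooming dimension of the instance.

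For the second assertion, note that being $(k,d)$-constrained for a constant $k>0$ is exactly the hypothesis needed to invoke the ``constrained'' version of Lemma~\ref{lm:zooming-compatible}(b), which delivers the same per-phase regret bound with $d$ in place of the zooming dimension (and multiplier $c=k+1$), \emph{independently} of the problem instance. Repeating the summation above yields $R_\A(t)\leq C\,(\log t)^{1/(d+2)}\,t^{(d+1)/(d+2)}$ with $C$ depending only on $k$ and $i_0(\mu)$, hence $\RegretDim(\A)=\sup_\mu \RegretDim_\mu(\A)\leq d$. The only mild subtlety is handling the dependency on the previous phase in the eventually-well-covered definition, but this is harmless since it only costs us the extra factor $4^{-(i-1)}$ in the union bound, which still sums geometrically.
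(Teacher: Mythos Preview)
Your proposal is correct and follows exactly the approach the paper intends: the corollary is labeled ``immediate from Section~\ref{subsec:zooming-analysis}'' and carries no separate proof, so what you have written is precisely the assembly of Claim~\ref{cl:conf-rad}, the eventually-well-covered hypothesis, and Lemma~\ref{lm:zooming-compatible}(b), summed over phases with the trivial bound on phases that are either early ($i\le i_0$) or not clean. The only cosmetic wrinkle is the phrasing ``fix any $d$ strictly greater than the zooming dimension and let $c$ be the associated multiplier'': it is cleaner to invoke Lemma~\ref{lm:zooming-compatible}(b) directly with $d$ equal to the zooming dimension for a chosen multiplier $c$, obtain the $(c\log t)^{1/(d+2)}t^{(d+1)/(d+2)}$ bound, and then observe that the $\log$ factor is absorbed when passing to any strictly larger exponent in the definition of $\RegretDim_\mu$.
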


Since our algorithm is $(k,d)$-constrained by design, to complete the proof of Theorem~\ref{lm:zooming-fatSet} it suffices to prove that the algorithm is eventually well-covered. This is the part of the analysis that is new, compared to Section~\ref{subsec:zooming-analysis}. The crux of the argument is encapsulated in the following claim.

\begin{claim}\label{lm:zooming-compatible-wellCovered}
Consider the \problem on a compact metric space. Fix $d>0$. Let $S' \subset S \subseteq  X$ (where $S'$ can be empty) be closed subsets such that
    $\COV(S\setminus U)<d$
for any open neighborhood $U$ of $S'$. Further, suppose $S$ contains some optimal arm and $S'$ does not. Let $\A_T$ be a clean phase of a zooming-compatible algorithm, where $T$ is the duration of the phase.  Suppose $\A_T$ activates an arm whenever some arm in $S\setminus S'$ is not covered and, for some $\rho_T>0$,
$$ |\{ \emph{active arms $x \in S\setminus S'$}:\, r_t(x) \geq \rho_T \}| < \rho_T^{-d} .$$
Here $\rho_T$ depends on $T$ so that $\rho_T\to 0$ as $T\to\infty$. Then the phase is well-covered whenever $T\geq T_0$, for some finite $T_0$ which may depend on the problem instance.
\end{claim}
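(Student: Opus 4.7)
The plan is to argue by induction on $t$: assuming rounds $1,\ldots,t-1$ are well-covered, we show round $t$ is also well-covered, provided $T\geq T_0$ for a finite $T_0$ depending on the problem instance. Since $X$ is compact and $\mu$ is Lipschitz, $\mu^*$ is attained; together with the assumptions, this produces an optimal arm $x^*\in S\setminus S'$. If $S'=\emptyset$, set $U=\emptyset$; otherwise the closed set $S'$ satisfies $\eps \triangleq \mu^* - \sup(\mu,S') > 0$, and defining the open neighborhood $U = \{x\in X:\, \mD(x,S') < \eps/2\}$, the Lipschitz property guarantees $\Delta(x) \geq \eps/2$ for every $x \in U$. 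By the hypothesis $\COV(S\setminus U) < d$, there exist constants $c' > 0$ and $d' \in [0,d)$ with $N_r(S\setminus U) \leq c'\,r^{-d'}$ for all $r>0$. Finally, set $N_\eps := N_{\eps/6}(X)$, which is finite by compactness.

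The key step is to bound $Q_t := |\{\text{active arms } x \in S\setminus S' :\, r_t(x) \geq \rho_T\}|$. By the inductive hypothesis, Lemma~\ref{lm:zooming-compatible}(a) applies at time $t$, yielding
\begin{align*}
\mD(x,y) > \min(r_t(x),r_t(y)) \geq \tfrac13\min(\Delta(x),\Delta(y))
\end{align*}
for any two distinct active arms. Partition the arms counted by $Q_t$ into those in $(S\cap U)\setminus S'$ and those in $S\setminus U$. Arms of the first type have pairwise $\Delta\geq \eps/2$, hence pairwise distance $> \eps/6$, so they form an $(\eps/6)$-packing of $X$ of size at most $N_\eps$ (by Fact~\ref{fact:packing-covering}). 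Arms of the second type have $r_t\geq \rho_T$, hence pairwise distance $> \rho_T$, so they form a $\rho_T$-packing of $S\setminus U$ of size at most $N_{\rho_T}(S\setminus U) \leq c'\,\rho_T^{-d'}$. Thus $Q_t \leq N_\eps + c'\,\rho_T^{-d'}$, and since $d' < d$ and $\rho_T\to 0$ as $T\to\infty$, there exists a finite $T_0$ (depending only on the problem instance) such that $Q_t < \rho_T^{-d}$ whenever $T\geq T_0$.

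To close the induction, consider round $t$ with $T\geq T_0$. If some optimal arm is already covered before the activation step, it remains covered afterwards, and $t$ is well-covered. Otherwise $x^* \in S\setminus S'$ is uncovered, and the bound $Q_t < \rho_T^{-d}$ verifies the quota hypothesis of the claim, so $\A_T$ activates some arm in round $t$. Because the newly activated arm has confidence radius $\sqrt{8\,\iPhase} > 1$, which exceeds the diameter of $(X,\mD)$, its confidence ball covers all of $X$, in particular covering $x^*$. Hence $t$ is well-covered. The base case $t=1$ is identical, vacuously inheriting well-coverage of no previous rounds. The main obstacle is ensuring that the quota is never violated; we resolve this by combining the Lipschitz-induced reward gap around $S'$ with the sub-$d$ covering dimension of $S\setminus U$.
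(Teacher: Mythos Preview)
Your proof is correct and follows essentially the same approach as the paper: induction on rounds, using the $\eps/2$-neighborhood $U$ of $S'$ to split the active arms into those with $\Delta\geq\eps/2$ (bounded via an $\eps/6$-packing of a compact set) and those in $S\setminus U$ (bounded via $\COV(S\setminus U)<d$), then invoking Lemma~\ref{lm:zooming-compatible}(a) under the inductive hypothesis to certify the quota and force an activation whose confidence ball covers $X$. Your handling of the $S'=\emptyset$ case by setting $U=\emptyset$ is a mild streamlining compared to the paper's separate case analysis, but the argument is otherwise the same.
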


\begin{proof}
Recall that a $\rho$-packing is a set $P\subset X$ such that any two points in this set are at distance at least $\rho$. For any $C>0$ and any subset $Y\subset X$ with $\COV(Y)<d$, there exists $\rho_0$ such that for any $\rho\leq \rho_0$ any $\rho$-packing of $Y$ consists of at most $C\,\rho^{-d}$ points.

We pick $\rho_0>0$ as follows.
\begin{itemize}

\item If $S'$ is empty, we pick $\rho_0>0$ such that any $\rho$-packing of $S$, $\rho\leq \rho_0$, consists of at most $\rho^{-d}$ points. Such $\rho_0$ exists because $\COV(S)<d$.

\item Now suppose $S'$ is not empty. Since the metric space is compact and $S'$ is closed, $\mu$ attains its supremum on $S'$. Since $S'$ does not contain any optimal arm, it follows that
    $\sup(\mu,S')<\mu^*-\eps$
for some $\eps>0$. Let
    $U  = \cup_{x\in S'}\; B(x,\tfrac{\eps}{2})$
be the $\tfrac{\eps}{2}$-neighborhood of $S'$. Then for each arm $x\in U$ we have $\Delta(x)>\tfrac{\eps}{2}$. Since the metric space is compact, there is $c_0<\infty$ such that any $\tfrac{\eps}{6}$-packing of $U$ consists of at most $c_0$ points. Moreover, we are given that
    $\COV(S\setminus U) <d$. Pick $\rho_0>0$ such that any $\tfrac{\rho}{4}$-packing of $S\setminus U$ consists of at most $\rho^{-d}-c_0$ points, for any $\rho\leq \rho_0$.
\end{itemize}

Supose $T$ is such that $\rho_T\leq \rho_0$; denote $\rho = \rho_T$. Let us prove that all rounds in this phase are well-covered. Let us use induction on round $t$. The first round of the phase is well-covered by design, because in this round some arm is activated, and the corresponding confidence ball covers the entire metric space. Now assume that for some round $t$, all rounds before $t$ are well-covered. Let $P$ be the set of all arms $x\in S$ that are active at time $t$ with $r_t(x)\geq \rho$. We claim that $|P|<\rho^{-d}$. Again, we consider two cases depending on whether $S'$ is empty.
\begin{itemize}

\item Suppose $S'$ is empty. By Lemma~\ref{lm:zooming-compatible}(a), $P$ is an $\rho$-packing, so $|P|<\rho^{-d}$ by our choice of $\rho_0$.

\item Suppose $S'$ is not empty. For any active arm $x\in U$ it holds that $\Delta(x)\geq \tfrac{\eps}{2}$. Then by Lemma~\ref{lm:zooming-compatible}(a) the active arms in $U$ form an $\tfrac{\eps}{6}$-packing of $U$. So $U$ contains at most $c_0<\infty$ active arms.

Further, let $P'$ be the set of all arms in $S\setminus U$ that are active at round $t$ with $r_t(x)\geq \rho$. By Lemma~\ref{lm:zooming-compatible}(a), $P$ is a $\rho$-packing, so $|P'|<\rho^{-d}-c_0$ by our choice of $\rho_0$. Again, it follows that $|P|<\rho^{-d}$.
\end{itemize}

Therefore by our assumption the algorithm activates an arm whenever some arm in $S\setminus S'$ is not covered. It follows that $S\setminus S'$ is covered after the activation step, so in particular some optimal arm is covered.
\end{proof}

\begin{corollary}
In the setting of Theorem~\ref{lm:zooming-fatSet}, any clean phase of algorithm $\A$ of duration $T\geq T_0$ is well-covered, for some finite $T_0$ which can depend on the problem instance.
\end{corollary}

\begin{proof}
Pick the largest $\ell\in\{0 \LDOTS k\}$ such that $S_\ell\setminus S_{\ell+1}$ contains some optimal arm. Then Claim~\ref{lm:zooming-compatible-wellCovered} applies with $S = S_{\ell}$ and $S' = S_{\ell+1}$.
\end{proof}

\OMIT{ %%%%%%%%%%%%%%%%
\begin{claim}
In the setting of Theorem~\ref{lm:zooming-fatSet}, any clean phase of algorithm $\A$ of duration $T\geq T_0$ is well-covered, for some finite $T_0$ which can depend on the problem instance $(X,\mD,\mu)$.
\end{claim}
\begin{proof}
Recall that a $\rho$-packing is a set $P\subset X$ such that any two points in this set are at distance at least $\rho$. For any $C>0$ and any subset $Y\subset X$ with $\COV(Y)<d$, there exists $\rho_0$ such that for any $\rho\leq \rho_0$ any $\rho$-packing of $Y$ consists of at most $C\,\rho^{-d}$ points.

We pick
    $T_0 = \rho_0^{-(d+2)}$
as follows. Pick the largest $\ell\in\{0 \LDOTS k\}$ such that $S_\ell\setminus S_{l+1}$ contains some optimal arm.
\begin{itemize}

\item Suppose $\ell=k$, that is some optimal arm lies in $S=S_k$. Pick $\rho_0>0$ such that any $\rho$-packing of $S$, $\rho\leq \rho_0$, consists of at most $\rho^{-d}$ points. Such $\rho_0$ exists because $\COV(S)<d$.

\item Suppose $\ell<k$. Since the metric space is compact and $S_{\ell+1}$ is closed, it follows that $\sup(\mu,S_{\ell+1})<\mu^*-\eps$ for some $\eps>0$. Let
    $U  = \cup_{x\in S_{\ell+1}}\; B(x,\tfrac{\eps}{2})$
be the $\tfrac{\eps}{2}$-neighborhood of $S_{\ell+1}$. Then for each arm $x\in U$ we have $\Delta(x)>\tfrac{\eps}{2}$. Since the metric space is compact, any $\tfrac{\eps}{6}$-packing of $U$ consists of at most $c_0<\infty$ points. Moreover, $\COV(S_{\ell}\setminus U) <d$, because $S$ is a $d^*$-fat subset. Pick $\rho_0>0$ such that any $\tfrac{\rho}{4}$-packing of $S_{\ell}\setminus U$ consists of at most $\rho^{-d}-c_0$ points, for any $\rho\leq \rho_0$.
\end{itemize}

Consider a clean phase of $\A$ of duration $T\geq T_0$, and let $\rho = T^{-1/(d+2)}$. Let us prove that all rounds in this phase are well-covered use induction on round $t$. The first round of the phase is well-covered by design, because in this round some arm is activated, and the corresponding confidence ball covers the entire metric space. Now assume that for some round $t$, all rounds before $t$ are well-covered. Again, we consider two cases depending on $\ell$.

\begin{itemize}

\item Suppose $\ell=k$. Then $S=S_{\ell}$ contains some optimal arm $x^*$. Let $P$ be the set of all arms $x\in S$ that are active at time $t$ with $r_t(x)\geq \rho$.
By Lemma~\ref{lm:zooming-compatible}(a), $P$ is an $\rho$-packing, so $|P|<\rho^{-d}$ by our choice of $\rho_0$. Therefore in this round there is room under the quota for arms in $S$, so $S\subset \Eligible$. It follows that $S$ is covered after the activation step (if $S$ is not covered before the activation step then some arm is activated, after which all of $X$ is covered). So in particular $x^*$ is covered.

\item Suppose $\ell<k$. Then there exists an optimal arm $x^*\in S_{\ell}\setminus U$, where $U$ is the open neighborhood defined above. For any active arm $x\in U$ it holds that $\Delta(x)\geq \tfrac{\eps}{2}$.
Then by Lemma~\ref{lm:zooming-compatible}(a) the active arms in $U$ form an $\tfrac{\eps}{6}$-packing of $U$. So $U$ contains at most $c_0<\infty$ active arms.

Further, let $P$ be the set of all arms in $S_{\ell}\setminus U$ that are active at round $t$ with $r_t(x)\geq \rho$. By Lemma~\ref{lm:zooming-compatible}(a), $P$ is an $\rho$-packing, so $|P|<\rho^{-d}-c_0$ by our choice of $\rho_0$. Therefore in this round there is room under the quota for arms in $S_{\ell}\setminus S_{\ell+1}$, so
    $S_{\ell}\setminus S_{\ell+1}\subset \Eligible$.
It follows that $S_{\ell}\setminus S_{\ell+1}$ is covered after the activation step (if it is not covered before the activation step, then some arm is activated, in which case all of $X$ is covered). So in particular $x^*$ is covered.
\end{itemize}
So in both cases round $t$ is well-covered, and we are done.
\end{proof}
} %%%%%%%%%%%%

In passing, let us give an example of a fatness decomposition of length $>1$. Start with a metric space $(X,\mD)$ with a $d$-fat subset $S$. Consider the product metric space $(X\times X,\mD^*)$ defined by
	$$\mD^*( (x_1, x_2),(y_1, y_2)) = \mD(x_1,y_1) + \mD(x_2, y_2).$$
This metric space admits a $2d$-fatness decomposition
    $$ (S_0, S_1, S_2, S_3) = (X\times X,\; (S\times X) \cup (X\times S),\; S\times S,\; \emptyset).$$

%%%%%%%%%%%%%%%%%%%%%%%
\subsection{Transfinite fatness decomposition}
\label{sec:fatness-PMO}

\newcommand{\PhaseAlg}{\ensuremath{\A_{\mathtt{ph}}}}

The fact that $d=\MaxMinCOV(X) < \COV(X)$ does not appear to imply the existence of a $d$-fatness decomposition of any finite length. Instead, we prove the existence of a much more general structure which we then use to design the per-metric optimal algorithm. This structure is a \emph{transfinite} sequence of subsets of $X$, i.e. a sequence indexed by ordinal numbers rather than integers.%
\footnote{Formally, a transfinite sequence of length $\beta$ (where $\beta$ is an ordinal) is a mapping from
$\{\text{ordinals }\lambda: 0\leq \lambda \leq \beta \}$
to the corresponding domain, in this case the power set of $X$.}
 	
\begin{definition}\label{def:fatness-transfinite}
Fix a metric space $(X,\mD)$. A \emph{transfinite $d$-fatness decomposition} of length $\beta$, where $\beta$ is an ordinal,
is a  transfinite sequence
	$\{S_\lambda\}_{0 \leq \lambda \leq \beta+1}$
of closed subsets of $X$ such that:
\begin{OneLiners}
\item[(a)] $S_0 = X$, $S_{\beta+1} = \emptyset$, and
	$S_\nu \supseteq S_\lambda$ whenever $\nu < \lambda$.
\item[(b)] for any ordinal $\lambda \leq \beta$ and any open set
$U\subset X$ containing $S_{\lambda+1}$ it holds that
	$\COV(S_\lambda \setminus U) \leq d$.~%
\footnote{For $\lambda = \beta$, this is equivalent to $\COV(S_{\beta})\leq d $.}
\item[(c)] If $\lambda$ is a limit ordinal then
	$S_{\lambda} = \bigcap_{\nu < \lambda} S_\nu$.
\end{OneLiners}
\end{definition}

\OMIT{ %%%
\item[(b)] if $V\subset X$ is closed, then the set
    $\{\text{ordinals } \nu \leq \beta$:\, $V \mbox{ intersects } S_\nu \}$
has a maximum element.
}

For finite length $\beta$ this is the same as (non-transfinite) $d$-fatness decomposition. The smallest infinite length $\beta$ is a countable infinity $\beta = \omega$. Then the transfinite sequence
	$\{S_\lambda\}_{0 \leq \lambda \leq \beta+1}$
consists of subsets
    $ \{ S_i\}_{i\in \N}$
followed by
    $S_{\omega} = \cap_{i\in \N}\, S_i$
and
    $S_{\omega+1} = \emptyset $.

\begin{proposition} \label{prop:fatness-dim}
For every compact metric space $(X,\mD)$, the
max-min-covering dimension %$\MaxMinCOV(X)$
is equal to the infimum of all
$d$ such that $(X,\mD)$ has a transfinite
$d$-fatness decomposition.
\end{proposition}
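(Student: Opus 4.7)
The plan is to prove two matching inequalities: any transfinite $d$-fatness decomposition forces $\MaxMinCOV(X) \leq d$, and conversely, whenever $\MaxMinCOV(X) < d$ one can build such a decomposition by a transfinite recursion that strips away points with locally small covering dimension. Combined, these give equality of $\MaxMinCOV(X)$ with the infimum over $d$ admitting a decomposition.

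For the first direction, fix a decomposition $\{S_\lambda\}_{0 \leq \lambda \leq \beta+1}$ of strength $d$ and a non-empty $Y \subseteq X$; I want to exhibit a non-empty relatively open $U' \subseteq Y$ with $\COV(U') \leq d$. Take $U_0 = Y$ and set $\rho^* = \sup\{\nu : U_0 \subseteq S_\nu\}$. The key point is that $\rho^*$ is attained: at any limit $\lambda$, if $U_0 \subseteq S_\nu$ for all $\nu<\lambda$ then property~(c) yields $U_0 \subseteq \bigcap_{\nu<\lambda} S_\nu = S_\lambda$. If $\rho^* = \beta$, property~(b) directly gives $\COV(U_0) \leq \COV(S_\beta) \leq d$. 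Otherwise pick $y \in U_0 \setminus S_{\rho^*+1}$ (non-empty by maximality of $\rho^*$), choose $r>0$ with $\overline{B(y,r)} \cap S_{\rho^*+1} = \emptyset$, and set $V = X \setminus \overline{B(y,r)}$; then $V$ is an open neighborhood of $S_{\rho^*+1}$, and property~(b) with $\lambda = \rho^*$ gives $\COV(S_{\rho^*} \cap \overline{B(y,r)}) \leq d$. The set $U' = B(y,r) \cap Y$ is non-empty, relatively open in $Y$, and contained in $S_{\rho^*} \cap \overline{B(y,r)}$, hence $\COV(U') \leq d$, proving $\MinCOV(Y) \leq d$.

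For the construction direction, suppose $\MaxMinCOV(X) < d$ and define $\{S_\lambda\}$ by transfinite recursion with $S_0 = X$, $S_\lambda = \bigcap_{\nu < \lambda} S_\nu$ at limit ordinals, and
\[
    S_{\lambda+1} = \{\, x \in S_\lambda :\ \COV(U \cap S_\lambda) > d \text{ for every open neighborhood } U \text{ of } x \,\}
\]
at successors. Each $S_{\lambda+1}$ is closed in $X$, since its complement in $S_\lambda$ is a union of open sets. Property~(b) follows from compactness: any $x \in S_\lambda \setminus V$ comes with an open $U_x$ satisfying $\COV(U_x \cap S_\lambda) \leq d$; the closed (hence compact) set $S_\lambda \setminus V$ is covered by finitely many such $U_{x_i}$, and a finite union of sets of covering dimension $\leq d$ has covering dimension $\leq d$. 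At each successor, $S_{\lambda+1} \subsetneq S_\lambda$, because $\MinCOV(S_\lambda) \leq \MaxMinCOV(X) < d$ yields some open $U$ with $\COV(U \cap S_\lambda) < d$, excluding its points from $S_{\lambda+1}$. A strictly decreasing transfinite sequence of subsets of a set must terminate (else $\lambda \mapsto S_\lambda$ would inject the proper class of ordinals into the set $2^X$), so there is a least ordinal $\beta+1$ with $S_{\beta+1} = \emptyset$.

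The hardest step is the attainment of $\rho^*$ in the first direction: a priori, different relatively open $U_\nu \subseteq Y \cap S_\nu$ at successive levels could shrink to the empty set, leaving no common candidate. Property~(c) of the decomposition sidesteps this by making the transfinite intersection continuous from above, which is precisely why it is part of the definition. A secondary subtlety in the construction is that property~(b) asks for $\COV(S_\lambda \setminus V) \leq d$ for \emph{every} open $V \supseteq S_{\lambda+1}$, rather than only for $V = X \setminus (S_\lambda \setminus S_{\lambda+1})$; compactness of $X$ is exactly what turns the pointwise low-dimensional neighborhoods into this uniform bound.
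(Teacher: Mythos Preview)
Your proof is correct and follows essentially the same approach as the paper's: the first direction locates the maximal level $\rho^*$ with $Y \subseteq S_{\rho^*}$ (the paper phrases this as a transfinite induction by contradiction, but the content is identical), and the second direction is the same transfinite ``strip off the thin points'' recursion, with compactness used to pass from local low-dimensional neighborhoods to the uniform bound in property~(b). The only cosmetic differences are your constructive phrasing of direction one and the use of $>d$ rather than $\geq d$ in the successor step, neither of which affects the argument.
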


\newcommand{\ThinPt}{{\mathtt{Thin}}}
\newcommand{\FatPt}{{\mathtt{Fat}}}

\begin{proof}
Assume there exists a transfinite $d$-fatness decomposition
    	$\{S_\lambda\}_{0 \leq \lambda \leq \beta+1}$,
for some ordinal $\beta$. Let us show that $\MaxMinCOV(X)\leq d$. Suppose not, then there exists a non-empty subset $Y \subseteq X$ with $\MinCOV(Y) > d$. Let us use transfinite induction on $\lambda$ to prove that $Y\subseteq S_\lambda$ for all $\lambda \leq \beta$. This would imply $Y \subseteq S_\beta$ and consequently $\COV(S_\beta)>d$, contradiction.

The transfinite induction consists of three cases: ``zero case'', ``limit case'', and ``successor case''. The zero case is $Y\subseteq S_0=X$. The limit case is easy: if $\lambda\leq \beta$ is a limit ordinal and $Y\subseteq S_\nu$ for every $\nu<\lambda$ then
    $Y\subseteq S_\lambda = \cap_{\nu<\lambda} S_\nu $.
For the successor case, we assume $Y\subseteq S_\lambda$, $\lambda+1 \leq \beta$, and we need to show that
    $Y\subseteq S_{\lambda+1}$.
Suppose not, and pick some
    $x\in Y\cap (S_\lambda \setminus S_{\lambda+1})$.
Since $S_{\lambda+1}$ is closed, $x$ is at some positive distance $2\eps$ from $S_{\lambda+1}$. Then an $\eps$-neighborhood $U$ of $S_{\lambda+1}$ is disjoint with a ball $B = B(x,\eps)$. So $B\subseteq S_\lambda \setminus U$, which implies $\COV(B)\leq d$ by definition of transfinite $d$-fatness decomposition. However, since $B\cap Y$ is open in the metric topology induced by $Y$, by definition of the min-covering dimension we have $\COV(B)>d$. We obtain a contradiction, which completes the successor case.

Now given any $d > \MaxMinCOV(X)$, let us construct a transfinite $d$-fatness decomposition of length $\beta$, where $\beta$ is any ordinal whose cardinality exceeds that of $X$. For a metric space $(Y,\mD)$, a point is called $d$-thin if it is contained in some open $U\subset Y$ such that $\COV(Y)<d$, and $d$-thick otherwise. Let $\FatPt(Y,d)$ be the set of all $d$-thick points; note that  $\FatPt(Y,d)$ is a closed subset of $Y$. For every ordinal $\lambda \leq \beta+1$, we define a set $S_{\lambda}\subset X$
using transfinite induction as follows:
\begin{OneLiners}
\item[1.] $S_0 = X$ and  $S_{\lambda+1} = \FatPt(S_{\lambda},d)$
for each ordinal $\lambda$.
\item[2.]  If $\lambda$ is a limit ordinal then
	$S_{\lambda} = \bigcap_{\nu < \lambda} S_\nu$.	
\end{OneLiners}
This completes the construction of a sequence $\{ S_\lambda \}_{\lambda \leq \beta+1}$.

Note that each $S_\lambda$ is closed, by transfinite induction. It remains to show that the sequence satisfies the properties (a)-(c) in Definition~\ref{def:fatness-transfinite}. It follows immediately from the construction that $S_0 = X$ and $S_{\nu} \supseteq S_{\lambda}$ when $\nu < \lambda.$  To prove that $S_{\beta} = \emptyset$, observe first that the sets $S_{\lambda} \setminus S_{\lambda+1} \; (\mbox{for } 0 \leq \lambda < \beta)$ are disjoint subsets of $X$, and the number of such sets is greater than the cardinality of $X$, so at least one of them is empty.  This means that $S_{\lambda} = S_{\lambda+1}$ for some $\lambda < \beta.$ If $S_{\lambda} = \emptyset$ then $S_{\beta} = \emptyset$ as desired.  Otherwise, the relation $\FatPt(S_{\lambda},d) = S_{\lambda}$ implies that the metric space $(S_{\lambda},\, \mD)$ contains no open set $U\subset S_{\lambda}$ with $\COV(U)<d$. It follows that $\MinCOV(S_{\lambda}) \geq d$, contradicting the assumption that $\MaxMinCOV(X) < d.$ This completes the proof of property (a). To prove property (b), note that if $U$ is an open neighborhood of $S_{\lambda+1}$ then the set $T = S_{\lambda} \setminus U$ is closed (hence compact) and is contained in $\ThinPt(S_{\lambda},d)$. Consequently $T$ can be covered by open sets $V$ satisfying $\COV(V) < d$.  By compactness of $T$, this covering has a finite subcover $V_1,\ldots,V_m$, and consequently $\COV(T) = \max_{1 \leq i \leq m} \COV(V_i) < d.$ Finally, property (c) holds by design.
\end{proof}

\begin{theorem}\label{thm:PMO}
Consider the Lipschitz MAB problem on a compact metric space $(X,\mD)$ with a transfinite $d^*$-fatness decomposition, $d^*\geq 0$. Then for each $d>d^*$ there exists an algorithm $\A$ (parameterized by $d$) such that
	$\RegretDim(\A) \leq d$.
Moreover, the instance-specific regret dimension of $\A$ is bounded from above by the zooming dimension.
\end{theorem}

\newcommand{\DesiredLambda}{\lambda_{\mathtt{max}}}
\newcommand{\Quota}[1]{\mathtt{Q}_{#1}}

In the rest of this section we design and analyze an algorithm for Theorem~\ref{thm:PMO}. The algorithm from the previous subsection has regret proportional to the length of the fatness decomposition, so it does not suffice even if the fatness decomposition has countably infinite length. As it turns out, the main algorithmic challenge in dealing with fatness decompositions of transfinite length is to handle the special case of \emph{finite} length $k$ so that the regret bound does not depend on $k$.

In what follows, let
    $\{S_\lambda\}_{0 \leq \lambda \leq \beta+1}$,
be a transfinite $d^*$-fatness decomposition of length $\beta$, for some ordinal $\beta$ and $d^*\geq 0$. Fix some $d>0$.

\begin{proposition}\label{prop:fatness-decomposition-closed}
For any closed $V\subset X$, there is a maximal ordinal $\lambda$ such that $V$ intersects $S_\lambda$.
\end{proposition}

\begin{proof}
Let
    $\Omega = \{\text{ordinals } \nu \leq \beta$:\, $V \mbox{ intersects } S_\nu \}$,
and let $\nu = \sup (\Omega)$.  Then 
    \[ S_{\nu} \cap V = \bigcap_{\lambda\in \Omega} (S_{\lambda} \cap V),\]
and this set is nonempty because $X$ is compact and the closed sets $\{S_{\lambda} \cap V \,:\, \lambda \in \Omega\}$ have the finite intersection property. 
(To derive the latter, consider a finite subset $\Omega'\subset \Omega$ and let
    $\nu' = \max(\Omega')\in \Omega$.
Then
    $\bigcap_{\lambda\in \Omega'} (S_{\lambda} \cap V) = S_{\nu'} \cap V$,
which is not empty by definition of $\Omega$.)
\end{proof}

Recall that the supremum $\mu^* = \sup(\mu,X)$ is attained because the metric space is compact. Further, recall that the arms $x$ such that $\mu(x) = \mu$ are called optimal. Let $\DesiredLambda$ be the maximal $\lambda$ such that $S_\lambda$ contains an optimal arm. Such $\DesiredLambda$ exists by Proposition~\ref{prop:fatness-decomposition-closed} because the set $V = \mu^{-1}(\mu^*)$ is non-empty and closed. Note that $S_{\DesiredLambda}$ contains an optimal arm, whereas $S_{\DesiredLambda+1}$ does not.

Our algorithm is a version of Algorithm~\ref{alg:zooming-fat} from the previous subsection, with a different ``eligibility rule'' -- the definition of $\Eligible$. For phase duration $T$ and an ordinal $\lambda\leq \beta$, define the quota as the following condition:
\begin{align*}%\label{eq:quota-fatSet-gen}
\Quota{\lambda} \triangleq
\left[ \quad |\{ \emph{active arms $x \in S_\lambda$}:\, r_t(x) \geq \rho \}| < \rho^{-d} \quad\right],
\quad \rho= T^{- 1/(d+2)}.
\end{align*}
The algorithm maintains the \emph{target ordinal} $\lambda^*$, recomputed after each phase, so that some arm in $S_{\lambda^*}$ is activated as long as the quota $\Quota{\lambda^*}$ is satisfied. Further, there is a subset $\mathcal{N}$ of cardinality at most $T^{d/(d+2)}$, chosen in the beginning of each phase, such that all arms in $\mathcal{N}$ are always eligible and all arms not in $S_{\lambda^*} \cup \mathcal{N}$ are never eligible.

Note that such algorithm is $(1,d)$-constrained by design, because in any round $t$ there can be at most $\rho^{-d}$ active arms in $S_{\lambda^*} \setminus \mathcal{N} $ with confidence radius less than $\rho = T^{-1/(d+2)}$.

The analysis hinges on proving that after any sufficiently long clean phase the target ordinal is $\DesiredLambda$, and then the subsequent phase (assuming it is also clean) is well-covered, and then the desired regret bounds follow from Corollary~\ref{cor:zooming-compatible}.  Any sufficiently long clean phase with target ordinal $\DesiredLambda$ is well-covered by Claim~\ref{lm:zooming-compatible-wellCovered}. So the only new thing to prove is that that after any sufficiently long clean phase the target ordinal is $\DesiredLambda$.

We also change the definition of index to
\begin{align}\label{eq:index-pmo}
I_t(x) = \mu_t(x) + 3\, r_t(x),
\end{align}
where, as before, $\mu_t(x)$ denotes the average payoff from arm $x$ in rounds $1$ to $t-1$ of the current phase, and $r_t(x)$ is the current confidence radius of this arm. It is easy to check that the analysis in Section~\ref{subsec:zooming-analysis}, and therefore also Lemma~\ref{lm:zooming-compatible} and Corollary~\ref{cor:zooming-compatible}, carry over to any index of the form
    $I_t(x) = \mu_t(x) + c_0\, r_t(x)$
for some absolute constant $c_0\geq 2$ (the upper bound on regret increases by the factor of $c_0$).

The pseudocode is summarized as Algorithm~\ref{alg:zooming-pmo}. In the beginning of each phase, the subset $\mathcal{N}\subset X$ is defined as follows. We choose $\mathcal{N}$ to be an $\eps_0$-net%
\footnote{Recall from Section~\ref{sec:dim-notions} that an $\eps$-net of a metric space $(X,\mD)$ is a subset $S\subset X$ such that any two points in $S$ are at distance at least $\eps$ from one another, and any point in $X$ is within distance $\eps$ from some point in $S$.}
of $X$ which consists of at most $T^{d/(d+2)}$ points, for (essentially) the smallest possible $\eps_0>0$. More precisely, we compute an $\eps_0>0$ and an $\eps_0$-net $\mathcal{N}$ using a standard \emph{greedy heuristic}. For a given $\eps>0$ we construct an $\eps$-net $S\subset X$ as follows: while there exists a point $x\in X$ such that
    $\mD(S,x) \triangleq \inf_{y\in S} \mD(x,y) <\eps$, add any such point to $S$, and abort if $|S|> T^{d/(d+2)}$. We consecutively try
    $\eps = 2^{-j}$ for each $j=1,2,3, \ldots$,
and pick the smallest $\eps$ which results in an $\eps$-net of at most $T^{d/(d+2)}$ points.

In the end of each phase, the new target ordinal $\lambda^*$ is defined as follows.  We pick an $\eps^*$ according to $T$ and $\eps_0$, and focus on arms whose confidence radius is less than $\eps^*$. Let $A$ be the set of all such arms. We define $\lambda^*$ as the largest ordinal $\lambda$ such that $S_\lambda$ intersects
   $\bar{B}(A,\eps^*) \triangleq \{x\in X:\; \mD(A,x) \leq \eps^* \}$,
the the closed $\eps^*$-neighborhood of $A$. Such ordinal exists by Proposition~\ref{prop:fatness-decomposition-closed}.

\begin{algorithm}[h]
\caption{(the per-metric optimal algorithm)}
\label{alg:zooming-pmo}

\newcommand{\TAB}{\hspace{5mm}}

\begin{algorithmic}
\STATE Target ordinal $\lambda^* \leftarrow 0$.
\FOR{phase $i=1,2,3, \ldots$}
\STATE \COMMENT{Phase duration is $T=2^i$}
\STATE  Compute an $\eps_0>0$ and an $\eps_0$-net $\mathcal{N}$ of $X$ such that $|\mathcal{N}|<T^{d/(d+2)}$.
\STATE \TAB\COMMENT{use greedy heuristic}
\STATE Initially, no arms are active.
\FOR{round $t=1,2,3, \ldots, T$}
\STATE
$\Eligible = \begin{cases}
\mathcal{N} \cup S_{\lambda^*} & \text{if constraint $\Quota{\lambda^*}$ is satisfied}, \\
\mathcal{N}  & \text{otherwise}.
\end{cases}$
\STATE \emph{Activation rule:} if some arm $x\in \Eligible$ is not covered,
\STATE\TAB pick any such arm and activate it.
\STATE \emph{Selection rule:} play any active arm with the maximal index~\refeq{eq:index-pmo}.
\ENDFOR
\STATE \COMMENT{Recompute the target ordinal $\lambda^*$}
\STATE $\eps^*      =  6\,\max(\eps_0, 4 T^{-1/(d+2)} \sqrt{\log T})$.
\STATE  $\lambda^*   = \max\{ \lambda:\; \text{$S_\lambda$ intersects $\bar{B}(A,\eps^*)$ } \}$,
    where $A = \{ \text{active arms $x$: $r_T(x)<\eps^*$ } \}$.
\ENDFOR
\end{algorithmic}
\end{algorithm}

\xhdr{Implementation details.}
To implement Algorithm~\ref{alg:zooming-pmo}, it suffices to use the following oracles:
\begin{itemize}
\item For any finite set of open balls
    $B_1 \LDOTS B_n$
(given via the centers and the radii) whose union is denoted by $B$, the \emph{depth oracle} returns
    $\sup \{\lambda:\, \text{$S_{\lambda}$ intersects the closure of $B$}  \} $.

\item Given balls $B_1 \LDOTS B_n$ as above, and an ordinal $\lambda$, the \emph{enhanced covering oracle}
either reports that $B$ covers $S_{\lambda}$, or it returns an arm
	$x \in S_{\lambda} \setminus B$.
\end{itemize}
To avoid the question of how arbitrary ordinals are represented on the oracle's output tape, we can instead say that the depth oracle outputs a point $u \in S_{\lambda} \setminus S_{\lambda+1}$ instead of outputting $\lambda.$  In this case, the definition of the covering should be modified so that it inputs a point $u \in S_{\lambda} \setminus S_{\lambda+1}$ rather than the ordinal $\lambda$ itself.

\xhdr{Analysis.}
We bring in the machinery developed in the previous subsection. Note that Algorithm~\ref{alg:zooming-pmo} is zooming-compatible and $(1,d)$-constrained by design. Therefore by Corollary~\ref{cor:zooming-compatible} we only need to prove that it is eventually well-covered. If in a given clean phase the target ordinal is $\DesiredLambda$, this phase satisfies the assumptions in Claim~\ref{lm:zooming-compatible-wellCovered} for $S = S_{\DesiredLambda}$ and $S' = S_{\DesiredLambda+1}$. It follows that any sufficiently long clean phase with target ordinal $\DesiredLambda$ is well-covered. Thus it remains to show that after any sufficiently long clean phase of Algorithm~\ref{alg:zooming-pmo} the target ordinal is $\DesiredLambda$. (This is where we use the new definition of index.)

\begin{claim}\label{cl:pmo-analysis-targetOrdinal}
After any sufficiently long clean phase of Algorithm~\ref{alg:zooming-pmo} the target ordinal is $\DesiredLambda$.
\end{claim}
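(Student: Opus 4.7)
The plan is to verify $\lambda^* \le \DesiredLambda$ and $\lambda^* \ge \DesiredLambda$ separately, where $\lambda^*$ denotes the target ordinal produced at the end of the phase. Both directions rest on a \emph{high-index property} that I will first establish in any clean phase of Algorithm~\ref{alg:zooming-pmo}, independently of the ordinal in effect during the phase: every played arm $x_t$ satisfies $I_t(x_t) \ge \mu^* - \eps_0$, and so $\Delta(x_t) \le 4\,r_t(x_t) + \eps_0$. Its proof mimics Lemma~\ref{lm:bound-active}: fix any optimal $x^*\in X$ and a net point $y\in\mathcal{N}$ with $\mD(y,x^*)\le\eps_0$. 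Since $y\in\mathcal{N}\subseteq\Eligible$ at every round, and the initial confidence radius $\sqrt{8\iPhase}>1\ge\mathrm{diam}(X)$ guarantees that any activation covers the whole space, the invariant ``arms in $\Eligible$ are covered'' keeps $y$ covered by some active arm $z$, for which the clean bound and Lipschitz-continuity give $I_t(z) \ge \mu(z) + 2 r_t(z) \ge \mu(y) + r_t(z) \ge \mu^* - \eps_0$.

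For $\lambda^*\le\DesiredLambda$, compactness of $(X,\mD)$ and closedness of $S_{\DesiredLambda+1}$ (which contains no optimal arm) give a gap $\eps_1 := \mu^* - \sup(\mu, S_{\DesiredLambda+1}) > 0$. By Lipschitz, any arm $x$ within distance $\eps^*$ of $S_{\DesiredLambda+1}$ has $\Delta(x)\ge \eps_1-\eps^*\ge\eps_1/2$ once $T$ is large enough that $\eps^*,\eps_0\le\eps_1/4$. Applying the high-index property at the last time such an $x$ was played yields $n_T(x) = O(\iPhase/\eps_1^2)$, hence $r_T(x) = \Omega(\eps_1)$, which eventually exceeds $\eps^*$; thus $x\notin A$, and $\bar{B}(A,\eps^*)$ is disjoint from $S_{\DesiredLambda+1}$.

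For $\lambda^*\ge\DesiredLambda$, fix $x^*\in S_{\DesiredLambda}$ and the net point $y\in\mathcal{N}$ with $\mD(y,x^*)\le\eps_0\le\eps^*/6$. It suffices to produce an active arm $a$ that covers $y$ with $r_T(a)\le\eps^*-\eps_0=\tfrac{5}{6}\eps^*$; then $\mD(a,x^*)\le r_T(a)+\eps_0\le\eps^*$ and $r_T(a)<\eps^*$, so $a\in A$ witnesses $x^*\in \bar{B}(A,\eps^*)\cap S_{\DesiredLambda}$. I split on the ordinal $\lambda^\circ$ in effect during the phase. If $\lambda^\circ\le\DesiredLambda$, then $x^*\in S_{\DesiredLambda}\subseteq S_{\lambda^\circ}\subseteq\Eligible$ (whenever the quota $\Quota{\lambda^\circ}$ is satisfied), and an adaptation of the zooming analysis of Section~\ref{subsec:zooming-analysis} restricted to $S_{\lambda^\circ}$---using that the quota $\rho^{-d}$ is exactly what is needed to support the standard zooming mesh near $x^*$---yields an active $a\in S_{\lambda^\circ}$ within $\eps^*/2$ of $x^*$ played enough times that $r_T(a)\le\eps^*/2$. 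If $\lambda^\circ>\DesiredLambda$, then $S_{\lambda^\circ}$ contains no optimal arm and Part~1 reasoning shows its arms consume only $O(T^{d/(d+2)}\log T/\eps_1^2)=o(T)$ plays; the remaining budget flows into $\mathcal{N}$, and pigeonholing over the at most $|\mathcal{N}|\le T^{d/(d+2)}$ near-optimal net points (using a UCB-style lower bound in the clean phase) yields $n_T(y)=\Omega(T^{2/(d+2)})$, whence $r_T(y)=O(T^{-1/(d+2)}\sqrt{\log T})\le\eps^*/7$ and $a=y$ works.

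The main obstacle is the $\lambda^\circ\le\DesiredLambda$ sub-case, in which one must rule out the scenario that the zooming activation concentrates near an optimal arm lying in $S_{\lambda^\circ}\setminus S_{\DesiredLambda}$ while leaving a neighborhood of $x^*\in S_{\DesiredLambda}$ under-explored. The zooming invariant ``refine wherever uncovered'' combined with the quota $\Quota{\lambda^\circ}$ should guarantee uniform refinement near every optimum in $S_{\lambda^\circ}$, but the bookkeeping must track the $\rho^{-d}$ active arms permitted at each scale and verify that an $\eps^*/2$-close mesh point near $x^*$ survives this allocation; the precise choice of $\eps^* = 6\max(\eps_0,\,4\,T^{-1/(d+2)}\sqrt{\log T})$ is tuned exactly so that these counts, together with the constants in the high-index property, come out on the right side of the required inequality.
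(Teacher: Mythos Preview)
Your $\lambda^*\le\DesiredLambda$ direction is essentially the paper's argument (it is the contrapositive of Claim~\ref{cl:pmo-analysis-crux}(a)). The gap is in the $\lambda^*\ge\DesiredLambda$ direction: you split on the ordinal $\lambda^\circ$ in force during the phase, and in the sub-case $\lambda^\circ\le\DesiredLambda$ you yourself call the argument the ``main obstacle'' and do not actually prove it. Your hope that the quota $\rho^{-d}$ suffices to guarantee refinement near \emph{this particular} $x^*\in S_{\DesiredLambda}$ is not justified: when $\lambda^\circ<\DesiredLambda$, optimal arms may lie in $S_{\lambda^\circ}\setminus S_{\DesiredLambda}$, the phase need not be well-covered, and nothing in the activation rule forces the mesh to refine near $x^*$ rather than near those other optima. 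The $\lambda^\circ>\DesiredLambda$ sub-case is also incomplete: pigeonholing over $\mathcal{N}$ gives many plays to \emph{some} near-optimal net point, not necessarily to the specific $y$ near $x^*$.

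The paper sidesteps the case split entirely. The trick you are missing is that the index carries coefficient $3$, not $2$: in a clean phase the arm $x$ that covers the net point $\xNet$ near $x^*$ satisfies $I_t(x)\ge\mu^*-\eps_0+r_t(x)$, with one $r_t(x)$ left over after the usual zooming bookkeeping. Comparing this to $I_t(y)\le\mu^*-\Delta(y)+4\,r_t(y)$ for the arm $y$ actually played at time $t$ yields $r_t(x)\le 4\,r_T(y)+\eps_0$ (this is Claim~\ref{cl:pmo-analysis-crux}(b)). Now choose $y=\xFreq$, the most frequently played arm; since the algorithm is $(1,d)$-constrained, pigeonhole gives $r_T(\xFreq)<4\,T^{-1/(d+2)}\sqrt{\log T}$. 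The covering arm $x$ at the last play of $\xFreq$ is then the witness: $r_T(x)\le\eps^*$ and $\mD(x,x^*)\le r_t(x)+\eps_0\le\eps^*$, so $x\in A$ and $x^*\in\bar B(A,\eps^*)$. No dependence on $\lambda^\circ$, no need for the phase to be well-covered.
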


\newcommand{\xFreq}{x_{\mathtt{freq}}}
\newcommand{\xNet}{x_{\mathtt{net}}}

To prove Claim~\ref{cl:pmo-analysis-targetOrdinal} we need to ``open up the hood'' and analyze the internal workings of the algorithm. (We have been avoiding this so far by using Corollary~\ref{cor:zooming-compatible}.) Such analysis is encapsulated in the following claim. Note that we cannot assume that the phase is well-covered.

\begin{claim}\label{cl:pmo-analysis-crux}
Consider a clean phase of Algorithm~\ref{alg:zooming-pmo} of duration $T$, with $\eps_0$-net $\mathcal{N}$. Let $y$ be an arm that has been played at least once in this phase. Then
\begin{OneLiners}
\item[(a)]    $\Delta(y) \leq 4\, r_T(y) + \eps_0$.
\item[(b)] For any optimal arm $x^*$, there exists an active arm $x$ such that
    $$ \min(\mD(x,x^*),\; r_T(x)) \leq 4\, r_T(y) + 2\, \eps_0.$$
\end{OneLiners}
\end{claim}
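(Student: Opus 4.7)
My plan is to exploit the fact that throughout the phase the net $\mathcal{N}$ is permanently part of $\Eligible$, so each point of $\mathcal{N}$ is always either active or covered by the confidence ball of some active arm. In particular, by compactness I can fix an optimal arm $x^*$ and a net-point $\xNet \in \mathcal{N}$ with $\mD(\xNet,x^*) \leq \eps_0$, so that the Lipschitz condition yields $\mu(\xNet) \geq \mu^* - \eps_0$. Both parts of the claim will then follow by comparing the index of $y$ at round $s$ with the index of a ``witness'' arm for $\xNet$, where I take $s\leq T$ to be the last round in which $y$ is played.

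For part (a), I will observe that in either sub-case (whether $\xNet$ is itself active or merely covered at round $s$) there exists an active arm $z$ with $\mD(z,\xNet)\leq r_s(z)$, taking $z=\xNet$ in the first case. Combining the clean-phase estimate $\mu_s(z)\geq \mu(z) - r_s(z)$ with Lipschitz continuity gives
$$
I_s(z) \;=\; \mu_s(z) + 3\, r_s(z)
     \;\geq\; \mu(\xNet) - \mD(z,\xNet) + 2\, r_s(z)
     \;\geq\; \mu^* - \eps_0.
$$
Since $y$ is played at round $s$ we have $I_s(y)\geq I_s(z)$, while the clean-phase upper bound gives $I_s(y) \leq \mu(y) + 4\, r_s(y)$; rearranging yields $\Delta(y) \leq 4\,r_s(y) + \eps_0$. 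Because $s$ is the last play of $y$, the radii $r_s(y)$ and $r_T(y)$ differ only by the factor $\sqrt{(n_T(y)+1)/n_T(y)} \leq \sqrt{2}$ (and the bound is trivial when $y$ is played only at $T$, in which case $r_T(y)=\sqrt{8\iPhase}$), which I absorb into the constant to conclude (a).

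For part (b) I reuse the same sub-case split. If $\xNet$ has ever been activated by round $s$ it remains active through round $T$, so taking $x=\xNet$ gives $\mD(x,x^*)\leq \eps_0$, well within the target bound. Otherwise, the covering witness $z$ at round $s$ is active and stays so until round $T$, and the triangle inequality with Lipschitz continuity yields $\Delta(z)\leq \mD(z,x^*) \leq r_s(z) + \eps_0$. The index comparison $I_s(z)\leq I_s(y)$, after the clean-phase substitutions, rearranges to $2\, r_s(z) \leq \Delta(z) + 4\, r_s(y)$, and substituting the bound on $\Delta(z)$ gives $r_s(z) \leq 4\, r_s(y) + \eps_0$ and hence $\mD(z,x^*) \leq 4\, r_s(y) + 2\eps_0$; choosing $x = z$ and converting $r_s(y)$ to $r_T(y)$ as in (a) finishes the proof. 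The only delicate step will be the bookkeeping to relate $r_s(y)$ and $r_T(y)$; everything else is routine clean-phase arithmetic leveraging the permanent eligibility of $\mathcal{N}$.
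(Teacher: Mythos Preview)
Your approach is essentially the paper's: choose $\xNet\in\mathcal{N}$ with $\mD(\xNet,x^*)\leq\eps_0$, let $t$ be the last round $y$ is played, pick an active arm $x$ that covers $\xNet$ at time $t$ (which exists because $\mathcal{N}\subset\Eligible$ is always covered), and compare indices. Two small points of difference. First, the paper does not introduce your $\sqrt{2}$ factor: it simply asserts $r_T(y)=r_t(y)$ on the grounds that $t$ is the last play of $y$, so the count does not change thereafter; your ``absorb into the constant'' step would not recover the stated constant~$4$, whereas under the paper's reading no absorption is needed. Second, for part~(b) the paper is more direct than your detour through $\Delta(z)$: from the combined inequality $\Delta(y)+r_t(x)\leq 4\,r_T(y)+\eps_0$ one immediately gets $r_t(x)\leq 4\,r_T(y)+\eps_0$ (since $\Delta(y)\geq 0$), and then the triangle inequality $\mD(x,x^*)\leq \mD(x,\xNet)+\mD(\xNet,x^*)\leq r_t(x)+\eps_0$ finishes without any case split.
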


\begin{proof}
Let $\xNet \in \mathcal{N}$ be such that $\mD(x^*, \xNet)\leq \eps_0$. Let $t$ be the last time arm $y$ is played in this phase. Let $x$ be an arm that covers $\xNet$ at time $t$. (Since $\mathcal{N} \subset \Eligible$, all points in $\mathcal{N}$ are covered at all times.) Then:
\begin{align*}
I_t(x)  &\geq \mu(x) + 2 r_t(x)         & \text{by definition of index and confidence radius} \\
        &\geq \mu(\xNet) + r_t(x)       & \text{because $x$ covers $\xNet$ at time $t$} \\
        &\geq \mu^* - \eps_0 + r_t(x)   & \text{because $\mD(x^*, \xNet)\leq \eps_0$} \\
I_t(x)  &\leq I_t(y)                    & \text{because arm $y$ is played at time $t$} \\
        &\leq \mu(y) + 4\, r_t(y)       & \text{by definition of index and confidence radius} \\
        &= \mu^* - \Delta(y) + 4\, r_t(y).
\end{align*}
Combining the two inequalities, we obtain:
$$ \mu^* - \Delta(y) + 4\, r_t(y) \geq I_t(x) \geq \mu^* - \eps_0 + r_t(x). $$
Noting that $r_T(y) = r_t(y)$, we obtain
    $$\Delta(y) + r_t(x) \leq 4\, r_T(y) + \eps_0. $$
This immediately implies part (a) of the claim. Part (b) follows by triangle inequality, because
    $$\mD(x,x^*) \leq \mD(x,\xNet) + \mD(\xNet,x^*) \leq r_t(x) + \eps_0. \qedhere $$
\end{proof}

We also need a simple and well-known fact about compact metric spaces.

\begin{claim}[Folklore]\label{cl:pmo-analysis-folklore}
For any given $\delta>0$ there exists $T_0<\infty$ such that in any phase of Algorithm~\ref{alg:zooming-pmo} of duration $T>T_0$, the algorithm computes an $\eps_0$-net $\mathcal{N}$ such that $\eps_0<\delta$.
\end{claim}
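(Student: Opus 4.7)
The plan is to show that the quota $T^{d/(d+2)}$ eventually dominates any fixed packing number of $X$, so the greedy heuristic can always afford to descend to a prescribed scale and, in particular, below $\delta$.

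First I would fix $\delta>0$ and choose a dyadic scale $\eps^* = 2^{-j^*}$ with $\eps^*<\delta$. By compactness of $(X,\mD)$, the packing number $N := \Npack_{\eps^*}(X)$ is finite: any $\eps^*$-separated set in a compact metric space must be finite, since an infinite such set would have a Cauchy subsequence, contradicting the $\eps^*$-separation.

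Next I would take $T_0$ large enough that $T_0^{d/(d+2)} \geq N$. For any phase of duration $T>T_0$, consider the run of the greedy heuristic at $\eps=\eps^*$. Throughout the loop the current set $S$ is an $\eps^*$-packing of $X$ (every added point is at distance $\geq \eps^*$ from all previously chosen points), so at every stage $|S|\leq N \leq T^{d/(d+2)}$, which means the heuristic does not abort and produces a valid $\eps^*$-net. Hence $\eps^*$ is among the scales that succeed in the algorithm's loop over $j=1,2,3,\ldots$, and therefore the value $\eps_0$ selected by the algorithm satisfies $\eps_0 \leq \eps^* < \delta$, as required.

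There is no substantive obstacle here; the argument is simply compactness plus monotonicity of the quota $T^{d/(d+2)}$ in $T$. The only thing to verify carefully is the invariant that the greedy heuristic maintains an $\eps^*$-packing at all times, which is immediate from the loop's admission rule $\mD(S,x)\geq \eps^*$ for newly inserted points.
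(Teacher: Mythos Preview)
Your argument is correct and essentially the same as the paper's: both use compactness to bound the packing/covering number at a fixed scale below $\delta$, and then choose $T_0$ so that the quota $T^{d/(d+2)}$ exceeds that bound. The only cosmetic difference is that the paper phrases it as a proof by contradiction using a finite cover by sets of diameter less than $\delta/2$, whereas you argue directly via the packing number at a dyadic scale $\eps^*<\delta$; the content is identical.
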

\begin{proof}
Fix $\delta>0$. Since the metric space is compact, there exists a covering of $X$ with finitely many subsets $S_1 \LDOTS S_n\subset X$ of diameter less than $\tfrac{\delta}{2}$. Suppose $T$ is large enough so that $n<T^{d/(d+2)}$. Suppose
the algorithm computes an $\eps_0$-net $\mathcal{N}$ such that $\eps_0\geq \delta$. Then the following iteration of the greedy heuristic (if not aborted) would construct an $\eps_0/2$-net $\mathcal{N}'$ for $X$ with more than $T^{d/(d+2)}$ points. However, any two points in $\mathcal{N}'$ lie at distance $\geq \delta/2$ from one another, so they cannot lie in the same set $S_i$. It follows that $|\mathcal{N}'| \leq n$, contradiction.
\end{proof}

\begin{proofof}{Claim~\ref{cl:pmo-analysis-targetOrdinal}}
Consider a clean phase of Algorithm~\ref{alg:zooming-pmo} of duration $T$, with an $\eps_0$-net $\mathcal{N}$. Let $\eps^*$ and $A$ be defined as in Algorithm~\ref{alg:zooming-pmo}, so that
    $A = \{ \text{active arms $x$: $r_t(x)<\eps^*$ } \}$.
We need to show that for any sufficiently large $T$ two things happen: $\bar{B}(A,\eps^*)$ intersects $S_{\DesiredLambda}$ and it does not intersect $S_{\DesiredLambda+1}$.

Let $\xFreq$ be the most frequently played arm by the end of the phase. We claim that
    $$r_T(\xFreq) < 4 T^{-1/(d+2)} \sqrt{\log T}.$$
Suppose not. By our choice of $\xFreq$, at time $T$ all arms have confidence radius at least $r_T(\xFreq)$. Since the algorithm is $(1,d)$-constrained, it follows that at most $n = 2T^{d/(d+2)}$ arms are activated throughout the phase. So by the pigeonhole principle $n_T(\xFreq) \geq T/n = \tfrac12 T^{2/(d+2)}$, which implies the desired inequality. Claim proved.

Let $x^* \in S_{\DesiredLambda}$ be some optimal arm. Taking $y = \xFreq$ in Claim~\ref{cl:pmo-analysis-crux}(b) and noting that
$4\,r_T(\xFreq)+ 2\,\eps_0 \leq \eps^*$, we derive that there exists an active arm $x$ such that $\mD(x,x^*)\leq \eps^*$ and $r_T(x) \leq \eps^*$. It follows that $x\in A$ and
    $x^* \in \bar{B}(A,\eps^*)$.
Therefore $\bar{B}(A,\eps^*)$ intersects $S_{\DesiredLambda}$.

Since the metric space is compact and $S_{\DesiredLambda+1}$ is a closed subset that does not contain an optimal arm, it follows that any arm in this subset has expected payoff at most $\mu^*-\eps$, for some $\eps>0$. Assume $T$ is sufficiently large so that $\eps^*<\eps/6$.
(We can make sure that $\eps_0<\eps/6$ by Claim~\ref{cl:pmo-analysis-folklore}).

To complete the proof, we need to show that $\bar{B}(A,\eps^*)$ does not intersect $S_{\DesiredLambda+1}$. Suppose this is not the case. Then there exists $x\in S_{\DesiredLambda+1}$ and active $y\in X$ such that $\mD(x,y)\leq \eps^*$ and $r_T(y)\leq \eps^*$. Then by Claim~\ref{cl:pmo-analysis-crux} (a) we have that
    $\Delta(y)\leq 4\,\eps^* + \eps_0 \leq 5\,\eps^*$,
which implies that
    $\Delta(x) \leq \Delta(y) + \mD(x,y) \leq 6\,\eps^* < \eps$,
contradicting our assumption that every arm in $S_{\DesiredLambda+1}$ has expected payoff at most $\mu^*-\eps$. Claim proved.
\end{proofof}

\section{The (sub)logarithmic vs. $\sqrt{t}$ regret dichotomy}
\label{sec:dichotomies}

This section concerns the dichotomy between (sub)logarithmic and $\sqrt{t}$ regret for Lipschitz bandits and Lipschitz experts (Theorem~\ref{thm:intro-dichotomy-MAB} and Theorem~\ref{thm:main-experts}, respectively). We focus on the restriction of these results to compact metric spaces:

\begin{theorem}\label{thm:dichotomy-compact}
Fix a compact metric space $(X,\mD)$. The following dichotomies hold:
\begin{itemize}
\item[(a)] The \problem on $(X,\mD)$ is either $f(t)$-tractable for every $f\in \omega(\log t)$, or it is not $g(t)$-tractable for any $g\in o(\sqrt{t})$.

\item[(b)] The \FFproblem on $(X,\mD)$ is either $1$-tractable, even with double feedback, or it is not $g(t)$-tractable for any $g\in o(\sqrt{t})$, even with full feedback and uniformly Lipschitz payoffs.
\end{itemize}
In both cases, (sub)logarithmic tractability occurs if and only if $X$ is countable.
\end{theorem}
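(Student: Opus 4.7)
The plan is to split each dichotomy into an upper bound (when the metric completion is compact and countable) and a lower bound (when it is compact and uncountable), and to connect these to the regret bounds through two classical point-set topology theorems: a theorem of Mazurkiewicz--Sierpi\'nski, which guarantees that every uncountable compact metric space contains a non-empty perfect subset, and a theorem of Cantor--Bendixson, which equips every countable compact metric space with a well-ordering compatible with the topology via the transfinite iteration of the derivative operator. I would introduce both topological facts in a self-contained lemma (deferred to the appendix mentioned in Section~\ref{subsec:intro-access}) and then feed their conclusions into the algorithmic and lower-bound constructions developed earlier in the paper.

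For the lower bound in part~(a), I would start from a non-empty perfect subset $P \subseteq X$ and build a ball-tree of strength $d = 0$ on $(X,\mD)$ whose nodes are all centered in $P$. The construction is recursive: given a node $(x,r)$ with $x \in P$, since $P$ is perfect and $P \cap B(x,r/4)$ is infinite, we can extract two points $x_1, x_2 \in P \cap B(x,r/4)$ with $\mD(x_1,x_2) > 0$ and pick a common sub-radius $r' < r/4$ small enough that the siblings' non-overlap condition in Definition~\ref{def:ball-tree} holds; iterating yields an infinite binary ball-tree of strength $0$. Plugging this into Lemma~\ref{lm:pmo-LB-balltree} with $d=0$ yields regret $\Omega(\sqrt{t})$ infinitely often with probability one for any algorithm, which rules out $g(t)$-tractability for every $g \in o(\sqrt{t})$. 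For the experts version in part~(b), I would repeat the same ball-tree construction but instantiate the randomized payoff family $\mu_{\mathbf{w}}$ as Bernoulli payoffs and apply the full-feedback analogue of the KL-divergence argument, namely Theorem~\ref{thm:LB-technique}, in place of Theorem~\ref{thm:LB-technique-MAB}; since uniformly Lipschitz Bernoulli payoffs are supported by our bump-function construction (each $\mu_{\mathbf{w}}$ is already Lipschitz by Lemma~\ref{lm:LB-Lipschitz}), the same lower bound carries over to the Lipschitz experts setting, even the uniformly Lipschitz version.

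For the upper bound in part~(a), assume $(X,\mD)$ is compact and countable. The Cantor--Bendixson derivative sequence $X = X_0 \supseteq X_1 \supseteq \cdots$, with $X_{\alpha+1}$ the set of limit points of $X_\alpha$ and $X_\lambda = \bigcap_{\alpha<\lambda} X_\alpha$ at limits, terminates at $\emptyset$ at some countable ordinal; this yields a transfinite $0$-fatness decomposition in the sense of Definition~\ref{def:fatness-transfinite}, because at each successor step every remaining point is isolated in $X_\alpha$ so $\COV(X_\alpha \setminus U)$ vanishes for the requisite open neighborhoods $U$. I would then invoke the per-metric optimal algorithm of Section~\ref{sec:fatness-PMO} (Algorithm~\ref{alg:zooming-pmo}) tuned with a parameter $d = d_f > 0$ chosen as a function of the target $f \in \omega(\log t)$, and argue that its regret dimension is at most $d_f$; by shrinking $d_f$ appropriately as $f$ ranges over $\omega(\log t)$, we recover $f(t)$-tractability for every such $f$. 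The main ingredient that still has to be checked is that the oracle access required by Algorithm~\ref{alg:zooming-pmo} can be realized from the topological well-ordering supplied by Cantor--Bendixson, which is precisely the oracle model flagged in Section~\ref{subsec:intro-access}. For the experts upper bound in part~(b), the countable compact case admits a much stronger guarantee because double feedback essentially allows us to refine the estimate of any single extra arm per round: I would cover $X$ at each scale by finitely many balls (by compactness), use the well-ordering to enumerate representatives of these balls at successively finer scales, and design a scheduler that in round $t$ plays the current empirically best arm and spends its free query on the next arm in the schedule; a union bound together with the standard $O(1)$-regret result for finitely many experts~\citep{sleeping-colt08} then shows that, almost surely, only finitely many queries ever lead to a mistake, so the total regret is $O(1)$.

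The main obstacle I anticipate is the upper bound in part~(a): turning the transfinite Cantor--Bendixson rank into a regret bound of the form $C_\mu f(t)$ for an \emph{arbitrary} $f \in \omega(\log t)$, rather than just $O(\log t)$ (which we already know is impossible on infinite metric spaces by Theorem~\ref{thm:logT}). The subtlety is that the instance-dependent constant $C_\mu$ must absorb all contributions from ranks beyond the rank of an optimal arm, and the algorithm must be able to adapt its quota schedule to $f$ without knowing $\mu$; I expect to handle this by running Algorithm~\ref{alg:zooming-pmo} with a slowly shrinking parameter $d_f \to 0$, ensuring that along any clean phase the target ordinal eventually stabilizes at the Cantor--Bendixson rank of an optimal arm, as in Claim~\ref{cl:pmo-analysis-targetOrdinal}, after which the regret contribution collapses to the $f$-shaped bound.
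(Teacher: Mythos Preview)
Your lower-bound plan is close to the paper's, though for part~(b) you are glossing over the key obstacle: with full feedback, the bandit construction $\mu_{\mathbf{w}}$ from Section~\ref{sec:PMO-LB} (bumps only along the end $\mathbf{w}$) is defeated in a single round, because the learner sees the entire payoff function and can read off $\mathbf{w}$ from its nested rings. The paper handles this via Construction~\ref{con:LB-payoffs}, which places random $\pm 1$ bumps at \emph{every} tree node and only biases the signs along $\mathbf{w}(\lambda)$; this ``static'' is what makes the $(\eps,\delta,k)$-ensemble (Definition~\ref{def:ensemble}) and Theorem~\ref{thm:LB-technique} applicable. Your sentence ``instantiate the randomized payoff family $\mu_{\mathbf{w}}$ as Bernoulli payoffs'' suggests you have not yet made this switch.

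The real gap is your upper bound for part~(a). Running Algorithm~\ref{alg:zooming-pmo} with any parameter $d>0$ yields regret dimension at most $d$, i.e.\ regret $O_\mu(t^{(d+1)/(d+2)})$; letting $d\to 0$ gets you no better than $\tilde O(\sqrt{t})$. No choice of $d_f$ can produce $f(t)$-tractability for, say, $f(t)=(\log t)^2$, because the algorithm's guarantee is polynomial in $t$ by design. The paper's route is entirely different: from the topological well-ordering it derives a structural lemma (Lemma~\ref{lm:structural}) asserting the existence of an optimal $x^*$ with $\sup(\mu, X\setminus S(x^*)) < \mu^*$. An exploration subroutine $\explSub(k,n,r)$ then plays a $\delta$-net, eliminates ``losers'' whose sample averages are dominated, and calls an \emph{ordering oracle} on the surviving balls; for large enough parameters this returns $x^*$ exactly with high probability (Lemma~\ref{lm:tractability}). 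Once $x^*$ is identified it is played for the rest of a doubly-exponential phase at zero regret, and summing over phases gives $O_\mu(f(t))$ for any $f\in\omega(\log t)$. The Cantor--Bendixson decomposition is used only to certify well-orderability (Lemma~\ref{lm:topological-equivalence}), not as a fatness decomposition fed into the zooming machinery. Your proposal for part~(b) is closer in spirit but still needs the structural lemma to guarantee that the free-peek exploration eventually locks onto an exact optimum.
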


We also prove two auxiliary results: the $(\log t)$-intractability for Lipschitz bandits on infinite metric spaces (Theorem~\ref{thm:logT}), and an algorithmic result via a more intuitive oracle access to the metric space (for metric spaces of finite \emph{Cantor-Bendixson rank}, a classic notion from point-set topology).

The section is organized as follows. We provide a joint analysis for Lipschitz bandits and Lipschitz experts: an overview in Section~\ref{sec:dichotomies-overview}, the lower bound is in Section~\ref{sec:lower-bound}, and the algorithmic result is in Section~\ref{sec:tractability}. The two auxiliary results are, respectively, in Section~\ref{sec:logT} and Section~\ref{sec:simpler-alg}.

\subsection{Regret dichotomies: an overview of the proof}
\label{sec:dichotomies-overview}

We identify a simple topological property (existence of a topological well-ordering) which entails the algorithmic result, and another topological property (existence of a perfect subspace) which entails the lower bound.

\begin{definition}\label{def:topology}
Consider a topological space $X$.
$X$ is called \emph{perfect} if it contains no isolated points.
A \emph{topological well-ordering} of $X$ is a well-ordering
$(X,\prec)$ such that every initial segment thereof is an open
set.  If such $\prec$ exists, $X$ is called \emph{well-orderable}.
A metric space $(X,\mD)$ is called well-orderable if and only if
its metric topology is well-orderable.
\end{definition}

Perfect spaces are a classical notion in point-set topology.
Topological well-orderings are implicit in
the work of \citet{Cantor83}, but the particular definition
given here is new, to the best of our knowledge.

The proof of Theorem~\ref{thm:dichotomy-compact} consists of three parts: the algorithmic result for a compact, well-orderable metric space, the lower bound for a metric space with a perfect subspace, and the following lemma that ties together the two topological properties.

\begin{lemma}\label{lm:topological-equivalence}
For any compact metric space $(X,\mD)$, the following are equivalent: (i) $X$ is a countable set, (ii) $(X,\mD)$ is well-orderable, and (iii) no metric subspace of $(X,\mD)$ is perfect.\footnote{For arbitrary metric spaces we have (ii)$\iff$(iii) and (i)$\Rightarrow$(ii), but not (ii)$\Rightarrow$(i). }
\end{lemma}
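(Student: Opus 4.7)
My plan is to establish the cycle of implications $(ii)\Rightarrow(iii)\Rightarrow(i)\Rightarrow(ii)$. The first implication is an elementary consequence of the definition of a topological well-ordering; the latter two hinge on classical point-set topology for Polish spaces---specifically the Cantor--Bendixson theorem and the Mazurkiewicz--Sierpi\'nski structure theorem---both of which are applicable because every compact metric space is Polish.

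For $(ii)\Rightarrow(iii)$ I will argue by contradiction. Suppose $\prec$ is a topological well-ordering of $(X,\mD)$ and that $Y\subseteq X$ is a perfect subspace. Let $y$ be the $\prec$-least element of $Y$. If $y$ is the $\prec$-maximum of $X$, then $Y=\{y\}$ is a singleton and therefore has an isolated point, contradicting perfectness. Otherwise, let $y^{+}$ denote the $\prec$-successor of $y$ in $X$, i.e.\ the $\prec$-least element of $\{z\in X:\,y\prec z\}$. The initial segment $U=\{x\in X:\,x\prec y^{+}\}$ is open by definition of a topological well-ordering, and $U\cap Y=\{y\}$ because any $z\in U\cap Y$ satisfies $y\preceq z\prec y^{+}$, which forces $z=y$. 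Hence $y$ is isolated in $Y$, contradicting perfectness.

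For the remaining two implications I will exploit the fact that $(X,\mD)$ is Polish. For $(iii)\Rightarrow(i)$, the Cantor--Bendixson theorem gives a unique decomposition $X=P\sqcup C$ with $P$ closed and perfect as a subspace and $C$ countable; the hypothesis forces $P=\emptyset$, so $X=C$ is countable. For $(i)\Rightarrow(ii)$, the Mazurkiewicz--Sierpi\'nski theorem asserts that every non-empty countable compact metric space is homeomorphic to a countable successor ordinal $\alpha+1$ equipped with the order topology. Pulling back the canonical well-ordering of $\alpha+1$ along a homeomorphism $h\colon\alpha+1\to X$ yields a well-order $\prec$ on $X$; initial segments of $\prec$ are images of initial segments of $\alpha+1$, which are open in the order topology, hence open in $X$.

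The main technical point I expect to verify is that the well-order produced in $(i)\Rightarrow(ii)$ is genuinely topological; via Mazurkiewicz--Sierpi\'nski this reduces to the routine fact that $[0,\beta)$ is open in the order topology for every ordinal $\beta\leq\alpha+1$. A more hands-on alternative---iterating the Cantor--Bendixson derivative $X^{(\lambda+1)}=X^{(\lambda)}\setminus\{\text{isolated points of }X^{(\lambda)}\}$, observing that the derivative stabilizes at $\emptyset$ for countable $X$ (else the stable set would be a non-empty perfect closed subset of a Polish space and hence uncountable), assigning each $x$ its derivative rank $\rho(x)$, and ordering lexicographically by $(\rho(x),\iota(x))$ with a fixed enumeration $\iota\colon X\to\omega$---also works, but requires one to check directly that every initial segment is of the form $(X\setminus X^{(\rho(z))})\cup\bigcup_{x\in F}U_{x}$ for a finite set $F$ of isolated points of $X^{(\rho(z))}$ and neighborhoods $U_{x}\subseteq X$ with $U_{x}\cap X^{(\rho(z))}=\{x\}$, and is therefore open.
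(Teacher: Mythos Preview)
Your argument is correct, but the paper takes a different and more self-contained route. The paper cycles in the opposite direction, $(i)\Rightarrow(iii)\Rightarrow(ii)\Rightarrow(i)$, and avoids invoking Cantor--Bendixson and Mazurkiewicz--Sierpi\'nski as black boxes. For $(i)\Rightarrow(iii)$ it shows the contrapositive directly: a perfect subspace supports a ball-tree (a structure already used elsewhere in the paper), and the uncountably many ends of the ball-tree yield uncountably many distinct points via compactness and the finite intersection property. For $(iii)\Rightarrow(ii)$ the paper does exactly your ``hands-on alternative'': it runs a transfinite recursion, at each stage removing an isolated point of the remaining subspace (which exists by hypothesis~(iii)), and observes that each point acquires a radius witnessing that its closed initial segment is open. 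For $(ii)\Rightarrow(i)$ the paper gives an elementary compactness argument: letting $S(n)=\{x:\,B(x,\tfrac1n)\subseteq\{y:y\preceq x\}\}$, one checks that any two points of $S(n)$ are $\tfrac1n$-separated, so $S(n)$ is finite by compactness, and $X=\bigcup_n S(n)$.

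Your $(ii)\Rightarrow(iii)$ is essentially identical to the paper's corresponding argument (which the paper places in the ``arbitrary metric spaces'' part). The trade-off is clear: your route is shorter if one is willing to quote the two classical structure theorems, while the paper's route is self-contained and reuses its own ball-tree machinery, which is pedagogically convenient given that the lemma is presented as tying together the upper and lower bounds built on that machinery.
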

Lemma~\ref{lm:topological-equivalence} follows from classical theorems of \citet{Cantor83} and \citet{MazSier}.  We provide a proof in Appendix~\ref{sec:topological}
for the sake of making our exposition self-contained.

\xhdr{Extension to arbitrary metric spaces.}
We extend Theorem~\ref{thm:dichotomy-compact} to the corresponding dichotomies for arbitrary metric spaces using the reduction to complete metric spaces in Appendix~\ref{sec:reduction}, and the $o(t)$-intractability result for non-compact metric spaces in Theorem~\ref{thm:boundary-of-tractability} (which is proved independently in Section~\ref{sec:boundary-body}).

For Lipschitz MAB, the argument is very simple. First, we reduce from arbitrary metric spaces to complete metric spaces: we show that the \problem is $f(t)$-tractable on a given metric space if and only if it is $f(t)$-tractable on the completion thereof (see Appendix~\ref{sec:reduction}). Second, we reduce from complete metric spaces to compact metric spaces using Theorem~\ref{thm:boundary-of-tractability}: by this theorem, the \problem is not $o(t)$-tractable if the metric space is complete but not compact. Thus, we obtain the desired dichotomy for Lipschitz MAB on arbitrary metric spaces, as stated in Theorem~\ref{thm:intro-dichotomy-MAB}.

For Lipschitz experts, the argument is slightly more complicated because the reduction to complete metric spaces only applies to the lower bound. Let $(X,\mD)$ be an arbitrary metric space, and let $(X^*,\mD^*)$ denote the metric completion thereof. First, if $(X^*,\mD^*)$ is not compact then by Theorem~\ref{thm:boundary-of-tractability} the \FFproblem is not $o(t)$-tractable. Therefore, it remains to consider the case that $(X^*,\mD^*)$ is compact. Note that Theorem~\ref{thm:dichotomy-compact} applies to $(X^*,\mD^*)$. If $X^*$ is not countable, then by Theorem~\ref{thm:dichotomy-compact} the problem is not $o(\sqrt{t})$-tractable on $(X^*,\mD^*)$, and therefore it is not $o(\sqrt{t})$-tractable on $(X,\mD)$ (see Appendix~\ref{sec:reduction}). If $X^*$ is countable, then the algorithm and analysis in Section~\ref{sec:tractability} apply to $X$, too, and guarantee $O(1)$-tractability. Thus, we obtain the desired dichotomy for Lipschitz experts on arbitrary metric spaces, as stated in Theorem~\ref{thm:main-experts}.

\OMIT{Same is true for the double-feedback \FFproblem, and the ``only if" direction holds for the \FFproblem. Then the main dichotomy results follow from the lower bound in Theorem~\ref{thm:boundary-of-tractability}.}

%%%%%%%%%%%%%%%%%%%%%%%%%
\subsection{Lower bounds via a perfect subspace}
\label{sec:lower-bound}

In this section we prove the following lower bound:

\begin{theorem}\label{thm:lower-bound}
Consider the uniformly Lipschitz experts problem on a metric space $(X,\mD)$ which has a perfect subspace. Then the problem is not $g$-tractable for any $g\in o(\sqrt{t})$.
In particular, for any such $g$ there exists a distribution $\mathcal{P}$ over problem instances $\mu$ such that for any experts algorithm \A\ we have
\begin{align}\label{eq:lower-bound}
%  (\forall g\in o(\sqrt{t}))\quad
\Pr_{\mu\in\mathcal{P}}
	\left[ R_{(\A,\,\mu)}(t) = O_{\mu}(g(t)) \right] = 0.
\end{align}
\end{theorem}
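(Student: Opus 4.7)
The strategy mirrors the bandit lower bound in Lemma~\ref{lm:pmo-LB-balltree} for strength $d=0$, adapted to the full-feedback setting via the general KL-divergence technique encapsulated in Theorem~\ref{thm:LB-technique} (the full-feedback analog of Theorem~\ref{thm:LB-technique-MAB}). First, I will exhibit a ball-tree of strength $0$ inside the perfect subspace $Y\subseteq X$. Because $Y$ is perfect, every open ball $B(y,r)\cap Y$ is infinite and every open subset of $Y$ is itself perfect, so the recursive construction of children can always be continued: given a parent \myball $(y,r)$ with $y\in Y$, pick two distinct points $y_1,y_2\in Y\cap B(y,r/8)$, set $\rho=\mD(y_1,y_2)>0$, and declare the children to be $(y_1,\rho/4)$ and $(y_2,\rho/4)$. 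The separation, containment, and shrinkage axioms of Definition~\ref{def:ball-tree} are immediate, and each node has $\max(2,r^{-0})=2$ children.

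Second, I will construct a uniformly Lipschitz random instance supported on this ball-tree. Sample a uniformly random end $\mathbf{w}=(w_0,w_1,\ldots)$ as in Section~\ref{sec:PMO-LB}. In each round $t$, attach independent random signs $\sigma_{w,t}\in\{-1,+1\}$ to every tree node $w$, with $\Pr[\sigma_{w,t}=+1]=(1+\delta_{|w|})/2$ when $w\in\mathbf{w}$ (for a level-dependent bias $\delta_j$ to be calibrated) and $\Pr[\sigma_{w,t}=+1]=1/2$ otherwise. Define the realized payoff function by
\[
f_t(x) \;=\; c_0 \;+\; \tfrac13 \sum_w \sigma_{w,t}\, F_w(x),
\]
where $F_w$ is the bump function from~\eqref{eq:bump-fn}. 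By Lemma~\ref{lm:LB-Lipschitz} applied to the weight vector $\sigma(w)=\sigma_{w,t}\in[-1,1]$, every realization $f_t$ is $1$-Lipschitz almost surely, hence this is a valid uniformly Lipschitz experts instance. Its expectation $\mu_\mathbf{w}$ matches the payoff function from the bandit construction of Section~\ref{sec:PMO-LB} up to level-dependent scaling by $\delta_j$, and in particular still admits the sibling-separation property of Lemma~\ref{lm:PMO-LB-salient} with gap $\Theta(\delta_j r_j)$ at level $j$.

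Third, at each level $j$ the two candidate sub-trees (the children at level $j$ of the level-$(j-1)$ node that $\mathbf{w}$ passes through) form an $(\Theta(\delta_j r_j),2)$-ensemble. Invoking Theorem~\ref{thm:LB-technique} in place of Theorem~\ref{thm:LB-technique-MAB} then yields, for each level $j$, a time $t_j$ at which, conditional on $\mathbf{w}$'s prefix up to level $j-1$, the algorithm incurs regret $\Omega(\sqrt{t_j})$ with constant probability over the uniform choice between the two children. Because the random child-choices of $\mathbf{w}$ at successive levels are independent, the ``bad'' events at different levels are independent, and the (Borel--Cantelli) argument in the last paragraph of Section~\ref{sec:PMO-LB} shows that almost surely $R_{(\A,\mu_\mathbf{w})}(t)\geq C\sqrt{t}$ at infinitely many times $t$, which is exactly~\eqref{eq:lower-bound} for every $g\in o(\sqrt{t})$.

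The hard part is calibrating the biases $\delta_j$ in Step~2. In the full-feedback model the algorithm can in effect read off every sign $\sigma_{w,t}$ from $f_t$, so the per-round KL divergence between two candidate ends that first diverge at level $j$ is a sum of $\Theta(\delta_i^2)$ over all nodes $i\geq j$ at which the two ends disagree; this tail sum must remain $O(1)$ to keep each round's information content bounded, yet the gap $\delta_j r_j$ must still be detectable only at time scale $t_j=\Theta(1/(\delta_j r_j)^2)$, so that regret $\Omega(\delta_j r_j \cdot t_j)=\Omega(\sqrt{t_j})$ is forced. Packaging this trade-off at every level simultaneously is exactly what Theorem~\ref{thm:LB-technique} is designed to do, playing here the role that Theorem~\ref{thm:LB-technique-MAB} plays in the proof of Lemma~\ref{lm:pmo-LB-balltree}.
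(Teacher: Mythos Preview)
Your high-level architecture matches the paper's: build a binary ball-tree inside the perfect subspace, define a uniformly Lipschitz instance by putting i.i.d.\ random signs on \emph{every} tree node (biased along the hidden end), invoke Lemma~\ref{lm:LB-Lipschitz} for Lipschitzness, apply Theorem~\ref{thm:LB-technique} level by level, and finish with a Borel--Cantelli argument. But the quantitative core of your argument is wrong in a way that breaks the conclusion.

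The gap is in your last paragraph, where you set $t_j=\Theta(1/(\delta_j r_j)^2)$ and deduce regret $\Omega(\sqrt{t_j})$. This is the \emph{bandit} scaling from Theorem~\ref{thm:LB-technique-MAB}, where the horizon is controlled by the payoff gap $\eps$. In the full-feedback Theorem~\ref{thm:LB-technique} the horizon is $t<\ln(17k)/(2\delta^2)$, governed by the \emph{ratio parameter} $\delta$, not by $\eps$. In the paper's construction (Corollary~\ref{cor:ensemble}) these are genuinely different: $\eps=\Theta(r_j\delta_j)$ while $\delta=\Theta(\delta_j)$. Hence $t_j=\Theta(\delta_j^{-2})$ and the regret bound is $\eps t_j/2=\Theta(r_j\delta_j t_j)=\Theta(r_j\sqrt{t_j})$, \emph{not} $\Theta(\sqrt{t_j})$. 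Since $r_j\to 0$, you cannot conclude $R(t)\geq C\sqrt{t}$ infinitely often for a fixed $C$; that stronger statement is simply not what the argument delivers. What the paper does instead is fix $g\in o(\sqrt{t})$ \emph{first} and then calibrate: choose $n_j$ large enough that $g(n)<\tfrac{1}{24j}r_j^*\sqrt{n}$ for all $n>n_j$, and set $\delta_j=n_j^{-1/2}$. This yields $R(t_j)>j\cdot g(t_j)$, which is exactly what~\eqref{eq:lower-bound} needs.

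A secondary point: your worry about a ``tail sum of $\Theta(\delta_i^2)$'' is an artifact of comparing two complete ends. The paper avoids this entirely by taking as the baseline $\prob_0$ the distribution $\prob_{\lambda_0}$ for the \emph{incomplete} lineage $\lambda_0=\lambda\setminus\{w'\}$ (one child removed). Then $\prob_{\lambda_0}$ and each $\prob_{\lambda_i}$ differ in the bias of exactly one node, giving the clean single-term ratio bound of Lemma~\ref{lm:LB-experts-salient}(i); no summability condition on the $\delta_j$'s is needed.
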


Let us construct the desired distribution over problem instances. First, we use the existence of a perfect subspace to construct a ball-tree (cf.\ Definition~\ref{def:ball-tree}).

\begin{lemma}\label{lm:ball-tree}
For any metric space with a perfect subspace there exists a ball-tree
in which each node has exactly two children.
\end{lemma}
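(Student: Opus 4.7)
The plan is to build the ball-tree recursively, maintaining the invariant that the center of every node lies in the perfect subspace $Y\subseteq X$. Fix any $x_0 \in Y$ and let the root be the extended ball $(x_0, 1)$. At a generic node $(x, r)$ with $x \in Y$, I will construct two children as follows. Since $Y$ has no isolated points, $x$ is a limit point of $Y$, so I may pick $x_1 \in Y \cap B(x, r/8)$ with $x_1 \neq x$; set $\delta := \mD(x, x_1) \in (0, r/8)$. Applying perfectness once more to the open neighborhood $B(x,\delta/4)$ of $x$, pick $x_2 \in Y \cap B(x, \delta/4)$ with $x_2 \neq x$. Since $\mD(x, x_2) < \delta/4 < \delta = \mD(x, x_1)$, the points $x_1$ and $x_2$ are distinct. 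Take $r' := \delta/4$ and declare the two children of $(x,r)$ to be $(x_1, r')$ and $(x_2, r')$.

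Next I verify the three conditions of Definition~\ref{def:ball-tree}. The quarter-radius bound reads $r' = \delta/4 < r/32 \leq r/4$. For parent-to-child containment, $\mD(x, x_1) + r' = \delta + \delta/4 = 5\delta/4 < r/2$ (since $\delta<r/8$), and $\mD(x, x_2) + r' < \delta/4 + \delta/4 = \delta/2 < r/2$. For sibling separation, the reverse triangle inequality yields
\[
\mD(x_1, x_2) \;\geq\; \mD(x, x_1) - \mD(x, x_2) \;>\; \delta - \delta/4 \;=\; 3\delta/4 \;>\; \delta/2 \;=\; 2r'.
\]
Since both children have their centers in $Y$, the recursion continues indefinitely and produces an infinite rooted tree in which every node has exactly two children, as required.

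The main (and very mild) obstacle is just bookkeeping the constants so that all three ball-tree inequalities hold simultaneously; the two-stage choice of $x_1$, then $x_2$ inside the much smaller ball $B(x, \delta/4)$, reduces the problem to a single scalar $\delta$ that simultaneously controls the parent-radius bound, the containment condition, and the sibling separation. No further topological machinery is needed: perfectness is invoked only to extract, at each step, two limit points of $Y$ arbitrarily close to the current center $x$, which is exactly what Definition~\ref{def:topology} provides.
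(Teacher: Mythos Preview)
Your proof is correct and follows essentially the same recursive strategy as the paper: maintain the invariant that every center lies in the perfect subspace $Y$, and at each node use perfectness to produce nearby points of $Y$ for the children. The only cosmetic difference is that the paper reuses the parent's center as one of the two children (so only one application of perfectness is needed per step, and the constants are a bit simpler), whereas you pick two fresh points $x_1,x_2$; both variants verify the three ball-tree inequalities in the same straightforward way.
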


\begin{proof}
Consider a metric space $(X,\mD)$ with a perfect subspace $(Y,\mD)$. Let us construct the ball-tree recursively, maintaining the invariant that for each tree node $(y,r)$ we have $y\in Y$. Pick an arbitrary $y\in Y$ and let the root be $(y,1)$. Suppose we have constructed a tree node $(y,r)$, $y\in Y$.  Since $Y$ is perfect, the ball $B(y,r/3)$ contains another point  $y'\in Y$. Let $r' = \mD(y,y')/2$ and define the two children of $(y,r)$ as $(y,r')$ and $(y',r')$.
\end{proof}

Now let us use the ball-tree to construct the distribution on payoff functions. (We will re-use this construction in Section~\ref{sec:MaxMinLCD-LB}.) In what follows, we consider a metric space $(X,\mD)$ with a fixed ball-tree $T$.  For each $i\geq 1$, let $D_i$ be the set of all depth-$i$ nodes in the ball-tree. Recall that an \emph{end} in a ball-tree is an infinite path from the root:
	$\mathbf{w} = (w_0, w_1, w_2,\,\ldots)$, where $ w\in D_i$ for all $i$.
For each tree node $w = (x_0, r_0)$ % and each $r\in (0,1]$
define the ``bump function''
	$F_{w}: X \rightarrow [0,1]$ as in \eqref{eq:bump-fn}:
\begin{align}\label{eq:needle}
 F_{w}(x) = \begin{cases}
	\min\{ r_0 - \mD(x,x_0),\, r_0/2  \} & \text{if $x\in B(x_0, r_0)$}, \\
	0	& \text{otherwise.}
\end{cases}
\end{align}
The construction is parameterized by a sequence
    $\delta_i, \delta_2, \delta_3, \, \ldots \in(0,1)$
which we will specify later.

\begin{definition}
A \emph{lineage} in a ball-tree is a
set of tree nodes containing at most
one child of each node; if it contains
\emph{exactly} one child of each node
then we call it a \emph{complete lineage}.
For each complete lineage
$\lambda$ there is an associated end $\mathbf{w}(\lambda)$
defined by $\mathbf{w}=(w_0,w_1,\, \ldots)$ where
$w_0$ is the root and for $i>0$, $w_i$ is the unique
child of $w_{i-1}$ that belongs to $\lambda$.
\end{definition}

\begin{construction}\label{con:LB-payoffs}
For any lineage $\lambda$ let us define a problem instance $\prob_{\lambda}$
(probability measure on payoff functions) via the following sampling rule.  First every tree node $w$ independently
samples a random sign $\sigma(w) \in \{+1,-1\}$
so that $\E[\sigma(w)] = \delta_i$ if
$w$ is the depth $i\geq 1$ node in $\mathbf{w}(\lambda)$,
and choosing the sign
uniformly at random otherwise.  Define the payoff function $\pi$ associated with a particular sign pattern $\sigma(\cdot)$ as follows:
\begin{align}\label{eq:LB-pi-defn}
    \payoff = \frac12 + \frac13\, \sum_{w \in T \setminus D_0} \sigma(w)\, F_{w}.
\end{align}
Let
    $\mu_\lambda(x) = \E_{\pi\sim \prob_\lambda} [\pi(x)]$
denote the expectation of $\pi(x)$ under distribution $\prob_\lambda$.

Let $\mathcal{P}_{T}$ be the distribution over problem instances
$\prob_{\lambda}$ in which  $\lambda$ is a complete lineage
sampled uniformly at random; that is, each node samples one of its children
independently and uniformly at random, and $\lambda$ is the set of
sampled children.
\end{construction}

\begin{note}{Remark.}
By Lemma~\ref{lm:LB-Lipschitz}, the payoff function in \refeq{eq:LB-pi-defn} is Lipschitz on $(X,\mD)$ for any sign pattern $\sigma(\cdot)$. Therefore $\prob_{\lambda}$ is an instance of uniformly Lipschitz experts problem, for each lineage $\lambda$.
\end{note}

To complete Construction~\ref{con:LB-payoffs}, it remains to specify the $\delta_i$'s. Fix function $g()$ from Theorem~\ref{thm:lower-bound}. For each $i\geq 1$, let
	$r^*_i = \min \{r: (x,r)\in D_i   \}$
be the smallest radius among depth-$i$ nodes in the ball-tree. Note that $r^*_i\leq 4^{-i}$.
Choose a number $n_i$ large enough that
    $g(n) < \tfrac{1}{24\,i}\, r^*_i \sqrt{n}$
for all $n > n_i$; such $n_i$ exists because $g\in o(\sqrt{t})$.
Let $\delta_i = n_i^{-1/2}$.

\OMIT{Note that the radius decreases from parents to children in the ball tree, so the sequence $(r^*_i:\, i\in\N)$ is decreasing. Consequently, the sequence $(n_i:\,i\in\N$ is non-increasing, and therefore the sequence
    $(\delta_i:\,i\in\N)$
is non-decreasing, as required. }

\begin{note}{Discussion.}
For a complete lineage $\lambda$, the expected payoffs are given by
\begin{align}\label{eq:experts-mu}
\mu_\lambda = \frac12 + \frac13\,\sum_{i=1}^\infty \delta_i\, F_{w_i},
\end{align}
where $\mathbf{w}(\lambda) = (w_0,w_1, \ldots)$ is the end associated with $\lambda$. For the special case of MAB it would suffice to construct a problem instance with  expected payoffs given by \refeq{eq:experts-mu}, without worrying about lineages or random sign patterns. This would be a ``weighted" version of the lower-bounding construction from Section~\ref{sec:PMO-LB}.

However, for the full-feedback problem it is essential that the sum in \eqref{eq:LB-pi-defn} is over all tree nodes (except the root), rather than the end $\mathbf{w}(\lambda)$. If the sum were over $\mathbf{w}(\lambda)$, then a single sample of the payoff function $\pi$ would completely inform the learner of the location of $\mathbf{w}(\lambda)$ in the tree. (Just look for the nested rings on which $\pi$ varies, and they form a target whose bulls-eye is $\mathbf{w}(\lambda)$.) Instead, we fill the whole metric space with ``static" in the form of a hierarchically nested set of rings on which $\pi$ varies, where the only special distinguishing property of the rings that zero in on $\mathbf{w}(\lambda)$ is that there is a slightly higher probability that $\pi$ increases on those rings. Thus, $\mathbf{w}(\lambda)$ is well-hidden, and in particular is impossible to learn from a single sample of $\pi$.
\end{note}

Let us state and prove a salient property of Construction~\ref{con:LB-payoffs} which we use to derive the regret lower bound. (This property holds for an arbitrary non-increasing sequence of $\delta_i$'s; we will re-use it in Section~\ref{sec:MaxMinLCD-LB}.)

\begin{lemma}\label{lm:LB-experts-salient}
Fix a complete lineage $\lambda$ and tree node $v\in \lambda$. To fix the notation, let us say that $v$ is depth-$i$ node with corresponding ball $B$ of radius $r$.

\begin{itemize}
\item[(i)] For every event $\mE$ in the Borel $\sigma$-algebra on $[0,1]^X$,
$$ \prob_\lambda(\mE)/ \prob_{\lambda\setminus \{v\}}(\mE)
    \in [1-\delta_i,\, 1+\delta_i].
$$

\item[(ii)] If $v\in \mathbf{w}(\lambda)$, then
$\sup(\mu_\lambda, B) - \sup(\mu_\lambda, X\setminus B) \geq r \delta_i/6$

\end{itemize}
\end{lemma}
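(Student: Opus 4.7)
The plan is to handle the two parts separately, since each isolates a different elementary mechanism: part~(i) isolates the single-coordinate change in the product distribution, whereas part~(ii) is a direct calculation on $\mu_\lambda$ using the support structure of the bump functions.

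For part~(i), I will use the fact that $\prob_\lambda$ and $\prob_{\lambda\setminus\{v\}}$ differ only in the marginal law of $\sigma(v)$: under $\prob_\lambda$ we have $\Pr[\sigma(v)=+1] = (1+\delta_i)/2$, while under $\prob_{\lambda\setminus\{v\}}$ the sign is uniform; the conditional distribution of all other $\sigma(w)$ (and hence of $\pi$) given $\sigma(v)$ is the same under both measures, by independence. Setting $p_\pm = \Pr[\mE \mid \sigma(v) = \pm 1]$ (this is well-defined and unambiguous, being common to both measures), I get
\begin{align*}
\prob_\lambda(\mE) &= \tfrac{1+\delta_i}{2}\, p_+ + \tfrac{1-\delta_i}{2}\, p_-, \\
\prob_{\lambda\setminus\{v\}}(\mE) &= \tfrac{1}{2}\, p_+ + \tfrac{1}{2}\, p_-.
\end{align*}
Subtracting gives $|\prob_\lambda(\mE) - \prob_{\lambda\setminus\{v\}}(\mE)| = \tfrac{\delta_i}{2}\,|p_+-p_-| \leq \tfrac{\delta_i}{2}(p_+ + p_-) = \delta_i\,\prob_{\lambda\setminus\{v\}}(\mE)$, and dividing through (the case $\prob_{\lambda\setminus\{v\}}(\mE) = 0$ forces $p_\pm = 0$, hence $\prob_\lambda(\mE) = 0$) yields the desired ratio bound.

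For part~(ii), I will rely on the explicit expression $\mu_\lambda(x) = \tfrac12 + \tfrac13 \sum_{j\geq 1} \delta_j F_{w_j}(x)$ where $\mathbf{w}(\lambda) = (w_0,w_1,\ldots)$, since all nodes off the end $\mathbf{w}(\lambda)$ have $\E[\sigma(w)] = 0$. Evaluating at the center $x_v = x_{w_i}$: by Claim~\ref{cl:pmo-lipschitz-1} applied to every strict ancestor $w_j$ of $v$, we get $F_{w_j}(x_v) = r_{w_j}/2$ for $j<i$, and $F_v(x_v) = r/2$ by direct computation. For any $y \in X \setminus B$, all descendants $w_j$ with $j > i$ satisfy $B_{w_j}\subset B(x_v,r/2)\subset B$ by iterated application of the ball-tree contraction rule, so $F_{w_j}(y) = 0$ for $j \geq i$. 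Comparing term-by-term: for each ancestor $w_j$ ($j<i$), $F_{w_j}(x_v) = r_{w_j}/2 \geq F_{w_j}(y)$ since $F_{w_j}\leq r_{w_j}/2$ pointwise; the $j=i$ term contributes $\tfrac13\delta_i(r/2 - 0)$; descendants contribute nonnegatively to $\mu_\lambda(x_v)$ and zero to $\mu_\lambda(y)$. Summing yields $\mu_\lambda(x_v) - \mu_\lambda(y) \geq r\delta_i/6$ for every $y\notin B$, and the claim follows from $\sup(\mu_\lambda,B) \geq \mu_\lambda(x_v)$ after taking the supremum over $y$.

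I do not expect any significant obstacle: the hardest step is the bookkeeping in part~(ii), specifically verifying that for descendants of $v$ on the end $\mathbf{w}(\lambda)$ the bump $F_{w_j}$ vanishes outside $B$, and that all ancestor bumps are maximized at $x_v$. Both are direct consequences of the definition of ball-trees (Definition~\ref{def:ball-tree}) and Claim~\ref{cl:pmo-lipschitz-1}; in particular, there are no delicate cancellations to track, because all contributions favor $x_v$ over $y$.
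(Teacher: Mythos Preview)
Your proposal is correct and follows essentially the same approach as the paper. For part~(i) the argument is identical: both you and the paper condition on $\sigma(v)$, use that the two measures agree on the conditional law of all other signs, and bound the difference by $\tfrac{\delta_i}{2}|p_+-p_-|\le \delta_i\,\prob_{\lambda\setminus\{v\}}(\mE)$. For part~(ii) there is only a cosmetic difference: the paper computes $\sup(\mu_\lambda,B)$ and $\sup(\mu_\lambda,X\setminus B)$ exactly (obtaining $\tfrac12+\tfrac16\sum_{j\ge1}\delta_j r_j$ and $\tfrac12+\tfrac16\sum_{j<i}\delta_j r_j$ respectively, from the same nesting observations you use), whereas you bound the gap below by evaluating at the center $x_v$ and comparing term-by-term --- both routes rest on the same facts about where the $F_{w_j}$ are supported and maximized.
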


\begin{proof}
For part (i), let us treat $\mE$ as a set of sign patterns $\sigma:V\to\{\pm 1\}$, where $V$ is the set of all nodes in the ball-tree, and let us treat $\prob_\lambda$ as a measure on these sign patterns. For each sign $\beta\in\{\pm 1\}$, let
    $\mE_\beta = \{\sigma\in\mE:\, \sigma(v)=\beta \}$
be the set of all sign patterns in $\mE$ with a given sign on node $u$. Note that
\begin{align*}
\prob_\lambda(\mE) = \sum_{\beta\in \{\pm 1\}}
    \prob_\lambda(\mE_\beta)\cdot \prob_\lambda\left( \sigma(v) = \beta \right).
\end{align*}
This equality holds for any lineage, in particular for lineage $\lambda\setminus\{v\}$.

For brevity, denote $\prob_0 = \prob_{\lambda \setminus \{v\}}$. Observe that $\prob_0$ and $\prob_\lambda$ differ only in how they set $\sigma(v)$. We can state this property rigorously as follows:
\begin{align*}
\prob_\lambda(\mE_\beta) = \prob_0(\mE_\beta)
    \qquad \text{for each sign $\beta\in \{\pm 1\}$}.
\end{align*}
Now, recalling that the event $\{\sigma(v)=1\}$ is assigned probability $\tfrac12$ under measure $\prob_0$, and probability $\tfrac12+\delta_i/2$ under measure $\prob_\lambda$, it follows that
\begin{align*}
\prob_\lambda(\mE) - \prob_0(\mE)
    &= (\delta_i/2)\, \left(
        \prob_0(\mE_+) - \prob_0(\mE_-)
    \right) \\
|\prob_\lambda(\mE) - \prob_0(\mE)|
    &\leq (\delta_i/2)\, \left(
        \prob_0(\mE_+) + \prob_0(\mE_-)
    \right)
    = \delta_i\, \prob_0(\mE).
\end{align*}

For part (ii), write $\mathbf{w}(\lambda) = (w_0,w_1, \ldots)$ be the end corresponding to $\lambda$. Recall that $v=w_i$. For each $w_j$, let $B_j$ be the corresponding ball, and let $r_j$ be its radius. Using \refeq{eq:experts-mu} and the fact that the sequence $(B_j:\,j\in\N)$ is decreasing, it follows that
\begin{align*}
\sup(\mu_\lambda,B)
    &= \frac12 + \frac16\,\sum_{j=1}^\infty \delta_j\, r_j, \\
\sup(\mu_\lambda,X\setminus B)
    &= \frac12 + \frac16\,\sum_{j=1}^{i-1} \delta_j\, r_j, \\
\sup(\mu_\lambda,B) -\sup(\mu_\lambda,X\setminus B)
    &= \frac16\,\sum_{j=i}^\infty \delta_j\, r_j
    \geq \delta_i\, r_i/6. \qedhere
\end{align*}
%The latter is at least $\eps = \delta_i\, r_i/6$, and at most
%$\eps(1+\sum_{j=1}^\infty (\delta_j/\delta_i) (r_j/r_i))
%    \leq \tfrac43 \eps$.
%In the latter, we have used that $(\delta_j:j\in\N)$
%is a non-decreasing sequence, and that $r_j\leq r_{j-1}/4$ for all $j$.
\end{proof}

\begin{lemma}\label{lm:ball-tree-LB}
Consider a metric space $(X,\mD)$ with a ball-tree $T$. Then~\refeq{eq:lower-bound} holds with $\mathcal{P} = \mathcal{P}_T$.
\end{lemma}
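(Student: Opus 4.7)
The plan is to mimic the Borel--Cantelli-style argument at the end of Section~\ref{sec:PMO-LB} that established \refeq{eq:LB-pmo-analysis}, with the bandit ensemble lemma (Theorem~\ref{thm:LB-technique-MAB}) replaced by its full-feedback analog (Theorem~\ref{thm:LB-technique}), and with Lemma~\ref{lm:LB-experts-salient} playing the role of Lemma~\ref{lm:PMO-LB-salient}. Fix an experts algorithm $\A$. For each node $u$ at depth $i-1$ of the ball-tree, with children $u_1,u_2$, form the ensemble $\{\F(u_1),\F(u_2)\}$ where $\F(u_j)=\{\prob_\lambda : u_j\in\lambda\}$. Lemma~\ref{lm:LB-experts-salient}(i) bounds the per-round KL divergence between any two measures in $\F(u_1)\cup\F(u_2)$ by $O(\delta_i^2)$; Lemma~\ref{lm:LB-experts-salient}(ii) exhibits a payoff gap of at least $r^*_i\delta_i/6$ between the best arm in the ``correct'' ball and any arm in its sibling. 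With $n_i=\delta_i^{-2}$, Theorem~\ref{thm:LB-technique} then selects an index $i_u\in\{1,2\}$ (deterministic given $\A$ and the ball-tree) such that $R_{(\A,\mu_\lambda)}(n_i)\geq c\,r^*_i\sqrt{n_i}$ for every $\lambda\in\F(u_{i_u})$, where $c>0$ is an absolute constant.

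Define the event $\mE_i=\{u_{i_u}\in\lambda\text{, where $u$ is the depth-$(i-1)$ node in }\lambda\}$. Since the depth-$i$ child of $u$ in $\lambda$ is uniform in $\{u_1,u_2\}$ conditional on $\lambda_{<i}$ (by construction of $\mathcal{P}_T$), we have $\Pr[\mE_i\mid\lambda_{<i}]\geq\tfrac12$. All earlier events $\mE_1,\ldots,\mE_{i-1}$ are $\sigma(\lambda_{<i})$-measurable, since $i_u$ is a deterministic function of $\A$ and the ball-tree; hence
\[
 \Pr[\mE_i\mid\neg\mE_1,\ldots,\neg\mE_{i-1}]\;=\;\E\!\left[\Pr[\mE_i\mid\lambda_{<i}]\,\Big|\,\neg\mE_1,\ldots,\neg\mE_{i-1}\right]\;\geq\;\tfrac12,
\]
and consequently $\Pr\!\left[\bigcap_{j\geq i_0}\neg\mE_j\right]\leq\prod_{j\geq i_0}\tfrac12 = 0$ for every $i_0$, so $\mE_i$ holds for infinitely many $i$ almost surely. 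On that event, combining the regret bound with the defining property $g(n_i)<\tfrac{1}{24 i}r^*_i\sqrt{n_i}$ of $n_i$ gives $R_{(\A,\mu_\lambda)}(n_i)/g(n_i)\geq 24c\,i\to\infty$ along infinitely many $i$, whence $R_{(\A,\mu_\lambda)}(t)\neq O_{\mu_\lambda}(g(t))$ almost surely, which is exactly \refeq{eq:lower-bound}.

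The main obstacle will be verifying the hypotheses of Theorem~\ref{thm:LB-technique} uniformly over the ensemble $\{\F(u_1),\F(u_2)\}$, since each $\F(u_j)$ is a \emph{set} of measures indexed by arbitrary continuations of the lineage below depth $i$, rather than a single measure. The construction in \refeq{eq:LB-pi-defn} is engineered for exactly this: independent random signs at every tree node ensure that Lemma~\ref{lm:LB-experts-salient}(i) provides a pointwise bound on the likelihood ratio on the Borel $\sigma$-algebra of $[0,1]^X$, and choosing the sequence $n_i$ to grow fast enough that $\sum_{i'\geq i}\delta_{i'}=O(\delta_i)$ absorbs the cumulative effect of lineage continuations below depth $i$ into a constant factor, keeping the per-round KL divergence $O(\delta_i^2)$ uniformly over the ensemble. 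Once this uniform KL control is in place, the rest of the argument is a direct transcription of the bandit proof of \refeq{eq:LB-pmo-analysis} from Section~\ref{sec:PMO-LB}.
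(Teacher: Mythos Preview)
Your skeleton is right, but the way you form the ensemble at each node does not match the hypotheses of Theorem~\ref{thm:LB-technique}, and your proposed workaround does not close the gap. Theorem~\ref{thm:LB-technique} applies to a tuple of \emph{specific} measures $(\prob_0,\prob_1,\ldots,\prob_k)$, not to a pair of \emph{sets} $\{\F(u_1),\F(u_2)\}$. You notice this and propose to enforce $\sum_{i'\geq i}\delta_{i'}=O(\delta_i)$ so as to get a uniform likelihood-ratio bound, but you never specify the reference measure $\prob_0$, and even with such a bound in hand the theorem only produces \emph{one} bad index, not a bad index that is simultaneously bad for \emph{every} continuation below depth $i$. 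The claim ``$i_u$ deterministic given $\A$ and the ball-tree'' is therefore unsupported, and without it your conditional-probability step for $\mE_i$ collapses. Also, Lemma~\ref{lm:LB-experts-salient}(i) only compares $\prob_\lambda$ with $\prob_{\lambda\setminus\{v\}}$ for a \emph{single} removed node $v$; it does not directly bound the ratio between two arbitrary members of $\F(u_1)\cup\F(u_2)$.

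The paper sidesteps all of this via Corollary~\ref{cor:ensemble}, which you do not invoke. For a \emph{fixed} complete lineage $\lambda$ and a node $w=w_{i-1}\in\mathbf{w}(\lambda)$ with children $w_1,w_2$, it sets $\lambda_0=\lambda\setminus\{w'\}$ (where $w'$ is the child of $w$ that lies in $\lambda$) and $\lambda_j=\lambda_0\cup\{w_j\}$, so that one of $\lambda_1,\lambda_2$ equals $\lambda$ itself. The triple $(\prob_{\lambda_0},\prob_{\lambda_1},\prob_{\lambda_2})$ is then a bona fide $(\eps,2\delta_i,2)$-ensemble of three specific measures, and Theorem~\ref{thm:LB-technique} yields a bad choice $\alpha(w)\in\{\prob_{\lambda_1},\prob_{\lambda_2}\}$ with the stated regret bound. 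The event $\mathcal{E}_i$ is then simply $\{\prob_\lambda=\alpha(w_{i-1})\}$: when it occurs, the regret bound is for $\prob_\lambda$ itself, so no uniformity over continuations is needed. The price is that $\alpha(w)$ now depends on the full lineage $\lambda$ (through the continuations inside $\lambda_1,\lambda_2$), so your martingale-style justification of $\Pr[\mE_i\mid\neg\mE_1,\ldots,\neg\mE_{i-1}]\geq\tfrac12$ does not apply verbatim; the paper instead argues each $\mathcal{E}_i$ has probability $\tfrac12$ and runs the Borel--Cantelli-type product directly.
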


To prove this lemma, we define a notion called an
$(\eps,\delta,k)$-ensemble,
analogous to the $(\eps,k)$-ensembles
defined in Section~\ref{sec:PMO-LB}. As before,
it is convenient to articulate this definition in the more general
setting of the \emph{feasible experts problem}, in which one
is given a set of arms $X$ (not necessarily a metric
space) along with a collection $\mF$ of Borel
probability measures on the set $[0,1]^X$ of
functions $\payoff : X \rightarrow [0,1].$  A problem
instance of the feasible experts problem
consists of a triple $(X,\mF,\prob)$
where $X$ and $\mF$ are known to the
algorithm, and $\prob \in \mF$ is not.

\begin{definition}\label{def:ensemble}
Consider a set $X$ and a $(k+1)$-tuple
$\vec{\prob} = (\prob_0,\prob_1 \LDOTS \prob_k)$
of Borel probability measures on $[0,1]^X$, the
set of $[0,1]$-valued payoff functions $\payoff$
on $X$.  For $0 \leq i \leq k$ and $x \in X$, let
$\mu_i(x)$ denote the expectation of $\payoff(x)$
under measure $\prob_i$.
We say that $\vec{\prob}$ is an \emph{$(\eps,\delta,k)$-ensemble}
if there exist pairwise disjoint subsets $S_1,S_2,\ldots,S_k \subseteq X$
for which the following properties hold:
\begin{itemize}
\item[(1)] %\label{ens:1}
for every $i>0$ and every event $\mathcal{E}$ in the Borel
$\sigma$-algebra of $[0,1]^X$, we have
    $$1-\delta < \prob_0(\mathcal{E}) / \prob_i(\mathcal{E}) < 1+\delta.$$
\item[(ii)] %\label{ens:2}
for every $i > 0$, we have
    $\sup(\mu_i, S_i) - \sup(\mu_i,\, X \setminus S_i) \geq \eps.$
\end{itemize}
\end{definition}

Essentially, the measures $\prob_1 \LDOTS \prob_k$ correspond to the children of any given node in the ball-tree. The precise connection to Construction~\ref{con:LB-payoffs} is stated below, derived as corollary of Lemma~\ref{lm:LB-experts-salient}.

\begin{corollary}\label{cor:ensemble}
Fix an arbitrary complete lineage $\lambda$ in a ball-tree $T$ and a tree node $u\in \mathbf{w}(\lambda)$. Let $u_1 \LDOTS u_k$ be the children of $u$. Let $u'$ be the unique child of $u$ contained in $\lambda$. Define lineage $\lambda_0 = \lambda \setminus \{u'\}$, and complete lineages
    $\lambda_i = \lambda_0\cup \{u_i\}$
for each $i\in [1,k]$. Then the tuple
    $\vec{\prob} = (\prob_{\lambda_0}, \prob_{\lambda_1} \LDOTS \prob_{\lambda_k})$
of probability measures from Construction~\ref{con:LB-payoffs}
constitutes a $(\eps,2\,\delta_j,k)$-ensemble where $j$ is the depth of the tree nodes $u_i$, $r$ is their radius, and $\eps = r \delta_j / 6$.
\end{corollary}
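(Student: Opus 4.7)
The plan is to take $S_i = B_i$, the ball corresponding to child $u_i$, and then verify the two defining conditions of Definition~\ref{def:ensemble} directly from Lemma~\ref{lm:LB-experts-salient}. Disjointness of $S_1 \LDOTS S_k$ is immediate from property~(iii) of the ball-tree (Definition~\ref{def:ball-tree}): if $(x_i, r)$ and $(x_\ell, r)$ are distinct siblings then $r + r < \mD(x_i, x_\ell)$, so $B_i \cap B_\ell = \emptyset$.

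For condition~(i) of the ensemble, fix $i \geq 1$ and a Borel event $\mE$. Since $\lambda_0 = \lambda_i \setminus \{u_i\}$ and $u_i$ has depth $j$, applying Lemma~\ref{lm:LB-experts-salient}(i) to lineage $\lambda_i$ and node $v = u_i$ yields
\begin{align*}
    \prob_{\lambda_i}(\mE) / \prob_{\lambda_0}(\mE) \in [1 - \delta_j,\, 1 + \delta_j].
\end{align*}
Taking reciprocals gives $\prob_{\lambda_0}(\mE) / \prob_{\lambda_i}(\mE) \in [\,1/(1+\delta_j),\; 1/(1-\delta_j)\,]$. Since $\delta_j = n_j^{-1/2}$ can be assumed to satisfy $\delta_j \leq 1/2$ (choose $n_j$ large enough, which is consistent with the tuning of the sequence), the elementary inequalities $1/(1+\delta_j) \geq 1 - \delta_j \geq 1 - 2\delta_j$ and $1/(1-\delta_j) \leq 1 + 2\delta_j$ yield the required $2\delta_j$ bound.

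For condition~(ii), fix $i \geq 1$. Since $u \in \mathbf{w}(\lambda)$ and $\lambda_i$ agrees with $\lambda$ on all nodes of depth strictly less than $j$, the path $\mathbf{w}(\lambda_i)$ reaches $u$ and then (by definition of $\mathbf{w}(\cdot)$) continues through the unique child of $u$ lying in $\lambda_i$, which is precisely $u_i$. Hence $u_i \in \mathbf{w}(\lambda_i)$, and Lemma~\ref{lm:LB-experts-salient}(ii) applied with $v = u_i$ gives
\begin{align*}
    \sup(\mu_{\lambda_i}, B_i) - \sup(\mu_{\lambda_i},\, X \setminus B_i) \;\geq\; r\,\delta_j / 6 \;=\; \eps,
\end{align*}
which is exactly the required inequality with $S_i = B_i$.

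There is no genuine obstacle here: the corollary is essentially a bookkeeping step that repackages the two parts of Lemma~\ref{lm:LB-experts-salient} in the ensemble language of Definition~\ref{def:ensemble}. The only minor care needed is (a) the reciprocal inequality, which costs the factor of $2$ in the ensemble parameter, and (b) identifying that $u_i$ lies on $\mathbf{w}(\lambda_i)$, which follows because swapping $u'$ for $u_i$ in $\lambda$ reroutes the end exactly at depth $j$.
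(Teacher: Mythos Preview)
Your proposal is correct and follows essentially the same approach as the paper's proof: set $S_i = B_i$, apply Lemma~\ref{lm:LB-experts-salient}(i) with lineage $\lambda_i$ and node $u_i$ (noting $\lambda_0 = \lambda_i \setminus \{u_i\}$) and invert the ratio at the cost of doubling $\delta_j$, then apply part~(ii) after checking $u_i \in \mathbf{w}(\lambda_i)$. The paper's version is terser but identical in substance; your added remarks on disjointness of the $B_i$ and the elementary reciprocal bound are correct elaborations.
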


\begin{proof}
Let $S_1 \LDOTS S_k$ be the balls that correspond to $u_1 \LDOTS u_k$. Fix $u_i$, and apply Lemma~\ref{lm:LB-experts-salient} with lineage $\lambda_i$ and tree node $u_i$. Then both parts of Definition~\ref{def:ensemble} are satisfied for a given $i$. (For part (i), note that $\lambda_0 = \lambda_i \setminus \{u_i \}$. Observe that Lemma~\ref{lm:LB-experts-salient}(i) bounds
    $\prob_\lambda(\mE)/ \prob_{\lambda\setminus \{v\}}(\mE)$,
whereas for Definition~\ref{def:ensemble} we need to bound the inverse ratio; hence, the bound increases from $\delta_i$ to $2\cdot \delta_i$.)
\end{proof}

\begin{theorem}\label{thm:LB-technique}
Consider the feasible experts problem on $(X,\mF)$. Let  $\vec{\prob}$ be an $(\eps,\delta,k)$-ensemble with $\{\prob_1,\ldots,\prob_k\} \subseteq
\mF$ and $0<\eps,\delta<1/2$. Then for any
    $t < \ln(17k)/(2 \delta^2)$
and any experts algorithm \A, at least half of the measures $\prob_i$ have the property that
	$R_{(\A,\,\prob_i)}(t) \geq \eps t/2$.
\end{theorem}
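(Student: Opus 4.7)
The plan is to adapt the classical KL-divergence lower bound of \citet{bandits-exp3} to the full-feedback setting, where the key new ingredient is a multi-hypothesis (Fano-type) information-theoretic inequality that produces the $\ln k$ factor in the threshold on $t$.

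First, I would reduce the regret bound to a statement about expected play counts. Let $N_i$ denote the number of rounds in which the algorithm plays an arm in $S_i$. Property (ii) of the $(\eps,\delta,k)$-ensemble guarantees that every round in which the algorithm plays outside $S_i$ under $\prob_i$ incurs instantaneous regret at least $\eps$, so $R_{(\A,\prob_i)}(t) \geq \eps(t - \expect_{\prob_i}[N_i])$. It therefore suffices to exhibit at least $k/2$ indices $i$ with $\expect_{\prob_i}[N_i] \leq t/2$.

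Second, I would bound the KL divergence between the $t$-round product measures. Property (i) of the ensemble---the pointwise likelihood-ratio bound $\prob_0/\prob_i \in (1-\delta,1+\delta)$---implies $\chi^2(\prob_i\,\|\,\prob_0) \leq O(\delta^2)$ per round, because $d\prob_i/d\prob_0$ has mean $1$ under $\prob_0$ and deviates from $1$ by $O(\delta)$. Hence the per-round KL divergence is $O(\delta^2)$, and by tensorization the $t$-round KL divergence is $O(t\delta^2)$. Unlike the bandit setting of Theorem~\ref{thm:LB-technique-MAB}, every round under full feedback reveals information about every measure, so this bound is independent of the algorithm's play sequence.

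Third, I would apply a Fano-style multi-hypothesis argument. Define the test $\hat J$ outputting $j$ when $N_j > t/2$ and $0$ otherwise; this is well-defined because the events $\{N_j > t/2\}$ are pairwise disjoint (from $\sum_j N_j \leq t$ and disjointness of the $S_j$). Placing a uniform prior on $\{1,\ldots,k\}$ for the true index $J$, the mutual information between $J$ and the observations is at most $O(t\delta^2)$ by the KL bound above, and Fano's inequality then yields $\tfrac{1}{k}\sum_j \prob_j(\hat J \neq j) \geq 1 - O((t\delta^2 + \log 2)/\log k)$. For $t < \ln(17k)/(2\delta^2)$, the right-hand side is bounded below by a positive constant; since $\hat J \neq j$ implies $N_j \leq t/2$, a Markov-type step converts this average into a count: at least $k/2$ indices satisfy $\expect_{\prob_j}[N_j] \leq t/2$, yielding regret at least $\eps t/2$.

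The main obstacle is obtaining the $\log k$ factor with the correct constants. Naive pairwise applications of Pinsker's inequality give only $t \lesssim 1/\delta^2$ with no $\log k$ factor; Bretagnolle--Huber sharpens this to $\sqrt{\log k}/\delta^2$ but still misses the full factor. The full $\log k$ comes from exploiting the pairwise disjointness of the $k$ events $\{N_j > t/2\}$---so that their $\prob_0$-probabilities sum to at most $1$---in a genuinely multi-hypothesis Fano-type bound. Tuning so that the conclusion reads ``at least $k/2$ indices'' with regret $\eps t/2$, and calibrating the constant $17$ inside $\ln(17k)$, requires a careful choice of the threshold defining $\hat J$ and precise bookkeeping in the conversion from average error to index count.
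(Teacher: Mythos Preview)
Your three ingredients—per-round KL bounded by $O(\delta^2)$ via property (i), disjointness of the $S_i$, and the regret inequality $R \geq \eps(t - \expect[N_i])$ from property (ii)—are correct and match the paper's. But Step 3 has a genuine gap. Fano bounds the average $\frac{1}{k}\sum_j \prob_j(N_j \leq t/2)$, and Markov converts this into ``many $j$ have $\prob_j(N_j \leq t/2) \geq c$''; from that you obtain only $\expect_{\prob_j}[t-N_j] \geq (t/2)c$, hence regret $\geq \eps t c/2$, not the claimed $\expect_{\prob_j}[N_j] \leq t/2$. More seriously, Fano carries the $\ln 2$ overhead: with $I(J;Y) \leq t\delta^2 < \ln(17k)/2$ the average-error lower bound is $1 - (\ln(17k)/2 + \ln 2)/\ln k$, which is positive only once roughly $k > 68$. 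The theorem is stated and used for all $k \geq 2$—the $\sqrt{t}$ dichotomy in Section~\ref{sec:lower-bound} invokes it with $k=2$—so Fano alone does not cover the full range.

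The paper avoids Fano entirely and works per round. For each round $s$ it applies the single-event change-of-measure Lemma~\ref{lem:kl-distinguishing}: if $p_i^s(x_s \in S_i) \geq \tfrac12$ then $p_0^s(x_s \in S_i) > 4/k$. Since the events $\{x_s \in S_i\}_{i=1}^k$ are disjoint under the null $\prob_0$ at each fixed $s$, fewer than $k/4$ indices can be ``satisfactory'' at that round. A double-count over rounds followed by Markov over $i$ yields that at least $k/2$ indices have more than $t/2$ unsatisfactory rounds, each contributing at least $\eps/2$ to expected regret. This works uniformly for all $k \geq 2$ and sidesteps the probability-to-expectation conversion, at the cost of invoking a tailored change-of-measure lemma rather than an off-the-shelf Fano bound. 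Your approach would recover the result for large $k$ (with somewhat worse constants) but needs a separate argument for small $k$.
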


\begin{note}{Remarks.}
To preserve the flow of the paper,
the proof of this theorem is deferred until
Appendix~\ref{sec:KL-divergence}, where the
relevant KL-divergence techniques are developed.
The proof of Theorem~\ref{thm:lower-bound} uses Theorem~\ref{thm:LB-technique} for $k=2$, and  the proof of Theorem~\ref{thm:intro-MaxMinLCD}  will use it again for large $k$.
\end{note}

\begin{proofof}{Lemma~\ref{lm:ball-tree-LB}}
Let us fix an experts algorithm \A\ and a function $g\in o(\sqrt{t})$, and consider the distribution over problem instances in Construction~\ref{con:LB-payoffs}. For each complete lineage
$\lambda$ and tree node $w\in \mathbf{w}(\lambda)$, let $w_1,w_2$ denote
the children of $w$ in the ball-tree,  and let $w'$ denote the unique child that belongs to $\lambda$. The three lineages
    $\lambda_0 = \lambda \setminus \{w'\}, \,
     \lambda_1 = \lambda_0 \cup \{w_1\}, \,
     \lambda_2 = \lambda_0 \cup \{w_2\}$
define a triple of probability measures
    $\vec{\prob} = (\prob_{\lambda_0}, \prob_{\lambda_1}, \prob_{\lambda_2})$.
By Corollary~\ref{cor:ensemble}, this triple constitutes an $(\eps,2\,\delta_i,2)$-ensemble where $i$ is the depth of $w_1,w_2$ in the ball-tree, $r$ is their radius, and
$\eps = r\, \delta_i / 6$. By Theorem~\ref{thm:LB-technique}
there exists a problem instance
    $\alpha(w) \in \{\prob_{\lambda_1},\prob_{\lambda_2}\}$
such that for any $t_i < \tfrac{1}{8}\,\ln(34) \cdot \delta_i^{-2}$ one has
	$$R_{(\A,\, \alpha(w))}(t_i) \geq \eps t_i / 2. $$
Taking $t_i \in (\tfrac14, \tfrac{1}{8}\,\ln(34))\cdot \delta_i^{-2}$ one has
	$$R_{(\A,\, \alpha(w))}(t_i) \geq \eps t_i / 2
           = r \delta_i t_i / 12
           > \tfrac{1}{24} r^*_i \sqrt{t_i},$$
where $r^*_i$ is the smallest radius among all depth-$i$ nodes in the ball-tree. Recalling that we chose $n_i$ large enough that
$g(n_i) < \tfrac{1}{24\,i}\, r_i^* \sqrt{n}$
for all $n > n_i$, and that $n_i = \delta_i^{-2}$,
 we see that
 $$ i \cdot g(t_i)
    < \tfrac{1}{24} r^*_i \sqrt{t_i}
        <R_{(\A, \, \alpha(w))}(t_i).
 $$

For each depth $i$, let us define $\mathcal{E}_i$ to be the set of
input distributions $\prob_{\lambda}$ such that $\lambda$ is
a complete lineage whose associated end
$\mathbf{w}(\lambda) = (w_0,w_1,\ldots)$ satisfies
$w_i = \alpha(w_{i-1})$.  Interpreting these sets as random
events under the probability distribution $\mathcal{P}_T$,
they are mutually independent events each having
probability $\tfrac12$. Furthermore, we have proved that there exists
% we have proved the following: there exists a sequence of
% events $\mathcal{E}_i$, $i\in \N$ and
a sequence of times
$t_i \rightarrow \infty$ such that for each $i$ we have
% (i) $\Pr[\mathcal{E}_i |\,
% 	\sigma(\mathcal{E}_1,\, \ldots,\, \mathcal{E}_{i-1})] = \tfrac12$
% and (ii)	
	$R_{(\A,\,\prob)}(t_i) > i \cdot g(t_i)$
for any $\prob \in \mathcal{E}_i$.

For each complete lineage  $\lambda$, define the ``smallest possible constant" if we were to characterize the algorithm's regret on problem instance $\prob_\lambda$ using function $g$:
\begin{align*}
 C_\lambda := \inf \{ C \leq \infty:\,
        R_{(\A,\,\prob_{\lambda})}(t) \leq C\, g(t) \text{~for all $t$} \}.
\end{align*}
Note that
        $R_{(\A,\,\prob_{\lambda})}(t) = O_\mu(g(t)) $
if and only if $C_\lambda < \infty$. We claim that
    $\Pr [C_\lambda < \infty] = 0$,
where the probability is over the random choice of complete
lineage $\lambda$.  Indeed,
%since $t_i\rightarrow \infty$, there exists $i_0$ such that
%	$c_0 \sqrt{t_i} > C\, g(t_i)$
%for every $i \geq i_0$. Thus
if infinitely many events $\mathcal{E}_i$ happen,
then event $\{C_\lambda=\infty\}$ happens as well.  But the
probability that infinitely many events $\mathcal{E}_i$
happen is 1, because for every positive integer $n$,
   $\Pr \left[ \cap_{i=n}^{\infty}
      \overline{\mathcal{E}_i} \right] =
     \prod_{i=n}^{\infty} \Pr \left[ \overline{\mathcal{E}_i}
     \,\left|\, \cap_{j=n}^{i-1} \overline{\mathcal{E}_j} \right. \right]
     = 0.$
% The claim follows.  Therefore $\Pr [C_\lambda<\infty] = 0$.
\end{proofof}

%%%%%%%%%%%%%%%%%%%%
\subsection{Tractability for compact well-orderable metric spaces}
\label{sec:tractability}

In this section we prove the main algorithmic result.

\begin{theorem}\label{thm:main-alg}
Consider a compact well-orderable metric space $(X,\mD)$. Then:
\begin{OneLiners}
\item[(a)] the Lipschitz MAB problem on $(X,\mD)$ is $f$-tractable for every $f\in\omega(\log t)$;
\item[(b)] the Lipschitz experts problem on $(X,\mD)$ is 1-tractable, even with a double feedback.
\end{OneLiners}
\end{theorem}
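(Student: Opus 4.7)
Since $(X,\mD)$ is compact and well-orderable, Lemma~\ref{lm:topological-equivalence} implies that $X$ is countable. Fix a topological well-ordering $\prec$ of $X$ accessed via the oracle from Section~\ref{subsec:intro-access}, and enumerate $X = (x_\alpha)$ for countable ordinals $\alpha < \kappa$. Write $I_\alpha := \{y \in X : y \prec x_\alpha\}$; by definition each $I_\alpha$ is open.

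The key technical tool I would establish first is a \emph{topological approximation lemma}: using finitely many oracle queries, for every $\eps > 0$ the algorithm can construct a finite set $T_\eps \subset X$ containing an arm of expected payoff within $\eps$ of $\mu^* := \sup_X \mu$. The proof goes by transfinite induction on the position $\alpha^*$ of an optimum $x^* = x_{\alpha^*}$, which exists by compactness and the Lipschitz (hence continuous) nature of $\mu$. The base case $\alpha^* < \omega$ is immediate since then $I_{\alpha^*+1}$ itself is finite and contains $x^*$. At limit ordinals, the topological well-ordering property forces $x^*$ into the closure of the open set $I_{\alpha^*}$ (otherwise $x^*$ would have an open neighborhood disjoint from $I_{\alpha^*}$ and contained in $I_{\alpha^*+1}$, making $\{x^*\}$ open, whence $x^*$ could be re-positioned at a successor ordinal), so Lipschitz continuity places $\eps$-close arms inside $I_{\alpha^*}$, and the oracle can extract a finite sub-segment containing such an arm. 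Successor cases reduce to prior cases via a single oracle query that takes a finite open cover of the preceding segment as input and returns the next $\prec$-element.

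With the lemma in hand, part (a) follows via a UCB-style algorithm running in doubling phases of length $T = 2^i$. In each phase I maintain a slowly growing finite active set $S_t$, obtained by iterating the lemma with shrinking $\eps$, and run a UCB-type rule on $S_t$ with confidence radii slightly inflated so they remain valid upper confidence bounds even as arms join $S_t$ mid-phase. Regret within a phase splits into an exploration term of order $O(|S_T| \log T)$ from standard UCB analysis, and an approximation term $\sum_t (\mu^* - \sup_{x \in S_t} \mu(x))$ driven to zero by the lemma. Tuning the growth rate so that $|S_T| \log T = o(f(T))$ yields total regret $O(f(t))$, as required for any $f \in \omega(\log t)$. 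For part (b), the same lemma is combined with the double feedback: in each round the algorithm plays its current empirical best in the pool of already-queried arms while using the second feedback slot either to extend the pool (via the lemma applied with $\eps = 2^{-j}$ in the $j$-th exploration step) or to sharpen the confidence interval of an in-pool arm. A standard stochastic best-arm-identification argument over a finite pool has finite expected regret once the pool contains a $\mu^*$-optimal arm; the lemma ensures this happens in finite instance-dependent time, giving total expected regret $O_\mu(1)$ and hence $1$-tractability.

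The main obstacle is the topological lemma at limit ordinals of complicated type (e.g.\ those corresponding to high Cantor-Bendixson rank): the induction must iteratively ``descend'' along $\prec$, at each step using the oracle to identify a lower-rank accumulation witness and using compactness to ensure the descent terminates. The assumption that $\prec$ is a \emph{topological} well-ordering (every initial segment open) is indispensable here, as the closure argument above fails for merely set-theoretic well-orderings, and without it an isolated $x^*$ at a limit ordinal could have payoff entirely unconstrained by the Lipschitz condition relative to earlier arms. Once the lemma is secured, both parts reduce to routine finite-arm stochastic bandit and best-arm-identification analyses of the kind developed in Section~\ref{subsec:zooming-analysis}.
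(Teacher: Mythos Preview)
Your approximation lemma as stated --- a finite $T_\eps$ containing an $\eps$-optimal arm --- follows trivially from compactness plus Lipschitz continuity (any $\eps$-net works) and does not use the well-ordering at all. This is too weak to close either part. For part~(a), with only an $\eps_t$-optimal arm in $S_t$ the approximation term $\sum_t(\mu^* - \sup_{S_t}\mu)$ is at least $\sum_t \eps_t$, and you have no control over how $|T_\eps|$ grows as $\eps \to 0$ in an arbitrary countable compact space, so you cannot simultaneously keep both $|S_T|\log T$ and $\sum_t \eps_t$ below $f$. For part~(b), the pool need never contain a genuinely optimal arm (e.g.\ if $x^*$ is a limit point, $\eps$-nets can miss it forever), so ``finite expected regret once the pool contains a $\mu^*$-optimal arm'' never kicks in, and you are left with $\sum_t \eps_t$ regret again. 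Your transfinite-induction sketch does not rescue this: ``re-positioning $x^*$ at a successor ordinal'' is not an operation available in a fixed well-ordering, and more fundamentally transfinite induction is a proof device, not a procedure the learner can execute with oracle calls.

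The paper's key idea, absent from your proposal, is the structural Lemma~\ref{lm:structural}: there is an optimal $x^*$ with $\sup(\mu, X\setminus S(x^*)) < \mu^*$ \emph{strictly}, where $S(x^*) = \{y : y \preceq x^*\}$. (The set of optima is closed hence compact; initial segments form an open cover; take a finite subcover and its $\prec$-maximum.) This positive gap lets an exploration subroutine identify $x^*$ \emph{exactly}: form a $\delta$-covering set, sample each point $n$ times, discard ``losers'' whose sample average is dominated by more than $2r+\delta$, then call the ordering oracle for the $\prec$-maximal point covered by the closed $\delta$-balls around the survivors. For sufficiently large $(k,n)$ and small $r$ this returns $x^*$ itself with high probability (Lemma~\ref{lm:tractability}), so the exploitation portion of a phase has \emph{zero} regret --- this is what makes $\omega(\log t)$ and $O(1)$ attainable. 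The algorithm runs this subroutine in doubly exponential phases $T_i = 2^{2^i}$ (not your doubling phases; the footnote in the proof explains that doubling would cost an extra $\log t$ factor). Part~(b) runs the same subroutine on the free-peek slot while betting on the previous phase's output.
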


We present a joint exposition for both the bandit and the experts version. Let us consider the Lipschitz MAB/experts problem on a compact metric space $(X,\mD)$ with a topological well-ordering $\prec$ and a payoff function $\mu$. For each strategy $x\in X$, let
	$S(x) = \{y\preceq x: y\in X\}$
be the corresponding initial segment of the well-ordering $(X,\prec)$. Let
	$\mu^* = \sup(\mu, X)$
denote the maximal payoff. Call a strategy $x\in X$ \emph{optimal} if $\mu(x) = \mu^*$.
We rely on the following structural lemma:

\begin{lemma}\label{lm:structural}
There exists an optimal strategy $x^*\in X$ for which it holds that
	$\sup(\mu, X\setminus S(x^*)) < \mu^*$.
\end{lemma}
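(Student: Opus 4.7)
The plan is to use the compactness of $X$ together with the open-initial-segment property of the topological well-ordering to reduce the problem to taking a maximum over a finite set of optimal strategies.

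First I would note that since $\mu$ is Lipschitz (hence continuous) and $X$ is compact, the set $M = \mu^{-1}(\mu^*)$ of optimal strategies is a non-empty closed, and hence compact, subset of $X$. By definition of topological well-ordering, every initial segment $S(x)$ is open in $X$; and since each $x \in M$ lies in $S(x)$, the collection $\{S(x) : x \in M\}$ is an open cover of $M$. Compactness of $M$ then extracts a finite subcover indexed by strategies $x_1, \ldots, x_k \in M$. Taking $x^* = \max_\prec \{x_1, \ldots, x_k\}$ (which exists because this is a finite set and $\prec$ is a total order), transitivity of $\prec$ yields $S(x_i) \subseteq S(x^*)$ for every $i$, and hence $M \subseteq S(x^*)$. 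In particular $x^* \in M$, so $x^*$ is an optimal strategy.

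It remains to observe that $X \setminus S(x^*)$ is closed in $X$, hence compact, and is disjoint from $M$ by construction. If this set is empty, the claimed strict inequality holds vacuously; otherwise, continuity of $\mu$ implies that its supremum over $X \setminus S(x^*)$ is attained at some point $y^* \notin M$, so $\mu(y^*) < \mu^*$, as required. The argument has no real obstacle; the only subtle point is that a well-ordering need not possess a maximum element, so one cannot simply take $x^*$ to be ``the largest optimal strategy''. Compactness is precisely the tool that produces a \emph{finite} subfamily of initial segments covering $M$, at which point the maximum becomes trivial.
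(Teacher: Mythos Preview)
Your proof is correct and follows essentially the same approach as the paper's: both use compactness of the optimal set $M$ to extract a finite subcover from the open cover $\{S(x):x\in M\}$, take $x^*$ to be the $\prec$-maximum of the finite index set, and then argue that the complement $X\setminus S(x^*)$ is compact and disjoint from $M$, so the supremum of $\mu$ there is attained at a suboptimal point. Your handling of the empty-complement edge case and your closing remark on why compactness is needed are nice touches that the paper omits.
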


\begin{proof}
Let $X^*$ be the set of all optimal strategies. Since $\mu$ is a continuous real-valued function on a compact space $X$, it attains its maximum, i.e. $X^*$ is non-empty, and furthermore $X^*$ is closed. Note that $\{S(x): x\in X^*\}$ is an open cover for $X^*$. Since $X^*$ is compact (as a closed subset of a compact set) this cover contains a finite subcover, call it $\{S(x): x\in Y^*\}$. Then the $\prec$-maximal element of $Y^*$  is the $\prec$-maximal element of $X^*$. The initial segment $S(x^*)$ is open, so its complement $ Y = X\setminus S(x^*)$ is closed and therefore compact. It follows that $\mu$ attains its maximum on $Y$, say at a point $y^*\in Y$. By the choice of $y^*$ we have $x^*\prec y^* $, so by the choice of $x^*$ we have $\mu(x^*)> \mu(y^*)$.
\end{proof}

In the rest of this section we let $x^*$ be the strategy from Lemma~\ref{lm:structural}. Our algorithm is geared towards finding $x^*$ eventually, and playing it from then on. The idea is that if we cover $X$ with balls of a sufficiently small radius, any strategy in a ball containing  $x^*$ has a significantly larger payoff than any strategy in a ball that overlaps with $X\setminus S(x^*)$.

The algorithm accesses the metric space and the well-ordering via the following two oracles.

\begin{definition}\label{def:covering-oracle}
A \emph{$\delta$-covering set} of a metric space $(X,\mD)$ is a subset $S\subset X$ such that each point in $X$ lies within distance $\delta$ from some point in $S$. An oracle $\mathcal{O} = \mathcal{O}(k)$ is a \emph{covering oracle} for $(X,\mD)$ if it inputs $k\in\N$ and outputs a pair $(\delta, S)$ where $\delta = \delta_\mathcal{O}(k)$ is a positive number and  $S$ is a $\delta$-covering set of $X$ consisting of at most $k$ points. Here $\delta_\mathcal{O}(\cdot)$ is any function such that
	$\delta_\mathcal{O}(k)\rightarrow 0$ as $k\rightarrow\infty$.
\end{definition}

\begin{definition}\label{def:ordering-oracle}
Given a metric space $(X,\mD)$ and a total order $(X,\prec)$, the \emph{ordering oracle} inputs a finite collection of balls (given by the centers and the radii), and returns the $\prec$-maximal element covered by the closure of these balls, if such element exists, and an arbitrary point in $X$ otherwise.
\end{definition}

\newcommand{\oracleX}{\ensuremath{x_{\mathrm{or}}}}
\newcommand{\muAv}{\ensuremath{\mu_{\mathrm{av}}}}
\newcommand{\explSub}{\ensuremath{\mathtt{EXPL}}}

Our algorithm is based on the following \emph{exploration subroutine} $\explSub()$.

\begin{algorithm}\label{alg:PMO-explore}
Subroutine $\explSub(k,n,r)$: inputs $k, n \in \N$ and $r\in (0,1)$, outputs a point in $X$.

First it calls the covering oracle $\mathcal{O}(k)$ and receives a $\delta$-covering set $S$ of $X$ consisting of at most $k$ points. Then it plays each strategy $x\in S$ exactly $n$ times; let $\muAv(x)$ be the sample average. Let us say that $x$ a \emph{loser} if
	$\muAv(y) - \muAv(x)> 2r  + \delta$
for some $y\in S$. Finally, it calls the ordering oracle with the collection of all closed balls 	 $\Bar{B}(x,\delta)$ such that $x$ is not a loser, and outputs the point $\oracleX\in X$ returned by this oracle call.
\end{algorithm}

Clearly, $\explSub(k,n,r)$ takes at most $kn$ rounds to complete. We show that for sufficiently large $k,n$ and sufficiently small $r$ it returns $x^*$ with high probability.

\begin{lemma}\label{lm:tractability}
Fix a problem instance and let $x^*$ be the optimal strategy from Lemma~\ref{lm:structural}.
Consider increasing functions $k,n,T: \N\to \N$ such that
    $r(t) := 4\sqrt{ (\log T(t))\, / n(t)}  \to 0$.
Then for any sufficiently large $t$, with probability at least $1-T^{-2}(t)$, the subroutine
    $\explSub(k(t),\,n(t),\, r(t))$ returns $x^*$.
\end{lemma}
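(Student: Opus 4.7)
The plan is to show that on a high-probability ``clean'' event all sample averages concentrate around their means, and then on that event exhibit one point of $S$ whose ball covers $x^*$ and is not a loser, while showing every other non-loser ball is disjoint from $X \setminus S(x^*)$. Together these two facts force the ordering oracle to return $x^*$.

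First, let me set up concentration. Let $S$ be the $\delta$-covering set returned by $\mathcal{O}(k(t))$, so $|S| \leq k(t)$ and $\delta = \delta_{\mathcal{O}}(k(t))$ satisfies $\delta \to 0$ as $t \to \infty$. For each $x \in S$ the estimator $\muAv(x)$ is the average of $n := n(t)$ i.i.d.\ samples from a distribution with mean $\mu(x)$ supported on $[0,1]$. With $r = r(t) = 4\sqrt{\log T(t)/n(t)}$, Hoeffding's inequality (implied by Theorem~\ref{thm:chernoff}) gives $\Pr[|\muAv(x) - \mu(x)| > r] \leq 2\exp(-2nr^2) = 2\, T(t)^{-32}$, and a union bound over $x\in S$ bounds the probability that the clean event $\mathcal{C} := \{\forall x\in S,\ |\muAv(x)-\mu(x)| \leq r\}$ fails by $2k(t)\,T(t)^{-32}$. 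For all sufficiently large $t$ this is at most $T(t)^{-2}$ (any polynomial bound $k(t) \leq T(t)^{O(1)}$ suffices; in the sequel we will only invoke the lemma with such growth).

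Second, invoke the structural Lemma~\ref{lm:structural} to fix $\eps := \mu^* - \sup(\mu, X \setminus S(x^*)) > 0$. Because $r(t),\delta(t) \to 0$, we have $3\delta + 4r < \eps$ for all sufficiently large $t$; fix such a $t$ and work on $\mathcal{C}$. Pick $x_0 \in S$ with $\mD(x_0, x^*) \leq \delta$ (it exists because $S$ is a $\delta$-covering set). By the Lipschitz condition, $\mu(x_0) \geq \mu^* - \delta$, so for every $y \in S$
\[
   \muAv(y) - \muAv(x_0) \leq (\mu(y)+r) - (\mu(x_0)-r) \leq \mu^* - (\mu^* - \delta) + 2r = \delta + 2r,
\]
which is at most $2r + \delta$. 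Hence $x_0$ is not a loser, and $\bar B(x_0,\delta)$ is one of the balls submitted to the ordering oracle; since $x^* \in \bar B(x_0,\delta)$, the point $x^*$ is covered.

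Third, I claim no non-loser $y \in S$ has $\bar B(y,\delta)$ intersecting $X\setminus S(x^*)$. If such $y$ existed with $z \in \bar B(y,\delta) \cap (X\setminus S(x^*))$, then $\mu(z) \leq \mu^* - \eps$ and Lipschitz gives $\mu(y) \leq \mu^* - \eps + \delta$; combined with concentration this yields
\[
  \muAv(x_0) - \muAv(y) \geq (\mu^* - \delta - r) - (\mu^* - \eps + \delta + r) = \eps - 2\delta - 2r > 2r + \delta,
\]
contradicting the assumption that $y$ is not a loser. Therefore the closure of the union of non-loser balls lies entirely inside $S(x^*) = \{y : y \preceq x^*\}$. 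Since $x^*$ itself is covered and is the $\prec$-maximum of $S(x^*)$, the ordering oracle returns $\oracleX = x^*$, as desired.

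The main technical step is the gap argument in paragraph three, which is where the existence of the well-chosen optimizer $x^*$ from Lemma~\ref{lm:structural} (with strictly better payoff than the rest of the metric space) is essential; the concentration and covering steps are routine.
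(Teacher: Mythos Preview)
Your proof is correct and follows essentially the same approach as the paper's: define a clean event via concentration, exhibit a non-loser point of $S$ whose closed $\delta$-ball contains $x^*$, and show that every non-loser's closed $\delta$-ball is contained in $S(x^*)$, so that the ordering oracle must return $x^*$. Your arithmetic with the gap $\eps$ (requiring $3\delta+4r<\eps$) is cleaner than the paper's version, which introduces $r_0=(\mu^*-\mu_0)/7$; and you are more explicit than the paper about the mild growth condition on $k(t)$ versus $T(t)$ needed for the union bound.
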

\begin{proof}
Let us use the notation from Algorithm~\ref{alg:PMO-explore}. Fix $t$ and consider a run of
    $\explSub(k(t),\,n(t),\, r(t))$.
Call this run \emph{clean} if for each $x\in S$ we have $|\muAv(x)-\mu(x)| \leq r(t)$. By Chernoff Bounds, this happens with probability at least $1-T^{-2}(t)$. In the rest of the proof, let us assume that the run is clean.

Let $\Bar{B}$ be the union of the closed balls $\Bar{B}(x,\delta)$, $x\in S^*$. Then the ordering oracle returns the $\prec\mbox{-maximal}$ point in $\Bar{B}$ if such point exists. We will show that
	$x^*\in \Bar{B}\subset S(x^*)$
for any sufficiently large $t$, which will imply the lemma.

We claim that $x^* \in \Bar{B}$. Since $S$ is a $\delta$-covering set, there exists $y^*\in S$ such that
	$\mD(x^*, y^*)\leq \delta$.
Let us fix one such $y^*$. It suffices to prove that $y^*$ is not a loser. Indeed, if 	
	$\muAv(y) - \muAv(y^*)> 2\,r(t) + \delta $
for some $y\in S$ then
	$\mu(y) > \mu(y^*) + \delta \geq \mu^*$,
contradiction. Claim proved.

Let
	$\mu_0 = \sup(\mu, X\setminus S(x^*))$
and let $r_0 = (\mu^*-\mu_0)/7$. Let us assume that $t$ is sufficiently large so that $r(t) < r_0$ and
	$\delta = \delta_{\mathcal{O}}(k(t))<r_0$,
where $\delta_{\mathcal{O}}(\cdot)$ is from the definition of the covering oracle.

We claim that $\Bar{B}\subset S(x^*)$. Indeed, consider
	$x\in S$ and $y\in X\setminus S(x^*)$ such that $\mD(x, y) \leq \delta$.
It suffices to prove that $x$ is a loser. Consider some $y^*\in S$ such that $\mD(x^*,y^*)\leq \delta$. Then by the Lipschitz condition
\begin{align*}
\muAv(y^*)
	& \geq \mu(y^*) - r_0 \geq \mu^* -  2 r_0, \\
\muAv(x)
	&\leq \mu(x) + r_0 \leq \mu(y) + r_0
	\leq \mu_0 + 2r_0 \leq \mu^* - 5 r_0 \\
\muAv(y^*) - \muAv(x)
	& \geq 3 r_0 > 2 r(t) + \delta. \qedhere
\end{align*}
\end{proof}

\begin{proofof}{Theorem~\ref{thm:main-alg}}
Let us fix a function $f\in \omega(\log t)$. Then $f(t) = \alpha(t) \log(t)$ where $\alpha(t)\to\infty$. Without loss of generality, assume that $\alpha(t)$ is non-decreasing. (If not, then instead of $f(t)$ use
    $g(t) = \beta(t) \log(t)$,
where
    $\beta(t) = \inf \{ \alpha(t'):\, t'\geq t \}$.)

For part (a), define
    $k_t = \flr{\sqrt{g(t)/ \log t}} $,
    $n_t = \flr{k_t \log t}$,
and
    $r_t = 4\sqrt{(\log t)/ n_t}$.
Note that $r_t \to 0$.

The algorithm proceeds in phases of a doubly exponential length\footnote{The doubly exponential phase length is necessary in order to get $f$-tractability.  If we employed the more familiar \emph{doubling trick} of using phase length $2^i$ (as in \citep{bandits-exp3,Bobby-nips04,LipschitzMAB-stoc08}, for example) then the algorithm would only be $f(t)\, \log t$-tractable.}. A given phase  $i=1,2,3,\ldots$ lasts for $T = 2^{2^i}$ rounds. In this phase, first we call the exploration subroutine
    $\explSub(k_T,\,n_T,\, r_T)$.
Let $\oracleX\in X$ be the point returned by this subroutine. Then we play \oracleX\ till the end of the phase. This completes the description of the algorithm.

Fix a problem instance $\mathcal{I}$. Let $W_i$ be the total reward accumulated by the algorithm in phase $i$, and let
	$R_i = 2^{2^i}\,\mu^* - W_i$
be the corresponding share of regret.
By Lemma~\ref{lm:tractability} there exists $i_0 = i_0(\mathcal{I})$ such that for any phase $i\geq i_0$  we have, letting $T=2^{2^i}$ be the phase duration, that $R_i \leq k_T\, n_T \leq g(T)$ with probability at least $1-T^{-2}$, and therefore
	$E[R_i] \leq g(T) + T^{-1}$.
For any $t > t_0 = 2^{2^{i_0}}$ it follows by summing over
$i \in \{i_0,i_0+1,\ldots,\lceil \log \log t \rceil\}$
that
        $R_{\A,\,\mathcal{I}}(t) = O(t_0 + g(t)).$
Note that we have used the fact that $\alpha(t)$ is non-decreasing.

For part (b), we separate exploration and exploitation. For exploration, we run  $\explSub()$ on the \emph{free peeks}. For exploitation, we use the point returned by $\explSub()$ in the previous phase. Specifically, define
    $k_t = n_t = \flr{\sqrt{t}} $,
and
    $r_t = 4\sqrt{(t^{1/4})/ n_t}$.
The algorithm proceeds in phases of exponential length. A given phase $i=1,2,3,\ldots$ lasts for $T = 2^i$ rounds. In this phase, we run the exploration subroutine
    $\explSub(k_T,\,n_T,\, r_T)$
on the \emph{free peeks}. In each round, we \emph{bet} on the point returned by $\explSub()$ in the previous phase. This completes the description of the algorithm.

By Lemma~\ref{lm:tractability} there exists $i_0 = i_0(\mathcal{I})$ such that in any phase $i\geq i_0$ the algorithm incurs zero regret with probability at least $1-e^{\Omega(i)}$. Thus the total regret after $t>2^{i_0}$ rounds is at most $t_0 + O(1)$.
\end{proofof}

%%%%%%%%%%%%%%
\subsection{The $(\log t)$-intractability for infinite metric spaces: proof of Theorem~\ref{thm:logT}}
\label{sec:logT}

Consider an infinite metric space $(X,\mD)$. In view of Theorem~\ref{thm:boundary-of-tractability}, we can assume that the completion $X^*$ of $X$ is compact. It follows that there exists $x^*\in X^*$ such that $x_i\to x^*$ for some sequence $x_1, x_2,\, \ldots\, \in X$. Let $r_i = \mD(x_i, x^*)$. Without loss of generality, assume that
    $r_{i+1} < \tfrac12\, r_i$
for each $i$, and that the diameter of $X$ is $1$.

Let us define an ensemble of payoff functions
    $\mu_i :X\to[0,1]$, $i\in\N$,
where $\mu_0$ is the ``baseline" function, and for each $i\geq 1$ function $\mu_i$ is the ``counterexample" in which a neighborhood of $x_i$ has slightly higher payoffs.
The ``baseline" is defined by
    $\mu_0(x) = \tfrac12 - \tfrac{\mD(x,x^*)}{8}$,
and the ``counterexamples" are given by
$$ \mu_i(x) = \mu_0(x)+ \nu_i(x),
\text{~~where~~}
 \nu_i(x) = \tfrac{3}{4} \max \left(0, \tfrac{r_i}{3} - \mD(x,x^*) \right).
$$
Note that both $\mu_0$ and $\nu_i$ are $\tfrac18$-Lipschitz
and $\tfrac34$-Lipschitz w.r.t. $(X,\mD)$, respectively,
so $\mu_i$ is $\tfrac78$-Lipschitz w.r.t $(X,\mD)$.
Let us fix a MAB algorithm \A\ and assume that it is $(\log t)$-tractable. Then for each $i\geq 0$ there exists a constant $C_i$ such that
    $R_{(\A,\, \mu_i)}(t) < C_i \log t$
for all times $t$. We will show that this is not possible.

Intuitively, the ability of an algorithm to distinguish between payoff functions $\mu_0$ and $\mu_i$, $i\geq 1$  depends on the number of samples in the ball
    $B_i = B(x_i,\, r_i/3)$.
(This is because $\mu_0 = \mu_i$ outside $B_i$.) In particular, the number of samples itself cannot be too different under $\mu_0$ and under $\mu_i$, \emph{unless it is large}. To formalize this idea, let  $N_i(t)$ be the number of times algorithm \A\ selects a strategy in the ball $B_i$ during the first $t$ rounds, and let $\sigma(N_i(t))$ be the corresponding $\sigma$-algebra. Let $\prob_i[\cdot]$ and $\mathbb{E}_i[\cdot]$ be, respectively, the distribution and expectation induced by $\mu_i$. Then we can connect $\mathbb{E}_0[N_i(t)]$ with the probability of any event $S\in \sigma(N_i(t))$ as follows.

\begin{claim}\label{cl:logT-KLdiv}
For any $i\geq 1$ and any event $S\in \sigma(N_i(t))$ it is the case that
\begin{align}\label{eq:logT-KLdiv}
 \prob_i[S] < \tfrac13 \leq \prob_0[S] \quad \Rightarrow \quad
 - \ln(\prob_i[S]) - \tfrac{3}{e}
    \leq O(r_i^2)\; \mathbb{E}_0[N_i(t)].
\end{align}
\end{claim}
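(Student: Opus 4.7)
\medskip

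\noindent\textbf{Proof plan for Claim~\ref{cl:logT-KLdiv}.}
The plan is to run the standard KL-divergence argument (of the flavor developed in Appendix~\ref{sec:KL-divergence}) against the pair of payoff functions $\mu_0$ and $\mu_i$. Let $\mathbf{P}_j$ denote the distribution induced by $\mu_j$ on the algorithm's transcript over the first $t$ rounds. The first step is the chain rule: decomposing round-by-round,
\[
\KL(\mathbf{P}_0\,\|\,\mathbf{P}_i) \;=\; \sum_{s=1}^t \mathbb{E}_0\!\left[\KL\bigl(\prob^{(0)}_{x_s}\,\|\,\prob^{(i)}_{x_s}\bigr)\right],
\]
where $x_s$ is the arm played at round $s$ and $\prob^{(j)}_x$ is its $\{0,1\}$-valued payoff distribution under $\mu_j$. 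Since $\mu_0\equiv\mu_i$ outside $B_i$, only rounds with $x_s\in B_i$ contribute, and for such $x$ we have $|\mu_0(x)-\mu_i(x)|\leq r_i/4$ with both means in $[\tfrac14,\tfrac34]$. The elementary bound $\KL(p\,\|\,q)\leq (p-q)^2/[q(1-q)]$ for Bernoullis then gives a per-round KL of $O(r_i^2)$, whence
\[
\KL(\mathbf{P}_0\,\|\,\mathbf{P}_i) \;\leq\; O(r_i^2)\,\mathbb{E}_0[N_i(t)].
\]

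Second, since $S\in\sigma(N_i(t))$ is in particular measurable with respect to the transcript, the data-processing inequality yields
\[
\KL(\mathbf{P}_0\,\|\,\mathbf{P}_i) \;\geq\; \KL\!\bigl(\prob_0[S]\,\|\,\prob_i[S]\bigr),
\]
the right-hand side being the binary KL-divergence. Combining this with the previous bound reduces the claim to a purely numerical Bernoulli estimate: whenever $p:=\prob_0[S]\geq\tfrac13$ and $q:=\prob_i[S]<\tfrac13$, one must show
\[
-\ln q - \tfrac{3}{e} \;\leq\; 3\,\KL(p\,\|\,q),
\]
after which the constant $3$ is absorbed into the $O(r_i^2)$ factor.

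For the numerical step I would expand
\[
\KL(p\,\|\,q) \;=\; p\ln\tfrac{p}{q} + (1-p)\ln\tfrac{1-p}{1-q},
\]
and treat the two summands separately. For the first, use $p\geq\tfrac13$ and $p\ln p\geq -1/e$ to get $p\ln(p/q)\geq \tfrac13(-\ln q)-1/e$. For the second, which is non-positive because $p>q$ implies $1-p<1-q$, use $(1-p)\ln(1-p)\geq -1/e$ together with $\ln(1/(1-q))\geq 0$ to bound it from below by $-1/e$. Adding these and multiplying by $3$ yields $-\ln q \leq 3\,\KL(p\,\|\,q)+3/e$, which is exactly what is needed.

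The main obstacle is getting the additive constant in the last display to come out as $3/e$ (rather than something larger, which would harmlessly degrade the lemma but deserves attention for the record). The tightness comes from applying the inequality $x\ln x\geq -1/e$ on the same side twice; any cruder choice (e.g.\ bounding the second summand by $-\ln 2$) produces a constant that cannot be absorbed into the stated $3/e$. Once this small inequality is arranged, the rest of the proof is mechanical.
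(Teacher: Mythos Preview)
Your approach is essentially the same as the paper's: bound $\KL(\mathbf{P}_0\|\mathbf{P}_i)$ by $O(r_i^2)\,\mathbb{E}_0[N_i(t)]$ via the chain rule, then compare $\prob_0[S]$ and $\prob_i[S]$ through the binary KL and the inequality $x\ln x\ge -1/e$. The paper simply packages your last two steps as Lemma~\ref{lem:kl-distinguishing} and invokes it directly.

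One arithmetic slip to flag, since you explicitly worry about the constant: adding your two lower bounds gives $\KL(p\|q)\ge \tfrac13(-\ln q)-\tfrac{2}{e}$, so multiplying by $3$ yields $-\ln q \le 3\,\KL(p\|q)+\tfrac{6}{e}$, not $+\tfrac{3}{e}$. (Indeed, the paper's own route via Lemma~\ref{lem:kl-distinguishing} also picks up an extra $-\ln a\le\ln 3$ term when $a=\prob_0[S]\ge\tfrac13$.) As you note, this only affects the additive constant and is irrelevant for the application, where $-\ln\prob_i[S]\gtrsim \tfrac12\ln t$; the $O(\cdot)$ in the claim absorbs the discrepancy.
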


\begin{note}{Remark.}
The reason our argument proves the regret lower bound in terms of $\log(t)$, rather than some other function of $t$, is the $\ln(\cdot)$ term in \refeq{eq:logT-KLdiv}, which in turn comes from the $\exp(\cdot)$ term in Claim~\ref{lem:kl-distinguishing} (which captures a crucial property of KL-divergence).
\end{note}

\noindent
Claim~\ref{cl:logT-KLdiv} is proved using KL-divergence techniques, see Appendix~\ref{sec:KL-divergence} for details. To complete the proof of the theorem, we claim that for each $i\geq 1$ it is the case that
    $\mathbb{E}_0 [N_i(t)] \geq \Omega(r_i^{-2}\, \log t)$
for any sufficiently large $t$. Indeed, fix $i$ and let
    $S = \{N_i(t) < r_i^{-2} \log t\}$.
Since
    $$C_i \log t > R_{(\A,\; \mu_i)}(t) \geq
        \prob_i(S)\,(t - r_i^{-2} \log t) \tfrac{r_i}{8} ,$$
it follows that
    $\prob_i(S) < t^{-1/2} < \tfrac13$
for any sufficiently large $t$. Then by Claim~\ref{cl:logT-KLdiv} either
    $\prob_0(S) < \tfrac13$
or the consequent in~\refeq{eq:logT-KLdiv} holds. In both cases
        $\mathbb{E}_0 [N_i(t)] \geq \Omega(r_i^{-2}\, \log t)$.
Claim proved.

Finally, the fact that $\mu_0(x^*) - \mu_0(x) \geq r_i/12$
for every $x \in B_i$ implies that
     $R_{(\A, \, \mu_0)}(t) \geq \tfrac{r_i}{12} \mathbb{E}_0[N_i(t)]
      \geq \Omega(r_i^{-1} \, \log t)$
which establishes Theorem~\ref{thm:logT} since $r_i^{-1} \rightarrow
\infty$ as $i \rightarrow \infty$.

%%%%%%%%%%%%%%%%
\subsection{Tractability via more intuitive oracle access}
\label{sec:simpler-alg}

% the set of all limit points of a set
\newcommand{\LimSet}{\ensuremath{\text{\sc lim}}}
\newcommand{\Decomposable}{Cantor-Bendixson}

In Theorem~\ref{thm:main-alg}, the algorithm accesses the metric space via two oracles: a very intuitive \emph{covering oracle}, and a less intuitive \emph{ordering oracle}. In this section we show that for a wide family of metric spaces --- including, for example, compact metric spaces with a finite number of limit points --- the ordering oracle is not needed: we provide an algorithm which accesses the metric space via a finite set of covering oracles. We will consider metric spaces of finite \emph{Cantor-Bendixson rank}, a classic notion from point topology.

\begin{definition}\label{def:CB-rank}
Fix a metric space $(X,\mD)$. If for some $x\in X$ there exists a sequence of points in $X\setminus \{x\} $ which converges to $x$, then $x$ is called a \emph{limit point}. For $S\subset X$ let $\LimSet(S)$ denote the \emph{limit set}: the set of all limit points of $S$. Let
	$\LimSet(S,0) = S$,
and
	$\LimSet(S,i) = \LimSet(\LimSet(\cdots \LimSet(S)))$,
where $\LimSet(\cdot)$ is applied $i$ times.
The \emph{Cantor-Bendixson rank} of $(X,\mD)$ is defined as
    $\sup \{n: \LimSet(X,n) \neq \emptyset\}$.
\end{definition}

Let us say that a \emph{\Decomposable\ metric space} is one with a finite Cantor-Bendixson rank. In order to apply Theorem~\ref{thm:main-alg}, we show that any such metric space is well-orderable.

\begin{lemma}
Any \Decomposable\ metric space is well-orderable.
\end{lemma}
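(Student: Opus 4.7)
The plan is to build the required topological well-ordering explicitly from the Cantor-Bendixson decomposition of $X$. Let $n$ be the (finite) Cantor-Bendixson rank, set $D_i = \LimSet(X,i)$, and define the \emph{strata} $L_i = D_i \setminus D_{i+1}$ for $i = 0,\ldots,n$. Finiteness of $n$ is used here in its only essential way: since $D_{n+1} = \emptyset$, the strata partition $X$, so every point lies in some $L_i$. The key local property of the decomposition is that each $x \in L_i$ fails to be a limit point of $D_i$ (otherwise $x$ would lie in $D_{i+1}$), so there exists $\eps_x > 0$ with $B(x,\eps_x) \cap D_i = \{x\}$; since $D_i = L_i \cup L_{i+1} \cup \cdots \cup L_n$, this yields
\begin{equation} \label{eq:ball-stratum}
B(x,\eps_x) \;\subseteq\; \{x\} \,\cup\, L_0 \cup L_1 \cup \cdots \cup L_{i-1}.
\end{equation}

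Next, using the Axiom of Choice I would fix an arbitrary well-ordering $\prec_i$ on each stratum $L_i$ and combine them lexicographically: for $x \in L_i$ and $y \in L_j$, declare $x \prec y$ iff either $i < j$, or $i = j$ and $x \prec_i y$. That $\prec$ well-orders $X$ is immediate: any non-empty $S \subseteq X$ has $\prec$-minimum equal to the $\prec_{i^*}$-minimum of $S \cap L_{i^*}$, where $i^*$ is the least index $i$ with $S \cap L_i \neq \emptyset$.

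The main point --- and the only conceptually nontrivial step --- is to verify that every initial segment $S(x) = \{y : y \preceq x\}$ is open in the metric topology. Fix $x \in L_i$ and any $z \in S(x)$, and let $j$ be the stratum of $z$; then $j \leq i$ by construction of $\prec$. Applying \eqref{eq:ball-stratum} at $z$ gives $B(z,\eps_z) \subseteq \{z\} \cup L_0 \cup \cdots \cup L_{j-1}$. Every point of this set is $\preceq x$: points in a stratum $k < j$ satisfy $\prec z \preceq x$, and $z$ itself is $\preceq x$. Hence $B(z,\eps_z) \subseteq S(x)$, so $S(x)$ is open. Notice that the individual choices $\prec_i$ are completely immaterial for this step --- any within-stratum well-ordering works, since the metric topology only ``sees'' the stratification.

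The main obstacle is really conceptual rather than computational: one must order the strata with \emph{highest} Cantor-Bendixson rank \emph{last}, so that the downward-pointing inclusion \eqref{eq:ball-stratum} automatically keeps each small open ball inside the initial segment. Finiteness of the Cantor-Bendixson rank is precisely what allows this finite ``peeling'' of $X$ to exhaust every point; if the rank were infinite, the residual $\bigcap_i D_i$ would contain a perfect subspace and well-orderability would already fail by Lemma~\ref{lm:topological-equivalence}.
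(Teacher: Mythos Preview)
Your proof is correct and follows essentially the same approach as the paper's: both stratify $X$ into the Cantor--Bendixson layers $L_i$, put lower layers first in the well-ordering, and use the fact that each $x\in L_i$ is isolated within $D_i$ to show that initial segments are open. The only difference is presentational---the paper phrases this as an induction on rank (proving the one-step lemma ``$X$ is well-orderable if $\LimSet(X)$ is''), whereas you unroll that induction into a single direct construction; the content is the same.
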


\begin{proof}
Any finite metric space is trivially well-orderable. To prove the lemma, it suffices to show the following: any metric space $(X,\mD)$ is well-orderable  if so is $(\LimSet(X),\mD)$.

Let $X_1 = X\setminus  \LimSet(X)$ and $X_2 = \LimSet(X)$. Suppose $(X_2,\mD)$ admits a topological well-ordering $\prec_2$. Define a binary relation $\prec$ on $X$ as follows. Fix an arbitrary well-ordering $\prec_1$ on $X_1$. For any $x,y\in X$ posit $x\prec y$ if either
(i) $x,y\in X_1$ and $x\prec_1 y$, or
(ii) $x,y\in X_2$ and $x\prec_2 y$, or
(iii) $x\in X_1$ and $y\in X_2$.
It is easy to see that $(X,\prec)$ is a well-ordering.

It remains to prove that an arbitrary initial segment
	$Y = \{ x\in X: x\prec y\}$
is open in $(X,\mD)$. We need to show that for each $x\in Y$ there is a ball $B(x,\eps)$, $\eps>0$ which is contained in $Y$. This is true if $x\in X_1$ since by definition each such $x$ is an isolated point in $X$. If $x\in X_2$ then
	$Y = X_1 \cup Y_2$
where $Y_2 = \{ x\in X_2: x \prec_2 y\}$
is the initial segment of $X_2$. Since $Y_2$ is open in $(X_2,\mD)$, there exists $\eps>0$ such that $B_{X_2}(x,\eps) \subset Y_2$. It follows that
	$B_X(x,\eps) \subset B_{X_2}(x,\eps) \cup X_1 \subset Y$.
\end{proof}

The structure of a \Decomposable\ metric space is revealed by a partition of $X$ into subsets
	$X_i = \LimSet(X,i) \setminus \LimSet(X,i+1)$, $0\leq i\leq n $.
For a point $x\in X_i$, we define the \emph{rank} to be $i$. The algorithm requires a covering oracle for each $X_i$.

\begin{theorem}\label{thm:lim-decomposable}
Consider the Lipschitz MAB/experts problem on a compact metric space $(X,\mD)$ such that
	$\LimSet_N(X)=\emptyset$
for some $N$.  Let $\mathcal{O}_i$ be the covering oracle for
	$X_i = \LimSet(X,i)\setminus \LimSet(X,i+1)$.
Assume that access to the metric space is provided only via the collection of oracles
	$\{\mathcal{O}_i \}_{i=0}^N$.
Then:
\begin{OneLiners}
\item[(a)] the Lipschitz MAB problem on $(X,\mD)$ is $f$-tractable for every $f\in\omega(\log t)$;
\item[(b)] the Lipschitz experts problem on $(X,\mD)$ is 1-tractable, even with a double feedback.
\end{OneLiners}
\end{theorem}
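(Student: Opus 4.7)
The plan is to reduce to Theorem~\ref{thm:main-alg} by simulating the ordering oracle of Definition~\ref{def:ordering-oracle} using only the given covering oracles $\{\mathcal{O}_i\}_{i=0}^{N}$. The preceding lemma establishes that $(X,\mD)$ is well-orderable, but its proof uses an \emph{arbitrary} well-ordering $\prec_i$ on each $X_i=\LimSet(X,i)\setminus\LimSet(X,i+1)$; so I would first choose specific $\prec_i$'s that are computable from $\mathcal{O}_i$, and then show that the ordering oracle for the resulting global $\prec$ (rank first, then $\prec_i$) can be simulated from the collection $\{\mathcal{O}_i\}$.

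For the secondary orderings, let $S_i(k)$ denote the point set returned by $\mathcal{O}_i(k)$. Each $x\in X_i$ is isolated in $\LimSet(X,i)$, so there is $\eps_x>0$ with $B(x,\eps_x)\cap X_i=\{x\}$; any $\delta$-covering of $X_i$ with $\delta<\eps_x$ must therefore contain $x$, and since $\delta_{\mathcal{O}_i}(k)\to 0$, the index $\kappa_i(x):=\min\{k:x\in S_i(k)\}$ is finite for every $x\in X_i$. Ordering $X_i$ first by $\kappa_i(\cdot)$ and then by position within the output list of $\mathcal{O}_i(\kappa_i(x))$ yields a well-ordering $\prec_i$ of $X_i$.

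To simulate the ordering oracle on an input whose ball-closure is $\bar{B}$, using a budget $K$, iterate $i=N,N-1,\ldots,0$ and return the $\prec_i$-maximum of $C_i:=S_i(K)\cap\bar{B}$ for the largest $i$ with $C_i\neq\emptyset$. This is exact for sufficiently large $K$: letting $i^*$ be the largest rank with $X_{i^*}\cap\bar{B}\neq\emptyset$, any accumulation point of $X_{i^*}\cap\bar{B}$ would lie in $\LimSet(X,i^*+1)\cap\bar{B}=\bigcup_{j>i^*}X_j\cap\bar{B}=\emptyset$, so $X_{i^*}\cap\bar{B}$ is closed and discrete in $X$ and hence finite. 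Each of its points $x$ has isolation radius $\eps_x>0$ in $X_{i^*}$; once $\delta_{\mathcal{O}_{i^*}}(K)$ falls below the minimum of these (finitely many) radii, $S_{i^*}(K)\supseteq X_{i^*}\cap\bar{B}$, so $C_{i^*}=X_{i^*}\cap\bar{B}$ and the simulation returns the true $\prec$-maximum of $\bar{B}$.

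Finally, I would run the algorithm of Theorem~\ref{thm:main-alg} with each ordering-oracle call replaced by the above simulation using a budget $K_t$ that grows without bound (e.g.\ $K_t=t$). The main obstacle is that the threshold ``$K$ large enough'' for exact simulation is an instance-dependent function of the isolation radii appearing in the relevant $\bar{B}$, so it cannot be bounded a priori; however, inspection of Lemma~\ref{lm:tractability} reveals that $\explSub$ only needs the ordering oracle to be correct in sufficiently late phases, when the balls around non-losers shrink to a small neighborhood of $x^*$ and the set $X_{i^*}\cap\bar{B}$ collapses to a fixed finite set determined by the instance. Letting $K_t\to\infty$ guarantees correctness for all $t$ beyond some instance-dependent threshold $t_0$, at which point the regret analyses of parts (a) and (b) of Theorem~\ref{thm:main-alg} apply verbatim for $t\geq t_0$.
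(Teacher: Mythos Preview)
Your reduction-to-Theorem~\ref{thm:main-alg} approach is correct, but it takes a different route from the paper. The paper does not simulate the ordering oracle; instead it replaces $\explSub$ by a new subroutine $\explSub'(k,n,r)$ that (i) calls each $\mathcal{O}_i(k)$ to obtain $\delta_i$-covers $S_i$ of the strata, (ii) plays every point of $S=\bigcup_i S_i$ exactly $n$ times, and (iii) outputs any \emph{winner}, defined as a point of maximal rank among those not dominated by another sampled point (domination being a gap of more than $2r$ in sample averages). A direct lemma then shows that in a clean run for large $t$, the optimal strategy $x^*$ of maximal rank among optima is itself in $S$, is a winner, and every winner is optimal; plugging $\explSub'$ into the phase schedule of Theorem~\ref{thm:main-alg} finishes the proof.

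Your approach is the more modular one: you construct an explicit computable well-ordering and simulate the ordering oracle, so that Theorem~\ref{thm:main-alg} applies verbatim. The paper's approach is more direct, working with rank alone and never constructing a well-ordering or invoking an ordering oracle. One small gap in your write-up: $\explSub$ also needs a covering oracle for all of $X$, which you must build from the $\mathcal{O}_i$ (take the union of their outputs; the maximum of the finitely many $\delta_{\mathcal{O}_i}(k)$ still tends to zero). Otherwise your argument---in particular the key observation that the containment $\bar{B}\subset S(x^*)$ in Lemma~\ref{lm:tractability} is proved without invoking the ordering oracle, so that $X_{i^*}\cap\bar{B}$ is eventually contained in the fixed finite set $\{y\in X_{j^*}:y\preceq_{j^*}x^*\}$ and hence the budget threshold for exact simulation is instance-determined---is sound.
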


\OMIT{ %%%%%%%%%
\begin{theorem}\label{thm:lim-decomposable}
Consider a compact metric space $(X,\mD)$ such that
	$\LimSet_n(X)=\emptyset$
for some $n$.  Let $\mathcal{O}_i$ be the covering oracle for
	$X_i = \LimSet(X,i)\setminus \LimSet(X,i+1)$.
Then for each $f\in \omega(\log t)$ there exists an $f(t)$-tractable bandit algorithm which accesses $(X,\mD)$ via the collection of oracles
	$\{\mathcal{O}_i \}_{i=0}^n$.
\end{theorem}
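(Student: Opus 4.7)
The plan is to reduce Theorem~\ref{thm:lim-decomposable} to Theorem~\ref{thm:main-alg} by modifying only the exploration subroutine (Algorithm~\ref{alg:PMO-explore}) so that it uses the rank-specific covering oracles $\mathcal{O}_0,\ldots,\mathcal{O}_N$ in place of the ordering oracle. Let $i^*$ be the maximum rank attained by any optimal arm and set $A = X^* \cap X_{i^*}$, where $X^* = \mu^{-1}(\mu^*)$. The first step will be a structural lemma about $A$: since no arm of rank $>i^*$ is optimal, $A = X^* \cap \LimSet(X,i^*)$, which is closed; and since every point of $X_{i^*}$ is by definition isolated in $\LimSet(X,i^*)$, $A$ is a closed discrete subset of the compact space $X$, hence finite. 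Combining finiteness, isolation, and compactness of $\LimSet(X,i^*+1)$ will then yield a single instance-specific constant $\eta > 0$ with three properties: (i) every $y\in\LimSet(X,i^*)\setminus A$ satisfies $\mD(y,A)\geq\eta$; (ii) every $y\in X_{i^*}\setminus A$ satisfies $\mu(y)\leq\mu^*-\eta$ (if this failed, a near-optimal sequence would, by compactness and continuity, have a limit in $A$, contradicting isolation of that limit in $\LimSet(X,i^*)$); and (iii) every $y\in\LimSet(X,i^*+1)$ satisfies $\mu(y)\leq\mu^*-\eta$ (by compactness plus continuity, since $X^*$ is disjoint from $\LimSet(X,i^*+1)$).

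Next I will define a modified subroutine $\explSub'(k,n,r)$ that calls $\mathcal{O}_j(k)$ for each $j\in\{0,\ldots,N\}$ to obtain a $\delta_j$-covering $S_j\subseteq X_j$ with $|S_j|\leq k$, plays each arm in $S=\bigcup_j S_j$ exactly $n$ times, identifies losers as in $\explSub$ with $\delta=\max_j\delta_j$, and outputs the set $L^* = L\cap S_{j^*}$, where $L$ is the set of non-losers and $j^* = \max\{j: L\cap S_j\neq\emptyset\}$. The analog of Lemma~\ref{lm:tractability} to be proved is: if $\delta<\eta$ and $4r+\delta<\eta$, then in a clean phase (probability $\geq 1-T^{-2}$) one has $j^*=i^*$ and $L^*=A$. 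The $\supseteq$ direction will exploit property~(i): once $\delta<\eta$, each $x^*\in A$ is the only point of $X_{i^*}$ within $\delta$ of itself, so any $\delta_{i^*}$-covering $S_{i^*}$ of $X_{i^*}$ must contain $x^*$, and Chernoff then shows $x^*\in L$. The $\subseteq$ direction will use (ii) and (iii): any non-loser has a true $\mu$-value within $4r+\delta<\eta$ of $\mu^*$, so it cannot lie in $\LimSet(X,i^*+1)$ (giving $j^*\leq i^*$) nor in $X_{i^*}\setminus A$ (giving $L\cap S_{i^*}\subseteq A$).

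The outer algorithms will then be identical in structure to those in the proof of Theorem~\ref{thm:main-alg}: for part~(a), doubly-exponential phases $T=2^{2^i}$ in which one invokes $\explSub'(k_T,n_T,r_T)$ with $k_T,n_T\to\infty$ and $k_Tn_T=O(f(T))$, and then plays \UCB on $L^*$ for the remainder of the phase; for part~(b), exponential phases $T=2^i$ in which $\explSub'$ is executed on the double-feedback ``peek'' slot and each round bets on an arbitrary fixed element of the previous phase's $L^*$. Once $T$ is large enough that $r_T+\delta_T<\eta/4$, the structural claim guarantees $L^*=A$ with probability at least $1-T^{-2}$; since \emph{every} arm in $A$ is optimal, both \UCB and fixed-arm betting on $L^*=A$ contribute zero regret, so the entire regret of the phase comes from exploration. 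Summing over phases and absorbing the finitely many short initial phases into the instance-dependent constant then gives $R(t)=O_\mu(f(t))$ for part~(a) and $R(t)=O_\mu(1)$ for part~(b), by the same telescoping arguments as at the end of the proof of Theorem~\ref{thm:main-alg}.

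The main obstacle will be property~(ii) of the structural step---bounding $\mu$ away from $\mu^*$ on $X_{i^*}\setminus A$---which requires the subtle compactness/continuity argument sketched above and really uses the full force of the Cantor-Bendixson decomposition. Once this bridge between the topological structure and concrete Lipschitz-type separation constants is in place, the rest of the analysis is a direct adaptation of the proof of Theorem~\ref{thm:main-alg}, with the pleasant observation that \UCB incurs no regret once $L^*=A$ because all of its arms are then optimal.
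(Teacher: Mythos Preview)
Your approach is correct and essentially identical to the paper's: both define $i^*$ as the maximum rank of an optimal arm, use isolation plus compactness to bound $\mu$ strictly below $\mu^*$ on $\LimSet(X,i^*)\setminus A$, modify $\explSub$ so that a highest-rank non-loser is returned (the paper uses threshold $2r$ instead of $2r+\delta$ and outputs a single ``winner'' rather than a set, but these are cosmetic), and then reuse the phase structure of Theorem~\ref{thm:main-alg} verbatim. The one wrinkle is that your property~(i) as stated does not separate distinct points of $A$ from one another, so the step ``each $x^*\in A$ is the only point of $X_{i^*}$ within $\delta$ of itself'' and hence $A\subseteq S_{i^*}$ need $\eta$ also chosen below the minimum pairwise distance in the finite set $A$; alternatively---and this is what the paper does---you can drop the claim $L^*=A$ in favor of the weaker $\emptyset\neq L^*\subseteq A$, which already suffices for zero post-exploration regret.
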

} %%%%%%%%%%

In the rest of this section, consider the setting in Theorem~\ref{thm:lim-decomposable}. We describe the \emph{exploration subroutine} $\explSub'()$, which is similar to $\explSub()$ in Section~\ref{sec:tractability} but does not use the ordering oracle. Then we prove a version of Lemma~\ref{lm:tractability} for $\explSub'()$. Once we have this lemma, the proof of Theorem~\ref{thm:lim-decomposable} is identical to that of Theorem~\ref{thm:main-alg} (and is omitted).

\begin{algorithm}\label{alg:LIM}
Subroutine $\explSub'(k,n,r)$: inputs $k, n \in \N$ and $r\in (0,1)$, outputs a point in $X$.

Call each covering oracle $\mathcal{O}_i(k)$ and receive a $\delta_i$-covering set $S_i$ of $X$ consisting of at most $k$ points. Let
	$S = \cup_{l=1}^n S_l$.
Play each strategy $x\in S$ exactly $n$ times; let $\muAv(x)$ be the corresponding sample average. For $x,y\in S$, let us say that \emph{$x$ dominates $y$} if
	$\muAv(x) - \muAv(y)> 2\,r$.
Call $x\in S$ a \emph{winner} if $x$ has a largest rank among the strategies that are not dominated by any other strategy. Output an arbitrary winner if a winner exists, else output an arbitrary point in $S$.
\end{algorithm}

Clearly, $\explSub(k,n,r)$ takes at most $knN$ rounds to complete. We show that for sufficiently large $k,n$ and sufficiently small $r$ it returns an optimal strategy with high probability.

\begin{lemma}\label{lm:LIM-tractability}
Fix a problem instance. Consider increasing functions $k,n,T: \N\to \N$ such that
    $r(t) := 4\sqrt{ (\log T(t))\, / n(t)}  \to 0$.
Then for any sufficiently large $t$, with probability at least $1-T^{-2}(t)$, the subroutine
    $\explSub'(k(t),\,n(t),\, r(t))$ returns an optimal strategy.
\end{lemma}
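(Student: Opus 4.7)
The plan is to follow the template of Lemma~\ref{lm:tractability}, replacing the role of the topological well-ordering with that of the Cantor--Bendixson rank. First I would identify a natural ``target'' optimal arm. Let $X^* = \mu^{-1}(\mu^*)$; by compactness of $X$ and continuity of $\mu$, this is a non-empty closed subset of $X = \bigcup_{i=0}^N X_i$. Let $i^*$ be the \emph{largest} index $i$ with $X^* \cap X_i \neq \emptyset$ and fix any $x^* \in X^* \cap X_{i^*}$. This $x^*$ plays the role that the $\prec$-minimum optimal element played in Lemma~\ref{lm:structural}.

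The key structural step, and the main obstacle, is to extract two separation constants that depend only on the problem instance. First, since $\LimSet(X,i^*+1)=\bigcup_{j>i^*}X_j$ is closed in $X$ and, by the maximality of $i^*$, contains no optimal arm, compactness gives
\begin{align*}
\eta \;:=\; \mu^* - \sup\{\mu(x):\, x\in \LimSet(X,i^*+1)\} \;>\; 0.
\end{align*}
Second, for $\alpha\geq 0$ set $A_\alpha = \{y\in X_{i^*}:\, \mu(y) \geq \mu^*-\alpha\}$. For $\alpha<\eta$ I would argue that $A_\alpha$ is \emph{finite}: the set $\{y\in \LimSet(X,i^*):\, \mu(y)\geq \mu^*-\alpha\}$ is closed in the compact space $\LimSet(X,i^*)$; by the definition of $\eta$ it is disjoint from $\LimSet(X,i^*+1)$, hence contained in $X_{i^*}$; and $X_{i^*}$ consists of points isolated in $\LimSet(X,i^*)$, so this subset is both compact and discrete, therefore finite. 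The family $\{A_\alpha\}_{\alpha<\eta}$ is monotone and each $A_\alpha$ is finite, so it stabilizes: there exists $\eta_1 > 0$ with $A_{\eta_1} = A_0 = X^*\cap X_{i^*}$. In words: every rank-$i^*$ arm whose payoff lies within $\eta_1$ of $\mu^*$ is actually optimal. This rigidity statement is what replaces Lemma~\ref{lm:structural}.

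With $\eta,\eta_1$ in hand, the rest is a routine adaptation. Call a run of $\explSub'(k(t),n(t),r(t))$ \emph{clean} if $|\muAv(x)-\mu(x)|\leq r(t)$ for every $x\in S$; since $|S|\leq (N+1)k(t)$ and $n(t)\,r(t)^2 = 16\log T(t)$, the standard Chernoff-plus-union-bound argument behind Claim~\ref{cl:conf-rad} shows cleanness holds with probability at least $1-T(t)^{-2}$ for all sufficiently large $t$. I would then choose $t$ large enough that $r(t)<\tfrac14\min(\eta,\eta_1)$ and $\delta_{i^*}(t)<\eta_1$; both inequalities hold eventually because $r(t)\to 0$ and $\delta_{\mathcal{O}_{i^*}}(k(t))\to 0$.

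On any clean run I would verify three properties. (a) There is $\tilde x\in S_{i^*}$ with $\mD(\tilde x,x^*)\leq \delta_{i^*}$; the Lipschitz bound gives $\mu(\tilde x)\geq \mu^*-\delta_{i^*}>\mu^*-\eta_1$, so $\tilde x\in A_{\eta_1}=X^*\cap X_{i^*}$ is itself optimal. (b) $\tilde x$ is not dominated: for every $z\in S$, $\muAv(z)-\muAv(\tilde x)\leq \mu(z)-\mu(\tilde x)+2r = \mu(z)-\mu^*+2r\leq 2r$, failing the strict inequality in the definition of domination. (c) Every $y\in S$ of rank $>i^*$ lies in $\LimSet(X,i^*+1)$, so $\mu(y)\leq \mu^*-\eta$, whence $\muAv(\tilde x)-\muAv(y)\geq \eta-2r>2r$ and $\tilde x$ dominates $y$. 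Together (b) and (c) force the winner $w$ to have rank exactly $i^*$. Since $\tilde x$ does not dominate $w$, on a clean run $\mu(w)\geq \mu(\tilde x)-4r\geq \mu^*-4r$, and $4r<\eta_1$ gives $w\in A_{\eta_1}=X^*\cap X_{i^*}$, i.e.\ $\mu(w)=\mu^*$. The main obstacle throughout is the finiteness of $A_\alpha$, which crucially uses the discreteness of $X_{i^*}$ inside the compact set $\LimSet(X,i^*)$; this is the substitute for the topological well-ordering used in Lemma~\ref{lm:tractability}.
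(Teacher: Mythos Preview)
Your proof is correct and follows essentially the same strategy as the paper's: identify the largest rank $i^*$ carrying an optimal arm, show that on a clean run an optimal rank-$i^*$ arm sits in $S$ and dominates every sampled arm of strictly larger rank, and conclude that any winner must be optimal.

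The one technical difference is in how you put an optimal rank-$i^*$ arm into $S$. The paper uses the isolation radius directly: since $x^*$ is isolated in $\LimSet(X,i^*)$ with radius $r(x^*)$, once $\delta_{i^*} < r(x^*)$ the $\delta_{i^*}$-covering set $S_{i^*}$ must contain $x^*$ itself. You instead argue via your stabilization lemma that the $\delta_{i^*}$-close point $\tilde x \in S_{i^*}$, whatever it is, must already be optimal. Both arguments rest on the same isolation property of $X_{i^*}$ inside $\LimSet(X,i^*)$; the paper's version is slightly shorter (one constant $\mu_0 = \sup(\mu,\, \LimSet(X,i^*)\setminus X^*)$ in place of your pair $\eta,\eta_1$), while yours makes the finiteness of near-optimal rank-$i^*$ arms explicit. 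The stabilization step (``monotone family of finite sets stabilizes'') is stated tersely but is correct: fix any $\alpha_0<\eta$, note $A_{\alpha_0}$ is finite, and take $\eta_1$ smaller than $\mu^*-\mu(y)$ for each $y\in A_{\alpha_0}\setminus A_0$.
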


\begin{proof}
Use the notation from Algorithm~\ref{alg:LIM}. Fix $t$ and consider a run of
    $\explSub'(k(t),\,n(t),\, r(t))$.
Call this run \emph{clean} if for each $x\in S$ we have $|\muAv(x)-\mu(x)| \leq r(t)$. By Chernoff Bounds, this happens with probability at least $1-T^{-2}(t)$. In the rest of the proof, let us assume that the run is clean.

Let us introduce some notation. Let $\mu$ be the payoff function and let $\mu^* = \sup(\mu,X)$. Call $x\in X$ \emph{optimal} if $\mu(x) = \mu^*$. (There exists an optimal strategy since $(X,\mD)$ is compact.) Let $i^*$ be the largest rank of any optimal strategy. Let $X^*$ be the set of all optimal strategies of rank $i^*$. Let $Y = \LimSet(X,i^*)$. Since each point $x\in X_{i^*}$ is an isolated point in $Y$, there exists some $r(x)>0$ such that $x$ is the only point of $B(x,r(x))$ that lies in $Y$.

We claim that $\sup(\mu, Y\setminus X^*)<\mu^*$. Indeed, consider
	$C= \cup_{x\in X^*} B(x,r(x)) $.
This is an open set. Since $Y$ is closed, $Y\setminus C$ is closed, too, hence compact. Therefore there exists $y\in Y\setminus C$ such that
	$\mu(y) = \sup(\mu, Y\setminus C)$.
Since $X^* \subset C$, $\mu(y)$ is not optimal, i.e. $\mu(y)<\mu^*$. Finally, by definition of $r(x)$ we have $Y\setminus C = Y\setminus X^*$. Claim proved.

Pick any $x^*\in X^*$. Let $\mu_0 = \sup(\mu, Y\setminus X^*)$.
Assume that $t$ is large enough so that $r(t) < (\mu^*-\mu_0)/4$ and $\delta_{i^*}<r(x^*)$. Note that the $\delta_{i^*}$-covering set $S_{i^*}$ contains $x^*$.

Finally, we claim that in a clean phase, $x^*$ is a winner, and all winners lie in $X^*$. Indeed, note that $x^*$ dominates any non-optimal strategy $y\in S$ of larger or equal rank, i.e. any $y\in S\cap (Y\setminus X^*)$. This is because
$ \muAv(x^*) - \muAv(y) \geq \mu^* - \mu_0 - 2r > 2.
$
The claim follows since any optimal strategy cannot be dominated by any other strategy.
\end{proof}

%%%%%%%%%%%%%%
\section{Boundary of tractability: proof of Theorem~\ref{thm:boundary-of-tractability}}
\label{sec:boundary-body}

We prove that Lipschitz bandits/experts are $o(t)$-tractable if and only if the completion of the metric space is compact. More formally, we prove Theorem~\ref{thm:boundary-of-tractability} (which subsumes  Theorem~\ref{thm:boundary-of-tractability-bandits}). We restate the theorem below for the sake of convenience.

\begin{theorem*}[Theorem~\ref{thm:boundary-of-tractability} restated]
The \FFproblem on metric space $(X,\mD)$ is either $f(t)$-tractable for some $f\in o(t)$, even in the bandit setting, or it is not $g(t)$-tractable for any $g\in o(t)$, even with full feedback. The former occurs if and only if the completion of $X$ is a compact metric space.
\end{theorem*}

First, we reduce the theorem to that on complete metric spaces, see Appendix~\ref{sec:reduction}. In what follows, we will use a basic fact that a complete metric space is compact if and only if for any $r>0$, it can be covered by a finite number of balls of radius $r$.

\xhdr{Algorithmic result.}
We consider a compact metric space $(X,\mD)$ and use an extension of algorithm \NaiveAlg (described in the Introduction). In each phase $i$ (which lasts for $t_i$ round) we fix a covering of $X$ with $N_i < \infty$ balls of radius $2^{-i}$ (such covering exists by compactness), and run a fresh instance of the $N_i$-armed bandit algorithm \UCB\ from \citet{bandits-ucb1} on the centers of these balls.  (This algorithm is for the ``basic'' MAB problem,
in the sense that it does not look at the distances in the metric
space.) The phase durations $t_i$ need to be tuned to the $N_i$'s. In the setting considered in \cite{Bobby-nips04} (essentially, bounded covering dimension) it suffices to tune each $t_i$ to the corresponding $t_i$ in a fairly natural way. The difficulty in the present setting is that there are no guarantees on how fast the $N_i$'s grow. To take this into account, we fine-tune each $t_i$ to (essentially) all covering numbers $N_1 \LDOTS N_{i+1}$.

Let $R_k(t)$ be the expected regret accumulated by the algorithm in the first $t$ rounds of phase $k$. Using the off-the-shelf regret guarantees for \UCB, it is easy to see \citep{Bobby-nips04} that
\begin{align}\label{eq:app-boundary-UB-Rk}
 R_k(t) \leq O(\sqrt{N_k\, t \log t}) + \eps_k\, t
\leq \eps_k\, \max(t^*_k,\, t),
    \text{~~where~~} t^*_k = 2\,\tfrac{N_k}{\eps_k^2} \log\tfrac{N_k}{\eps_k^2}.
\end{align}

Let us specify phase durations $t_i$. They are defined very differently from the ones in \citep{Bobby-nips04}. In particular, in \cite{Bobby-nips04} each $t_i$ is fine-tuned to the corresponding covering number $N_i$ by setting
    $t_i = t^*_i$,
and the analysis works out for metric spaces of bounded covering dimension. In our setting, we fine-tune each $t_i$ to (essentially) all covering numbers $N_1 \LDOTS N_{i+1}$. Specifically, we define the $t_i$'s inductively as follows:
$$ t_i = \min(t^*_i,\, t^*_{i+1},\, 2\,
    \textstyle{\sum_{j=1}^{i-1}} t_j).
$$
This completes the description of the algorithm, call it \A.

\begin{lemma}
Consider the Lipschitz MAB problem on a compact and complete metric space $(X,\mD)$. Then $R_\A(t) \leq 5\, \eps(t)\, t$, where
    $\eps(t) = \min\{ 2^{-k}:\, t\leq s_k\}$
and
    $s_k = \sum_{i=1}^k\, t_i$.
In particular, $R_\A(t) = o(t)$.
\end{lemma}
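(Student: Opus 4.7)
Fix $t\in(s_{k-1},s_k]$ so that $\eps(t)=\eps_k=2^{-k}$. I would decompose the regret phase-by-phase as
\[
R_\A(t)\;\le\;\sum_{i=1}^{k-1}R_i(t_i)\;+\;R_k\bigl(t-s_{k-1}\bigr),
\]
and invoke the per-phase estimate \eqref{eq:app-boundary-UB-Rk} on each summand. The first term of the min in the definition of $t_i$ forces $t_i\le t_i^*$, and likewise $t-s_{k-1}\le t_k\le t_k^*$, so $\max(t_i^*,t_i)=t_i^*$ in every case and we obtain the compact bound $R_\A(t)\le\sum_{i=1}^{k}\eps_i\,t_i^*$.

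The crux is then to show $\sum_{i=1}^k \eps_i t_i^*\le 5\eps_k t$. Here I would exploit that the sequence $(\eps_i t_i^*)_i$ grows geometrically: since $N_i$ is non-decreasing in $i$ and $\eps_i^{-2}=4^i$, the value $t_i^*=2N_i\eps_i^{-2}\log(N_i\eps_i^{-2})$ is non-decreasing, while $\eps_i=2^{-i}$ halves, so $\eps_i t_i^*\le\tfrac12\,\eps_{i+1}t_{i+1}^*$. The tail sum therefore telescopes to $\sum_{i=1}^k\eps_i t_i^*\le 2\,\eps_k t_k^*$, reducing the problem to proving $t_k^*\le\tfrac52\,t$. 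Now the remaining two arguments of the min come into play: the third term enforces $t_k\le 2s_{k-1}$, hence $s_k\le 3 s_{k-1}\le 3t$, whereas the second term $t_i\le t_{i+1}^*$ (applied at index $i=k-1$) supplies $t_k^*\ge t_{k-1}$, and iterating these two constraints through consecutive indices controls the ratio $t_k^*/t$. In the case $t_k=t_k^*$ the conclusion is immediate because $t_k^*=t_k\le s_k\le 3t$; in the case $t_k<t_k^*$, a short inductive argument using the ``triple min'' shows that if $t_k^*$ were much larger than $s_{k-1}$ then $t_{k-1}$ would itself have been cut short by its own second-term constraint $t_{k-1}\le t_k^*$, which can be chained back to yield $t_k^*\le\tfrac52\,t$.

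The main obstacle is this last chaining step: phases in which $t_i<t_i^*$ (``cut short'' by the third term) are exactly those where the naive bound $\eps_i t_i^*$ threatens to exceed $\eps_i t_i$, and one needs the $t_{i+1}^*$ branch of the min --- which is what the text calls ``fine-tuning to essentially all covering numbers $N_1,\ldots,N_{i+1}$'' --- to tie the end of phase $i$ to the next phase's budget and thereby prevent $\eps_i t_i^*$ from accumulating uncontrollably. I expect the careful bookkeeping to land at the stated constant $5$ but not much smaller. Once the inequality $R_\A(t)\le 5\eps(t)t$ is in hand, the ``in particular'' conclusion $R_\A(t)=o(t)$ follows at once because $\eps(t)=2^{-k^*(t)}\to 0$ as $t\to\infty$: this in turn is a consequence of $s_k\to\infty$, which is guaranteed by $t_k^*\to\infty$ (a trivial computation) together with the doubling lower bound $t_k\ge\min(t_k^*,2s_{k-1})\ge\min(1,2s_{k-1})$.
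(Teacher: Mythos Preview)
Your plan rests on reading the phase-length definition as a $\min$, which is what the paper literally prints, but this is a typo: the paper's own proof explicitly uses the facts $t_k \geq t_k^*$ and $t_k \geq 2s_{k-1}$, so the intended definition is $t_i = \max(t_i^*,\, t_{i+1}^*,\, 2s_{i-1})$. With $\min$ the construction is already degenerate ($t_1 \leq 2s_0 = 0$), and more importantly your ``crux'' step fails: you reduce to showing $t_k^* \leq \tfrac{5}{2}t$, but nothing in the $\min$ definition caps $t_k^*$ in terms of $s_k$. Since $N_i$ is nondecreasing one always has $t_{i+1}^* \geq 4t_i^*$, so the branch $t_i \leq t_{i+1}^*$ is never active and the chaining argument you sketch through that branch is vacuous; when $t_k = 2s_{k-1}$ is the binding constraint, $t_k^*$ can be arbitrarily large compared to $s_k$ (take $N_k$ doubly exponential in $k$), and then $\sum_i \eps_i t_i^*$ is not $O(\eps_k t)$. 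In fact with $\min$ the lemma is simply false: phase $k$ would run $\UCB$ on $N_k \gg t_k$ arms, incurring $\Omega(t_k)$ regret.

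With the correct $\max$, the argument is much shorter than what you outlined. The branch $t_k \geq t_k^*$ turns \eqref{eq:app-boundary-UB-Rk} into $R_k(t_k) \leq \eps_k t_k$ directly, with no $t_k^*$ left over. Then $t_k \geq 2s_{k-1}$ together with $\eps_{k-1} = 2\eps_k$ gives the one-line induction $R_\A(s_k) \leq 2\eps_{k-1}s_{k-1} + \eps_k t_k \leq 2\eps_k s_k$. For a partial phase $T = s_{k-1} + t$ with $0 < t < t_k$, the remaining branch $t_{k-1} \geq t_k^*$ is exactly what lets you write $R_k(t) \leq \eps_k \max(t_k^*, t) \leq \eps_k \max(t_{k-1}, t) \leq \eps_k T$, and combining yields $R_\A(T) \leq 4\eps_k s_{k-1} + \eps_k T \leq 5\eps_k T$. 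So each of the three terms in the $\max$ has a specific job, and none of them requires the geometric-growth bookkeeping your plan relied on.
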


\begin{proof}
First we claim that
    $R_\A(s_k) \leq 2\, \eps_k\, s_k$
for each $k$. Use induction on $k$. For the induction base, note that
    $R_\A(s_1) = R_1 (t_1) \leq \eps_1 t_1$
by~\refeq{eq:app-boundary-UB-Rk}. Assume the claim holds for some $k-1$. Then
\begin{align*}
R_\A(s_k)
    &= R_\A(s_{k-1}) + R_k(t_k) \\
    &\leq 2\,\eps_{k-1}\, s_{k-1} + \eps_k\, t_k \\
    &\leq 2\, \eps_k (s_{k-1} + t_k)
    = 2\,\eps_k s_k,
\end{align*}
claim proved. Note that we have used~\refeq{eq:app-boundary-UB-Rk} and the facts that
    $t_k \geq t^*_k$ and $t_k \geq 2\, s_{k-1}$.

For the general case, let $T = s_{k-1} + t$, where $t\in (0, t_k)$. Then by~\refeq{eq:app-boundary-UB-Rk} we have that
\begin{align*}
R_k(t)
    &\leq \eps_k\, \max(t^*_k,\, t) \\
    &\leq \eps_k\, \max(t_{k-1},\, t)
    \leq \eps_k\, T,\\
R_\A(T) &= R_\A(s_{k-1}) + R_k(T)  \\
    &\leq 2\, \eps_{k-1}\, s_{k-1} + \eps_k\,T
    \leq 5\,\eps_k\, T. \qedhere
\end{align*}
\end{proof}

\xhdr{Lower bound: proof sketch.} For the lower bound, we consider a metric space $(X,\mD)$ with an infinitely many disjoint balls $B(x_i,r_*)$ for some $r_*>0$. For each ball $i$ we define the \emph{wedge function} supported on this ball:
$$ G_{(i,r)}(x) = \begin{cases}
    \min\{ r_* - \mD(x,x_i),\; r_* - r \} & \mbox{if $x \in B(x_i,r_*)$} \\
         0 & \mbox{otherwise}.
\end{cases} $$
The balls are partitioned into two infinite sets: the
\emph{ordinary} and \emph{special} balls.
The random payoff function is then defined by taking a
constant function, adding the wedge function on each
special ball, and randomly adding
or subtracting the wedge function on each
ordinary ball.
Thus, the expected payoff is constant throughout the
metric space except that it assumes higher values
on the special balls.  However,
the algorithm has no chance of ever finding these balls,
because at time $t$ they are statistically indistinguishable
from the $2^{-t}$ fraction of ordinary balls that randomly
happen to never subtract their wedge function during the
first $t$ steps of play.

\xhdr{Lower bound: full proof.}
Suppose $(X,\mD)$ is not compact. Fix $r>0$ such that $X$ cannot be covered by a finite number of balls of radius $r$. There exists a countably infinite subset $S\subset X$ such that the balls $B(x,r)$, $x\in S$ are mutually disjoint. (Such subset can be constructed inductively.)
Number the elements of $S$ as $s_1,s_2,\ldots,$ and denote
the ball $B(s_i,r)$ by $B(i)$.

Suppose there exists a Lipschitz experts algorithm
$\A$ that is $g(t)$-tractable for some $g\in o(t)$.
Pick an increasing sequence
    $t_1, t_2, \ldots \in \N$
such that $t_{k+1} > 2 t_k \geq 10$ and
    $g(t_k)< r_k\, t_k /k$
for each $k$, where
    $r_k = r/2^{k+1}$.
Let $m_0=0$ and
    $m_k = \sum_{i=1}^{k} 4^{t_i}$
for $k>0$, and let $I_k = \{m_{k}+1,\ldots,m_{k+1}\}.$
The intervals $I_k$ form a partition of $\N$ into
sets of sizes $4^{t_1},4^{t_2},\ldots$.
For every $i \in \N$, let $k$ be the unique value
such that $i \in I_k$ and define the following
Lipschitz function supported in $B(s_i,r)$:
    $$ G_i(x) = \begin{cases}
         \min\{ r - \mD(x,s_i), r - r_k \} & \mbox{if $x \in B(i)$} \\
         0 & \mbox{otherwise}.
       \end{cases} $$
If $J \subseteq \N$ is any set of natural numbers,
we can define a distribution $\prob_J$ on payoff
functions by sampling independent, uniformly-random
signs $\sigma_i \in \{ \pm 1 \}$ for every $i \in \N$
and defining the payoff function to be
    $$ \payoff = \tfrac12 + \textstyle{\sum_{i \in J}}\, G_i
       + \textstyle{\sum_{i \not\in J}}\, \sigma_i G_i.
    $$
Note that the distribution $\prob_J$ has expected
payoff function $\mu = \tfrac12 + \sum_{i \in J} G_i.$
Let us define a distribution $\mathcal{P}$ over problem
instances $\prob_J$ by letting $J$ be a random
subset of $\N$ obtained by sampling exactly one element $j_k$
of each set $I_k$ uniformly at random, independently
for each $k$.

Intuitively, consider an algorithm that
is trying to discover the value of $j_k$.  Every time a payoff function
$\pi_t$ is revealed, we get to see a random $\{ \pm 1 \}$
sample at every element of $I_k$ and we can eliminate
the possibility that $j_k$ is one of the elements that
sampled $-1$.  This filters out about half the elements
of $I_k$ in every time step, but $|I_k| = 4^{t_k}$ so
on average it takes $2 t_k$ steps before we can discover the identity
of $j_k$.  Until that time, whenever we play a strategy
in $\cup_{i \in I_k} B(i)$, there is a constant probability
that our regret is at least $r_k$.  Thus our regret
is bounded below by $r_k t_k \geq k g(t_k).$  This
rules out the possibility of a $g(t)$-tractable
algorithm.  The following lemma makes this argument
precise.

\begin{lemma}\label{lm:app-boundary}
$\Pr_{\prob \in\mathcal{P}} [ R_{(\A, \, \prob)} (t) = O_\mu( g(t)) ] = 0$.
\end{lemma}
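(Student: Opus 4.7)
The overall plan is to exhibit a sequence $t_k \to \infty$ and to establish that almost surely over $\prob \in \mathcal{P}$, $R_{(\A,\prob)}(t_k) \geq \Omega(r_k t_k)$ for infinitely many $k$. Since we chose $t_k$ so that $r_k t_k > k \cdot g(t_k)$, this gives $\limsup_t R_\prob(t)/g(t) = \infty$ almost surely, which is precisely $\Pr_\prob[R_\prob(t) = O_\mu(g(t))] = 0$. The argument has two parts: an information-theoretic bound on the \emph{expected} regret $\E_\prob[R_\prob(t_k)]$, and a Borel-Cantelli upgrade to an almost-sure statement, analogous to the endgame of Lemma~\ref{lm:ball-tree-LB}.

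For the expected bound, the key observation is that under full feedback each revealed $\pi_s$ provides one $\pm 1$ sample at every element of $I_k$, so after $s$ rounds the set $V_k(s) = \{i \in I_k : \sigma_{i,1} = \cdots = \sigma_{i,s} = +1\}$ of indices still consistent with ``$i = j_k$'' satisfies $|V_k(s)| - 1 \sim \mathrm{Binomial}(|I_k|-1,\,2^{-s})$. Since $|I_k| = 4^{t_k}$, one has $\E[|V_k(s)|] \geq 2^{2 t_k - s}$, and a Chernoff bound yields $|V_k(s)| \geq 2^{t_k - 1}$ for every $s \leq t_k$ outside an event of probability $e^{-\Omega(2^{t_k})}$. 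Under $\prob \sim \mathcal{P}$ and conditional on observations, $j_k$ is uniform on $V_k(s)$; since the balls $B(i)$ are pairwise disjoint, $x_{s+1}$ lies in at most one $B(i)$, so $\Pr[x_{s+1} \in B(j_k) \mid \text{obs}] \leq 1/|V_k(s)|$, and hence $\E[N^{\text{match}}_{k'}(t_k)] = O(2^{-t_{k'}})$ for every $k' \geq k$. Substituting into the identity
\[
  R_\prob(t_k) = r \, t_k - \E_{\text{alg,signs}}\Big[\textstyle\sum_{s \leq t_k} \sum_{k'} G_{j_{k'}}(x_s)\, \mathbf{1}[x_s \in B(j_{k'})]\Big],
\]
absorbing the $k' \geq k$ contributions into an $o(1)$ term, and using $\sum_{k' < k} N^{\text{match}}_{k'}(t_k) \leq t_k$ together with $r - r_{k'} \leq r - r_{k-1} = r - 2 r_k$ for $k' < k$, we obtain $\E_\prob[R_\prob(t_k)] \geq r_{k-1} t_k - o(1) = 2 r_k t_k - o(1) > k \cdot g(t_k)$ for all sufficiently large $k$.

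To upgrade this to an almost-sure bound we exploit the independence of the coordinates $j_k$ of the random $J$. For each $k$ we construct an event $\mathcal{E}_k$ with two properties: (a) $\Pr[\mathcal{E}_k \mid \mathcal{F}_{t_{k-1}}] \geq 1/2$ and (b) on $\mathcal{E}_k$, the argument of the previous paragraph, run conditional on $\mathcal{F}_{t_{k-1}}$ in the phase $(t_{k-1}, t_k]$, produces regret at least $r_k t_k / 8$ in that phase. The natural choice is $\mathcal{E}_k = \{j_k \text{ lies in the ``rarely-played half'' of } V_k(t_{k-1})\}$, where the rarely-played half is defined via the conditional play frequencies of $\A$ in $(t_{k-1}, t_k]$: by the pigeonhole principle at most $|V_k(t_{k-1})|/2$ elements of $V_k(t_{k-1})$ can be played much above the average frequency, and $j_k$ is uniform on $V_k(t_{k-1})$ given $\mathcal{F}_{t_{k-1}}$. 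A conditional Borel-Cantelli (with $\sum_k \Pr[\mathcal{E}_k \mid \mathcal{F}_{t_{k-1}}] = \infty$ a.s.) then gives $\mathcal{E}_k$ infinitely often a.s., and on each such $k$ the regret satisfies $R_\prob(t_k) \geq r_k t_k / 8 > k\, g(t_k)/8$. The main technical hurdle is that the algorithm's plays in phase $k$ themselves depend on $j_k$ through signs observed \emph{within} the phase, so the rarely-played half is not a priori measurable with respect to $\mathcal{F}_{t_{k-1}}$ alone; the workaround is to symmetrize $\A$ by composing with a uniformly random permutation of $I_k$ (which preserves $\E_\prob[R_\prob(t_k)]$ but makes the within-phase frequencies exchangeable over $V_k(t_{k-1})$ given $\mathcal{F}_{t_{k-1}}$), after which the definition and the two properties of $\mathcal{E}_k$ go through cleanly.
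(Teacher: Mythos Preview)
Your Part 1 is essentially the paper's core information-theoretic step: conditional on the observed effective signs, $j_\ell$ is uniform on the consistency set $V_\ell(s)$, and a Chernoff bound shows $|V_\ell(s)|\ge 2^{t_\ell}$ with overwhelming probability for $s\le t_{\ell-1}$. From this you correctly deduce that $\E\!\big[\#\{s\le t_k:\,x_s\in B(j_\ell)\}\big]=O(t_k\cdot 2^{-t_\ell})$ for every $\ell>k$. (Your ``$O(2^{-t_{k'}})$'' omits the factor $t_k$, but that is harmless since $t_k\ll 2^{t_{k+1}}$.)

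The gap is in Part 2. You reach for the divergence direction of Borel--Cantelli with events $\mathcal{E}_k=\{j_k\text{ in the rarely-played half of }V_k(t_{k-1})\}$, but this event is ill-posed: the ``rarely-played half'' is determined by the algorithm's plays in $(t_{k-1},t_k]$, which depend on the effective signs revealed in that phase, and those signs already distinguish $V_k(t_k)$ from $V_k(t_{k-1})\setminus V_k(t_k)$. Conditional on observations through $t_k$, $j_k$ is uniform on $V_k(t_k)$, not on $V_k(t_{k-1})$, so the ``$\ge 1/2$'' claim does not follow. Your symmetrization fix is also unclear: permuting $I_k$ does not induce a symmetry of the problem (the balls $B(i)$ are at fixed, distinguishable locations and the algorithm sees the full payoff function), and even if one could symmetrize, an almost-sure statement for the symmetrized algorithm does not transfer back to $\A$ pointwise in $J$. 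Finally, $\mathcal{E}_k$ controls plays in $B(j_k)$ but not in $B(j_{>k})$, and it is the latter that must be small to force regret $\Omega(r_k t_k)$.

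The paper avoids all of this by using the \emph{easy} direction of Borel--Cantelli. From your own Part~1 bounds, summing over $\ell>k$ gives $\E[N(t_k,k)] = O(t_k\cdot 2^{1-t_{k+1}})$ where $N(t_k,k)=\#\{s\le t_k:\,x_s\in B(j_{>k})\}$. Markov's inequality then yields $\Pr[N(t_k,k)>t_k/2]=O(2^{2-t_{k+1}})$, which is summable in $k$. Hence almost surely $N(t_k,k)\le t_k/2$ for all but finitely many $k$, and for each such $k$, $R_{(\A,\prob_J)}(t_k)\ge r_k\big(t_k-N(t_k,k)\big)\ge r_k t_k/2\ge (k/2)\,g(t_k)$. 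The key point you missed is that the expected number of ``hits'' is not merely $o(t_k)$ but \emph{exponentially} small, so Markov already gives summable tail probabilities---no conditional Borel--Cantelli, no symmetrization needed.
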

\begin{proof}
Let $j_1,j_2,\ldots$ be the elements of the
random set $J$, numbered
so that $j_k \in I_k$ for all $k$.
For any $i,t \in \N$, let $\sigma(i,t)$ denote
the value of $\sigma_i$ sampled at time $t$
when sampling the sequence of i.i.d.~payoff
functions $\payoff_t$ from distribution $\prob_J$.
We know that $\sigma(j_k,t)=1$ for all $t$.
In fact if $S(k,t)$ denotes the set of all
$i \in I_k$ such that
    $\sigma(i,1) = \sigma(i,2) = \cdots = \sigma(i,t) = 1$
then conditional on the value of the set $S(k,t)$, the
value of $j_k$ is distributed uniformly at random
in $S(k,t)$.  As long as this set $S(k,t)$
has at least $n$ elements, the probability
that the algorithm picks a strategy $x_t$ belonging
to $B(j_k)$ at time $t$ is bounded above by $\tfrac{1}{n}$,
even if we condition on the event that
    $x_t \in \cup_{i \in I_k} B(i).$
For any given $i \in I_k \setminus \{j_k\}$, we have
    $\prob_J(i \in S(k,t)) = 2^{-t}$
and these events are independent for different
values of $i$.  Setting $n=2^{t_k}$, so that
$|I_k|=n^2$, we have
    \begin{align}
      \nonumber
    \prob_J \left[\, |S(k,t)| \leq n \,\right] &\leq
    \textstyle{\sum_{R \subset I_k,\; |R|=n}}\; \prob_J[\, S(k,t) \subseteq R \,] \\
      \nonumber
    &= \binom{n^2}{n}
    \left( 1-2^{-t} \right)^{n^2-n}
      \nonumber
    < \left( n^2 \cdot \left( 1 - 2^{-t} \right)^{n-1} \right)^{n} \\
      \label{eq:yucky-1}
    &< \exp \left( n (2 \ln(n) - (n-1)/2^t) \right).
    \end{align}
As long as $t \leq t_{k-1}$, the relation $t_k > 2t$
implies $(n-1)/2^t > \sqrt{n}$ so the expression
\eqref{eq:yucky-1} is bounded above by
$\exp \left( -n \sqrt{n} + 2 n \ln(n) \right)$,
which equals
$\exp \left( -8^{t_k} + 2 \ln(4) t_k 4^{t_k} \right)$
and is in turn bounded above by $\exp \left( -8^{t_k}/2 \right).$

Let $B(j_{>k})$ denote the union
    $B(j_{k+1}) \cup B(j_{k+2}) \cup \ldots,$
and let $N(t,k)$ denote the random variable
that counts the number of times \A\ selects
a strategy in $B(j_{>k})$ during rounds $1,\ldots,t$.
We have already demonstrated that for all $t \leq t_k$,
    \begin{equation} \label{eq:yucky-2}
       \Pr_{\prob_J \in \mathcal{P}}(x_t \in B(j_{>k}))
       \leq 2^{-t_{k+1}} + \sum_{\ell > k}
       \exp \left( -8^{t_\ell}/2 \right) < 2^{1-t_{k+1}},
    \end{equation}
where the term $2^{-t_{k+1}}$ accounts for the
event that $S(\ell,t)$ has at least $2^{t_{k+1}}$ elements,
where $\ell$ in the index of the set $I_{\ell}$ containing
the number $i$ such that $x_t \in B(i)$, if such an $i$
exists.
\eqref{eq:yucky-2} implies the bound
    $\mathbb{E}_{\prob_J \in \mathcal{P}}[N(t_k,k)] < t_k \cdot 2^{1-t_{k+1}}.$
By Markov's inequality, the probability that $N(t_k,k) > t_k/2$
is less than $2^{2-t_{k+1}}$.  By Borel-Cantelli, almost
surely the number of $k$ such that $N(t_k,k) \leq t_k/2$
is finite.
The algorithm's expected regret at time $t$ is
bounded below by $r_k(t_k - N(t_k,k))$, so with
probability $1$, for all but finitely many $k$ we have
$R_{(\A, \, \prob_J)}(t_k) \geq r_k t_k / 2 \geq (k/2) g(t_k).$
This establishes that \A\ is not $g(t)$-tractable.
\end{proof}

%%%%%%%%%%%%%%%%
\section{Lipschitz experts in a (very) high dimension}
\label{sec:FFproblem}

This section concerns polynomial regret results for Lipschitz experts in metric spaces of (very) high dimension: Theorem~\ref{thm:intro-LCD}, Theorem~\ref{thm:intro-LCD-uniform}, and Theorem~\ref{thm:intro-MaxMinLCD}, as outlined in Section~\ref{sec:intro-experts}.

\OMIT{ %%%%%
Fix a metric space $(X,\mD)$. For a subset $Y\subset X$ and $\delta>0$, a \emph{$\delta$-covering} of $Y$ is a collection of sets of diameter at most $\delta$ whose union contains $Y$. A subset $S\subset X$ is a \emph{$\delta$-hitting set} for $Y$ if
    $Y \subset \cup_{x\in S}\, B(x,\,\delta)$.
(So if $S$ is a hitting set for some $\delta$-covering of $Y$ then it is a $\delta$-hitting set for $Y$.)
} %%%%%%
%%%%%%%

\subsection{The uniform mesh (proof of Theorem~\ref{thm:intro-LCD})}
\label{sec:FFproblem-LCD}

We start with a version of algorithm \NaiveAlg discussed in the Introduction.%
\footnote{A similar algorithm has been used by \citet{Anupam-experts07} to obtain regret $R(T) = O(\sqrt{T})$ for metric spaces of finite covering dimension.}
This algorithm, called $\NaiveExp(b)$, is parameterized by $b>0$. It runs in phases. Each phase $i$ lasts for $T = 2^i$ rounds, and outputs its \emph{best guess} $x^*_i\in X$, which is played throughout phase $i+1$. During phase $i$, the algorithm picks a $\delta$-hitting set%
\footnote{A subset $S\subset X$ is a \emph{$\delta$-hitting set} for $Y\subset X$ if
    $Y \subset \cup_{x\in S}\, B(x,\,\delta)$.}
for $X$ of size at most $N_\delta(X)$, for
    $\delta = T^{-1/(b+2)}$.
By the end of the phase, $x^*_i$ as defined as the point in $S$ with the highest sample average (breaking ties arbitrarily). This completes the description of the algorithm.

It is easy to see that the regret of \NaiveExp\ is naturally described in terms of the log-covering dimension (see \eqref{eq:LCD}). The proof is based the argument from \citet{Bobby-nips04}. We restate it here for the sake of completeness, and to explain how the new dimensionality notion is used.

\begin{theorem}\label{thm:ffproblem-naive}
Consider the \FFproblem\ on a metric space $(X,\mD)$. For each $b>\LCD(X)$, algorithm $\NaiveExp(b)$ achieves regret
    $R(t) = O(t^{1-1/(b+2)})$.
\end{theorem}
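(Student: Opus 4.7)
The plan is a standard discretization-plus-concentration argument, with the log-covering dimension entering precisely at the point where one usually invokes the covering dimension. Fix $b > \LCD(X)$. By definition \eqref{eq:LCD}, there exists $\delta_0 > 0$ such that $\log N_\delta(X) \leq \delta^{-b}$ for all $\delta \in (0,\delta_0]$. Throughout, I consider a phase of duration $T = 2^i$ large enough that $\delta := T^{-1/(b+2)} \leq \delta_0$, and I let $S$ be the corresponding $\delta$-hitting set of size at most $N_\delta(X)$.

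First I would control the approximation loss from restricting to $S$. Since $S$ is a $\delta$-hitting set and $\mu$ is $1$-Lipschitz, for every $\eps>0$ there exists a point $x\in S$ with $\mu(x) \geq \mu^* - \delta - \eps$, hence $\sup(\mu,S) \geq \mu^* - \delta$. Next I would control the statistical error. During phase $i$ the algorithm observes $T$ i.i.d.\ samples $f_1,\ldots,f_T$ of the full payoff function, and the sample average $\bar\mu_T(x) := \tfrac1T\sum_{t\leq T} f_t(x)$ is an average of $T$ i.i.d.\ random variables in $[0,1]$ with mean $\mu(x)$. By Chernoff/Hoeffding with a union bound over the $|S| \leq N_\delta(X)$ points of $S$,
\begin{align*}
\Pr\Bigl[\, \max_{x \in S} |\bar\mu_T(x)-\mu(x)| > r \,\Bigr]
    \;\leq\; 2\,|S|\, e^{-2r^2 T},
\end{align*}
so choosing $r = \sqrt{(\log|S| + 2\log T)/T}$ gives a failure probability at most $2T^{-2}$. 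Crucially, by our choice of $\delta$ and the $\LCD$ bound, $\log|S| \leq \delta^{-b} = T^{b/(b+2)}$, so $r = O(\sqrt{T^{-2/(b+2)}\log T}\,) = \tilde O(\delta)$ (absorbing log factors into $O(\cdot)$, or keeping them explicit if one prefers).

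On the good event, letting $x_i^\star$ be the algorithm's best guess and $\tilde x \in \arg\max_{x\in S} \mu(x)$:
\begin{align*}
\mu(x_i^\star) \;\geq\; \bar\mu_T(x_i^\star) - r
    \;\geq\; \bar\mu_T(\tilde x) - r
    \;\geq\; \mu(\tilde x) - 2r
    \;\geq\; \mu^* - \delta - 2r,
\end{align*}
so $\Delta(x_i^\star) \leq \delta + 2r = O(T^{-1/(b+2)}\log^{1/2} T)$ on the good event, and $\Delta(x_i^\star) \leq 1$ always.

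Finally I would assemble the regret bound. During phase $i+1$ the algorithm plays $x_i^\star$ for $2^{i+1}$ rounds, so the contribution of phase $i+1$ to the regret is at most $2^{i+1}\,\Delta(x_i^\star) + 2^{i+1}\cdot \Pr[\text{bad event in phase }i]$. By the above, this is at most $O\bigl(T^{1-1/(b+2)} \log^{1/2} T\bigr) + O(1)$ with $T=2^i$. Summing over phases up to time $t$ gives a geometric series dominated by its last term, yielding $R(t) = O(t^{1-1/(b+2)}\log^{1/2} t)$, which is $O(t^{1-1/(b+2)+o(1)})$; if one replaces $b$ by a slightly larger $b' \in (\LCD(X),b)$ at the outset to absorb the $\log$ factor, one recovers the stated bound $R(t) = O(t^{1-1/(b+2)})$. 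The (small) obstacle is making sure the constants line up so that $r \leq \delta$, which is why the exponent $b+2$ rather than $b+1$ or $b$ appears: it is exactly the balance point between the discretization error $\delta$ and the statistical error $\sqrt{\delta^{-b}/T}$; no new ideas beyond the Bobby-style naive-mesh calculation are needed once $\LCD$ is used in place of $\COV$.
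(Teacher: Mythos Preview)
Your proposal is correct and follows essentially the same argument as the paper's proof: Chernoff bounds on the $\delta$-hitting set, a union bound over $|S|\leq N_\delta(X)$ points, and the key balance that $\sqrt{\log N_\delta(X)/T} = O(\delta)$ when $\delta = T^{-1/(b+2)}$. One minor remark: the $\log^{1/2} T$ factor you carry is actually spurious --- since $\log|S| \leq \delta^{-b} = T^{b/(b+2)}$ dominates $\log T$, you get $r = O(\delta)$ cleanly (as the paper observes directly, writing $r_T < 2\delta$), so the $b' \in (\LCD(X),b)$ absorption trick at the end is unnecessary.
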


\begin{proof}
Let $N_\delta = N_\delta(X)$, and let $\mu$ be the expected payoff function. Consider a given phase $i$ of the algorithm. Let $T=2^i$ be the phase duration. Let $\delta = T^{-1/(b+2)}$, and let $S\subset X$ the $\delta$-hitting set chosen in this phase. Note that for any sufficiently large $T$ it is the case that
    $N_\delta < 2^{\delta^{-b}}$.
For each $x\in S$, let $\mu_T(x)$ be the sample average of the feedback from $x$ by the end of the phase. Then by Chernoff bounds,
\begin{align}\label{eq:ffproblem-naive}
 \Pr[ | \mu_T(x) - \mu(x)| <  r_T ] > 1-  (T N_\delta)^{-3},
    \quad\text{where}\quad
        r_T = \sqrt{ 8\,\log (T\, N_\delta)\, / T} < 2\delta.
\end{align}
Note that $\delta$ is chosen specifically to ensure that $r_T \leq O(\delta)$.

We can neglect the regret incurred when the event in~\refeq{eq:ffproblem-naive} does not hold for some $x\in S$. From now on, let us assume that the event in~\refeq{eq:ffproblem-naive} holds for all $x\in S$. Let $x^*$ be an optimal strategy, and $x^*_i = \argmax_{x\in S} \mu_T(x) $ be the ``best guess". Let $x\in S$ be a point that covers $x^*$. Then
$$ \mu(x^*_i)
    \geq \mu_T(x^*_i) - 2\delta
    \geq \mu_T(x) - 2\delta
    \geq \mu(x) - 4 \delta
    \geq \mu(x^*) - 5 \delta.
$$
Thus the total regret $R_{i+1}$ accumulated in phase $i+1$ is
$$ R_{i+1}\leq 2^{i+1}\, (\mu(x^*) - \mu(x^*_i)) \leq O(\delta T)
    = O(T^{1-1/(2+b)}).
$$
Thus the total regret summed over phases is as claimed.
\end{proof}

\subsection{Uniformly Lipschitz experts (proof of Theorem~\ref{thm:intro-LCD-uniform})}
\label{sec:FFproblem-uniform}

We now turn our attention to the \emph{\ULproblem}, a restricted version of the \FFproblem\ in which a problem instance $(X,\mD,\prob)$ satisfies a further property that each function $f\in \mathtt{support}(\mathbb{P})$ is itself a Lipschitz function on $(X,\mD)$. We show that for this version, \NaiveExp\ obtains a significantly better regret guarantee, via a more involved analysis. As we will see in the next section, for a wide class of metric spaces including \eps-uniform tree metrics there is a matching upper bound.

\OMIT{ %%%%%%
The lower bound in Theorem~\ref{thm:experts-MaxMinLCD} holds for the \emph{\ULproblem},
Let us show the matching upper bound for the metric spaces such that
    $\MaxMinLCD(X) = \LCD(X)$.
In fact, this bound is achieved by algorithm $\NaiveExp$ via a more involved analysis.
} %%%%

\begin{theorem}\label{thm:ulproblem-naive}
Consider the \ULproblem\ with full feedback. Fix a metric space $(X,\mD)$. For each $b>\LCD(X)$ such that $b\geq 2$, $\NaiveExp(b-2)$ achieves regret
    $R(t) = O(t^{1-1/b})$.
\end{theorem}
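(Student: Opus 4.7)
The plan is to refine the analysis of $\NaiveExp(b-2)$ by exploiting the uniform Lipschitz property of the realized payoff functions $f_t$ through a multi-scale chaining argument. A phase of the algorithm has length $T=2^i$ and resolution $\delta=T^{-1/b}$. I would assume (by a trivial modification of the implementation) that the $\delta$-hitting set $S$ produced in the phase is the finest level of a nested sequence $S_0\subset S_1\subset\cdots\subset S_k=S$, where $S_j$ is a $2^{-j}$-hitting set of $X$ with $|S_j|\le N_{2^{-j}}(X)$ and $2^{-k}\le \delta$; such nested hitting sets are built greedily. Since we are in full feedback, I would also assume $f_t$ is queried at every point of $\cup_j S_j$, a set of size $O(|S_k|)$.

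For each $x\in S$ I would form the chain $(x^{(0)},x^{(1)},\dots,x^{(k)}=x)$ where $x^{(j)}$ is a nearest point of $S_j$ to $x$, so that $\mD(x^{(j)},x^{(j+1)})\le 3\cdot 2^{-(j+1)}$. Telescoping
\[
\mu_T(x)-\mu(x)=(\mu_T(x^{(0)})-\mu(x^{(0)}))+\sum_{j=0}^{k-1}\Bigl[(\mu_T(x^{(j+1)})-\mu_T(x^{(j)}))-(\mu(x^{(j+1)})-\mu(x^{(j)}))\Bigr]
\]
reduces the task to bounding each increment. The key point is that, by uniform Lipschitzness, the i.i.d.\ random variables $f_t(x^{(j+1)})-f_t(x^{(j)})$ are bounded in absolute value by $3\cdot 2^{-(j+1)}$. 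Applying Hoeffding's inequality with a union bound over the $|S_{j+1}|\cdot|S_j|\le |S_{j+1}|^2$ possible pairs, together with an ordinary Chernoff bound at the anchor $x^{(0)}$, I get that with probability at least $1-T^{-3}$,
\[
\sup_{x\in S}|\mu_T(x)-\mu(x)|\;\le\; O\!\bigl(\sqrt{\log T/T}\bigr)+\sum_{j=0}^{k-1}O\!\Bigl(2^{-j}\sqrt{\log|S_{j+1}|/T}\Bigr).
\]

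The main work is then to show this is $O(\delta)$. Fixing any $b'\in(\LCD(X),b)$, the log-covering dimension bound gives $\log|S_j|=O(2^{jb'})$ for all sufficiently large $j$, so the $j$-th summand is of order $2^{j(b'/2-1)}/\sqrt{T}$. I would split into two regimes: if $b'\le 2$ the series is geometrically summable or logarithmically many constant terms, yielding $\tilde O(1/\sqrt T)$, which is $O(\delta)$ since $b\ge 2$ forces $\delta\ge T^{-1/2}$; if $b'>2$ the sum is dominated by $j=k$, giving $O(\delta^{1-b'/2}/\sqrt T)$, and this is $O(\delta)$ precisely when $\delta\ge T^{-1/b'}$, which holds because $b>b'$. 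Either way, the best guess $x_i^*$ satisfies $\mu(x_i^*)\ge \mu^*-O(\delta)$ by the short argument from Theorem~\ref{thm:ffproblem-naive}, so regret in phase $i+1$ is $O(\delta T)=O(T^{1-1/b})$; summing over the doubling phases yields $R(t)=O(t^{1-1/b})$.

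The main obstacle will be the chaining bookkeeping, specifically verifying that the slack granted by the hypothesis splits cleanly across the two regimes: the condition $b\ge 2$ is exactly what absorbs the $1/\sqrt T$ error in the low-dimensional regime, while the strict inequality $b>\LCD(X)$ (which allows choosing an intermediate $b'$) is exactly what controls the dominant chain term in the high-dimensional regime. A minor auxiliary point is that a nested sequence of hitting sets can indeed be constructed with $|S_j|\le N_{2^{-j}}(X)$ at each level by greedy augmentation, so the extra queries required for the chaining analysis do not affect the per-round feedback budget.
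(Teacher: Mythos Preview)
Your proposal is correct and rests on the same core observation as the paper: under the uniformly Lipschitz assumption, the increments $f_t(x)-f_t(y)$ are almost surely bounded by $\mD(x,y)$, so Hoeffding gives concentration that scales with the distance rather than with~$1$. Where you diverge is in how you organize the multi-scale argument. You run a classical Dudley chaining over nested nets $S_0\subset\cdots\subset S_k=S$ to obtain a uniform deviation bound $\sup_{x\in S}|\mu_T(x)-\mu(x)|=O(\delta)$ (up to logs), and then read off the argmax comparison. The paper instead builds a \emph{covering tree}, defines at each node $u$ the true and empirical argmaxes $\sigma(u),\rho(u)$ over $X(u)\cap S$, and proves by induction on depth that the ``argmax gap'' $\Delta^*(u)=[\mu(\sigma(u))-\mu(\rho(u))]+[\nu(\rho(u))-\nu(\sigma(u))]$ satisfies $\Delta^*(u)\le O(\delta)(n-j)$; applying this at the root gives $\mu(y_i^*)-\mu(x_i^*)\le O(\delta\log T)$ directly. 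Your chaining yields a stronger intermediate conclusion (a sup-norm bound rather than just the argmax gap) and is the more standard packaging; the paper's inductive scheme is more bespoke but avoids ever asserting a uniform bound it does not need. Both arrive at the same $O(\delta\log T)$ per-phase guarantee, and both make essential use of the hypothesis $b\ge 2$ at exactly the point you identified: it is what makes the coarse-scale contributions (your anchor and low-$b'$ regime, the paper's base-case level) of order $\delta$ rather than $T^{-1/2}$. One cosmetic note: your nested-net construction is a change to the algorithm's choice of $S$, whereas the paper's proof works for \emph{any} $S$ with $|S|\le N_\delta$; this is harmless, but it means you are analyzing a specific instantiation of $\NaiveExp(b-2)$ rather than the generic one.
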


\begin{proof}
The preliminaries are similar to those in the proof of Theorem~\ref{thm:ffproblem-naive}. For simplicity, assume $b\geq 2$. Let $N_\delta = N_\delta(X)$, and let $\mu$ be the expected payoff function. Consider a given phase $i$ of the algorithm. Let $T=2^i$ be the phase duration. Let $\delta = T^{-1/b}$, and let $S$ be the $\delta$-hitting set chosen in this phase. (The specific choice of $\delta$ is the only difference between the algorithm here and the algorithm in Theorem~\ref{thm:ffproblem-naive}.) Note that $|S|\leq N_\delta$, and for any sufficiently large $T$ it is the case that
    $N_\delta < 2^{\delta^{-b}}$.

The rest of the analysis holds for any set $S$ such that $|S|\leq N_\delta$. (That is, it is not essential that $S$ is a $\delta$-hitting set for $X$.) For each $x\in S$, let $\nu(x)$ be the sample average of the feedback from $x$ by the end of the phase. Let $y^*_i = \argmax (\mu,S)$ be the optimal strategy in the chosen sample, and let $x^*_i = \argmax(\nu,S)$ be the algorithm's ``best guess". The crux is to show that
\begin{align}\label{eq:thm-ulproblem-naive-crux}
 \Pr[\, \mu(y^*_i) - \mu(x^*_i) \leq O(\delta \log T) \,]
     > 1-  T^{-3}.
\end{align}
Once~\refeq{eq:thm-ulproblem-naive-crux} is established, the remaining steps is exactly as the proof of Theorem~\ref{thm:ffproblem-naive}.

Proving~\refeq{eq:thm-ulproblem-naive-crux} requires a new technique. The obvious approach -- to use Chernoff Bounds for each $x\in S$ separately and then take a Union Bound -- does not work, essentially because one needs to take the Union Bound over too many points. Instead, we will use a more efficient version tail bound: for each $x,y\in X$, we will use Chernoff Bounds applied to the random variable $f(x)-f(y)$, where $f \sim \prob$ and  $(X,\mD,\prob)$ is the problem instance. For a more convenient notation, we define
$$\Delta(x,y)
    = \left[\, \mu(x)-\mu(y) \,\right]
    + \left[\, \nu(y)-\nu(x) \,\right],$$
Then for any $N\in \N$ we have
\begin{align}\label{eq:thm-ulproblem-naive-Chernoff}
 \Pr\left[\,  |\Delta(x,y)| \leq
        \mD(x,y)\,\sqrt{8\,\log (T\, N)/T}
 \right]
 > 1-  (T N)^{-3}.
\end{align}
The point is that the ``slack" in the Chernoff Bound is scaled by the factor of $\mD(x,y)$. This is because each $f\in \texttt{support}(\prob)$ is a Lipschitz function on $(X,\mD)$,

In order to take advantage of~\refeq{eq:thm-ulproblem-naive-Chernoff}, let us define the following structure that we call the \emph{covering tree} of the metric space $(X,\mD)$. This structure consists of a rooted tree $\mathcal{T}$ and non-empty subsets $X(u)\subset X$ for each internal node $u$. Let $V_\mathcal{T}$ be the set of all internal nodes. Let $\mathcal{T}_j$ be the set of all level-$j$ internal nodes (so that $\mathcal{T}_0$ is a singleton set containing the root). For each
    $u\in V_\mathcal{T}$,
let $\mathcal{C}(u)$ be the set of all children of $u$. For each node $u\in \mathcal{T}_j$ the structure satisfies the following two properties: (i) set $X(u)$ has diameter at most $2^{-j}$, (ii) the sets $X(v)$, $v\in \mathcal{C}(u)$ form a partition of $X(u)$. This completes the definition.

By definition of the covering number $N_\delta(\cdot)$ there exist a covering tree $\mathcal{T}$ in which each node $u\in \mathcal{T}_j$ has fan-out $N_{2^{-j}}( X(u))$. Fix one such covering tree. For each node $u\in V_\mathcal{T}$, define
\begin{align}
   \sigma(u) &= \argmax( \mu,\, \mathcal{X}(u) \cap S)
        \label{eq:thm-ulproblem-naive-sigma}\\
     \rho(u) &= \argmax( \nu,\, \mathcal{X}(u) \cap S), \nonumber
\end{align}
where the tie-breaking rule is the same as in the algorithm.

Let
    $n = \cel{\log\tfrac{1}{\delta}}$.
Let us say that phase $i$ is \emph{clean} if the following two properties hold:
\begin{OneLiners}
\item[(i)] for each node $u\in V_\mathcal{T}$  any two children $v,w\in \mathcal{C}(u)$ we have
    $|\, \Delta( \sigma(v),\, \sigma(w))\, | \leq 4\delta $.

\item[(ii)] for any $x,y\in S$ such that $\mD(x,y)\leq \delta$ we have
    $|\Delta(x,y)| \leq 4\delta $.
\end{OneLiners}

\begin{claim}
For any sufficiently large $i$,
phase $i$ is clean with probability at least $1-T^{-2}$.
\end{claim}
\begin{proof}
To prove (i), let $j$ be such that $u\in \mathcal{T}_j$. We consider each $j$ separately. Note that (i) is trivial for $j>n$. Now fix $j\leq n$ and apply the Chernoff-style bound~\refeq{eq:thm-ulproblem-naive-Chernoff} with
    $N = |\mathcal{T}_j|$ and $(x,y) = (\sigma(v), \sigma(w))$.
Since
	$|\mathcal{T}_l| \leq 2^{2^{lb}}\, |\mathcal{T}_{l-1}|$
for each sufficiently large $l$, it follows that
$\log |\mathcal{T}_j|
	\leq C + \textstyle{\sum_{l=1}^j}\; 2^{lb}
	\leq C+ \tfrac{4}{3}\, 2^{jb},
$
where $C$ is a constant that depends only on the metric space and $b$.
It is easy to check that for any sufficiently large phase $i$ (which, in turn, determines $T$, $\delta$ and $n$),  the ``slack" in ~\refeq{eq:thm-ulproblem-naive-Chernoff} is at most $4\delta$:
\begin{align*}
\mD(x,y)\,\sqrt{8\,\log (T\, N)/T}
	&\leq  3\, \mD(x,y)\,\sqrt{\log (N)/T}
	\leq 4\,2^{-j}\, \sqrt{2^{bj}/ 2^{bn}}
	= 4 \delta\, 2^{-(n-j)(b-2)/2}
	\leq 4\delta.
\end{align*}
Interestingly, the right-most inequality above is the only place in the proof where it is essential that $b\geq 2$.

To prove (ii), apply ~\refeq{eq:thm-ulproblem-naive-Chernoff} with $N = |S|$ similarly. Claim proved.
\end{proof}

From now on we will consider clean phase. (We can ignore regret incurred in the event that the phase is not clean.) We focus on the quantity
    $ \Delta^*(u) = \Delta( \sigma(u),\, \rho(u))$.
Note that by definition $ \Delta^*(u)\geq 0$. The central argument of this proof is the following upper bound on $\Delta^*(u)$.

\begin{claim}\label{cl:thm-ulproblem-naive}
In a clean phase,
    $\Delta^*(u) \leq O(\delta)(n-j)$
for each $j\leq n$ and each $u\in \mathcal{T}_j$.
\end{claim}
\begin{proof}
Use induction on $j$. The base case $j=n$ follows by part (ii) of the definition of the clean phase, since for $u\in \mathcal{T}_n$ both $\sigma(u)$ and $\rho(u)$ lie in $X(u)$, the set of diameter at most $\delta$. For the induction step, assume the claim holds for each $v\in \mathcal{T}_{j+1}$, and let us prove it for some fixed  $u\in \mathcal{T}_j$.

Pick children $u,v\in \mathcal{C}(u)$ such that
    $\sigma(u) \in X(v)$ and $\rho(u)\in X(w)$.
Since the tie-breaking rules in~\refeq{eq:thm-ulproblem-naive-sigma} is fixed for all nodes in the covering tree, it follows that
    $\sigma(u) = \sigma(v)$ and $\rho(u) = \rho(w)$.
Then
\begin{align*}
\Delta^*(w) + \Delta( \sigma(v),\, \sigma(w) )
    & = \Delta( \sigma(w),\, \rho(u) ) + \Delta( \sigma(u),\, \sigma(w) ) \\
    & =  \mu(\sigma(w)) - \mu(\rho(u)) + \nu(\rho(u)) - \rho(\sigma(w)) \;+\\
    & \quad \;
        \mu(\sigma(u)) - \mu(\sigma(w)) + \nu(\sigma(w)) - \nu(\sigma(u)) \\
    & = \Delta^*(u).
\end{align*}
Claim follows since
    $\Delta^*(w)\leq O(\delta)(n-j-1)$
by induction, and $\Delta( \sigma(v),\, \sigma(w) )\leq 4\delta$ by part (i) in the definition of the clean phase.
\end{proof}

To complete the proof of~\refeq{eq:thm-ulproblem-naive-crux}, let $u_0$ be the root of the covering tree. Then
    $y^*_i = \sigma(u_0)$ and $x^*_i = \rho(u_0)$.
Therefore by Claim~\ref{cl:thm-ulproblem-naive} (applied for $\mathcal{T}_0 = \{u_0\}$) we have
$$ O(\delta n) \geq \Delta^*(u_0) = \Delta^*(y^*_i, \, x^*_i)
    \geq \mu(y^*_i) - \mu(x^*_i). \qedhere
$$
\end{proof}

%%%%%%%%%%%%%%%%%%%%%%%%%%%

\subsection{Regret characterization (proof of Theorem~\ref{thm:intro-MaxMinLCD})}
\label{sec:FFproblem-characterization}

As it turns out, the log-covering dimension is not the right notion to characterize optimal regret for arbitrary metric spaces. We need a more refined version: the \emph{max-min-log-covering dimension}, defined in \eqref{eq:MaxMinLCD},
 similar to the max-min-covering dimension.

\begin{theorem}\label{thm:experts-MaxMinLCD-body}
Fix a metric space $(X,\mD)$ and let $b=\MaxMinLCD(X)$. The \FFproblem\ on $(X,\mD)$ is $(t^\gamma)$-tractable for any $\gamma> \tfrac{b+1}{b+2}$, and not $(t^\gamma)$-tractable for any $\gamma < \tfrac{b-1}{b}$.
\end{theorem}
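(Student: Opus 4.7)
My plan is to replicate the transfinite-decomposition machinery of Section~\ref{sec:fatness-PMO}, with $\LCD$ replacing $\COV$ throughout. Define a \emph{transfinite $d$-LCD-fatness decomposition} $\{S_\lambda\}_{0 \leq \lambda \leq \beta+1}$ of $(X,\mD)$ exactly as in Definition~\ref{def:fatness-transfinite}, except replacing condition (b) with the requirement $\LCD(S_\lambda \setminus U) \leq d$ for every open $U \supset S_{\lambda+1}$. Exactly the argument of Proposition~\ref{prop:fatness-dim} (substituting $\LCD$ for $\COV$ in the definitions of $\mathtt{Fat}$ and $\mathtt{Thin}$) shows that $\MaxMinLCD(X)$ equals the infimum of $d$ for which such a decomposition exists; compactness of $(X,\mD)$ is not required since open covers are replaced by LCD-based covering number bounds. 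Given such a decomposition with $d > \MaxMinLCD(X)$, I run an algorithm patterned on Algorithm~\ref{alg:zooming-pmo}: each phase of length $T=2^i$ maintains a target ordinal $\lambda^*$, computes a $\delta_0$-hitting set $\mathcal{N}$ of $X$ with $|\mathcal{N}| \leq 2^{T^{d/(d+2)}}$ (using LCD to bound $|\mathcal{N}|$), computes a $\delta$-hitting set $S'$ of $S_{\lambda^*}$ for $\delta = T^{-1/(d+2)}$, and runs $\NaiveExp$-style uniform exploration on $\mathcal{N} \cup S'$ using the free full-feedback samples; at the end of the phase, $\lambda^*$ is recomputed as the maximal $\lambda$ such that $S_\lambda$ intersects an $\eps^*$-neighborhood of the empirically best active arms, for an appropriately chosen $\eps^*$.

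Following the blueprint of Claim~\ref{cl:pmo-analysis-targetOrdinal}, one shows that after finitely many clean phases the target ordinal stabilizes at $\lambda_{\max}$ (the largest $\lambda$ such that $S_\lambda$ contains an optimal arm), using the Chernoff-style concentration of sample averages from $\NaiveExp$ in place of the confidence-ball analysis. Once $\lambda^* = \lambda_{\max}$, the per-phase regret is bounded by the Theorem~\ref{thm:ffproblem-naive} analysis applied to $S_{\lambda_{\max}}$: since $\LCD(S_{\lambda_{\max}}) \leq d$, the covering-number estimate $N_\delta(S_{\lambda_{\max}}) \leq 2^{\delta^{-d}}$ yields a per-phase regret of $O(T^{(d+1)/(d+2)})$, and summing over phases gives tractability for any $\gamma > (d+1)/(d+2)$. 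Taking $d \to b^+$ establishes the upper half of the theorem.

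\textbf{Lower bound.} Fix $\gamma < (b-1)/b$ and pick $b' \in ((1-\gamma)^{-1}, b)$ so that $\gamma < (b'-1)/b'$. By definition of $\MaxMinLCD$, there is a subspace $Y \subset X$ with $\MinCOV_{\mathrm{log}}(Y) > b'$, i.e.\ every non-empty relatively open $U \subseteq Y$ satisfies $\LCD(U) \geq b'$. I would construct, inductively, a ball-tree in $Y$ (in the sense of Definition~\ref{def:ball-tree}) such that each node of radius $r$ has at least $k(r) := \lceil \exp(r^{-b'}) \rceil$ children: at a node $(y,r)$ with $y \in Y$, the set $B(y, r/4) \cap Y$ is open in $Y$ so has $\LCD \geq b'$, and a packing-vs-covering lemma analogous to Lemma~\ref{lem:packing-folklore} (with the doubly-exponential growth rate) produces a $2r'$-packing of size $\geq k(r')$ for some $r' < r/4$, which becomes the set of children. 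Then Construction~\ref{con:LB-payoffs} with a sequence $\delta_j$ to be chosen gives a distribution on Lipschitz payoff functions (Lemma~\ref{lm:LB-Lipschitz} guarantees Lipschitz continuity of each realized $\pi$, so the lower bound applies even in the uniformly Lipschitz setting).

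Corollary~\ref{cor:ensemble} turns each depth-$j$ node into a $(r_j \delta_j/6,\, 2\delta_j,\, k_j)$-ensemble with $k_j \geq \exp(r_j^{-b'})$; Theorem~\ref{thm:LB-technique} gives regret at least $r_j \delta_j t/12$ on half of the resulting instances, for any $t \leq \ln(17 k_j)/(8\delta_j^2)$. Saturating this constraint with $\delta_j \asymp \sqrt{r_j^{-b'}/t}$ and choosing the depth so that $r_j \asymp t^{-1/b'}$ (the smallest radius consistent with $\delta_j \leq \tfrac12$) yields the bound $\Omega(r_j^{1-b'/2} \sqrt{t}) = \Omega(t^{(b'-1)/b'})$. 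A Borel--Cantelli argument over the random lineage, identical in structure to the conclusion of Lemma~\ref{lm:ball-tree-LB} but using the $k_j$-ary branching instead of the binary one, upgrades this to a lower bound holding for infinitely many $t$ almost surely. This rules out $(t^\gamma)$-tractability for any $\gamma < (b'-1)/b'$, and hence for the originally fixed $\gamma$.

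\textbf{Main obstacles.} The delicate step is the algorithmic stabilization of $\lambda^*$: I need the empirical gap between the best sample on $S' \cup \mathcal{N}$ and the true optimum to resolve subtleties between $\lambda_{\max}$, $\lambda_{\max}+1$, and lower ordinals, using only the uniform mesh rather than adaptive refinement as in the bandit setting. The analog of Claim~\ref{cl:pmo-analysis-crux} must use the Chernoff concentration of $\NaiveExp$ on a $\delta$-hitting set of $S_{\lambda^*}$ (with $\delta_0 \to 0$ ensured by compactness of the relevant closed subsets, via the analog of Claim~\ref{cl:pmo-analysis-folklore}), combined with the Lipschitz property. On the lower-bound side, the only new ingredient beyond Section~\ref{sec:lower-bound} is the packing lemma yielding $\exp(r^{-b'})$ disjoint sub-balls, which follows from Fact~\ref{fact:packing-covering} together with the LCD lower bound on $Y$.
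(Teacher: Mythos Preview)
Your proposal is correct and follows essentially the same route as the paper: a transfinite LCD decomposition plus a phase-based \NaiveExp-style algorithm for the upper bound, and a high-degree ball-tree built via an LCD packing lemma combined with Construction~\ref{con:LB-payoffs}, Corollary~\ref{cor:ensemble}, and Theorem~\ref{thm:LB-technique} for the lower bound.

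One point of simplification on the lower-bound side: there is no need to optimize $\delta_j$ as $\sqrt{r_j^{-b'}/t}$. The paper simply fixes $\delta_j \equiv \tfrac13$ in Construction~\ref{con:LB-payoffs}; then at depth $j$ the ensemble is $(r_j/18,\, \tfrac13,\, k_j)$ with $k_j \geq 2^{r_j^{-b'}}$, and taking $t_j = r_j^{-b'}$ directly satisfies $t_j < \ln(17 k_j)/(2\delta^2)$ and yields regret $\Omega(r_j t_j) = \Omega(r_j^{\,1-b'}) = \Omega(t_j^{(b'-1)/b'})$. Your varying-$\delta_j$ calculation lands at the same exponent, but when you unwind it at $r_j \asymp t^{-1/b'}$ you get $\delta_j = O(1)$ anyway, so the extra optimization buys nothing and risks confusion about whether $\delta_j$ can depend on $t$ (it cannot, since it is baked into the problem instance).
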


For the lower bound, we use a suitably ``thick'' version of the ball-tree from Section~\ref{sec:lower-bound} in conjunction with the $(\eps,\delta,k)$-ensemble
idea from Section~\ref{sec:lower-bound}, see Section~\ref{sec:MaxMinLCD-LB}. For the algorithmic result, we combine the ``naive" experts algorithm (\NaiveExp) with (an extension of) the \emph{transfinite fat decomposition} technique from Section~\ref{sec:pmo}, see Section~\ref{sec:MaxMinLCD-UB}.

The lower bound in Theorem~\ref{thm:experts-MaxMinLCD-body} holds for the \ULproblem. It follows that the upper bound in Theorem~\ref{thm:ulproblem-naive} is optimal for metric spaces such that $\MaxMinLCD(X) = \LCD(X)$, e.g. for \eps-uniform tree metrics. In fact, we can plug the improved analysis of \NaiveExp\ from Theorem~\ref{thm:ulproblem-naive} into the algorithmic technique from Theorem~\ref{thm:experts-MaxMinLCD-body} and obtain a matching upper bound in terms of the \MaxMinLCD. Thus (in conjunction with Theorem~\ref{thm:main-experts}) we have a complete characterization for regret:

\OMIT{ %%
The extension to the corresponding result for $\MaxMinLCD$ (the upper bound in Theorem~\ref{thm:experts-MaxMinLCD-unif}) proceeds exactly as in the proof of Theorem~\ref{thm:experts-MaxMinLCD}, except we use a more efficient analysis of \NaiveExp. We omit the details from this version.
} %%%%%%

\begin{theorem}\label{thm:experts-MaxMinLCD-unif-body}
Consider the \ULproblem\ with full feedback. Fix a metric space $(X,\mD)$ with uncountably many points, and let $b=\MaxMinLCD(X)$. The problem on $(X,\mD)$ is $(t^\gamma)$-tractable for any
    $\gamma> \max(\tfrac{b-1}{b},\, \tfrac12)$,
and not $(t^\gamma)$-tractable for any
    $\gamma < \max(\tfrac{b-1}{b},\, \tfrac12)$.
\end{theorem}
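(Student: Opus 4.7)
The plan is to obtain both bounds by combining results already in the excerpt with modest extensions, with the upper bound requiring the most genuine work.

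For the lower bound, two separate arguments are needed, matching the two branches of the $\max$. The $\gamma < \tfrac{b-1}{b}$ intractability (binding when $b > 2$) is inherited verbatim from the lower bound half of Theorem~\ref{thm:experts-MaxMinLCD-body}, since the construction used there produces uniformly Lipschitz payoff distributions (the random payoff functions in Construction~\ref{con:LB-payoffs} are finite sums of Lipschitz bump functions with bounded coefficients, hence uniformly Lipschitz). The $\gamma < \tfrac12$ intractability is pulled from Theorem~\ref{thm:dichotomy-compact}(b): since $X$ is uncountable, passing to the completion and invoking Lemma~\ref{lm:topological-equivalence} furnishes a perfect subspace, and Theorem~\ref{thm:lower-bound} then rules out $o(\sqrt{t})$ regret even under the uniformly Lipschitz restriction. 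Taking the worse of the two lower bounds gives $\max(\tfrac{b-1}{b},\tfrac12)$ as required.

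For the upper bound, the strategy is to recycle the transfinite fat-decomposition machinery of Section~\ref{sec:fatness-PMO}, with $\COV$ replaced by $\LCD$ throughout, and to replace the zooming subroutine by $\NaiveExp$ analyzed via Theorem~\ref{thm:ulproblem-naive}. Concretely, I first introduce a \emph{transfinite $d$-log-fatness decomposition}, identical to Definition~\ref{def:fatness-transfinite} except that condition (b) becomes $\LCD(S_\lambda \setminus U) \leq d$. The proof of Proposition~\ref{prop:fatness-dim} transfers with only notational changes (defining $d$-thick points via $\LCD$ rather than $\COV$), establishing that any compact $X$ with $\MaxMinLCD(X) < d$ admits such a decomposition of some transfinite length. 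This step should be routine, since the argument only uses closedness of $\FatPt$, compactness of $X$, and the fact that covering a compact $T$ by open sets of $\LCD < d$ gives a finite subcover of $\LCD < d$.

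The algorithmic core adapts Algorithm~\ref{alg:zooming-pmo}. A target ordinal $\lambda^*$ is maintained between phases of doubling length $T=2^i$; inside a phase the algorithm sets $\beta = \max(d,2)$ and computes two finite sets: a $\delta_0$-net $\mathcal{N}$ of $X$ of size at most $T^{1/\beta}$, and a $\delta$-hitting set $\mathcal{S}_{\lambda^*}$ of $S_{\lambda^*}$ of size at most $\exp(\delta^{-d})$ for $\delta = T^{-1/\beta}$, then runs $\NaiveExp$ on $\mathcal{N} \cup \mathcal{S}_{\lambda^*}$ using the free full feedback, playing in each round the current empirical leader. The next target ordinal is recomputed exactly as in Algorithm~\ref{alg:zooming-pmo}: the largest $\lambda$ such that $S_\lambda$ intersects the closed $\eps^*$-neighborhood of the empirically near-optimal arms. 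The analysis then mirrors Section~\ref{sec:fatness-PMO}: an analog of Claim~\ref{cl:pmo-analysis-targetOrdinal} shows that after any sufficiently long clean phase $\lambda^* = \DesiredLambda$, where $\DesiredLambda$ is the largest ordinal whose $S_{\DesiredLambda}$ carries an optimal arm; once this occurs, $S_{\DesiredLambda}$ restricted to the complement of any open neighborhood of $S_{\DesiredLambda+1}$ has $\LCD < d$, and the improved Chernoff union bound~\eqref{eq:thm-ulproblem-naive-Chernoff} of Theorem~\ref{thm:ulproblem-naive} yields per-phase regret $O(T^{1-1/\beta} \operatorname{polylog} T)$. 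Summing over phases gives regret $\tilde{O}(t^{1-1/\beta}) = \tilde{O}(t^{\max((d-1)/d,\,1/2)})$, which for any $d > b$ beats $t^\gamma$ for $\gamma > \max(\tfrac{b-1}{b},\tfrac12)$.

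The main obstacle will be the analog of Claim~\ref{cl:pmo-analysis-targetOrdinal} in the full-feedback setting: the bandit argument used the UCB-style index to equate ``played often'' with ``nearly optimal'', whereas here the only surrogate is the empirical mean on the covering net $\mathcal{N}$. I expect to replace Claim~\ref{cl:pmo-analysis-crux} by a cleaner two-line estimate: since every point of $\mathcal{N}$ is sampled $\Theta(T/|\mathcal{N}|)$ times in a phase, the diameter-weighted concentration \eqref{eq:thm-ulproblem-naive-Chernoff} pins every empirical mean on $\mathcal{N}$ to within $\tilde{O}(\delta)$ of its true mean, from which the target-ordinal tracking argument proceeds as before. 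A secondary subtlety is the boundary regime $b < 2$, where $\NaiveExp(b-2)$ is not defined; there we simply run $\NaiveExp(0)$ inside $S_{\lambda^*}$ (treating it as a metric space of finite covering dimension in the neighborhood the algorithm cares about) to recover the $\sqrt{t}$ rate matching the $\tfrac12$ lower bound.
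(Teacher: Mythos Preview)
Your proposal is correct and follows essentially the same route as the paper: the lower bound combines the $\tfrac{b-1}{b}$ bound from Section~\ref{sec:MaxMinLCD-LB} (whose Construction~\ref{con:LB-payoffs} indeed produces uniformly Lipschitz payoffs) with the $\sqrt{t}$ bound from Theorem~\ref{thm:lower-bound}, while the upper bound transplants the transfinite decomposition of Section~\ref{sec:fatness-PMO} with $\COV$ replaced by $\LCD$ (this is exactly the paper's Definition~\ref{def:fatness-transfinite-experts} and Lemma~\ref{prop:fatness-dim-experts}) and swaps in $\NaiveExp$ with the sharper uniformly-Lipschitz analysis of Theorem~\ref{thm:ulproblem-naive}, tracking the target ordinal via the full-feedback empirical means just as you outline. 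The only cosmetic difference is that the paper separates the ``depth estimate'' $y^*_i$ from the ``best guess'' $x^*_i$ rather than speaking of a target ordinal directly, and plays the previous phase's best guess rather than the running empirical leader, but the substance is the same.
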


The proof of the upper bound in Theorem~\ref{thm:experts-MaxMinLCD-unif-body} proceeds exactly that in Theorem~\ref{thm:experts-MaxMinLCD-body}, except that we use a more efficient analysis of \NaiveExp.

%%%%%%%%%%%%%%%%%%%
\subsubsection{The $\MaxMinLCD$ lower bound: proof for Theorem~\ref{thm:experts-MaxMinLCD-unif-body}}
\label{sec:MaxMinLCD-LB}

If $\MaxMinLCD(X) = d,$ and $\gamma < \tfrac{d-1}{d},$
let us first fix constants $b$ and $c$ such that $b < c < d$ and
$\gamma < \tfrac{b-1}{b}$.  Let $Y \subseteq X$
be a subspace such that $c \leq \inf \{\LCD(Z) : \mbox{open,
nonempty } Z \subseteq Y \}.$
We will repeatedly use the following packing lemma that
relies on the fact that $b < \LCD(U)$ for all
nonempty subsets $U \subseteq Y$.

\begin{lemma} \label{lem:packing-LCD}
For any nonempty
open $U \subseteq Y$ there exists $r_0>0$
such that for all $r \in (0,r_0)$,
$U$ contains more than $2^{r^{-b}}$
disjoint balls of radius $r$.
\end{lemma}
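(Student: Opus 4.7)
The plan is to prove the lemma by contradiction, exploiting the hereditary hypothesis that every nonempty open subspace $Z\subseteq Y$ satisfies $\LCD(Z)\geq c>b$. First I would use Fact~\ref{fact:packing-covering} to translate the packing statement into a covering statement: since the number of disjoint radius-$r$ balls in $U$ is $\Npack_{2r}(U)\geq N_{4r}(U)$, it suffices, after absorbing constants into $b$ and $r_0$, to show that every nonempty open $U\subseteq Y$ satisfies $N_r(U)>2^{r^{-b}}$ for all sufficiently small $r$.

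Next, fix $b'\in(b,c)$ and suppose for contradiction that there exists an open nonempty $U\subseteq Y$ and a sequence $r_n\downarrow 0$ with $N_{r_n}(U)\leq 2^{r_n^{-b}}$. For each $n$, cover $U$ by at most $2^{r_n^{-b}}$ open balls of radius $r_n/2$ centered in $U$. Starting from $V_0=U$ and choosing a subsequence $n_k$ to be specified, I would inductively select an open ball $B_k$ from the cover at scale $r_{n_k}$ with $V_{k-1}\cap B_k$ nonempty (such $B_k$ exists because the cover covers $V_{k-1}$) and set $V_k := V_{k-1}\cap B_k$. Each $V_k$ is nonempty, open in $Y$, and has $\diam(V_k)\leq r_{n_k}$.

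The goal is to produce some $V_k$ with $\LCD(V_k)<c$, which (since $V_k$ is open nonempty in $Y$) contradicts $\MinLCD(Y)\geq c$. The key technical step is choosing the subsequence $n_k$ so that consecutive ratios $\log(1/r_{n_{k+1}})/\log(1/r_{n_k})$ stay bounded. Under this density condition, $\diam(V_k)\leq r_{n_k}$ yields $N_s(V_k)\leq 1$ for $s>r_{n_k}$, while $V_k\subseteq U$ together with monotonicity of $N_s$ in $s$ promotes the pointwise bounds $N_{r_m}(U)\leq 2^{r_m^{-b}}$ at sequence scales into a uniform bound $\log\log N_s(V_k)/\log(1/s)\leq b'$ on all sufficiently small $s$, giving $\LCD(V_k)\leq b'<c$.

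The main obstacle is the calibration of the subsequence: if the bad scales $\{r_n\}$ are too sparse to admit a dense-enough subsequence for this monotonicity interpolation, then one must invoke the complementary $\LCD(U)\geq c$ hypothesis to handle the large gaps between consecutive bad scales directly — in those gaps, the limsup condition forces $N_r(U)>2^{r^{-b'}}$, so the lemma already holds on those scales without needing the contradiction argument. Combining the two regimes covers all sufficiently small $r$, completing the proof.
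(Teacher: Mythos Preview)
Your approach diverges sharply from the paper's. The paper gives a two-line argument: since $\LCD(U)>b$, choose $r_0$ so that for every $r<r_0$ any covering of $U$ by radius-$2r$ balls needs more than $2^{r^{-b}}$ balls; then any \emph{maximal} collection of disjoint radius-$r$ balls in $U$ has at least that many elements, because doubling their radii yields a cover. There is no contradiction argument, no nested sequence $V_k$, and no appeal to the hereditary hypothesis beyond $\LCD(U)>b$ for the single set $U$.

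You are implicitly worrying about a real subtlety: $\LCD$ is defined as a $\limsup$, so the inequality $N_r(U)>2^{r^{-b}}$ is a priori guaranteed only along a sequence of scales, not for all small $r$. The paper's one-line justification passes over this point, and your instinct to bring in the stronger hereditary hypothesis (every open $Z\subseteq Y$ has $\LCD(Z)\ge c$) is a reasonable reaction.

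That said, your proposed resolution has a genuine gap. The ``two regimes'' argument in your last paragraph is circular. You are arguing by contradiction that the set of bad scales $\{r:N_r(U)\le 2^{r^{-b}}\}$ does not accumulate at $0$; once you assume it does, you must derive a contradiction, not merely observe that on the complementary good scales the desired inequality holds. Your nested-ball construction yields a contradiction only when you can choose a subsequence of bad scales with $\log(1/r_{n_{k+1}})/\log(1/r_{n_k})$ bounded by some $K<c/b$, so that the resulting open set has $\LCD\le bK<c$. When the bad scales are too sparse to admit such a subsequence, you have offered no contradiction --- and ``the lemma already holds on those scales'' is beside the point, because the existence of even one arbitrarily small bad scale already defeats the for-all-$r$ conclusion you are after.
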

\begin{proof}
Let $r_0$ be a positive number such that
for all positive $r < r_0$, every
covering of $U$ requires more than $2^{r^{-b}}$ balls of
radius $2r$.  Such an $r_0$ exists, because
$\LCD(U) > b$.
Now for any positive $r < r_0$
let $\mathcal{P} = \{B_1,B_2,\ldots,B_M\}$ be any
maximal collection of disjoint $r$-balls.  For
every $y \in Y$ there must exist some ball $B_i
\; (1 \leq i \leq M)$ whose center is within
distance $2r$ of $y$, as otherwise $B(y,r)$
would be disjoint from every element of $\mathcal{P}$
contradicting the maximality of that collection.
If we enlarge each ball $B_i$ to a ball $B_i^+$
of radius $2r$, then every $y \in Y$ is contained
in one of the balls $\{B_i^+ \,|\, 1 \leq i \leq M\}$,
i.e. they form a covering of $Y$.  Hence
$M \geq 2^{r^{-b}}$ as desired.
\end{proof}

Using the packing lemma we recursively construct a ball-tree on metric space $(Y,\mD)$ with very high node degrees. Specifically, let us say that a ball-tree has \emph{log-strength} $b$ if each tree node with children of radius $r$ has at least $2^{r^{-b}}$ children. For convenience, all tree nodes of the same depth will have the same radius $r_i$. Then each node at depth $i-1$ has at least
    $n_i = \lceil 2^{r_i^{-b}} \rceil$
children.

\begin{claim}
There exists a ball-tree $T$ on $(Y,\mD)$ with log-strength $b$, in which all tree nodes of the same depth $i$ have the same radius $r_i$.
\end{claim}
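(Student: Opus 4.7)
The plan is to construct the ball-tree inductively level by level, exploiting the freedom to select a common radius $r_i$ for all depth-$i$ nodes so that Lemma~\ref{lem:packing-LCD} can be applied uniformly to every depth-$(i-1)$ parent. Set the root to $(y_0,1)$ for any $y_0\in Y$. In the inductive step, suppose the depth-$(i-1)$ nodes $V_{i-1}$ have been fixed, all with radius $r_{i-1}$ and centers in $Y$. For each $v=(x_v,r_{i-1})\in V_{i-1}$, the set $U_v:=B(x_v,r_{i-1}/4)\cap Y$ is nonempty and open in the subspace topology of $Y$, so Lemma~\ref{lem:packing-LCD} supplies a threshold $\rho_v>0$ below which $U_v$ contains an adequate packing. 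Since the tree degree is finite at every level, $V_{i-1}$ is finite and $\min_{v\in V_{i-1}}\rho_v>0$.

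The first technical point is that the ball-tree axioms demand strict inequalities ($\mD(x_v,x')+r'<r_{i-1}/2$ for parent--child and $r'+r''<\mD(x',x'')$ for siblings), whereas disjointness of open balls only gives weak inequalities between centers. I will therefore apply the packing lemma with a probe radius $\rho$ and then declare the children's radius to be $r_i:=\rho/2$. Siblings will satisfy $r_i+r_i=\rho<2\rho\le \mD(x'_j,x'_k)$ strictly, and placing the children inside $B(x_v,r_{i-1}/4)$ together with $r_i\le r_{i-1}/8$ yields $\mD(x_v,x'_j)+r_i<r_{i-1}/4+r_{i-1}/8<r_{i-1}/2$, giving the parent--child inequality with room to spare.

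The second and more delicate point is that halving the radius from $\rho$ down to $r_i=\rho/2$ inflates the log-strength target: producing at least $2^{r_i^{-b}}=2^{2^b\rho^{-b}}$ children requires the packing lemma to yield more than $2^{2^b\rho^{-b}}$ disjoint $\rho$-balls inside $U_v$. To create this slack I will fix an auxiliary exponent $b'\in(b,c)$ (possible because $b<c$) and invoke the lemma with $b'$ in place of $b$; its proof still goes through because $b'<c\le\LCD(U)$ for every nonempty open $U\subseteq Y$. The lemma then produces more than $2^{\rho^{-b'}}$ disjoint $\rho$-balls, and for all sufficiently small $\rho$ one has $\rho^{-(b'-b)}\ge 2^b$, which rearranges to $2^{\rho^{-b'}}\ge 2^{r_i^{-b}}$. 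Choosing $\rho>0$ below $r_{i-1}/4$, below $\min_{v\in V_{i-1}}\rho_v$, and small enough for this last inequality completes the inductive step; iterating over $i$ produces the required ball-tree.

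The main obstacle I anticipate is precisely this trade-off between sibling-disjointness (which forces a shrinkage factor in the radius) and the log-strength lower bound (which grows very rapidly as the radius shrinks): the packing lemma applied with the natural exponent $b$ leaves no slack for both. Introducing the auxiliary exponent $b'\in(b,c)$ bridges the gap, and the finiteness of $V_{i-1}$ at every level makes the simultaneous choice of $\rho$ routine. Once $\rho$ is fixed, verifying the four ball-tree axioms and the log-strength count reduces to direct substitution.
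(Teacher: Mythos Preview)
Your approach is essentially the same as the paper's, and in fact more carefully written: the paper simply chooses $r_i\in(0,r_{i-1}/4)$ small enough that Lemma~\ref{lem:packing-LCD} furnishes $\lceil 2^{r_i^{-b}}\rceil$ disjoint radius-$r_i$ balls inside $B(x,r_{i-1}/2)$ for every depth-$(i{-}1)$ node, and declares these balls to be the children, leaving the verification of the strict ball-tree inequalities implicit. Your device of passing to an auxiliary exponent $b'\in(b,c)$ to manufacture slack is a clean way to handle what the paper leaves unsaid.

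There is one small slip in your sibling check. In an arbitrary metric space, disjointness of the open balls $B(x'_j,\rho)$ and $B(x'_k,\rho)$ only forces $\mD(x'_j,x'_k)\ge\rho$ (each center lies outside the other ball), not $\ge 2\rho$; your inequality $2\rho\le\mD(x'_j,x'_k)$ is therefore unjustified. The fix is immediate within your own framework: take $r_i=\rho/3$ instead of $\rho/2$, so that $r_i+r_i=2\rho/3<\rho\le\mD(x'_j,x'_k)$, while the parent--child bound only improves. The log-strength count then needs $\rho^{-b'}\ge (3/\rho)^{b}=3^{b}\rho^{-b}$, i.e.\ $\rho^{-(b'-b)}\ge 3^{b}$, which again holds for all sufficiently small $\rho$.
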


\begin{proof}
The root of the ball tree is centered at any point in $Y$ and has radius $r_0=\tfrac14$. For each successive $i\geq 1$, let $r_i \in(0, r_{i-1}/4)$ be a positive number small enough that for every depth $i-1$ tree node $w=(x,r_{i-1})$, the sub-ball $B(x,r_{i-1}/2)$ contains $n_i = \lceil 2^{r_i^{-b}} \rceil$ disjoint balls
of radius $r_i$. (Denote by $\mB_w$ the collection of the corresponding disjoint  \myballs.) Such $r_i$ exists by Lemma~\ref{lem:packing-LCD}.
The set of children of $w$ is defined to be $\mB_w$.
\end{proof}

We re-use Construction~\ref{con:LB-payoffs} for metric space $(Y,\mD)$ and ball-tree $T$, with $\delta_i\equiv \tfrac13$.
Thus, we construct a problem instance $\prob_\lambda$ for each lineage over $\lambda$, and a distribution $\mP_T$ over problem instances $\prob_\lambda$. Recall that a problem instance is a distribution over (deterministic) payoff functions $\pi:X\to [0,1]$, which are Lipschitz by Lemma~\ref{lm:LB-Lipschitz}.

Fix a complete lineage $\lambda$, and let
    $\mathbf{w}(\lambda)=(w_0,w_1,\, \ldots)$
be the associated end of the ball-tree. For each $i\geq 1$, let $B_i$ be the ball in $(Y,\mD)$ corresponding to tree node $w_i$. Let
    $\mu = \E_{\pi\sim \prob_\lambda}[\pi]$
be the expected payoff function corresponding to $\prob_{\mathcal{Q}}$. Then
then $\mu$ achieves its maximum value
$\tfrac12 + \tfrac{1}{18} \sum_{i=0}^{\infty} r_i$
at the unique point $x^* \in \cap_{i=0}^{\infty} B_i$.
At any point $x \not\in B_j$, we have
\begin{align*}
\mu(x^*) - \mu(x) \;
    \geq \; \textstyle{ \left(\tfrac{1}{18}\, \sum_{i=j}^{\infty} r_i\right)}
  -  \textstyle{ \left( \tfrac{1}{18}\, \sum_{i=j+1}^{\infty} r_i \right) \; = \; \tfrac{1}{18}\, r_j.}
\end{align*}

We now finish the lower bound proof as in the proof of Lemma~\ref{lm:ball-tree-LB}. Fix depth $i-1$ node $w$ in the ball-tree, and let
    $w^1,w^2 \LDOTS w^{n_i}$
be the children of $w$ in the ball-tree. Let $\lambda(w)$ be the unique child of $w$ contained in the lineage $\lambda$. Consider the
sets
   $\lambda_0 = \lambda \setminus \lambda(w)$
and
   $\lambda_j = \lambda_0 \cup \{w^j\}$
for $j=1,2,\ldots,n_i$.  By Corollary~\ref{cor:ensemble}, the distributions
$\left(
\prob_{\lambda_0},\prob_{\lambda_1},\ldots,\prob_{\lambda_{n_i}}
\right)$
constitute an $(\eps,\delta,k)$-ensemble
for $\eps=r_i/18$, $\delta=\tfrac13,$
and $k=n_i$.
Consequently, for $t_i = r_i^{-b}$, the inequality
$t_i < \ln(17k)/2\delta^2$ holds, and we
obtain a lower bound of
  $$ R_{(\A, \, \prob_{\lambda_j})}(t_i) > \eps\, t_i / 2 = \Omega( r_i^{1-b})
 = \Omega(t_i^{(b-1)/b}) $$
for at least half of the distributions $\prob_{\lambda_j}$
in the ensemble.  Recalling that $\gamma < \tfrac{b-1}{b}$,
we see that the problem is not $t^{\gamma}$-tractable.

%%%%%%%%%%%
\subsubsection{The $\MaxMinLCD$ upper bound: proofs for
Theorem~\ref{thm:experts-MaxMinLCD-body} and Theorem~\ref{thm:experts-MaxMinLCD-unif-body}
}
\label{sec:MaxMinLCD-UB}

\newcommand{\NaiveSample}{\ensuremath{\text{{\sc NaiveSample}}}}

First, let us incorporate the analysis of $\NaiveExp(b)$ via the following lemma.

\begin{lemma}\label{lm:ULproblem-recap}
Consider an instance $(X,\mD,\prob)$ of the \FFproblem, and let $x^*\in X$ be an optimal point. Fix subset $U\subset X$ which contains $x^*$, and let  $b>\LCD(U)$. Then for any sufficiently large $T$ and $\delta = T^{-1/(b+2)}$ the following holds:
\begin{itemize}
\item[(a)] Let $S$ be a $\delta$-hitting set for $U$ of cardinality $|S|\leq N_\delta(U)$. Consider the feedback of all points in $S$ over $T$ rounds; let $x$ be the point in $S$ with the largest sample average (break ties arbitrarily). Then
	$$ \Pr[\mu(x^*) - \mu(x) <O(\delta\log T)]> 1-T^{-2}.$$

\item[(b)] For a \ULproblem\ and $b\geq 2$, property (a) holds for $\delta = T^{-1/b}$.

\end{itemize}
\end{lemma}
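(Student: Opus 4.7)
The plan is to replay the analyses of Theorems~\ref{thm:ffproblem-naive} and~\ref{thm:ulproblem-naive}, with the single adaptation that the hitting set $S$ lives in the subset $U$ rather than in all of $X$. Since the statement is a per-phase encapsulation of those earlier results, nothing genuinely new is required; the work is to identify where the ambient space $X$ can be replaced by $U$ without breaking the argument.

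For part (a), I would first note that $b > \LCD(U)$ forces $N_\delta(U) < 2^{\delta^{-b}}$ for all sufficiently small $\delta$, so $|S| < 2^{\delta^{-b}}$ when $\delta = T^{-1/(b+2)}$ and $T$ is large. A Chernoff bound plus a union bound over the at most $|S|$ points of $S$ then gives, with probability at least $1 - T^{-2}$, the uniform control $|\mu_T(x) - \mu(x)| \leq r_T = \sqrt{8 \log(T|S|)/T}$ for all $x \in S$; the specific choice of $\delta$ is calibrated so that $\delta^{-b}/T \leq \delta^2$, hence $r_T = O(\delta)$. To transfer from $U$ to $x^*$ I would use that $x^* \in U$ and $S$ is a $\delta$-hitting set for $U$, so some $y \in S$ has $\mD(y,x^*) \leq \delta$ and by Lipschitz-continuity $\mu(y) \geq \mu(x^*) - \delta$. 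A standard three-line sandwiching between $\mu(x)$, $\mu_T(x)$, $\mu_T(y)$, and $\mu(y)$ finishes part (a) with an $O(\delta)$ bound, which is stronger than the claimed $O(\delta\log T)$.

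For part (b), I would re-run the covering-tree argument of Theorem~\ref{thm:ulproblem-naive} with the tree $\mathcal{T}$ built on $U$: the root is labeled $U$, every level-$j$ node $u$ labels a subset $X(u) \subseteq U$ of diameter at most $2^{-j}$, and the children of $u$ partition $X(u)$ into at most $N_{2^{-j}}(X(u))$ pieces. Because $b \geq 2$ and $b > \LCD(U) \geq \LCD(X(u))$ for every node, the telescoped fan-out bound $\log|\mathcal{T}_j| \leq C + \tfrac{4}{3}\,2^{jb}$ still holds with an absolute constant $C$; the uniformly-Lipschitz Chernoff estimate~\eqref{eq:thm-ulproblem-naive-Chernoff} does not reference the ambient space at all, and the definition of a clean phase, the recursion in Claim~\ref{cl:thm-ulproblem-naive}, and the resulting estimate $\Delta^*(u_0) = O(\delta)\cdot n = O(\delta \log T)$ at the root transfer verbatim with $\delta = T^{-1/b}$ and $n = \lceil \log(1/\delta) \rceil$. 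Combining with $\mu(x^*) - \mu(\sigma(u_0)) \leq \delta$, which again follows from $x^* \in U$ and the hitting-set property via Lipschitz-continuity of $\mu$, yields the claimed $\mu(x^*) - \mu(x) \leq O(\delta \log T)$.

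The only delicate step, and effectively the entire content of the proof beyond citing earlier arguments, is verifying that restricting the covering-tree construction to $U$ preserves every quantitative estimate used in Section~\ref{sec:FFproblem-uniform}. This reduces to the observation that the hypothesis $b > \LCD(U)$ propagates to every $X(u) \subseteq U$, which simultaneously controls the size of the Chernoff union bound and the level-wise fan-out. Once this is in hand, parts (a) and (b) are direct specializations of Theorems~\ref{thm:ffproblem-naive} and~\ref{thm:ulproblem-naive} respectively, applied to the sub-instance supported on $U$.
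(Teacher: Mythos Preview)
Your proposal is correct and matches the paper's intent. The paper states this lemma without proof, presenting it explicitly as an encapsulation of the analyses already carried out for Theorems~\ref{thm:ffproblem-naive} and~\ref{thm:ulproblem-naive}; your write-up supplies precisely the missing verification---that replacing the ambient space $X$ by the subset $U$ containing $x^*$ leaves every quantitative step intact because $b>\LCD(U)$ propagates to all subsets $X(u)\subseteq U$ in the covering tree---which is exactly the right observation.
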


\xhdr{Transfinite LCD decomposition.}
We redefine the \emph{transfinite fat decomposition} from Section~\ref{sec:pmo} with respect to the log-covering dimension rather than the covering dimension.

\begin{definition}\label{def:fatness-transfinite-experts}
Fix a metric space $(X,\mD)$. Let $\beta$ denote an arbitrary ordinal.
A \emph{transfinite LCD decomposition} of depth $\beta$ and dimension $b$ is a  transfinite sequence
	$\{S_\lambda\}_{0 \leq \lambda \leq \beta}$
of closed subsets of $X$ such that:
\begin{description}
\item[(a)] $S_0 = X$, $S_\beta = \emptyset$, and
	$S_\nu \supseteq S_\lambda$ whenever $\nu < \lambda$.
\item[(b)] if $V\subset X$ is closed, then the set
    $\{\text{ordinals } \nu \leq \beta$:\, $V \mbox{ intersects } S_\nu \}$
%of ordinals $\nu \leq \beta$ such that $V \mbox{ intersects } S_\nu$
has a maximum element.
\item[(c)] for any ordinal $\lambda \leq \beta$ and any open set
$U\subset X$ containing $S_{\lambda+1}$ we have
	$\LCD(S_\lambda \setminus U) \leq b$.
\end{description}
\end{definition}

The existence of suitable decompositions and the connection to $\MaxMinLCD$ is derived exactly as in %Proposition 3.15 in~\cite{LipschitzMAB-stoc08}.
Proposition~\ref{prop:fatness-dim}

\begin{lemma} \label{prop:fatness-dim-experts}
For every compact metric space $(X,\mD)$, $\MaxMinLCD(X)$ is equal to the infimum of all $b$ such that $X$ has a transfinite LCD decomposition of dimension $b$.
\end{lemma}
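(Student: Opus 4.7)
The plan is to mirror the proof of Proposition~\ref{prop:fatness-dim}, substituting $\LCD$ for $\COV$ and $\MaxMinLCD$ for $\MaxMinCOV$ throughout. The arguments are essentially topological and rely only on two structural properties of the dimensionality notion used: monotonicity with respect to inclusion (which is immediate from~\refeq{eq:LCD}) and finite subadditivity, namely $\LCD(A \cup B) \leq \max(\LCD(A),\LCD(B))$. The latter holds because if $N_r(A), N_r(B) \leq 2^{r^{-b}}$ for all sufficiently small $r$, then $N_r(A \cup B) \leq 2 \cdot 2^{r^{-b}}$ and $\log\log(2\cdot 2^{r^{-b}})/\log(1/r) \to b$ as $r \to 0$.

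For the ``$\MaxMinLCD(X)$ is a lower bound'' direction, assume a transfinite LCD decomposition $\{S_\lambda\}$ of dimension $b$ exists and suppose for contradiction that $\MaxMinLCD(X) > b$, i.e., there is $Y \subseteq X$ with $\MinLCD(Y) > b$. By transfinite induction on $\lambda$, I would show that $Y \subseteq S_\lambda$. The zero case is trivial and the limit case uses that (by compactness and property~(b) of Definition~\ref{def:fatness-transfinite-experts}, arguing as in Proposition~\ref{prop:fatness-decomposition-closed}) $S_\lambda$ contains $\bigcap_{\nu < \lambda} S_\nu$. For the successor case, if $Y \subseteq S_\lambda$ but there exists $x \in Y \cap (S_\lambda \setminus S_{\lambda+1})$, then closedness of $S_{\lambda+1}$ gives a positive distance $2\eps$ from $x$ to $S_{\lambda+1}$, so $B = B(x,\eps) \cap Y$ is open in $(Y,\mD)$ and the larger ball in $X$ is contained in $S_\lambda \setminus U$ for the open $\eps$-neighborhood $U$ of $S_{\lambda+1}$. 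Property~(c) gives $\LCD(B) \leq b$, contradicting $\MinLCD(Y) > b$. Since $Y \subseteq S_\beta = \emptyset$, we get the contradiction.

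For the other direction, given $b > \MaxMinLCD(X)$ I would construct the decomposition by transfinite recursion exactly as in Proposition~\ref{prop:fatness-dim}: call $x \in (Z,\mD)$ \emph{$b$-LCD-thick} if every open neighborhood $U \ni x$ in $Z$ satisfies $\LCD(U) \geq b$, let $\FatPt_{\LCD}(Z,b)$ be the (closed) set of such points, and define $S_0 = X$, $S_{\lambda+1} = \FatPt_{\LCD}(S_\lambda, b)$, with intersections at limit ordinals. The cardinality argument from Proposition~\ref{prop:fatness-dim} shows that for some $\lambda$ less than any ordinal of cardinality exceeding $|X|$, we have $S_{\lambda+1} = S_\lambda$; if this $S_\lambda$ were nonempty, it would have $\MinLCD(S_\lambda) \geq b$, violating $\MaxMinLCD(X) < b$. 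Property~(b) of the new definition follows from compactness plus the finite intersection property, as in Proposition~\ref{prop:fatness-decomposition-closed}. For property~(c), any open $U \supseteq S_{\lambda+1}$ makes $T = S_\lambda \setminus U$ a compact subset of $\ThinPt_{\LCD}(S_\lambda,b)$, hence coverable by finitely many open sets $V_1,\ldots,V_m$ with $\LCD(V_i) < b$; finite subadditivity of $\LCD$ then yields $\LCD(T) \leq b$.

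The main thing to watch is the finite subadditivity of $\LCD$, which is what makes the compactness argument in the successor step of the construction go through; unlike ordinary covering dimension this is not a fractal ``additivity'' statement but rather a cheap union bound on $N_r$, and is the only place where the $\log\log$ structure of $\LCD$ (as opposed to $\COV$) matters. Once that is verified, the remainder of the proof is purely a transcription of the argument from Proposition~\ref{prop:fatness-dim}.
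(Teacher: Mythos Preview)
Your proposal is correct and matches the paper's approach exactly: the paper itself says the lemma ``is derived exactly as in Proposition~\ref{prop:fatness-dim}'', and you carry out precisely that transcription, correctly flagging finite subadditivity of $\LCD$ as the one place where the $\log\log$ structure needs a (trivial) check. Your handling of the limit case via property~(b) of Definition~\ref{def:fatness-transfinite-experts} is also the right adaptation, since that definition replaces the explicit limit-ordinal condition of Definition~\ref{def:fatness-transfinite} with the max-element property.
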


%\newcommand{\DpthOracle}{{\mathtt{Depth}}}
%\newcommand{\DCovOracle}{{\mathtt{Cover}}}
%\newcommand{\ve}{\varepsilon}
%\newcommand{\PhaseAlg}{\ensuremath{\A_{\mathtt{ph}}}}
%\newcommand{\rank}{\ensuremath{\mathtt{depth}}}

%We next describe an algorithm $\A_d$ satisfying Theorem~\ref{thm:PMO}.

In what follows, let us fix metric space $(X,\mD)$ and $b>\MaxMinLCD(X)$, and let
	$\{S_\lambda\}_{0 \leq \lambda \leq \beta}$
be a transfinite LCD decomposition of depth $\beta$ and dimension $b$. For each $x\in X$, let the \emph{depth} of $x$ be the maximal ordinal $\lambda$ such that $x\in S_\lambda$. (Such an ordinal exists by Definition~\ref{def:fatness-transfinite-experts}(b).)

\xhdr{Access to the metric space.}
The algorithm requires two oracles: the \emph{depth oracle} $\DpthOracle(\cdot)$ and the \emph{covering oracle} $\DCovOracle(\cdot)$. Both oracles input a finite collection $\F$  of open balls $B_0, B_1, \ldots, B_n$, given via the centers and the radii, and return a point in $X$. Let $B$ be the union of these balls, and let $\overline{B}$ be the closure of $B$. A call to oracle $\DpthOracle(\F)$ returns an arbitrary point $x\in \overline{B} \cap S_\lambda$, where $\lambda$ is the maximum ordinal such that $S_{\lambda}$ intersects $\overline{B}$. (Such an ordinal exists by Definition~\ref{def:fatness-transfinite-experts}(b).) Given a point $y^*\in X$ of depth $\lambda$, a call to oracle $\DCovOracle(y^*,\F)$ either reports that $B$ covers $S_{\lambda}$, or it returns an arbitrary point $x \in S_{\lambda} \setminus B$. A call to $\DCovOracle(\emptyset, \F)$ is equivalent to the call $\DCovOracle(y^*,\F)$ for some $y^*\in S_0$.

The covering oracle will be used to construct $\delta$-nets as follows. First, using successive calls to $\DCovOracle(\emptyset, \F)$ one can construct a $\delta$-net for $X$. Second, given a point $y^*\in X$ of depth $\lambda$ and a collection of open balls whose union is $B$, using successive calls to $\DCovOracle(y^*,\,\cdot)$ one can construct a $\delta$-net for $S_\lambda \setminus B$. The second usage is geared towards the scenario when $S_{\lambda+1} \subseteq B$ and  for some optimal strategy $x^*$ we have
	$x^*\in S_\lambda \setminus B$.
Then by Definition~\ref{def:fatness-transfinite-experts}(c) we have
	$\LCD(S_\lambda\setminus B)<b$,
and one can apply Lemma~\ref{lm:ULproblem-recap}.

\xhdr{The algorithm.}
Our algorithm proceeds in phases $i=1,2,3,\ldots$ of $2^i$ rounds each. Each phase $i$ outputs two strategies:  $x^*_i, y^*_i\in X$ that we call the \emph{best guess} and the \emph{depth estimate}. Throughout phase $i$, the algorithm plays the best guess $x^*_{i-1}$ from the previous phase. The depth estimate $y^*_{i-1}$ is used ``as if" its depth is equal to the depth of some optimal strategy. (We show that for a large enough $i$ this is indeed the case with a very high probability.)

In the end of the phase, an algorithm selects a finite set $A_i\subset X$ of \emph{active points}, as described below. Once this set is chosen, $x^*_i$  is defined simply as a point in $A_i$ with the largest sample average of the feedback (breaking ties arbitrarily). It remains to define $y^*_i$ and $A_i$ itself.

Let $T=2^i$ be the phase duration. Using the covering oracle, the algorithm constructs (roughly) the finest $r$-net containing at most $2^{\sqrt{T}}$ points. Specifically, the algorithm constructs  $2^{-j}$-nets $\mathcal{N}_j$, for $j = 0,1,2,\ldots$, until it finds the largest $j$ such that
	$\mathcal{N}_j$
contains at most $2^{\sqrt{T}}$  points. Let
	$r = 2^{-j}$ and $\mathcal{N} = \mathcal{N}_j$.

For each $x\in X$, let $\mu_T(x)$ be the sample average of the feedback during this phase. Let
\begin{align*}
%\label{eq:app-RT}
 \Delta_T(x) &= \mu^*_T - \mu_T(x),
        \text{~~~where~~~}
        \mu^*_T = \max(\mu_T, \mathcal{N})
\end{align*}

\noindent Define the depth estimate $y^*_i$ to be the output of the oracle call $\DpthOracle(\F)$, where
$$ \F = \{ B(x,r):\; x\in \mathcal{N} \text{~~and~~} \Delta_T(x)< r \}.
$$

Finally, let us specify $A_i$. Let $B$ be the union of balls
\begin{align}\label{eq:ffproblem-PMO-B}
\{ B(x,r):\; x\in \mathcal{N} \text{~~and~~} \Delta_T(x)> 2(r_T + r)\, \},
\end{align}
where
            $r_T = \sqrt{8\log(T\,|\mathcal{N}|)/T}$
is chosen so that by Chernoff Bounds we have
\begin{align}\label{eq:ffproblem-PMO-Chernoff}
 \Pr[ | \mu_T(x) - \mu(x)| <  r_T ] > 1-  (T \,|\mathcal{N}|)^{-3}
    \quad\text{for each $x\in \mathcal{N}$}.
\end{align}

\noindent Let  $\delta = T^{-1/b}$ for the \ULproblem, and
	$\delta = T^{-1/(b+2)}$
otherwise. Let
    $Q_T =  2^{\delta^{-b}} $
be the \emph{quota} on the number of active points. Given a point $y^*_{i-1}$ whose depth is (say) $\lambda$, algorithm uses the covering oracle to construct a $\delta$-net $\mathcal{N'}$ for $S_\lambda \setminus B$. Define $A_i$ as $\mathcal{N}'$ or an arbitrary $Q_T$-point subset thereof, whichever is smaller.\footnote{The interesting case here is $|\mathcal{N}'| \leq Q_T$.  If $\mathcal{N}'$ contains too many points, the choice of $A_i$ is not essential for the analysis.}

\xhdr{Sketch of the analysis.}
The proof roughly follows that of
%Theorem 3.16 in~\cite{LipschitzMAB-stoc08}
Theorem~\ref{thm:pmo}.
Call a phase \emph{clean} if the event in~\refeq{eq:ffproblem-PMO-Chernoff} holds for all $x\in \mathcal{N}_i$ and the appropriate version of this event holds for all $x\in A_i$. (The regret from phases which are not clean is negligible). On a very high level, the proof consists of two steps. First we show that for a sufficiently large $i$, if phase $i$ is clean then the depth estimate $y^*_i$ is correct, in the sense that it is indeed equal to the depth of some optimal strategy. The argument is similar to the one in Lemma~\ref{lm:tractability}. Second, we show that for a sufficiently large $i$, if the depth estimate $y^*_{i-1}$ is ``correct" (i.e. its depth is equal to that of some optimal strategy), and phase $i$ is clean, then the ``best guess" $x^*_i$ is good, namely $\mu(x^*_i)$ is within $O(\delta log T)$ of the optimum. The reason is that, letting $\lambda$ be the depth of $y^*_{i-1}$, one can show that for a sufficiently large $T$ the set $B$ (defined in~\refeq{eq:ffproblem-PMO-B}) contains	$S_{\lambda+1}$ and does not contain some optimal strategy. By definition of the transfinite LCD decomposition we have
    $\LCD(S_\lambda \setminus U) < b$,
so in our construction the quota $Q_T$ on the number of active points permits $A_i$ to be a $\delta$-cover of $S_\lambda \setminus U$. Now we can use Lemma~\ref{lm:ULproblem-recap} to guarantee the ``quality" of $x^*_i$. The final regret computation is similar to the one in the proof of Theorem~\ref{thm:ffproblem-naive}.

\section{Conclusions}
\label{sec:conclusions}

\cite{LipschitzMAB-stoc08} (i.e., Sections~\ref{sec:adaptive-exploration} and Section~\ref{sec:pmo} of this paper) introduced the \problem and motivated a host of open questions. Many of these questions have been addressed in the follow-up work, including~\cite{DichotomyMAB-soda10} (i.e., the rest of this paper), and the work described in Section~\ref{sec:related-followup}. Below we describe the current state of the open questions.

First, the adaptive refinement technique from Section~\ref{sec:adaptive-exploration} can potentially be used in other settings in explore-exploit learning where one has side information on similarity between arms. Specific potential applications include adversarial MAB, Gaussian Process Bandits, and dynamic pricing. Also, stronger analysis of this technique appears possible in the context of ranked bandits (see \cite{ZoomingRBA-icml10} for details).

Second, it is desirable to consider MAB with more general structure on payoff functions. A particularly attractive target would be structures that subsume Lipschitz MAB and Linear MAB.

\OMIT{Third, the topological characterization of the max-min-covering dimension (existence of a ball-tree and a transfinite fat decomposition, see Section~\ref{sec:pmo}), and a similar result for max-min-log-covering-dimension in Section~\ref{sec:FFproblem} can potentially be extended to an abstract notion of dimensionality, which may be of independent mathematical interest.}

Third, a recurring theme in algorithm design is structural results that assert that a problem instance either has simple structure, or it contains a specific type of complex substructure that empowers the lower bound analysis. Our work contributes another example of this theme, in the form of dichotomy results in point-set topology (e.g. existence of a transfinite fat decomposition versus existence of a ball tree). It would potentially be interesting to find other applications of this technique.

% ############ BIBLIOGRAPHY ##############

%\newpage
\begin{small}
\bibliographystyle{plainnat}
\bibliography{bib-abbrv,bib-bandits,bib-random,bib-extras,bib-math,bib-embedding,bib-slivkins,bib-AGT,bib-nodeLabeling,bib-networking,bib-ML}
\end{small}

\appendix

\section{KL-divergence techniques}
\label{sec:KL-divergence}

\newcommand{\diff}{\ensuremath{\operatorname{d}\,}}

\newcommand{\KL}{\mathtt{KL}}

\newcommand{\given}{\,|\,}

All lower bounds in this paper heavily use the notion of Kullback-Leibler divergence (\emph{KL-divergence}). Our usage of the KL-divergence techniques is encapsulated in several statements in the body of the paper (Theorem~\ref{thm:LB-technique-MAB}, Theorem~\ref{thm:LB-technique}, and Claim~\ref{cl:logT-KLdiv}), whose proofs are fleshed out in this appendix and may be of independent interest. To make this appendix more  self-contained, we restate the relevant definitions and theorem statements from the body of the paper, and provide sufficient background.

%KL-divergence of two probability measures is defined as follows.

\subsection{Background}

\begin{definition}  \label{def:kldiv}
Let $\Omega$ be a finite set with
two probability measures $p,q$.  Their \emph{KL-divergence} is the sum
\[
\KL(p;q) = \sum_{x \in \Omega} p(x) \ln \left(
\frac{p(x)}{q(x)} \right),
\]
with the convention that $p(x) \ln(p(x)/q(x))$
is interpreted to be $0$ when $p(x)=0$ and $+\infty$
when $p(x)>0$ and $q(x)=0$.  If $Y$ is a random
variable defined on $\Omega$ and taking values in
some set $\Gamma$, the \emph{conditional
KL-divergence} of $p$ and $q$
given $Y$ is the sum
\[
\KL(p;q \given Y) = \sum_{x \in \Omega} p(x)
\ln \left( \frac{p(x \given Y = Y(x))}{q(x \given Y = Y(x))} \right),
\]
where terms containing $\log(0)$ or $\log(\infty)$ are
handled according to the same convention as above.
\end{definition}

The definition can be applied to an infinite sample
space $\Omega$ provided that $q$ is absolutely
continuous with respect to $p$.  For details,
see \cite{Bobby-thesis}, Chapter 2.7.
The following lemma summarizes some standard facts about
KL-divergence; for proofs, see \citep{CoverThomas,Bobby-thesis}.
\begin{lemma} \label{lem:kl}
Let  $p,q$ be two probability measures on
a measure space $(\Omega,\mathcal{F})$ and
let $Y$ be a random variable defined on
$\Omega$ and taking values in some finite set
$\Gamma$.  Define a
pair of probability measures $p_Y,q_Y$ on $\Gamma$ by
specifying that $p_Y(y) = p(Y=y), q_Y(y) = q(Y=y)$ for
each $y \in \Gamma$.  Then
\[
\KL(p;q) = \KL(p;q \given Y) + \KL(p_Y;q_Y),
\]
and $\KL(p;q \given Y)$ is non-negative.
\end{lemma}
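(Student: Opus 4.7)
The plan is to prove the chain-rule identity first and then deduce non-negativity of $\KL(p;q\given Y)$ as a short corollary. The key algebraic observation is the factorization $p(x) = p_Y(Y(x))\cdot p(x\given Y=Y(x))$ (and the same for $q$), which is essentially the definition of conditional probability. Taking the logarithm of the ratio $p(x)/q(x)$ splits this into two additive pieces:
\begin{align*}
\ln\frac{p(x)}{q(x)}
 \;=\; \ln\frac{p(x\given Y=Y(x))}{q(x\given Y=Y(x))}
 \;+\; \ln\frac{p_Y(Y(x))}{q_Y(Y(x))}.
\end{align*}
I would need to be slightly careful with the convention from Definition~\ref{def:kldiv} when some of these factors vanish: on the $p$-null event $\{x:p(x)=0\}$ both summands are declared to be $0$ by convention, and if $p(x)>0$ while $q(x)=0$ then (by the factorization) at least one of the two summands is $+\infty$, which matches the convention on both sides of the identity.

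Next I would multiply through by $p(x)$ and sum over $x\in\Omega$. The first summand is by definition $\KL(p;q\given Y)$. For the second summand I group $x$'s by the value of $Y$:
\begin{align*}
\sum_{x\in\Omega} p(x)\,\ln\frac{p_Y(Y(x))}{q_Y(Y(x))}
 \;=\; \sum_{y\in\Gamma}\Bigl(\textstyle\sum_{x:Y(x)=y}p(x)\Bigr)\,\ln\frac{p_Y(y)}{q_Y(y)}
 \;=\; \sum_{y\in\Gamma} p_Y(y)\,\ln\frac{p_Y(y)}{q_Y(y)}
 \;=\; \KL(p_Y;q_Y),
\end{align*}
which yields the claimed chain rule.

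For non-negativity, I would rewrite the conditional KL-divergence in the slice form
\begin{align*}
\KL(p;q\given Y) \;=\; \sum_{y\in\Gamma} p_Y(y)\,\KL\bigl(p(\cdot\given Y=y)\,;\,q(\cdot\given Y=y)\bigr),
\end{align*}
which follows from the same regrouping trick used above. Since each inner term is an ordinary KL-divergence between two probability measures on $\{x:Y(x)=y\}$, it is non-negative by the standard Gibbs inequality (an application of Jensen's inequality to $-\ln$, using $\sum_x p(x\given Y=y)=1$). As a weighted sum of non-negative quantities with weights $p_Y(y)\ge 0$, the conditional KL-divergence is therefore non-negative. The only obstacle worth flagging is bookkeeping around the $0\ln 0$ and $p>0, q=0$ cases; this is genuinely routine but must be handled consistently to ensure the identity holds as an equation in $[0,+\infty]$ rather than just formally.
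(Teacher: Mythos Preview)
Your proof is correct and is the standard argument for the chain rule of KL-divergence. Note, however, that the paper does not actually prove this lemma: it is stated as a summary of standard facts, with the proof deferred to external references (Cover--Thomas and Kleinberg's thesis). So there is no ``paper's own proof'' to compare against; your argument is exactly the kind of elementary verification those references contain.
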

An easy corollary is the following lemma
which expresses the KL-divergence of two
distributions on sequences as a sum of
conditional KL-divergences.
\begin{lemma} \label{lem:kl-chainrule}
Let $\Omega$ be a sample space, and suppose $p,q$
are two probability measures on $\Omega^n$,
the set of $n$-tuples of elements of $\Omega$.
For a sample point $\vec{\omega} \in \Omega^n$,
let $\omega^i$ denote its  first $i$ components.
If $p^i,q^i$ denote the probability
measures induced on $\Omega^i$ by $p$
(resp. $q$) then
\[
\KL(p;q) = \textstyle{\sum_{i=1}^n}\, \KL(p^i;q^i \given \omega^{i-1}).
\]
\end{lemma}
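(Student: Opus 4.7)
The plan is to prove the chain rule by induction on $n$, using Lemma~\ref{lem:kl} as the ``one-step'' decomposition that unwraps the last coordinate (or, equivalently, the first $n-1$ coordinates) at each stage.

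For the base case $n=1$, the random variable $\omega^0$ is the empty tuple, so conditioning on it is vacuous and $\KL(p^1;q^1\given \omega^0) = \KL(p^1;q^1) = \KL(p;q)$, which matches the claimed identity. For the inductive step, assume the chain rule holds for measures on $\Omega^{n-1}$ and consider $p,q$ on $\Omega^n$. First I would apply Lemma~\ref{lem:kl} to the pair $(p,q)$ with the random variable $Y = \omega^{n-1}$ (the projection onto the first $n-1$ coordinates); note that the push-forward measures $p_Y,q_Y$ coincide with $p^{n-1},q^{n-1}$. This yields
\begin{align*}
\KL(p;q) \;=\; \KL(p;q\given \omega^{n-1}) + \KL(p^{n-1};q^{n-1}).
\end{align*}
The inductive hypothesis, applied to $p^{n-1}$ and $q^{n-1}$, rewrites the second term as $\sum_{i=1}^{n-1} \KL(p^i;q^i\given \omega^{i-1})$. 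Since $p=p^n$ and $q=q^n$, the first term equals $\KL(p^n;q^n\given \omega^{n-1})$, which is precisely the $i=n$ summand, completing the induction.

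The only subtlety I anticipate is bookkeeping with the conditional KL-divergence of Definition~\ref{def:kldiv}: one must verify that the factorization used by Lemma~\ref{lem:kl} on $\Omega^n$ with $Y=\omega^{n-1}$ truly produces the conditional term $\KL(p;q\given \omega^{n-1})$ as defined, i.e.\ a sum over $\vec{\omega}\in\Omega^n$ of $p(\vec{\omega})\ln\bigl(p(\omega_n\given \omega^{n-1})/q(\omega_n\given \omega^{n-1})\bigr)$, and that $p_Y=p^{n-1}$. Both are immediate from the definitions once one writes out conditional probabilities explicitly, so the main obstacle is really just notational hygiene rather than any substantive technical difficulty. (Throughout, the convention on $0\ln 0$ and $p(x)>0$ with $q(x)=0$ inherited from Definition~\ref{def:kldiv} handles the boundary cases uniformly, so no separate argument is needed for measures that are not mutually absolutely continuous beyond insisting $q\ll p$ when extending to infinite $\Omega$.)
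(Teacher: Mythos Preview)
Your proposal is correct and matches the paper's approach exactly: the paper's proof is a one-line remark that the formula $\KL(p^m;q^m) = \sum_{i=1}^m \KL(p^i;q^i \given \omega^{i-1})$ follows by induction on $m$ using Lemma~\ref{lem:kl}. Your write-up simply spells out that induction in detail, with the same choice $Y=\omega^{m-1}$ at each step.
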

\begin{proof}  For $m=1,2,\ldots,n$, the formula
$\KL(p^m;q^m) = \sum_{i=1}^m \KL(p^i;q^i \given \omega^{i-1})$
follows by induction on $m$, using Lemma~\ref{lem:kl}.
\end{proof}

The following three lemmas will also be useful
in our lower bound argument. They may have appeared in the literature, but we cannot provide specific citations. We provide proofs for the sake of completeness. Here and henceforth
we will use the following notational convention:
for real numbers $a,b \in [0,1]$,
$\KL(a;b)$ denotes the KL-divergence
$\KL(p;q)$ where $p,q$ are probability
measures on $\{0,1\}$ such that $p(\{1\})=a,
\, q(\{1\}) = b.$  In other words,
\[
\KL(a;b) = a \ln \left( \tfrac{a}{b} \right) +
(1-a) \ln \left( \tfrac{1-a}{1-b} \right).
\]
\begin{lemma} \label{lem:kl-bernoulli}
For any $0 < \eps < y \leq 1$,
$\KL(y-\eps;y) < \eps^2/y(1-y).$
\end{lemma}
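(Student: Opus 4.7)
The strategy is to unfold the definition of $\KL$ into two logarithmic terms and apply the elementary inequality $\ln(1+z) \le z$, which is strict whenever $z \neq 0$, to each term separately. Concretely, I would write
\begin{align*}
\KL(y-\eps;\,y)
   &= (y-\eps)\ln\!\Bigl(1 - \tfrac{\eps}{y}\Bigr)
      + (1-y+\eps)\ln\!\Bigl(1 + \tfrac{\eps}{1-y}\Bigr),
\end{align*}
and then bound the first logarithm by $-\eps/y$ and the second by $\eps/(1-y)$. Multiplying out the prefactors gives $(y-\eps)(-\eps/y) = -\eps + \eps^2/y$ and $(1-y+\eps)(\eps/(1-y)) = \eps + \eps^2/(1-y)$; the $\pm\eps$ terms cancel, and what remains collapses to
\begin{align*}
\tfrac{\eps^2}{y} + \tfrac{\eps^2}{1-y} \;=\; \tfrac{\eps^2}{y(1-y)},
\end{align*}
which is exactly the target bound. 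Strictness follows because the hypothesis $0<\eps<y$ (and, to make the right-hand side finite, $y<1$) ensures that both arguments $-\eps/y$ and $\eps/(1-y)$ are nonzero, so both invocations of $\ln(1+z) \le z$ are strict.

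The only real point of care is the boundary case $y=1$, where the right-hand side $\eps^2/(y(1-y))$ is $+\infty$ and both logarithms in the expansion diverge; I would either treat the inequality as vacuous under the usual convention or note that the hypothesis is implicitly $y<1$. No genuine obstacle is expected — the entire proof is a one-line calculation once the logarithm bound is invoked — so the write-up can be kept very short.
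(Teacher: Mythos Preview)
Your proposal is correct and is essentially identical to the paper's proof: both expand the definition of $\KL(y-\eps;y)$, apply $\ln(1+z)\le z$ to each term, and simplify to $\eps^2/(y(1-y))$. If anything, you are slightly more careful than the paper, which states the logarithm inequality only for $z>0$ even though the first term requires $z=-\eps/y<0$; your remark about the $y=1$ boundary is also a nice touch the paper omits.
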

\begin{proof}
A calculation using the inequality $\ln(1+x)<x$
(valid for $x > 0$) yields
\begin{align*}
\KL(y-\eps;y) &= (y-\eps) \ln \left( \tfrac{y-\eps}{y} \right)
+ (1-y+\eps) \ln \left( \tfrac{1-y+\eps}{1-y} \right) \\
&< (y-\eps) \left( \tfrac{y-\eps}{y} - 1 \right)
+ (1-y+\eps) \left( \tfrac{1-y+\eps}{1-y} - 1 \right) \\
& = \tfrac{-\eps(y-\eps)}{y} + \tfrac{\eps(1-y+\eps)}{1-y}
 = \tfrac{\eps^2}{y(1-y)}.\qedhere
\end{align*}
\end{proof}

\begin{lemma} \label{lem:kl-distinguishing}
Let $\Omega$ be a sample space with
two probability measures
$p,q$ whose KL-divergence is $\kappa.$  For
any event $\mathcal{E}$, the probabilities
$p(\mathcal{E}), \, q(\mathcal{E})$ satisfy
\[
q(\mathcal{E}) \geq
p(\mathcal{E}) \exp \left(
- \tfrac{\kappa + 1/e}{p(\mathcal{E})} \right).
\]
\end{lemma}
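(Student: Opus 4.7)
The plan is to reduce the statement to the Bernoulli case via a data-processing (coarsening) step, and then handle the resulting two-term expression by bounding the ``complementary'' summand from below using an elementary calculus fact.

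First, I would coarsen the sample space by partitioning it into $\mathcal{E}$ and $\neg\mathcal{E}$, and use the log-sum inequality (equivalently, the fact that $x \mapsto x\ln x$ is convex, or equivalently the well-known monotonicity of KL-divergence under measurable mappings, which follows from Lemma~\ref{lem:kl}) to obtain
\[
\kappa \;=\; \KL(p;q) \;\geq\; a\ln(a/b) + (1-a)\ln\tfrac{1-a}{1-b},
\]
where $a = p(\mathcal{E})$ and $b = q(\mathcal{E})$. The edge cases can be disposed of immediately: if $a=0$ the conclusion is trivial, and if $b=0$ with $a>0$ then $\kappa = +\infty$ under the convention of Definition~\ref{def:kldiv}, making the bound vacuous. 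If $b \geq a$ the desired inequality is immediate from $\ln(a/b) \leq 0$, so I will assume $0 < b < a$ from here on.

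Next, I would show the uniform lower bound
\[
(1-a)\ln\tfrac{1-a}{1-b} \;\geq\; -\tfrac{1}{e}
\qquad \text{for all } a,b \in [0,1].
\]
Viewed as a function of $b$ with $a$ fixed, this expression is non-decreasing in $b$ on $[0,1]$, so its minimum is attained at $b=0$, where it equals $(1-a)\ln(1-a)$. Writing $t = 1-a \in [0,1]$, this is $t\ln t$, and elementary calculus shows $t\ln t \geq -1/e$ on $[0,1]$, with the minimum attained at $t = 1/e$. This furnishes the claimed lower bound.

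Combining the two displayed inequalities gives $a\ln(a/b) \leq \kappa + 1/e$, which rearranges directly to
\[
b \;\geq\; a\,\exp\!\left(-\tfrac{\kappa + 1/e}{a}\right),
\]
i.e. the stated bound. The only step that requires genuine thought is isolating the right universal constant for the ``complementary'' term; once one observes that the worst case of $(1-a)\ln\tfrac{1-a}{1-b}$ reduces to minimizing $t\ln t$ on the unit interval, the constant $1/e$ drops out and the remainder is routine algebra.
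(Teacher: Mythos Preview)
Your proof is correct and follows essentially the same route as the paper's: both coarsen via the indicator of $\mathcal{E}$ using Lemma~\ref{lem:kl}, then bound the complementary term $(1-a)\ln\tfrac{1-a}{1-b}$ below by $-1/e$ using the elementary fact $t\ln t \ge -1/e$, and rearrange. The only cosmetic difference is that the paper substitutes $c=(1-a)/(1-b)$ and writes that term as $(1-b)\,c\ln c \ge -(1-b)/e \ge -1/e$, whereas you minimize over $b$ directly; both arrive at the same inequality $a\ln(a/b)\le \kappa+1/e$.
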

A consequence of the lemma, stated in less quantitative
terms, is the following: if $\kappa=\KL(p;q)$ is bounded above
and $p(\mathcal{E})$ is bounded away from zero then
$q(\mathcal{E})$ is bounded away from zero.
\begin{proof}
Let $a = p(\mathcal{E}), \, b = q(\mathcal{E}),
c = (1-a)/(1-b)$.
Applying Lemma~\ref{lem:kl} with $Y$ as the indicator
random variable of $\mathcal{E}$ we obtain
\[
\kappa = \KL(p;q) \geq \KL(p_Y;q_Y) =
a \ln \left( \tfrac{a}{b} \right) +
(1-a) \ln \left( \tfrac{1-a}{1-b} \right) =
a \ln \left( \tfrac{a}{b} \right) + (1-b)\, c \ln(c).
\]
Now using the inequality $c \ln(c) \geq -1/e$,
(valid for all $c \geq 0$) we obtain
\[
\kappa \geq a \ln(a/b) - (1-b)/e \geq a \ln(a/b) - 1/e.
\]
The lemma follows by rearranging terms.
\end{proof}
\begin{lemma} \label{lem:reverse-pinsker}
Let $p,q$ be two probability measures, and suppose that
for some $\delta \in (0,\tfrac{1}{2}]$ they satisfy
\[
\forall \mbox{\rm \ events } \mathcal{E}, \quad
1-\delta < \tfrac{q(\mathcal{E})}{p(\mathcal{E})}
< 1+\delta
\]
Then $\KL(p;q) < \delta^2.$
\end{lemma}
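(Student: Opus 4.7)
\textbf{Proof proposal for Lemma~\ref{lem:reverse-pinsker}.}
The plan is to use the fact that KL-divergence is, up to a signed first-order term, dominated by chi-squared distance, and then note that the chi-squared distance is small whenever the Radon--Nikodym-like ratio is close to $1$. Working in the finite sample space setting of Definition~\ref{def:kldiv}, specializing the hypothesis to singleton events $\mathcal{E} = \{\omega\}$ yields a pointwise ratio bound: for every $\omega$ with $p(\omega)>0$, the ratio $r(\omega) := q(\omega)/p(\omega)$ lies in $(1-\delta,\,1+\delta) \subseteq (\tfrac12,\tfrac32)$ since $\delta\le\tfrac12$. In particular, $q$ and $p$ have the same support, so $\KL(p;q) = -\sum_\omega p(\omega)\ln r(\omega)$ is finite.

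The first key step is an elementary one-variable inequality: for every $r\in(\tfrac12,\tfrac32)$,
\[
 -\ln r \;\le\; -(r-1) + (r-1)^2.
\]
I would prove this by setting $f(x) = -\ln(1+x) + x - x^2$ for $x\in(-\tfrac12,\tfrac12)$, computing $f'(x) = -x(1+2x)/(1+x)$, and observing that $f'$ changes sign from positive to negative at $x=0$, so $f$ attains its maximum $f(0)=0$ on $(-\tfrac12,\tfrac12)$; substituting $x = r-1$ gives the claim.

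The second key step is to combine this inequality with the normalization identity. Since $p$ and $q$ are probability measures on the same support, $\sum_\omega p(\omega)(r(\omega)-1) = \sum_\omega q(\omega) - \sum_\omega p(\omega) = 0$. Applying the elementary inequality pointwise and taking expectation under $p$ gives
\[
\KL(p;q) \;=\; \sum_\omega p(\omega)\bigl[-\ln r(\omega)\bigr]
\;\le\; \underbrace{-\sum_\omega p(\omega)\bigl(r(\omega)-1\bigr)}_{=\,0} \;+\; \sum_\omega p(\omega)\bigl(r(\omega)-1\bigr)^2.
\]
The final step is to use the hypothesis $|r(\omega)-1|<\delta$ to bound the remaining sum, giving $\KL(p;q) \le \sum_\omega p(\omega)\cdot\delta^2 = \delta^2$, with strict inequality unless $r\equiv 1$ (in which case $\KL(p;q)=0<\delta^2$).

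There is no real obstacle here; the only subtlety is verifying the quadratic upper bound on $-\ln(1+x)$ on a neighborhood wide enough to accommodate $\delta=\tfrac12$, which is why the hypothesis $\delta\le\tfrac12$ is used. The proof is purely routine calculus once the setup is in place, and it cleanly complements Lemma~\ref{lem:kl-bernoulli} (which bounds KL in the opposite direction for Bernoulli distributions close in the first moment).
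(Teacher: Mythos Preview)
Your proof is correct and essentially identical to the paper's: both specialize to singletons to get a pointwise ratio bound, apply the inequality $-\ln(1+x)\le -x+x^2$ on $(-\tfrac12,\tfrac12)$, kill the linear term via $\sum_\omega p(\omega)(r(\omega)-1)=0$, and bound the quadratic term by $\delta^2$. Your write-up is in fact slightly more careful, since you verify the elementary inequality via the sign of $f'(x)=-x(1+2x)/(1+x)$, whereas the paper states it (with a sign typo) and moves on.
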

\begin{proof}
We will prove the lemma assuming the sample space
is finite.  The result for general measure spaces
follows by taking a supremum.

For every $x$ in the sample space $\Omega$, let
$r(x) = \frac{q(x)}{p(x)}-1$ and note that $|r(x)| < \delta$
for all $x$.  Now we make use of the inequality
$\ln(1+x) \leq x-x^2$, valid for $x \geq -\tfrac{1}{2}.$
\begin{align*}
\KL(p;q) &= \textstyle{\sum_{x}}\, p(x) \ln \left( \tfrac{p(x)}{q(x)} \right)  \quad\quad\;
 = \textstyle{\sum_{x}}\, p(x) \ln \left( \tfrac{1}{1 + r(x)} \right) \\
&= - \textstyle{\sum_{x}}\, p(x) \ln(1+r(x))
\;\leq - \textstyle{\sum_{x}}\, p(x) [r(x) - (r(x))^2 ] \\
& < - \left( \textstyle{\sum_{x}}\, p(x) r(x) \right) +
    \delta^2 \left( \textstyle{\sum_{x}}\, p(x) \right)
\\
& = - \left( \textstyle{\sum_{x}}\, q(x) - p(x) \right) + \delta^2
 = \delta^2. \qedhere
\end{align*}
\end{proof}

\subsection{Bandit lower bound via $(\eps,k)$-ensembles}

We consider an MAB problem with i.i.d. payoffs where the algorithm is given a set of arms $X$ and a collection $\F$ of feasible payoff functions $X\to [0,1]$. We call it the \emph{feasible MAB problem} on $(X,\F)$. We will consider 0-1 payoffs; then for a problem instance with payoff function $f\in\F$, the reward from each action $x\in X$ is $1$ with probability $f(x)$, and $0$ otherwise.

\begin{definition*}[Definition~\ref{def:eps-k-ensemble}, restated]
Consider the feasible MAB problem on $(X,\mF)$.
An \emph{$(\eps,k)$-ensemble} is
 a collection of subsets $\F_1 \LDOTS \F_k \subset \F$ such that there exist mutually disjoint subsets
        $S_1 \LDOTS S_k \subset X$ and a
function $\mu_0 : X \to [\tfrac13, \tfrac23]$
such that
for each $i=1 \ldots k$ and each function $\mu_i\in \F_i$ the following holds:
(i) $\mu_i \equiv \mu_0$ on each $S_\ell$, $\ell\neq i$, and
(ii) $\sup(\mu_i, S_i) - \sup(\mu_0,X) \geq \eps$, and
(iii) $0\leq \mu_i-\mu_0\leq 2\eps$ on $S_i$.
\end{definition*}

\begin{theorem*}[Theorem~\ref{thm:LB-technique-MAB}, restated]
Consider the feasible MAB problem with 0-1 payoffs. Let  $\F_1, \ldots, \F_k$ be an $(\eps,k)$-ensemble,
where $k\geq 2$ and $\eps\in(0,\,\tfrac{1}{24})$. Then for any
    $t \leq \tfrac{1}{128}\, k\,\eps^{-2}$
and any bandit algorithm there exist at least $k/2$ distinct $i$'s such that the regret of this algorithm on any payoff function from $\F_i$ is at least $\tfrac{1}{60}\,\eps t$.
\end{theorem*}

%%%%%% OLD PROOF OF LOWER BOUND THEOREM IN THE BANDIT SETTING
%%%%%% WHICH IS MUCH MORE COMPLICATED!

\begin{proof}%{Theorem~\ref{thm:LB-technique-MAB}}
Let us specify the notation. Let $\Omega = X \times \{0,1\}$. Since we assume 0-1 payoffs, the $t$-step history of play
of a bandit algorithm \A\ can be expressed by
an element of $\Omega^t$ indicating the sequence
of arms selected and payoffs received.
Thus, an algorithm \A\ and a payoff function $\mu$
together determine a probability distribution on
$\Omega^t$ for every natural number $t$.  Fix
any (possibly randomized) algorithm \A\ and
consider the distribution $p$ determined by \A\ when
the payoff function is $\mu_0$.  Recall the
mutually disjoint sets $S_1,S_2,\ldots,S_k$ in the
definition of an $(\eps,k)$-ensemble.
For $1 \leq i \leq k$ and $1 \leq u \leq t$,
let $Y_{i,u}$ be the indicator random variable
of the event $x_u \in S_i$,
where $x_u$ denotes the arm selected by \A\ at
time $u$.  Let $Z_i = \sum_{u=1}^t Y_{i,u}$.

Since
    $\sum_{i=1}^k\; \E_p \left[ Z_i \right] \leq t$,
there must be at least $k/2$ indices $i$ such that
    $\E_p[Z_i] \leq t/k \leq 1/128\, \eps^2$.
Fix one such $i$, and an arbitrary $\mu_i \in \F_i$. In what follows, we will show
that $R_{(\A,\,\mu_i)}(t) \geq \eps t/60$.

Let $(x_u,y_u) \in X \times \{0,1\} = \Omega$
denote the arm selected and the payoff
received at time $u$, and
let $q$ denote the distribution on
$\Omega^t$ determined by \A\ and $\mu_i$.  We have

\begin{align*}
\KL(p^u;q^u \given \omega^{u-1}) &=
\sum_{\omega^u \in \Omega^u} p^u(\omega^u)
\;\ln \left( \frac{p^u(\omega^u \given \omega^{u-1})}{q^u(\omega^u \given \omega^{u-1})} \right) \\
&=
\sum_{\omega^u\in \Omega^u} p^u(\omega^u) \;\ln \left(
\frac{p^u(x_u \given \omega^{u-1})}{q^u(x_u \given \omega^{u-1})}
\cdot
\frac{p^u(y_u \given x_u,\omega^{u-1})}{q^u(y_u \given
x_u, \omega^{u-1})} \right) \\
&=
\sum_{\omega^u\in \Omega^u} p^u(\omega^u) \;\ln \left(
\frac{p^u(y_u \given x_u, \omega^{u-1})}{q^u(y_u \given
x_u, \omega^{u-1})} \right) \\
\intertext{[the distribution of $x_u$ given $\omega^{u-1}$
depends only on \A, not on distribution $p$ versus $q$.]}
&=
\sum_{\omega^{u-1}\in\Omega^{u-1}}\;\int_{x_u\in X}\;\sum_{y_u\in\{0,1\}}\;
p^u(y_u \given x_u, \omega^{u-1}) \;\ln \left(
\frac{p^u(y_u \given x_u, \omega^{u-1})}{q^u(y_u \given
x_u, \omega^{u-1})} \right)\;
\diff p^u(\,\cdot\,,\,\omega^{u-1}) \\
&=
\sum_{\omega^{u-1}\in\Omega^{u-1}}\;\int_{x_u\in X}\;
\KL( \mu_0(x_u);\, \mu_i(x_u) \given x_u, \omega^{u-1} )\;
\diff p^u(\,\cdot\,,\,\omega^{u-1}) \\
&=
\sum_{\omega^{u-1}\in\Omega^{u-1}}\;\int_{x_u\in S_i}\;
\KL( \mu_0(x_u)); \mu_i(x_u) \given x_u, \omega^{u-1} )\;
\diff p^u(\,\cdot\,,\,\omega^{u-1}) \\
\intertext{[because $\mu_0=\mu_i(x_u)$ when $x_u \not\in S_i$.]}
&\leq
\sum_{\omega^{u-1}\in\Omega^{u-1}}\;\int_{x_u\in S_i}\;
\frac{4\,\eps^2}{\mu_i(x_u) (1-\mu_i(x_u))}\;
\diff p^u(\,\cdot\,,\,\omega^{u-1}) \\
\intertext{[by Lemma~\ref{lem:kl-bernoulli}
and property (iii) in the definition of ``ensemble"}.]
&\leq
p^u(x_u \in S_i) \cdot \frac{4\eps^2}{3/16}.
\end{align*}
The last inequality holds because
    $\mu_i(x_u)\in [\tfrac13,\tfrac34]$.
The latter holds by property (iii) in the definition of the ``ensemble" and the assumptions that $\mu_0\in[\tfrac13,\tfrac23]$ and $\eps\leq \tfrac{1}{24}$.

Now we can write
\begin{align*}
\KL(p;q) = \sum_{u=1}^t \KL(p^u;q^u \given \omega^{u-1})
&\leq \left( \sum_{u=1}^t p^u(x_u \in S_i) \right) \cdot
\frac{64\;\eps^2}{3} \\
&= \E \left[ Z_i \right] \cdot \frac{64\;\eps^2}{3}\;
\leq \frac{1}{128\, \eps^2} \cdot \frac{64 \eps^2}{3} = \frac{1}{6}.
\end{align*}
Let $\mathcal{E}$ be the event that $Z_i \leq \frac{5t}{3k}.$
By Markov's inequality, $p(\mathcal{E}) \geq 0.4.$
Now using Lemma~\ref{lem:kl-distinguishing} along
with the bound $\KL(p;q) \leq 1/6,$ a short calculation
leads to the bound $q(\mathcal{E}) \geq 0.1,$ and
consequently,
\begin{align*}
\E_q[ t - Z_i ]
&\geq
q(\mathcal{E}) \E_q [ t - Z_i \given \mathcal{E} ] \\
&\geq
0.1 \cdot \left( t - \frac{5t}{3k} \right)
 \geq 0.1 \cdot \left( t - \frac{5t}{6} \right)
 = \frac{t}{60}.
\end{align*}
Assuming the payoff function is $\mu_i$,
the regret of algorithm \A\ increases by $\eps$
each time it chooses a arm $x_u \not\in S_i$.
Hence
\[
R_{(\A,\,\mu_i)}(t) \geq
\eps \E_q [ t - Z_i ] \geq \eps t / 60. \qedhere
\]
\end{proof}

%%%%%%%%%%%%%%%%%%%%%%%%%%%%%%%%%%%%%%%%%%%
\subsection{Experts lower bound via $(\eps,\delta,k)$-ensembles}

We consider the \emph{feasible experts problem}, in which one is given an action set $X$ along with a collection $\mF$ of Borel probability measures on the set $[0,1]^X$ of
functions $\payoff : X \rightarrow [0,1].$ A problem instance of the feasible experts problem consists of a triple $(X,\mF,\prob)$ where $X$ and $\mF$ are known to the
algorithm, and $\prob \in \mF$ is not. In each round the payoff function $\payoff$ is sampled independently from $\prob$, so that for each action $x\in X$  the (realized) payoff is $\payoff(x)$.

\begin{definition}[Definition~\ref{def:ensemble}, restated]
Consider a set $X$ and a $(k+1)$-tuple
$\vec{\prob} = (\prob_0,\prob_1 \LDOTS \prob_k)$
of Borel probability measures on $[0,1]^X$, the
set of $[0,1]$-valued payoff functions $\payoff$
on $X$.  For $0 \leq i \leq k$ and $x \in X$, let
$\mu_i(x)$ denote the expectation of $\payoff(x)$
under measure $\prob_i$.
We say that $\vec{\prob}$ is an \emph{$(\eps,\delta,k)$-ensemble}
if there exist pairwise disjoint subsets $S_1,S_2,\ldots,S_k \subseteq X$
for which the following properties hold:
\begin{itemize}
\item[(i)] %\label{ens:1}
for every $i>0$ and every event $\mathcal{E}$ in the Borel
$\sigma$-algebra of $[0,1]^X$, we have
    $$1-\delta < \prob_0(\mathcal{E}) / \prob_i(\mathcal{E}) < 1+\delta.$$
\item[(ii)] %\label{ens:2}
for every $i > 0$, we have
    $\sup(\mu_i, S_i) - \sup(\mu_i,\, X \setminus S_i) \geq \eps.$
\end{itemize}
\end{definition}

\begin{theorem}[Theorem~\ref{thm:LB-technique}, restated]
Consider the feasible experts problem on $(X,\mF)$. Let  $\vec{\prob}$ be an $(\eps,\delta,k)$-ensemble with $\{\prob_1,\ldots,\prob_k\} \subseteq
\mF$ and $0<\eps,\delta<1/2$. Then for any
    $t < \ln(17k)/(2 \delta^2)$
and any experts algorithm \A, at least half of the measures $\prob_i$ have the property that % every payoff function $\mu_i\in \F_i$ satisfies the lower bound
	$R_{(\A,\,\prob_i)}(t) \geq \eps t/2$.
\end{theorem}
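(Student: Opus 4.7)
The plan is to adapt the KL-divergence argument of Theorem~\ref{thm:LB-technique-MAB} to the full-feedback setting. Fix an experts algorithm $\A$ and let $p_i^t$ denote the distribution on $t$-round transcripts (the sequence of arms chosen together with the observed payoff functions) when $\A$ runs against $\prob_i$. The first step is to bound $\KL(p_0^t; p_i^t)$ via the chain rule (Lemma~\ref{lem:kl-chainrule}): because at every round the full payoff function is an iid sample of the underlying measure, the per-round conditional KL between $p_0$ and $p_i$ is exactly $\KL(\prob_0; \prob_i)$, independent of the algorithm's past decisions. Hence $\KL(p_0^t; p_i^t) \leq t \cdot \KL(\prob_0; \prob_i)$. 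Combining condition (i) of the $(\eps,\delta,k)$-ensemble with Lemma~\ref{lem:reverse-pinsker} (and a short bookkeeping step to pass between the two asymmetric directions of KL divergence) yields $\KL(\prob_0; \prob_i) = O(\delta^2)$, so $\KL(p_0^t; p_i^t) \leq O(t \delta^2)$. For $t < \ln(17k)/(2\delta^2)$, this is $O(\log k)$; the specific constant $17$ is tuned so that the subsequent estimates line up.

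The second step exploits the disjointness of the sets $S_i$. Let $N_i$ be the number of rounds in which $\A$ plays an arm in $S_i$. Since the $S_i$ are pairwise disjoint, $\sum_i N_i \leq t$ and $\sum_i \mathbb{1}\{N_i > t/2\} \leq 1$ hold deterministically. The second inequality, together with an averaging argument, implies that at least $k/2$ indices $i$ satisfy $p_0(\mE_i) \geq 1 - 2/k$, where $\mE_i = \{N_i \leq t/2\}$. This is the key point: the pigeonhole forces $p_0(\mE_i)$ not merely bounded away from $0$ but exponentially close to $1$, which is essential for the next step.

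Third, for each of these good indices, I would apply Lemma~\ref{lem:kl-distinguishing} with $p = p_0^t$, $q = p_i^t$, and $\mE = \mE_i$ to get
\[
  p_i(\mE_i) \;\geq\; p_0(\mE_i) \, \exp\!\Bigl(-\tfrac{\KL(p_0^t;\,p_i^t)+1/e}{p_0(\mE_i)}\Bigr).
\]
Because $p_0(\mE_i)$ is close to $1$, the factor $1/p_0(\mE_i)$ in the exponent is essentially~$1$, so with $\KL \leq \ln(17k)/2$ the right-hand side is bounded below by a meaningful quantity. Finally, property (ii) of the ensemble yields the per-round regret bound: each round in which $\A$ plays outside $S_i$ costs at least $\eps$ in regret under $\prob_i$, so $R_{(\A,\prob_i)}(t) \geq \eps \, \E_i[t - N_i]$. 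Combining the probability lower bound on $\mE_i$ with this regret inequality and carefully calibrating the constants delivers $R_{(\A,\prob_i)}(t) \geq \eps t/2$ for at least $k/2$ indices.

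The main obstacle, relative to the bandit proof of Theorem~\ref{thm:LB-technique-MAB}, is that the KL budget $O(t \delta^2)$ is $O(\log k)$ rather than $O(1)$: in the bandit case, KL is only accumulated when the algorithm actually plays inside $S_i$, while full feedback costs a full $\KL(\prob_0;\prob_i)$ per round regardless. A naive use of Lemma~\ref{lem:kl-distinguishing} with $p_0(\mE_i) \geq 1/2$ would therefore lose a $\operatorname{poly}(k)$ factor in the transfer from $p_0$ to $p_i$, which is too weak. The key technical ingredient is precisely the deterministic bound $\sum_i \mathbb{1}\{N_i > t/2\} \leq 1$, which forces $1/p_0(\mE_i) \approx 1$ so that the exponent in Lemma~\ref{lem:kl-distinguishing} is essentially $\KL$ alone, allowing the choice $t \leq \ln(17k)/(2\delta^2)$ to suffice.
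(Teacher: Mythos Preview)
Your approach has a genuine gap at the transfer step.  You correctly observe that in the full-feedback setting the KL budget is $\KL(p_0^t;p_i^t)\le t\,\delta^2\le \tfrac12\ln(17k)=\Theta(\log k)$, and you propose to compensate by ensuring $p_0(\mE_i)\approx 1$ via the deterministic bound $\sum_i \mathbb{1}\{N_i>t/2\}\le 1$.  But having $p_0(\mE_i)$ close to~$1$ does not help nearly as much as you claim: Lemma~\ref{lem:kl-distinguishing} with $p=p_0^t$, $q=p_i^t$, $\kappa\le\tfrac12\ln(17k)$ and $p_0(\mE_i)\approx1$ only yields
\[
  p_i(\mE_i)\ \ge\ p_0(\mE_i)\,\exp\!\bigl(-\kappa-1/e\bigr)\ \approx\ (17k)^{-1/2}.
\]
From this you get $\E_i[t-N_i]\ge p_i(\mE_i)\cdot(t/2)\approx t/\sqrt{k}$ and hence $R_{(\A,\prob_i)}(t)\gtrsim \eps t/\sqrt{k}$, not $\eps t/2$.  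For the intended application (Theorem~\ref{thm:experts-MaxMinLCD-body}), where $k$ is doubly exponential, this $1/\sqrt{k}$ loss is fatal.  The problem is intrinsic: with KL of order $\log k$, any single application of Lemma~\ref{lem:kl-distinguishing} to transfer a high-probability event from $p_0$ to $p_i$ will lose a $\mathrm{poly}(k)$ factor.

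The paper circumvents this by reversing the direction of the argument.  It works per round: let $\mathcal{E}_i^s=\{x_s\in S_i\}$ and apply Lemma~\ref{lem:kl-distinguishing} with $p=p_i^s$, $q=p_0^s$ (note the swap, using $\KL(\prob_i;\prob_0)<\delta^2$ directly from Lemma~\ref{lem:reverse-pinsker}).  The conclusion is a \emph{contrapositive}: if $p_i^s(\mathcal{E}_i^s)\ge\tfrac12$ then $p_0^s(\mathcal{E}_i^s)>4/k$.  Disjointness of the $S_i$ under $p_0$ now gives the counting: at each round~$s$, fewer than $k/4$ indices can satisfy $p_0^s(\mathcal{E}_i^s)>4/k$, hence fewer than $k/4$ indices have $p_i^s(\mathcal{E}_i^s)\ge\tfrac12$.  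Averaging over rounds and applying Markov's inequality, at least half of the indices $i$ have $p_i^s(\mathcal{E}_i^s)<\tfrac12$ for more than $t/2$ rounds, and each such round contributes $\ge\eps/2$ to the expected regret under $\prob_i$.  The point is that the $\mathrm{poly}(k)$ loss lands on $p_0$, where disjointness turns ``probability $>4/k$'' into exactly the right counting constraint; your approach absorbs the loss on the $p_i$ side, where it cannot be recovered.
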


\begin{proof}%{Theorem~\ref{thm:LB-technique}}
Let $\Omega = [0,1]^X$.  Using Property (i)
of an $(\eps,\delta,k)$-ensemble combined with
Lemma~\ref{lem:reverse-pinsker}, we find that
$\KL(\prob_i;\prob_0) < \delta^2.$

Let \A\ be an experts algorithm whose random bits
are drawn from a sample space $\Gamma$
with probability measure $\nu$.  For any
positive integer $s < \ln(17k) / 2 \delta^2$,
let $p_i^s$ denote
the measure $\nu \times (\prob_i)^s$
on the probability space $\Gamma \times \Omega^s.$
By the chain rule for KL-divergence
(Lemma~\ref{lem:kl-chainrule}),
$\KL(p_i^s;p_0^s) < s \delta^2 <  \ln(17k) / 2.$
Now let $\mathcal{E}_i^s$ denote the event that
\A\ selects a point $x \in S_i$ at time $s$.
If $p_i^s(\mathcal{E}_i^s) \geq \tfrac{1}{2}$
then Lemma~\ref{lem:kl-distinguishing} implies
\begin{align*}
p_0^s(\mathcal{E}_i^s)
& \geq
p_i^s(\mathcal{E}_i^s) \exp \left(
- \frac{\ln(17k)/2 + 1/e}{p_i^s(\mathcal{E}_i^s)}
\right)
 \geq
\tfrac{1}{2} \exp \left(- \ln(k) + \ln(17) - \tfrac{2}{e} \right)
 > \frac{4}{k}.
\end{align*}
The events $\{\mathcal{E}_i^s \,|\, 1 \leq i \leq k\}$
are mutually exclusive, so fewer than $k/4$ of them
can satisfy $p_0^s(\mathcal{E}_i^s) > \frac{4}{k}.$
Consequently, fewer than $k/4$ of them can satisfy
$p_i^s(\mathcal{E}_i^s) \geq \tfrac{1}{2},$ a
property we denote in this proof by saying that
$s$ is \emph{satisfactory} for $i$.
Now assume $t < \ln(17k)/2 \delta^2$.
For a uniformly random $i \in \{1,\ldots,k\}$,
the expected number of satisfactory
$s \in \{1,\ldots,t\}$
is less than $t/4$, so by Markov's inequality, for
at least half of the $i \in \{1,\ldots,k\}$, the
number of satisfactory $s \in \{1,\ldots,t\}$ is
less than $t/2$.  Property (ii) of an
$(\eps,\delta,k)$-ensemble guarantees that
every unsatisfactory $s$ contributes at least
$\eps$ to the regret of \A\ when the problem
instance is $\prob_i$.  Therefore, at least half
of the measures $\prob_i$ have the property that
	$R_{(\A,\,\prob_i)}(t) \geq \eps t/2$.
\end{proof}

\subsection{Proof of Claim~\ref{cl:logT-KLdiv}}
\label{sec:logT-KLdiv}

%\begin{proofof}{Claim~\ref{cl:logT-KLdiv}}
Recall that in Section~\ref{sec:logT} we defined
a pair of payoff functions $\mu_0,\mu_i$ and
a ball $B_i$ of radius $r_i$ such that $\mu_0 \equiv \mu_i$
on $X \setminus B_i$, while for $x \in B_i$ we have
    $$ \tfrac38 \leq \mu_0(x) \leq \mu_i(x) \leq
       \mu_0(x) + \tfrac{r_i}{4} \leq \tfrac34.
    $$
Thus, by Lemma~\ref{lem:kl-bernoulli},
$\KL(\mu_0(x);\mu_i(x)) < r_i^2 / 3$ for
all $x \in X$, and $\KL(\mu_0(x);\mu_i(x)) = 0$
for $x \not\in B_i$.

Represent the algorithm's choice and the payoff
observed at any given time $t$ by a pair $(x_t,y_t).$
Let $\Omega = X \times [0,1]$ denote the set of
all such pairs.  When a given algorithm \A\
plays against payoff functions $\mu_0, \mu_i$,
this defines two different probability measures
$p_0^t, p_i^t$ on the set $\Omega^t$ of possible
$t$-step histories.  Let $\omega^t$ denote a
sample point in $\Omega^t$.  The bounds derived
in the previous paragraph imply that for any
non-negative integer $s$,
\begin{equation} \label{eq:logT-KLdiv-1}
\KL(p_0^{s+1}; p_i^{s+1} \,|\, \omega^s) <
\tfrac{1}{3} r_i^2 \prob_0(x_{s+1} \in B_i).
\end{equation}
Summing equation \eqref{eq:logT-KLdiv-1} for
$s=0,1,\ldots,t-1$ and applying Lemma~\ref{lem:kl-chainrule}
we obtain
\begin{equation} \label{eq:logT-KLdiv-2}
\KL(p_0^{t}; p_i^{t}) < \tfrac13 r_i^2 \;
    \textstyle{\sum_{s=1}^{t}}\, \prob_0(x_{s} \in B_i)
 = \tfrac13 r_i^2 \E_0(N_i(t)),
\end{equation}
where the last equation follows from the definition
of $N_i(t)$ as the number of times algorithm \A\
selects a arm in $B_i$ during the first $t$ rounds.

The bound stated in Claim~\ref{cl:logT-KLdiv}
now follows by applying Lemma~\ref{lem:kl-distinguishing}
with the event $S$ playing the role of $\mathcal{E}$,
$\prob_0$ playing the role
of $p$, and $\prob_i$ playing the role of $q$.
%\end{proofof}

%\section{Missing proof for the regret dichotomies}
%\label{app:dichotomies}

%%%%%%%%%%%%%%%%%%%%%%%%%%%%%%%
\section{Reduction to complete metric spaces}
\label{sec:reduction}

In this section we reduce the Lipschitz MAB problem to that on complete metric spaces.

\begin{lemma}\label{lm:reduction}
The \problem on a metric space $(X,d)$ is $f(t)$-tractable if and only if it is $f(t)$-tractable on the completion of $(X,d)$. Likewise for the \FFproblem with double feedback.
\end{lemma}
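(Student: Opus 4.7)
I would prove the equivalence in two directions, using that $(X,\mD)$ is isometrically embedded as a dense subspace of its completion $(X^*,\mD)$.

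\textbf{Forward direction} ($(X,\mD)$ is $f$-tractable implies $(X^*,\mD)$ is $f$-tractable). The key observation is that by continuity and density, $\sup_{X^*}\mu^* = \sup_X \mu^*|_X$ for every Lipschitz $\mu^* \colon X^* \to [0,1]$. I would take an $f$-tractable algorithm $\A$ on $(X,\mD)$ and reuse it as an algorithm on $(X^*,\mD)$ whose plays (and, in the double-feedback experts case, peek queries) happen to lie in $X$. Because $\A$ only queries reward distributions at points of $X$, its regret on any $X^*$-instance coincides with its regret on the corresponding restricted $X$-instance, which is $O_\mu(f(t))$ by hypothesis.

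\textbf{Backward direction} ($(X^*,\mD)$ is $f$-tractable implies $(X,\mD)$ is $f$-tractable). The plan is to simulate a given $f$-tractable algorithm $\A^*$ on $(X^*,\mD)$ using an algorithm $\A$ on $(X,\mD)$ that replaces each play of $\A^*$ outside $X$ with a suitably chosen play inside $X$. Given an instance on $(X,\mD)$ with payoff $\mu$ and reward distributions $\mathcal{D}_x$, let $\mu^*$ denote the unique Lipschitz extension of $\mu$ to $X^*$. For each $p^* \in X^* \setminus X$, by density of $X$ and continuity of $\mu^*$ I can select two points $x^\pm_{p^*} \in X$ arbitrarily close to $p^*$ with $\mu(x^-_{p^*}) \leq \mu^*(p^*) \leq \mu(x^+_{p^*})$, and let $\nu_{p^*}$ be the two-point mixture of point masses at $x^-_{p^*}$ and $x^+_{p^*}$ whose weights solve $\mathbb{E}_{x \sim \nu_{p^*}}[\mu(x)] = \mu^*(p^*)$. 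The mixed reward distributions $\tilde{\mathcal{D}}_{p^*} := \mathbb{E}_{x \sim \nu_{p^*}} \mathcal{D}_x$ extend the original distributions on $X$ to a valid Lipschitz MAB instance on $(X^*,\mD)$ with mean function $\mu^*$, so $\A^*$ attains regret $O_\mu(f(t))$ on it. Now $\A$ runs $\A^*$ internally: when $\A^*$ plays $p^* \in X$, $\A$ plays $p^*$; when $\A^*$ plays $p^* \in X^* \setminus X$, $\A$ samples $x \sim \nu_{p^*}$ and plays $x$; in both cases the observed reward is relayed to $\A^*$. By construction the rewards seen by $\A$ are distributed exactly as the rewards $\A^*$ would see on this virtual instance, so $\A$'s regret equals $\A^*$'s regret, namely $O_\mu(f(t))$. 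Double-feedback peek queries at $p^* \notin X$ are handled analogously by peeking at $x \sim \nu_{p^*}$ and relaying the observed value as if it came from $p^*$.

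\textbf{Main obstacle.} The measure $\nu_{p^*}$ depends on the unknown value $\mu^*(p^*)$, so the construction above apparently yields an algorithm tailored to each $\mu$, whereas $f$-tractability asks for a single algorithm that succeeds for every instance. The plan to resolve this is to present $\A$ as a single Borel-measurable decision rule whose choice of mixture weights at each round is driven by an online estimate of $\mu^*(p^*)$ formed from past observations at nearby arms. Verifying measurability of this adaptive construction and bounding the estimation overhead so that it stays inside $O_\mu(f(t))$ is what I expect to be the trickiest step. In the applications of this lemma the overhead is harmless: if $X$ is finite then $X = X^*$ and the lemma is trivial, and otherwise $f \in \omega(\log t)$ is forced by Theorem~\ref{thm:logT}, so a polylogarithmic estimation cost is absorbed into the $O_\mu$ constant.
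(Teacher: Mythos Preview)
Your forward direction is correct and matches the paper's.

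Your backward direction has a genuine gap, and the paper takes a different route that sidesteps it entirely.

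First, a point you overlooked: the bracketing arm $x^+_{p^*}\in X$ with $\mu(x^+_{p^*})\geq\mu^*(p^*)$ need not exist. If $p^*$ is the unique global maximizer of $\mu^*$ and $p^*\notin X$ (e.g.\ $X=\mathbb{Q}\cap[0,1]$, $\mu^*(x)=1-|x-1/\sqrt{2}|$), then every $x\in X$ satisfies $\mu(x)<\mu^*(p^*)$, so your two-point mixture cannot be formed.

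Second, the obstacle you flag is real and your proposed fix does not close it. Replacing $\mu^*(p^*)$ by an online estimate means the feedback relayed to $\A^*$ no longer has mean exactly $\mu^*(p^*)$, so $\A^*$ is again running on a perturbed instance---precisely the situation your exact-mixture construction was meant to avoid. Falling back on ``in applications $f\in\omega(\log t)$'' does not prove the lemma as stated, and invoking Theorem~\ref{thm:logT} is circular: the paper's extension of that theorem to non-complete metric spaces goes through Theorem~\ref{thm:boundary-of-tractability}, whose proof begins by citing this very reduction lemma.

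The paper's argument is simpler and needs no knowledge of $\mu$. When $\A_Y$ asks to play $y_t\in Y$, the simulating algorithm $\A_X$ plays a single fixed point $x(y_t,t)\in X$ with $d(x(y_t,t),y_t)<2^{-t}$ and forwards the observed payoff (resampled as a $\{0,1\}$ variable). This does perturb the instance $\A_Y$ effectively runs on: it sees rewards with mean $\mu(x(y_t,t))$ rather than $\mu^*(y_t)$. Instead of trying to eliminate this bias, the paper argues by contradiction. If $\A_X$ were not $f(t)$-tractable, then on some instance the regret of $\A_Y$ under the perturbed means would differ drastically from its regret under the true means, yielding a test that distinguishes the two reward processes with probability tending to $1$. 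But the per-round mean perturbation is at most $2^{-t}$, so a KL-divergence computation (the per-round KL contribution is $O(4^{-t})$, summing to $O(1)$) shows the two processes are statistically indistinguishable, a contradiction. The same construction handles the second (peek) arm in the double-feedback experts case.
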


\begin{proof}
Let $(X,d)$ be a metric space with completion $(Y,d)$. Since $Y$ contain an isometric copy of $X$, we will abuse notation and consider $X$ as a subset of $Y$. We will present the proof the Lipschitz MAB problem; for the experts problem with double feedback, the proof is similar.

Given an algorithm $\A_X$ which is $f(t)$-tractable for $(X,d)$, we may use it as a Lipschitz MAB algorithm for $(Y,d)$ as well.  (The algorithm has the property that it never selects a point of $Y \setminus X$, but this doesn't prevent us from using it when the metric space is $(Y,d)$.)  The fact that $X$ is dense in $Y$ implies that for every Lipschitz payoff function $\mu$ defined on $Y$, we have
    $ \sup(\mu,X) = \sup(\mu,Y). $
From this, it follows immediately that the regret of $\A_X$, when considered a Lipschitz MAB algorithm for $(X,d)$, is the same as its regret when considered as a Lipschitz MAB algorithm for $(Y,d)$.

Conversely, given an algorithm $\A_Y$ which is $f(t)$-tractable for $(Y,d)$, we may design a Lipschitz MAB algorithm $\A_X$ for $(X,d)$ by running $\A_Y$ and perturbing its output slightly. Specifically, for each point $y\in Y$ and each $t\in \N$ we fix $x = x(y,t) \in X$ such that $d(x, y)<2^{-t}$. If $\A_Y$ recommends playing strategy $y_t \in Y$ at time $t$, algorithm $\A_X$ instead plays $x = x(y,t)$. Let $\pi$ be the observed payoff. Algorithm $\A_X$ draws an independent 0-1 random sample with expectation $\pi$, and reports this sample to $\A_Y$. This completes the description of the modified algorithm $\A_X$.

Suppose $\A_X$ is not $f(t)$-tractable. Then for some problem instance $\mathcal{I}$ on $(Y,d)$, letting $R_X(t)$ be the expected regret of $\A_X$ on this instance, we have that
    $\sup_{t\in\N} R_X(t)/f(t) =\infty$.
Let $\mu$ be the expected payoff function in $\mathcal{I}$.  Consider the following two problem instances of a MAB problem on $Y$, called $\mathcal{I}_1$ and $\mathcal{I}_2$, in which if point $y\in Y$ is played at time $t$, the payoff is an independent 0-1 random sample with expectation $\mu(y)$ and $\mu( x(y,t))$, respectively. Note that algorithm $\A_Y$ is $f(t)$-tractable on $\mathcal{I}_1$, and its behavior on $\mathcal{I}_2$ is identical to that of $\A_X$ on the original problem instance $\mathcal{I}$. It follows that by observing the payoffs of $\A_Y$ one can tell apart $\mathcal{I}_1$ and $\mathcal{I}_2$ with high probability. Specifically, there is a ``classifier" $\mathcal{C}$ which queries one point in each round, such that for infinitely many times $t$ it tell apart $\mathcal{I}_1$ and $\mathcal{I}_2$ with success probability $p(t) \to 1$. Now, the latter is information-theoretically impossible.

To see this, let $H_t$ be the $t$-round history of the algorithm (the sequence of points queried, and outputs received), and consider the distribution of $H_t$ under problem instances $\mathcal{I_1}$ and $\mathcal{I_2}$ (call these distributions $q_1$ and $q_2$). Let us consider  and look at their KL-divergence. By the chain rule (See Lemma~\ref{lem:kl}), we can show that $KL(q_1,q_2) <\tfrac12$. (We omit the details.) It follows that letting $S_t$ be the event that $\mathcal{C}$ classifies the instance as $\mathcal{I}_1$ after round $t$, we have
    $\mathbb{P}_{q_1}[S_t] - \mathbb{P}_{q_2}[S_t] \leq KL(q_1,q_2) \leq \tfrac12$.
For any large enough time $t$,
    $\mathbb{P}_{q_1}[S_t] <\tfrac14$,
in which case $\mathcal{C}$ makes a mistake (on $\mathcal{I}_2$) with constant probability.
\end{proof}

\OMIT{ %%%%%%%%
If $\A_Y$ recommends playing strategy $y_t \in Y$ at time $t$, we instead find a point $x_t \in X$ such that $d(x_t,y_t) < 2^{-t}$ and we play $x_t$.  The regret of algorithm $\A_x$ exceeds that of $\A_Y$ by at most $\frac12 + \frac14 + \frac18 + \ldots = 1$, hence $\A_X$ is $f(t)$-tractable.
} %%%%%%%%%

\begin{lemma}\label{lm:reduction-2}
Consider the \FFproblem with full feedback. If it is $f(t)$-tractable on a metric space $(X,d)$ then it is $f(t)$-tractable on the completion of $(X,d)$.
\end{lemma}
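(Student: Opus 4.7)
The plan is to show that any $f(t)$-tractable Lipschitz experts algorithm $\A_X$ for $(X,d)$ serves, essentially unchanged, as an $f(t)$-tractable algorithm $\A_Y$ for the completion $(Y,d)$. The key observation is that since $X \subseteq Y$, every strategy and every query point that $\A_X$ would select is already a valid strategy/query point in $Y$, so $\A_Y$ can simply run $\A_X$ and forward its actions to the $Y$-environment. The non-trivial part is to verify that this produces the right regret bound.

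First I will set up the coupling between a $Y$-instance and the induced $X$-instance. Given a problem instance $\prob^Y$ on $(Y,d)$, let $\rho : [0,1]^Y \to [0,1]^X$ be the restriction map $f \mapsto f|_X$. This map is continuous with respect to the product topologies (each coordinate projection is continuous), hence Borel-measurable, so the pushforward $\prob^X := \rho_* \prob^Y$ is a well-defined Borel probability measure on $[0,1]^X$. The induced expected payoff function satisfies $\mu^X(x) = \E_{f \sim \prob^X}[f(x)] = \E_{g \sim \prob^Y}[g(x)] = \mu^Y(x)$ for each $x \in X$, so $\mu^X = \mu^Y|_X$ is Lipschitz on $(X,d)$ and therefore $\prob^X$ is a valid Lipschitz experts instance on $(X,d)$.

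Next I will describe $\A_Y$: in each round, simulate one round of $\A_X$; whenever $\A_X$ selects a strategy $x_t \in X$, play $x_t$ in $Y$; whenever $\A_X$ queries $f$ at a finite set $Q \subseteq X$, answer using $f_t(q)$ for $q \in Q$, where $f_t$ is the realized payoff function sampled by the $Y$-environment from $\prob^Y$. Because the $f_t$'s are i.i.d.\ samples from $\prob^Y$, the sequence of (query answers, realized payoffs) observed by $\A_X$ is distributed exactly as if $\A_X$ were running on the instance $\prob^X$ of the Lipschitz experts problem on $(X,d)$.

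Finally, I will verify regret preservation. Since $\mu^Y$ is continuous on $(Y,d)$ (being Lipschitz) and $X$ is dense in $Y$, we have $\sup(\mu^Y, Y) = \sup(\mu^Y, X) = \sup(\mu^X, X)$. The strategies $x_t$ played by $\A_Y$ all lie in $X$, and $\mu^Y(x_t) = \mu^X(x_t)$ there. Therefore the regret of $\A_Y$ on $\prob^Y$ equals the regret of $\A_X$ on $\prob^X$:
\begin{align*}
R_{\A_Y, \prob^Y}(t)
  &= t\cdot \sup(\mu^Y, Y) - \E\Bigl[\textstyle\sum_{s=1}^t \mu^Y(x_s)\Bigr]\\
  &= t\cdot \sup(\mu^X, X) - \E\Bigl[\textstyle\sum_{s=1}^t \mu^X(x_s)\Bigr]
   = R_{\A_X, \prob^X}(t).
\end{align*}
The hypothesis that $\A_X$ is $f(t)$-tractable on $(X,d)$ bounds the right-hand side by $C_{\prob^X} f(t)$, so $\A_Y$ is $f(t)$-tractable on $(Y,d)$.

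The main obstacle is the bookkeeping around measurability and the product-topology Borel structure on $[0,1]^Y$ versus $[0,1]^X$: one must confirm that $\rho_* \prob^Y$ is the ``correct'' induced measure (i.e., that finite-dimensional marginals on $X$-coordinates agree), so that the simulation of $\A_X$ inside $\A_Y$ is statistically faithful. Once this is in place the regret identity is immediate. Note that, unlike the bandit/double-feedback case in Lemma~\ref{lm:reduction}, the reverse direction (from tractability on $Y$ to tractability on $X$) is not available here, because approximating a query $f_t(y)$ for $y \in Y \setminus X$ by $f_t(x)$ for $x \in X$ near $y$ requires Lipschitz-continuity of the \emph{realized} payoff functions, which the \FFproblem does not assume.
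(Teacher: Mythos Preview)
Your proposal is correct and follows essentially the same approach as the paper: run $\A_X$ unchanged as an algorithm on $Y$, and use density of $X$ in $Y$ plus continuity of $\mu^Y$ to conclude $\sup(\mu^Y,Y)=\sup(\mu^X,X)$, whence the regrets coincide. The paper's own proof is a one-liner pointing to the ``only if'' direction of Lemma~\ref{lm:reduction}; your version is more explicit about the pushforward measure and measurability, but the underlying argument is the same.
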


\begin{proof}
Identical to the easy (``only if") direction of Lemma~\ref{lm:reduction}.
\end{proof}

\begin{note}{Remark.}
Lower bounds only require Lemma~\ref{lm:reduction-2}, or the easy (``only if") direction of Lemma~\ref{lm:reduction}. For the upper bounds (algorithmic results), we can either quote the ``if" direction of Lemma~\ref{lm:reduction}, or prove the desired property directly for the specific type algorithms that we use (which is much easier but less elegant).
\end{note}

%%%%%%%%%%%%%%

\OMIT{
For any $B \in \mathcal{B}_{\infty}, \, t \in \N$,
let $\sigma(B,t)$ denote
the value of $\sigma(B)$ sampled at time $t$
when sampling the sequence of i.i.d.~payoff
functions $\payoff_t$ from distribution $\prob_{\mathcal{Q}}$.
We know that $\sigma(B_k,t)=1$ for all $k,t$.
In fact if $S(k,t)$ denotes the set of all
$B \in \mathcal{B}_k(B_{k-1})$ such that
    $\sigma(B,1) = \sigma(B,2) = \cdots = \sigma(B,t) = 1$
then conditional on the value of the set $S(k,t)$, the
value of $B_k$ is distributed uniformly at random
in $S(k,t)$.  As long as this set $S(k,t)$
has at least $n$ elements, the probability
that the algorithm picks a strategy $x_t$ belonging
to $B_k$ at time $t$ is bounded above by $\tfrac{1}{n}$,
even if we condition on the event that
    $x_t \in B_{k-1}.$
For any given $B \in \mathcal{B}_k(B_{k-1}) \setminus \{B_k\}$, we have
    $\prob_{\mathcal{Q}}(B \in S(k,t)) = 2^{-t}$
and these events are independent for different
values of $B$.  Setting $n=\sqrt{n_k}$, so that
$|\mathcal{B}_k(B_{k-1})|=n^2$, we have
    \begin{align}
      \nonumber
    \prob_{\mathcal{Q}} \left( |S(k,t)| \leq n \right) &\leq
    \sum_{R \subset \mathcal{B}_k(B_{k-1}), |R|=n} \prob_{\mathcal{Q}}(S(k,t)
\subseteq R) \\
      \nonumber
    &= \binom{n^2}{n}
    \left( 1-2^{-t} \right)^{n^2-n} \\
      \nonumber
    &< \left( n^2 \cdot \left( 1 - 2^{-t} \right)^{n-1} \right)^{n} \\
      \label{eq:mmlcdlb-3}
    &< \exp \left( n (2 \ln(n) - (n-1)/2^t) \right).
    \end{align}
STILL NEED TO FIX THIS.
Let $t_k = \tfrac14 \log_2(n_k) = \tfrac14 r_k^{-b}$
As long as $t < t_k$, the relation $\log_2(n) = \tfrac12
\log_2(n_k)$ implies $(n-1)/2^t > \sqrt{n}$ so the expression
\eqref{eq:mmlcdlb-3} is bounded above by
$\exp \left( -n \sqrt{n} + 2 n \ln(n) \right)$,
which equals
$\exp \left( -n_k^{3/4} + n_k^{1/2} \ln(n_k) \right)$
and is in turn bounded above by $\exp \left( -\tfrac12 n_k^{3/4} \right).$

For any Lipschitz experts algorithm \A,
let $N(t,k)$ denote the random variable
that counts the number of times \A\ selects
a strategy in $B_k$ during rounds $1,\ldots,t$.
We have already demonstrated that for all $t \leq t_k$,
    \begin{equation} \label{eq:mmlcdlb-4}
       \Pr_{\prob_\mathcal{Q} \in \mathcal{P}}(x_t \in B_k)
       \leq n_k^{-1/2} + \exp \left( -\tfrac12 n_{k}^{3/4} \right)
       < 2 n_k^{-1/2},
    \end{equation}
where the term $n_k^{-1/2}$ accounts for the
event that $S(k,t)$ has at least $\sqrt{n_k}$ elements.
Equation \eqref{eq:mmlcdlb-4} implies the bound
    $\mathbb{E}_{\prob_{\mathcal{Q}} \in \mathcal{P}}[N(t_k,k)] < 2 t_k n_k^{-1/2}.$
By Markov's inequality, the probability that $N(t_k,k) > t_k/2$
is less than $4 n_k^{-1/2}$.
By Borel-Cantelli, almost
surely the number of $k$ such that $N(t_k,k) \leq t_k/2$
is finite.
The algorithm's expected regret at time $t$ is
bounded below by $r_k(t_k - N(t_k,k))$, so with
probability $1$, for all but finitely many $k$ we have
  $$R_{(\A, \, \prob_\mathcal{Q})}(t_k) \geq r_k t_k / 2
    = r_k^{1-b}$
This establishes that \A\ is not $g(t)$-tractable.
}

%%%%%%%%%%%%%%%%%%%%
\section{Topological equivalences: proof of Lemma~\ref{lm:topological-equivalence}}
\label{sec:topological}

Let us restate the lemma, for the sake of convenience. Recall that it includes an equivalence result for compact metric spaces, and two implications for arbitrary metric spaces:

\begin{lemma}\label{lm:topological-equivalence-appendix}
For any compact metric space $(X,d)$, the following are equivalent: (i) $X$ is a countable set, (ii) $(X,d)$ is well-orderable, (iii) no metric subspace of $(X,d)$ is perfect. For an arbitrary metric space we have (ii)$\iff$(iii) and (i)$\Rightarrow$(ii), but not (ii)$\Rightarrow$(i).
\end{lemma}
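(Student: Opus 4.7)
The plan is to prove the central equivalence (ii)$\iff$(iii) for arbitrary metric spaces and then, under compactness, add the two implications connecting (i) to (iii) in order to close the three-way equivalence. The principal tool throughout is the transfinite Cantor--Bendixson derivative: set $X^{(0)}=X$, $X^{(\alpha+1)}=X^{(\alpha)}\setminus\mathrm{iso}(X^{(\alpha)})$, and $X^{(\lambda)}=\bigcap_{\alpha<\lambda}X^{(\alpha)}$ at limit ordinals. A routine transfinite induction shows that every $X^{(\alpha)}$ is closed in $X$ and that each element $y$ of the level $I_\beta:=X^{(\beta)}\setminus X^{(\beta+1)}$ admits an open set $U_y\subseteq X$ with $U_y\cap X^{(\beta)}=\{y\}$; these witness sets will serve as the building blocks of the open initial segments below.

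For (ii)$\Rightarrow$(iii) I would argue by contraposition. Given a perfect subspace $Y\subseteq X$, let $y_0$ be its $\prec$-least element; since $|Y|\geq 2$ (a singleton is not perfect by convention), $y_0$ has a successor $y_0^+$ in the well-order, and the initial segment $\{z:z\prec y_0^+\}$ is open in $X$ and meets $Y$ only at $y_0$, exhibiting $y_0$ as isolated in $Y$ and contradicting perfectness. For (iii)$\Rightarrow$(ii), the CB process must stabilize at some ordinal $\alpha^*$ for cardinality reasons (otherwise there would be a strictly descending chain of subsets of $X$ of length exceeding $|X|$). At $\alpha^*$ the set $X^{(\alpha^*)}$ is perfect as a subspace, so (iii) forces $X^{(\alpha^*)}=\emptyset$, and every $x\in X$ acquires a well-defined rank $\beta(x)$. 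I would well-order each $I_\beta$ arbitrarily and concatenate the orderings lexicographically, with lower rank first. To verify that the resulting $\prec$ is a topological well-ordering, I would rewrite the initial segment below a rank-$\beta$ element $x$ as $(X\setminus X^{(\beta)})\cup\bigcup_{y\in I_\beta,\,y<_\beta x}U_y$, which is open because $X^{(\beta)}$ is closed and each $U_y$ is open by construction; the degenerate cases reduce to the identity $X\setminus X^{(\beta)}=\bigcup_{\alpha<\beta}(X\setminus X^{(\alpha)})$, which is again a union of open sets.

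Under compactness the two remaining implications follow from the classical Cantor--Bendixson decomposition theorem: a Polish space $X$ splits uniquely as $X=P\sqcup C$ where $P=X^{(\alpha^*)}$ is the perfect kernel and $C$ is a countable scattered remainder. Hypothesis (iii) forces $P=\emptyset$ and yields (iii)$\Rightarrow$(i). For (i)$\Rightarrow$(iii), the closure $\overline Y$ of any perfect subspace $Y\subseteq X$ is again perfect -- points in $Y$ remain non-isolated in $\overline Y$, and points in $\overline Y\setminus Y$ are limit points of $Y$ -- and is now compact, so by the standard Cantor-type embedding $\overline Y$ contains a topological copy of $\{0,1\}^{\mathbb N}$ and is uncountable, contradicting (i). The arbitrary-space half of the lemma concerning the one-sided relationship between (i) and (ii) then follows from (ii)$\iff$(iii) together with the fact that outside the compact setting neither of these properties is in general sufficient to force the other.

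The main obstacle I foresee is the bookkeeping in the (iii)$\Rightarrow$(ii) step. One must carefully justify that the CB derivative is well-defined at limit ordinals, that the process terminates by a genuine cardinality argument rather than by a second-countability shortcut (which is unavailable in arbitrary metric spaces), and that the open-initial-segment property truly does survive the lexicographic concatenation across both successor and limit rank boundaries. Once these points are in place, the compact case reduces to a straightforward invocation of the Cantor--Bendixson decomposition theorem together with the familiar Cantor-style embedding of $\{0,1\}^{\mathbb N}$.
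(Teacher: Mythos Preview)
Your argument is essentially correct and parallels the paper's for (ii)$\Leftrightarrow$(iii): you use the Cantor--Bendixson derivative, removing all isolated points at each stage, while the paper picks one isolated point per transfinite step; these are equivalent bookkeeping choices, and your openness verification via the witnesses $U_y$ is the paper's $B(x,r(x))$ argument in different notation. For (i)$\Rightarrow$(iii) both proofs embed a Cantor set into a compact perfect subspace (the paper phrases it through its ball-tree machinery and the finite intersection property). The one genuine difference is how the compact cycle closes: the paper proves (ii)$\Rightarrow$(i) directly and elementarily---setting $S(n)=\{x: B(x,1/n)\subseteq \{y:y\preceq x\}\}$, any two points of $S(n)$ are at least $1/n$ apart, so compactness makes each $S(n)$ finite, while the well-ordering property gives $X=\bigcup_n S(n)$. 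You instead prove (iii)$\Rightarrow$(i) by invoking the Cantor--Bendixson decomposition for Polish spaces; this is correct but heavier, and the countability of the scattered part is itself typically proved by an argument of exactly the paper's kind.

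One gap: your final sentence on (i)$\Rightarrow$(ii) for arbitrary metric spaces is not a proof, and in fact this implication fails as stated---the rationals with the usual metric are countable yet form a perfect space, hence by your own (ii)$\Leftrightarrow$(iii) are not topologically well-orderable. The paper's handling of this clause is likewise defective (it asserts that the compact-case argument for (i)$\Rightarrow$(ii) avoids compactness, but the only route it offers runs through (i)$\Rightarrow$(iii), which uses the finite intersection property). So no correct proof is available for that clause, and you should not feel obliged to supply one.
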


\begin{proof}[(compact metric spaces)]
Let us prove the assertions in the circular order.

\xhdr{(i) implies (iii).} Let us prove the contrapositive: if $(X,d)$ has a
perfect subspace $Y$, then $X$ is uncountable. We have
seen that if $(X,d)$ has a perfect subspace $Y$ then it has
a ball-tree. Every end $\ell$ of the ball-tree (i.e. infinite path
starting from the root) corresponds to a nested sequence
of balls. The closures of these balls
have the finite intersection property, hence their
their intersection is non-empty. Pick an arbitrary point of
the intersection and call if $x(\ell)$. Distinct ends $\ell$, $\ell'$
correspond to distinct points $x(\ell)$, $x(\ell')$ because if
$(y,r_y), \, (z,r_z)$ are siblings in the ball-tree
which are ancestors of $\ell$ and $\ell'$, respectively,
then the closures of $B(y,r_y)$ and $B(z,r_z)$ are disjoint
and they contain $x(\ell), x(\ell')$ respectively.
Thus we have constructed
a set of distinct points of $X$, one for each end of the
ball-tree. There are uncountably many ends, so $X$ is
uncountable.

\xhdr{(iii) implies (ii).}  Let $\beta$ be some ordinal of strictly larger cardinality than $X$. Let us define a transfinite sequence
	$\{x_\lambda\}_{\lambda\leq \beta}$
of points in $X$ using transfinite recursion\footnote{''Transfinite recursion" is a theorem in set theory which asserts that in order to define a function $F$ on ordinals, it suffices to specify, for each ordinal $\lambda$, how to determine $F(\lambda)$ from $F(\nu)$, $\nu<\lambda$.}, by specifying that $x_0$ is any isolated point of $X$, and that for any ordinal $\lambda > 0$, $x_\lambda$ is any isolated point of the subspace $(Y_\lambda, d)$, where
	$Y_\lambda = X\setminus \{ x_\nu:\, \nu < \lambda \}$,
as long as $Y_\lambda$ is nonempty. (Such isolated point exists since by our assumption subspace $(Y_\lambda, d)$ is not perfect.) If $Y_\lambda$ is empty define e.g. $x_\lambda = x_0$.
Now, $Y_\lambda$ is empty for some ordinal $\lambda$ because otherwise we obtain a mapping from $X$ onto an ordinal $\beta$ whose cardinality exceeds the cardinality of $X$. Let
	$\beta_0 = \min\{ \lambda:\, Y_\lambda=\emptyset \}$.
Then every point in $X$ has been indexed by an ordinal number $\lambda<\beta_0$, and so we obtain a well-ordering of $X$. By construction, for
every $x=x_\lambda$ we can define  a radius $r(x)>0$ such that $B(x, r(x))$ is disjoint from the set of points $\{ x_\nu : \nu > \lambda \}$.
Any initial segment $S$ of the well-ordering is equal to
the union of the balls $\{B(x,r(x)) : x \in S\}$, hence is an
open set in the metric topology. Thus we have constructed
a topological well-ordering of X.

\xhdr{(ii) implies (i).} Suppose we have a binary relation $\prec$
which is a topological well-ordering of $(X,d)$. Let $S(n)$ denote
the set of all $x \in X$ such that $B(x, \tfrac{1}{n})$ is contained in
the set $P(x) = \{ y : y \preceq x \}$. By the definition of a
topological well-ordering we know that for every $x$, $P(x)$ is
an open set, hence $x\in S(n)$ for sufficiently
large $n$. Therefore $X = \cup_{n\in\N} S(n)$. Now, the definition
of $S(n)$ implies that every two points of $S(n)$ are
separated by a distance of at least $1/n$. (If $x$ and $z$
are distinct points of $S(n)$ and $x\prec z$, then $B(x,\tfrac{1}{n})$
is contained in the set $P(x)$ which does not contain
$z$, hence $d(x,z)\geq \tfrac{1}{n}$.) Thus by compactness of $(X,d)$ set $S(n)$ is finite.
\end{proof}

\begin{proof}[(arbitrary metric spaces)]
For implications {\em (i)$\Rightarrow$(ii)}  and {\em (iii)$\Rightarrow$(ii)}, the proof above does not in fact use compactness. An example of an uncountable but well-orderable metric space is $(\R,d)$, where $d$ is a uniform metric. It remains to prove that {\em (ii)$\Rightarrow$(iii)}.

Suppose there exists a topological well-ordering $\prec$. For each subset $Y\subseteq X$ and an element $\lambda\in Y$ let
	$Y_\prec(\lambda) = \{ y\in Y: y\preceq \lambda\}$
be the corresponding initial segment.

We claim that $\prec$ induces a topological well-ordering on any subset $Y\subseteq X$. We need to show that for any $\lambda\in Y$ the initial segment $Y_\prec(\lambda)$ is open in the metric topology of $(Y,d)$. Indeed, fix $y\in Y_\prec(\lambda)$. The initial segment $X_\prec(\lambda)$ is open by the topological well-ordering property of $X$, so
	$B_X(y,\eps) \subset X_\prec(\lambda)$
for some $\eps>0$. Since
	$Y_\prec(\lambda) =  X_\prec(\lambda) \cap Y$
and
	$B_Y(y,\eps) = B_X(y,\eps) \cap Y $,
it follows that
	$B_Y(y,\eps) \subset Y_\prec(\lambda)$.
Claim proved.

Suppose the metric space $(X,d)$ has a perfect subspace $Y\subset X$. Let $\lambda$ be the $\prec$-minimum element of $Y$. Then $Y_\prec(\lambda) = \{\lambda \}$. However, by the previous claim $\prec$ is a topological well-ordering of $(Y,d)$, so the initial segment $Y_\prec(\lambda)$ is open in the metric topology of $(Y,d)$. Since $(Y,d)$ is perfect, $Y_\prec(\lambda)$ must be infinite, contradiction. This completes the {\em (ii)$\Rightarrow$(iii)} direction.
\end{proof}

\section{Log-covering dimension: the Earthmover distance example}
\label{app:earthmover}

We flesh out the example from Section~\ref{sec:intro-experts}. Fix a metric space $(X,\mD)$ of finite diameter and covering dimension $\kappa<\infty$. Let $\mathcal{P}_X$ denote the set of all probability measures over $X$. Let $(\mathcal{P}_X,W_1)$ be the space of all probability measures over $(X,\mD)$ under the Wasserstein $W_1$ metric, a.k.a. the Earthmover distance:
\begin{align*}
    W_1(\nu,\nu') = \inf \E\left[ \|Y-Y'\|_2 \right],
\end{align*}
where the infimum is taken over all joint distributions $(Y,Y')$ on $X\times X$
with marginals $\nu$ and $\nu'$ respectively (for any two
$\nu,\nu'\in \mathcal{P}_X$).

\begin{theorem}\label{thm:LCD-earthmover}
The log-covering dimension of $(\mathcal{P}_X,W_1)$ is $\kappa$.
\end{theorem}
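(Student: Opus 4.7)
The plan is to prove matching upper and lower bounds on $\LCD(\mathcal{P}_X,W_1)$. Throughout, let $D = \mathrm{diam}(X) < \infty$; note that $W_1(\mu,\nu) \leq D$ for all $\mu,\nu \in \mathcal{P}_X$. The two workhorses are (i) discretization of $X$ at scale $r$ via either a hitting set (for the upper bound) or an $r$-packing (for the lower bound), and (ii) the comparison $r\cdot\|\mu-\nu\|_{\mathrm{TV}} \leq W_1(\mu,\nu) \leq D\cdot\|\mu-\nu\|_{\mathrm{TV}}$, valid for measures $\mu,\nu$ supported on an $r$-packing (left inequality) or on any common finite set of diameter $\leq D$ (right inequality). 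Both comparisons follow from the standard optimal-coupling argument: any unit of ``off-diagonal'' mass in a coupling travels at most $D$ and, when the support is an $r$-packing, at least $r$.

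For the upper bound $\LCD(\mathcal{P}_X,W_1) \leq \kappa$: fix small $r > 0$. Since $\COV(X) = \kappa$, there is an $r$-hitting set $N \subset X$ of size $|N| \leq c_1 r^{-\kappa}$. Every $\mu \in \mathcal{P}_X$ has $W_1$-distance at most $r$ from its push-forward $\tilde\mu$ under the nearest-representative map $X \to N$, so it suffices to $r$-cover the set $\mathcal{P}_N$ of measures supported on $N$ in $W_1$. By the right-hand comparison above, this reduces to covering the simplex $\Delta^{|N|-1}$ in TV norm at scale $r/D$, which a standard volume argument does with $(O(D/r))^{|N|}$ points. Therefore $\log N_{2r}(\mathcal{P}_X) = O(r^{-\kappa}\log(1/r))$, so $\log\log N_{2r}(\mathcal{P}_X) \leq \kappa \log(1/r) + O(\log\log(1/r))$, and dividing by $\log(1/(2r))$ and taking $\limsup$ yields $\LCD(\mathcal{P}_X,W_1) \leq \kappa$.

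For the lower bound $\LCD(\mathcal{P}_X,W_1) \geq \kappa$: fix any $\kappa' \in (0,\kappa)$. By Lemma~\ref{lem:packing-folklore}, for every $r_0 > 0$ there exist $r \in (0,r_0)$ and an $r$-packing $S \subset X$ with $k := |S| \geq r^{-\kappa'}$. A standard volume estimate for $\Delta^{k-1}$ in $\ell_1$ produces $\exp(\Omega(k))$ probability vectors on $S$ pairwise separated by $\Omega(1)$ in TV. By the left-hand comparison above the corresponding measures form an $\Omega(r)$-packing of $\mathcal{P}_X$ of size $\exp(\Omega(r^{-\kappa'}))$, and by Fact~\ref{fact:packing-covering} this gives $N_{\Omega(r)}(\mathcal{P}_X) \geq \exp(\Omega(r^{-\kappa'}))$. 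Taking $\log\log$ and dividing by $\log(1/r)$ produces a sequence of scales $r \to 0$ along which the defining ratio of $\LCD$ is at least $\kappa' - o(1)$, so $\LCD(\mathcal{P}_X,W_1) \geq \kappa'$; since $\kappa' < \kappa$ was arbitrary, $\LCD(\mathcal{P}_X,W_1) \geq \kappa$.

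The main obstacle lies in getting the lower bound with the correct \emph{double-exponential} scaling: merely exhibiting $\mathrm{poly}(k)$ pairwise separated measures would only yield $\log\log N_r = O(\log\log(1/r))$, which is far too weak. Obtaining $\exp(\Omega(k))$ separated measures on an $r$-packing of size $k$ is precisely what distinguishes $\LCD$ from the ordinary covering dimension, and it is the one place where the argument makes essential use of the $(k{-}1)$-dimensional structure of the simplex via the volume estimate $\mathrm{vol}(\Delta^{k-1})/\mathrm{vol}(\text{$\ell_1$-ball of radius }\eps) \geq (2\eps)^{-(k-1)}$. Once this is in hand, both directions of the theorem reduce to bookkeeping with the $r\cdot\mathrm{TV} \leq W_1 \leq D\cdot\mathrm{TV}$ sandwich.
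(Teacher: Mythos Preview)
Your proof is correct. The upper bound is essentially identical to the paper's: discretize $X$ by an $r$-net $N$, push every measure forward to $N$ (incurring $W_1$-error $\le r$), then discretize the simplex $\Delta^{|N|-1}$; the paper does the second step concretely by rounding probabilities to multiples of $k^{-(\kappa+1)}$, you do it via the $W_1\le D\cdot\mathrm{TV}$ bound and a simplex covering, but the arithmetic is the same.

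The lower bound is where you genuinely diverge from the paper. The paper pairs the $2n$ packing points into $n$ pairs, embeds the Hamming cube $\{0,1\}^n$ into $\mathcal{P}_X$ by sending each bitstring to the uniform distribution on the selected representatives (Lemma~\ref{lem:embedding}), and then invokes the Gilbert--Varshamov bound (Lemma~\ref{lem:asympt-good}) to extract $2^{\Omega(n)}$ codewords at pairwise Hamming distance $\Omega(n)$, which maps to a $W_1$-packing at scale $\Omega(\eps)$. You instead work directly with the full simplex on the $k$ packing points, use the inequality $W_1\ge r\cdot\mathrm{TV}$ (valid because any off-diagonal coupling mass travels at least $r$), and get $\exp(\Omega(k))$ well-separated points by a volume comparison in $\Delta^{k-1}$. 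Your route is a bit more elementary and self-contained, since it avoids the detour through coding theory; the paper's route is more explicit and constructive, and isolates the combinatorial core (Lemma~\ref{lem:embedding}) as a reusable statement. Either way the essential content is the same: on a $k$-point ground set one can pack $\exp(\Omega(k))$ probability measures at constant TV distance, and this is exactly what forces the covering number to be doubly exponential in $1/r$.

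One minor quibble: writing $|N|\le c_1 r^{-\kappa}$ overstates what $\COV(X)=\kappa$ gives; strictly you get $|N|\le c\, r^{-(\kappa+\eps)}$ for every $\eps>0$ and sufficiently small $r$. The paper's upper-bound proof has the same looseness, and the fix (replace $\kappa$ by $\kappa+\eps$ and let $\eps\to 0$) is routine.
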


For the sake of completeness: for any $\mu,\mu'\in \mathcal{P}_X$, the Wasserstein $W_1$ metric, a.k.a. the Earthmover distance, is defined as
    $W_1(\nu,\nu') = \inf \E[\, \mD(Y,Y') \,],$
where the infimum is taken over all joint distributions $(Y,Y')$ on $X\times X$ with marginals $\nu$ and $\nu'$ respectively.

In the remainder of this subsection we prove Theorem~\ref{thm:LCD-earthmover}.

\begin{proof}[(Theorem~\ref{thm:LCD-earthmover}: upper bound)]
Let us cover $(\mathcal{P}_X,W_1)$ with balls of radius $\tfrac{2}{k}$ for some $k\in\N$. Let S be a $\tfrac{1}{k}$-net in $(X,d)$; note that $|S|=O(k^\kappa)$ for a sufficiently large $k$. Let $P$ be the set of all probability distributions $p$ on $(X,d)$ such that $\texttt{support}(p) \subset S$ and for every point $x \in S$, $p(x)$ is a rational number with denominator $k^{d+1}$. The cardinality of $P$ is bounded above by $(k^{\kappa+1})^{k^\kappa}$. It remains to show that balls of radius $\tfrac{2}{k}$ centered at the points of $P$ cover the entire space $(\mathcal{P}_X,W_1)$. This is true because:
\begin{itemize}
\item every distribution $q$ is $\tfrac{1}{k}$-close to a distribution $p$ with support contained in $S$ (let $p$ be the distribution defined by randomly sampling a point of $(X,d)$ from $q$ and then outputting the closest point of $S$);
\item every distribution with support contained in $S$ is $\tfrac{1}{k}$-close to a distribution in $P$ (round all probabilities down to the nearest multiple of $k^{-(\kappa+1)}$; this requires moving only $\tfrac{1}{k}$ units of stuff). \qedhere
\end{itemize}
\end{proof}

To prove the lower bound, we make a connection to the Hamming metric.

\begin{lemma}  \label{lem:embedding}
Let $(X,d)$ be any metric space, and let $H$ denote the Hamming metric on the Boolean cube $\{0,1\}^n$.
If $S \subseteq X$ is a subset of even cardinality $2n$,
and $\eps$ is a lower bound on the distance between any two points
of $S$, then there is a mapping $f \,:\, \{0,1\}^n \rightarrow
\mathcal{P}_X$ such that for all $a,b \in \{0,1\}^n,$
\begin{align} \label{eq:earthmover-lb}
W_1(f(a),f(b)) \geq \tfrac{\eps}{n}\; H(a,b).
\end{align}
\end{lemma}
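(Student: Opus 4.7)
\textbf{Proof proposal for Lemma~\ref{lem:embedding}.} The plan is to define $f$ via uniform distributions on carefully chosen $n$-subsets of $S$, and then lower-bound $W_1$ by a simple coupling argument that exploits the $\eps$-separation of $S$.

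First, fix an arbitrary partition of $S$ into two $n$-element subsets $S^0 = \{s_1^0, \ldots, s_n^0\}$ and $S^1 = \{s_1^1, \ldots, s_n^1\}$. For $a = (a_1,\ldots,a_n) \in \{0,1\}^n$, define
\[
 f(a) \,=\, \tfrac{1}{n} \sum_{i=1}^n \delta_{s_i^{a_i}},
\]
a uniform probability measure on the $n$ distinct points $s_1^{a_1}, \ldots, s_n^{a_n} \in S$. This gives a well-defined map $f:\{0,1\}^n \to \mathcal{P}_X$.

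Next, fix $a,b \in \{0,1\}^n$ and let $k = H(a,b)$, $D = \{i: a_i \neq b_i\}$. The key observation is that the common support of $f(a)$ and $f(b)$ is exactly $C = \{s_i^{a_i} : i \notin D\}$, a set of $n-k$ points, each carrying mass $\tfrac{1}{n}$ under \emph{both} $f(a)$ and $f(b)$; the remaining mass of each measure (namely $k/n$) sits on $k$ points lying outside the common support. Now consider any coupling $(Y_a, Y_b)$ with marginals $f(a)$ and $f(b)$. For each $x \in C$, the coupling can place at most $\min(f(a)(x),f(b)(x)) = 1/n$ mass on the diagonal point $(x,x)$; hence $\Pr[Y_a = Y_b] \leq |C|/n = (n-k)/n$, so $\Pr[Y_a \neq Y_b] \geq k/n$. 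Whenever $Y_a \neq Y_b$, both points lie in $S$ and are therefore at distance at least $\eps$ apart, so $d(Y_a,Y_b) \geq \eps \cdot \mathbb{1}[Y_a \neq Y_b]$. Taking expectations and then infimum over couplings yields
\[
 W_1(f(a),f(b)) \,\geq\, \eps \cdot \Pr[Y_a \neq Y_b] \,\geq\, \tfrac{\eps}{n}\, H(a,b),
\]
which is the desired bound \eqref{eq:earthmover-lb}.

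The only delicate step is the coupling inequality $\Pr[Y_a = Y_b] \leq (n-k)/n$; it requires recognizing that under any joint distribution with the given marginals, the total mass on the diagonal is bounded by $\sum_x \min(f(a)(x), f(b)(x))$, and computing this minimum via the explicit support structure. Everything else is routine: the construction is as simple as possible, and the lower bound follows without invoking Kantorovich--Rubinstein duality.
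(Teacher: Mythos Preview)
Your proof is correct and follows essentially the same approach as the paper. The construction is identical (the paper phrases it as pairing points $S_i=\{x_i,y_i\}$, you as choosing $s_i^0,s_i^1$), and your coupling bound $\Pr[Y_a=Y_b]\le\sum_x\min(f(a)(x),f(b)(x))=(n-k)/n$ is just a slightly more formal rendering of the paper's ``the $1/n$ units at $t_i(a)$ must move at least $\eps$'' argument.
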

\begin{proof}
Group the points of $S$ arbitrarily into pairs $S_i = \{x_i,y_i\}$,
where $i=1,\ldots,n.$  For $a \in \{0,1\}^n$ and $1 \leq i \leq n$,
define $t_i(a) = x_i$ if $a_i=0$, and $t_i(a) = y_i$ otherwise.
%t_i(a) = \begin{cases} x_i & \mbox{if $a_i=0$} \\
%y_i & \mbox{if $a_i = 1$}, \end{cases}
Let $f(a)$ be the uniform distribution on the set
$\{t_1(a),\ldots,t_n(a)\}.$
To prove \eqref{eq:earthmover-lb}, note that if
$i$ is any index such that $a_i \neq b_i$ then
$f(a)$ assigns probability $\tfrac{1}{n}$ to $t_i(a)$ while
$f(b)$ assigns zero probability to the entire ball
of radius $\eps$ centered at $t_i(a).$  Consequently,
the $\tfrac{1}{n}$ units of probability at $t_i(a)$ have to
move a distance of at least $\eps$ when shifting
from distribution $f(a)$ to $f(b)$.  Summing over
all indices $i$ such that $a_i \neq b_i$, we obtain~\refeq{eq:earthmover-lb}.
\end{proof}

The following lemma, asserting the existence of
asymptotically good binary error-correcting codes,
is well known, e.g. see \citep{Gilbert,Varshamov}.

\begin{lemma} \label{lem:asympt-good}
Suppose $\delta, \rho$ are constants satisfying
$0 < \delta < \frac12$ and
$0 \leq \rho < 1 + \delta \log_2(\delta) + (1-\delta) \log_2(1-\delta).$
For every sufficiently large $n$, the Hamming cube
$\{0,1\}^n$ contains more than $2^{\rho n}$ points,
no two of which are nearer than distance $\delta n$
in the Hamming metric.
\end{lemma}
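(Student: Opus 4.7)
The plan is to prove this using the classical Gilbert--Varshamov greedy argument, paired with the standard entropy-based volume bound on Hamming balls. The first step is to observe that the quantity $1 + \delta\log_2(\delta) + (1-\delta)\log_2(1-\delta)$ is precisely $1 - H_2(\delta)$, where $H_2$ is the binary entropy function; so the hypothesis is that $\rho < 1 - H_2(\delta)$.

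The construction of the desired point set is purely greedy: start with $C = \emptyset$ and repeatedly add any point $x \in \{0,1\}^n$ whose Hamming distance to every current element of $C$ is at least $\delta n$, stopping when no such point remains. By construction any two points of $C$ are at Hamming distance $\geq \delta n$, so the only thing to verify is that $|C| > 2^{\rho n}$ for sufficiently large $n$. When the procedure halts, every point of $\{0,1\}^n$ lies within Hamming distance $< \delta n$ of some element of $C$, which means the closed Hamming balls of radius $\delta n$ centered at points of $C$ cover the cube. Hence
\[
|C| \cdot \sum_{i=0}^{\lfloor \delta n \rfloor} \binom{n}{i} \;\geq\; 2^n.
\]

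The second ingredient is the standard entropy bound $\sum_{i=0}^{\lfloor \delta n \rfloor} \binom{n}{i} \leq 2^{H_2(\delta)\, n}$, valid for $\delta < \tfrac{1}{2}$. This follows from $\binom{n}{i} \leq 2^{H_2(i/n) n}$ together with monotonicity of $H_2$ on $[0,\tfrac12]$, or alternatively from the probabilistic identity $\sum_{i\leq \delta n}\binom{n}{i}\delta^i(1-\delta)^{n-i}\leq 1$ after rearranging. Plugging this into the covering inequality yields $|C| \geq 2^{(1 - H_2(\delta))n}$. Since the hypothesis gives $1 - H_2(\delta) - \rho > 0$, we obtain $|C|/2^{\rho n} \geq 2^{(1 - H_2(\delta) - \rho)n} \to \infty$, so $|C| > 2^{\rho n}$ for all sufficiently large $n$, as required.

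There is no real obstacle here: the argument is entirely standard and the only small care needed is matching the algebraic form of the hypothesis to $1 - H_2(\delta)$ and handling the floor in $\lfloor \delta n \rfloor$ when $\delta n$ is not an integer (which costs at most a constant factor and is absorbed by taking $n$ large).
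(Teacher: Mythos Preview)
Your proof is correct and is precisely the classical Gilbert--Varshamov argument. The paper does not actually supply a proof of this lemma; it states it as well known and cites the original sources, so there is nothing to compare against beyond noting that your argument is the standard one underlying those references.
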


Combining these two lemmas, we obtain an easy proof for the lower bound in Theorem~\ref{thm:LCD-earthmover}.

\begin{proof}[(Theorem~\ref{thm:LCD-earthmover}: lower bound)]
Consider any $\gamma < \kappa$.  The hypothesis on the covering
dimension of $(X,d)$ implies that for all sufficiently small
$\eps$, there exists a set $S$ of cardinality $2n$ --- for some
$n > \eps^{-\gamma}$ --- such that the minimum distance between
two points of $S$ is at least $5 \eps.$  Now let $\mathcal{C}$
be a subset of $\{0,1\}^n$ having at least $2^{n/5}$ elements,
such that the Hamming distance between any two points of $\mathcal{C}$
is at least $n/5.$  Lemma~\ref{lem:asympt-good} implies that
such a set $\mathcal{C}$ exists, and we can then apply
Lemma~\ref{lem:embedding} to embed $\mathcal{C}$ in
$\mathcal{P}_X$, obtaining a subset of $\mathcal{P}_X$ whose
cardinality is at least $2^{\eps^{-\gamma} / 5}$, with
distance at least $\eps$ between every pair of points in the set.
Thus, any $\eps$-covering of $\mathcal{P}_X$ must contain at
least $2^{\eps^{-\gamma}/5}$ sets, implying that
$\LCD(\mathcal{P}_X,W_1) \geq \gamma.$  As $\gamma$ was an
arbitrary number less than $\kappa$, the proposition is proved.
\end{proof}

\end{document}